\providecommand{\algorithmname}{Algorithm}
\numberwithin{figure}{section}
\numberwithin{equation}{section}
\theoremstyle{plain}
\newtheorem{thm}{\protect\theoremname}[section]
\theoremstyle{definition}
\newtheorem{defn}[thm]{\protect\definitionname}
\theoremstyle{definition}
\newtheorem{problem}[thm]{\protect\problemname}
\theoremstyle{plain}
\newtheorem{prop}[thm]{\protect\propositionname}
\theoremstyle{plain}
\newtheorem{lem}[thm]{\protect\lemmaname}
\theoremstyle{remark}
\newtheorem{rem}[thm]{\protect\remarkname}
\theoremstyle{plain}
\newtheorem{assumption}[thm]{\protect\assumptionname}
\theoremstyle{remark}
\newtheorem*{acknowledgement*}{\protect\acknowledgementname}
\renewcommand{\paragraph}{%
  \@startsection{paragraph}{4}%
  {\z@}{1.25ex \@plus 1ex \@minus .2ex}{-1em}%
  {\normalfont\normalsize\bfseries}%
}
\providecommand{\acknowledgementname}{Acknowledgement}
\providecommand{\assumptionname}{Assumption}
\providecommand{\definitionname}{Definition}
\providecommand{\lemmaname}{Lemma}
\providecommand{\problemname}{Problem}
\providecommand{\propositionname}{Proposition}
\providecommand{\remarkname}{Remark}
\providecommand{\theoremname}{Theorem}
\begin{document}
\def\balign#1\ealign{\begin{align}#1\end{align}}
\def\baligns#1\ealigns{\begin{align*}#1\end{align*}}
\def\balignat#1\ealign{\begin{alignat}#1\end{alignat}}
\def\balignats#1\ealigns{\begin{alignat*}#1\end{alignat*}}
\def\bitemize#1\eitemize{\begin{itemize}#1\end{itemize}}
\def\benumerate#1\eenumerate{\begin{enumerate}#1\end{enumerate}}

\newenvironment{talign*}
 {\let\displaystyle\textstyle\csname align*\endcsname}
 {\endalign}
\newenvironment{talign}
 {\let\displaystyle\textstyle\csname align\endcsname}
 {\endalign}

\def\balignst#1\ealignst{\begin{talign*}#1\end{talign*}}
\def\balignt#1\ealignt{\begin{talign}#1\end{talign}}

\let\originalleft\left
\let\originalright\right
\renewcommand{\left}{\mathopen{}\mathclose\bgroup\originalleft}
\renewcommand{\right}{\aftergroup\egroup\originalright}

\def\Gronwall{Gr\"onwall\xspace}
\def\Holder{H\"older\xspace}
\def\Ito{It\^o\xspace}
\def\Nystrom{Nystr\"om\xspace}
\def\Schatten{Sch\"atten\xspace}
\def\Matern{Mat\'ern\xspace}

\def\tinycitep*#1{{\tiny\citep*{#1}}}
\def\tinycitealt*#1{{\tiny\citealt*{#1}}}
\def\tinycite*#1{{\tiny\cite*{#1}}}
\def\smallcitep*#1{{\scriptsize\citep*{#1}}}
\def\smallcitealt*#1{{\scriptsize\citealt*{#1}}}
\def\smallcite*#1{{\scriptsize\cite*{#1}}}

\def\blue#1{\textcolor{blue}{{#1}}}
\def\green#1{\textcolor{green}{{#1}}}
\def\orange#1{\textcolor{orange}{{#1}}}
\def\purple#1{\textcolor{purple}{{#1}}}
\def\red#1{\textcolor{red}{{#1}}}
\def\teal#1{\textcolor{teal}{{#1}}}

\def\mbi#1{\boldsymbol{#1}} 
\def\mbf#1{\mathbf{#1}}
\def\mrm#1{\mathrm{#1}}
\def\tbf#1{\textbf{#1}}
\def\tsc#1{\textsc{#1}}

\def\mbiA{\mbi{A}}
\def\mbiB{\mbi{B}}
\def\mbiC{\mbi{C}}
\def\mbiDelta{\mbi{\Delta}}
\def\mbif{\mbi{f}}
\def\mbiF{\mbi{F}}
\def\mbih{\mbi{g}}
\def\mbiG{\mbi{G}}
\def\mbih{\mbi{h}}
\def\mbiH{\mbi{H}}
\def\mbiI{\mbi{I}}
\def\mbim{\mbi{m}}
\def\mbiP{\mbi{P}}
\def\mbiQ{\mbi{Q}}
\def\mbiR{\mbi{R}}
\def\mbiv{\mbi{v}}
\def\mbiV{\mbi{V}}
\def\mbiW{\mbi{W}}
\def\mbiX{\mbi{X}}
\def\mbiY{\mbi{Y}}
\def\mbiZ{\mbi{Z}}

\def\textsum{{\textstyle\sum}} 
\def\textprod{{\textstyle\prod}} 
\def\textbigcap{{\textstyle\bigcap}} 
\def\textbigcup{{\textstyle\bigcup}} 

\def\reals{\mathbb{R}} 
\def\integers{\mathbb{Z}} 
\def\rationals{\mathbb{Q}} 
\def\naturals{\mathbb{N}} 
\def\complex{\mathbb{C}} 

\def\what#1{\widehat{#1}}

\def\twovec#1#2{\left[\begin{array}{c}{#1} \\ {#2}\end{array}\right]}
\def\threevec#1#2#3{\left[\begin{array}{c}{#1} \\ {#2} \\ {#3} \end{array}\right]}
\def\nvec#1#2#3{\left[\begin{array}{c}{#1} \\ {#2} \\ \vdots \\ {#3}\end{array}\right]} 

\def\maxeig#1{\lambda_{\mathrm{max}}\left({#1}\right)}
\def\mineig#1{\lambda_{\mathrm{min}}\left({#1}\right)}

\def\Re{\operatorname{Re}} 
\def\indic#1{\mbb{I}\left[{#1}\right]} 
\def\logarg#1{\log\left({#1}\right)} 
\def\polylog{\operatorname{polylog}}
\def\maxarg#1{\max\left({#1}\right)} 
\def\minarg#1{\min\left({#1}\right)} 
\def\Earg#1{\E\left[{#1}\right]}
\def\Esub#1{\E_{#1}}
\def\Esubarg#1#2{\E_{#1}\left[{#2}\right]}
\def\bigO#1{\mathcal{O}\left(#1\right)} 
\def\littleO#1{o(#1)} 
\def\P{\mbb{P}} 
\def\Parg#1{\P\left({#1}\right)}
\def\Psubarg#1#2{\P_{#1}\left[{#2}\right]}
\def\Trarg#1{\Tr\left[{#1}\right]} 
\def\trarg#1{\tr\left[{#1}\right]} 
\def\Var{\mrm{Var}} 
\def\Vararg#1{\Var\left[{#1}\right]}
\def\Varsubarg#1#2{\Var_{#1}\left[{#2}\right]}
\def\Cov{\mrm{Cov}} 
\def\Covarg#1{\Cov\left[{#1}\right]}
\def\Covsubarg#1#2{\Cov_{#1}\left[{#2}\right]}
\def\Corr{\mrm{Corr}} 
\def\Corrarg#1{\Corr\left[{#1}\right]}
\def\Corrsubarg#1#2{\Corr_{#1}\left[{#2}\right]}
\newcommand{\info}[3][{}]{\mathbb{I}_{#1}\left({#2};{#3}\right)} 
\newcommand{\staticexp}[1]{\operatorname{exp}(#1)} 
\newcommand{\loglihood}[0]{\mathcal{L}} 


\providecommand{\arccos}{\mathop\mathrm{arccos}}
\providecommand{\dom}{\mathop\mathrm{dom}}
\providecommand{\diag}{\mathop\mathrm{diag}}
\providecommand{\tr}{\mathop\mathrm{tr}}
\providecommand{\card}{\mathop\mathrm{card}}
\providecommand{\sign}{\mathop\mathrm{sign}}
\providecommand{\conv}{\mathop\mathrm{conv}} 
\def\rank#1{\mathrm{rank}({#1})}
\def\supp#1{\mathrm{supp}({#1})}

\providecommand{\minimize}{\mathop\mathrm{minimize}}
\providecommand{\maximize}{\mathop\mathrm{maximize}}
\providecommand{\subjectto}{\mathop\mathrm{subject\;to}}

\def\openright#1#2{\left[{#1}, {#2}\right)}

\ifdefined\nonewproofenvironments\else
\ifdefined\ispres\else
 
\fi
\makeatletter
\@addtoreset{equation}{section}
\makeatother
\def\theequation{\thesection.\arabic{equation}}

\newcommand{\cmark}{\ding{51}}

\newcommand{\xmark}{\ding{55}}

\newcommand{\eq}[1]{\begin{align}#1\end{align}}
\newcommand{\eqn}[1]{\begin{align*}#1\end{align*}}
\renewcommand{\Pr}[1]{\mathbb{P}\left( #1 \right)}
\newcommand{\Ex}[1]{\mathbb{E}\left[#1\right]}

\newcommand{\matt}[1]{{\textcolor{Maroon}{[Matt: #1]}}}
\newcommand{\kook}[1]{{\textcolor{blue}{[Kook: #1]}}}
\definecolor{OliveGreen}{rgb}{0,0.6,0}
\newcommand{\sv}[1]{{\textcolor{OliveGreen}{[Santosh: #1]}}}

\global\long\def\on#1{\operatorname{#1}}%

\global\long\def\bw{\mathsf{Ball\ Walk}}%
\global\long\def\sw{\mathsf{Speedy\ Walk}}%
\global\long\def\gw{\mathsf{Gaussian\ Walk}}%
\global\long\def\ps{\mathsf{Proximal\ Sampler}}%
\global\long\def\sps{\mathsf{Proximal\ Gaussian\ Cooling}}%

\global\long\def\har{\mathsf{Hit\text{-}and\text{-}Run}}%
\global\long\def\gc{\mathsf{Gaussian\ Cooling}}%
\global\long\def\ino{\mathsf{\mathsf{In\text{-}and\text{-}Out}}}%
\global\long\def\tgc{\mathsf{Tilted\ Gaussian\ Cooling}}%
\global\long\def\PS{\mathsf{PS}}%
\global\long\def\psexp{\mathsf{PS}_{\textup{exp}}}%
\global\long\def\psann{\mathsf{PS}_{\textup{ann}}}%
\global\long\def\eval{\mathsf{Eval}}%

\global\long\def\O{\mathcal{O}}%
\global\long\def\Otilde{\widetilde{\mathcal{O}}}%
\global\long\def\Omtilde{\widetilde{\Omega}}%
\global\long\def\Thetilde{\widetilde{\Theta}}%

\global\long\def\E{\mathbb{E}}%
\global\long\def\Z{\mathbb{Z}}%
\global\long\def\P{\mathbb{P}}%
\global\long\def\N{\mathbb{N}}%
\global\long\def\K{\mathcal{K}}%

\global\long\def\R{\mathbb{R}}%
\global\long\def\Rext{\overline{\mathbb{R}}}%
\global\long\def\Rd{\mathbb{R}^{d}}%
\global\long\def\Rdd{\mathbb{R}^{d\times d}}%
\global\long\def\Rn{\mathbb{R}^{n}}%
\global\long\def\Rnn{\mathbb{R}^{n\times n}}%

\global\long\def\psd{\mathbb{S}_{+}^{d}}%
\global\long\def\pd{\mathbb{S}_{++}^{d}}%

\global\long\def\defeq{\stackrel{\mathrm{{\scriptscriptstyle def}}}{=}}%

\global\long\def\veps{\varepsilon}%
\global\long\def\lda{\lambda}%
\global\long\def\vphi{\varphi}%
\global\long\def\cpi{C_{\mathsf{PI}}}%
\global\long\def\clsi{C_{\mathsf{LSI}}}%
\global\long\def\cch{C_{\mathsf{Ch}}}%
\global\long\def\clch{C_{\mathsf{logCh}}}%

\global\long\def\half{\frac{1}{2}}%
\global\long\def\nhalf{\nicefrac{1}{2}}%
\global\long\def\texthalf{{\textstyle \frac{1}{2}}}%

\global\long\def\ind{\mathds{1}}%
\global\long\def\op{\mathsf{op}}%

\global\long\def\chooses#1#2{_{#1}C_{#2}}%

\global\long\def\inrad{\on{inrad}}%

\global\long\def\vol{\on{vol}}%

\global\long\def\supp{\on{supp}}%

\global\long\def\law{\on{law}}%

\global\long\def\tr{\on{tr}}%

\global\long\def\diag{\on{diag}}%

\global\long\def\Diag{\on{Diag}}%

\global\long\def\inter{\on{int}}%

\global\long\def\esssup{\on{ess\,sup}}%

\global\long\def\poly{\on{poly}}%

\global\long\def\polylog{\on{polylog}}%

\global\long\def\var{\on{var}}%

\global\long\def\ent{\on{Ent}}%

\global\long\def\cov{\on{Cov}}%

\global\long\def\e{\mathrm{e}}%

\global\long\def\id{\mathrm{id}}%

\global\long\def\spanning{\on{span}}%

\global\long\def\rows{\on{row}}%

\global\long\def\cols{\on{col}}%

\global\long\def\rank{\on{rank}}%

\global\long\def\T{\mathsf{T}}%

\global\long\def\bs#1{\boldsymbol{#1}}%

\global\long\def\eu#1{\EuScript{#1}}%

\global\long\def\mb#1{\mathbf{#1}}%

\global\long\def\mbb#1{\mathbb{#1}}%

\global\long\def\mc#1{\mathcal{#1}}%

\global\long\def\mf#1{\mathfrak{#1}}%

\global\long\def\ms#1{\mathscr{#1}}%

\global\long\def\mss#1{\mathsf{#1}}%

\global\long\def\msf#1{\mathsf{#1}}%

\global\long\def\textint{{\textstyle \int}}%
\global\long\def\Dd{\mathrm{D}}%
\global\long\def\D{\mathrm{d}}%
\global\long\def\grad{\nabla}%
 
\global\long\def\hess{\nabla^{2}}%
 
\global\long\def\lapl{\triangle}%
 
\global\long\def\deriv#1#2{\frac{\D#1}{\D#2}}%
 
\global\long\def\pderiv#1#2{\frac{\partial#1}{\partial#2}}%
 
\global\long\def\de{\partial}%
\global\long\def\lagrange{\mathcal{L}}%
\global\long\def\Div{\on{div}}%

\global\long\def\Gsn{\mathcal{N}}%
 
\global\long\def\BeP{\textnormal{BeP}}%
 
\global\long\def\Ber{\textnormal{Ber}}%
 
\global\long\def\Bern{\textnormal{Bern}}%
 
\global\long\def\Bet{\textnormal{Beta}}%
 
\global\long\def\Beta{\textnormal{Beta}}%
 
\global\long\def\Bin{\textnormal{Bin}}%
 
\global\long\def\BP{\textnormal{BP}}%
 
\global\long\def\Dir{\textnormal{Dir}}%
 
\global\long\def\DP{\textnormal{DP}}%
 
\global\long\def\Expo{\textnormal{Expo}}%
 
\global\long\def\Gam{\textnormal{Gamma}}%
 
\global\long\def\GEM{\textnormal{GEM}}%
 
\global\long\def\HypGeo{\textnormal{HypGeo}}%
 
\global\long\def\Mult{\textnormal{Mult}}%
 
\global\long\def\NegMult{\textnormal{NegMult}}%
 
\global\long\def\Poi{\textnormal{Poi}}%
 
\global\long\def\Pois{\textnormal{Pois}}%
 
\global\long\def\Unif{\textnormal{Unif}}%

\global\long\def\bpar#1{\bigl(#1\bigr)}%
\global\long\def\Bpar#1{\Bigl(#1\Bigr)}%

\global\long\def\snorm#1{\|#1\|}%
\global\long\def\bnorm#1{\bigl\Vert#1\bigr\Vert}%
\global\long\def\Bnorm#1{\Bigl\Vert#1\Bigr\Vert}%

\global\long\def\sbrack#1{[#1]}%
\global\long\def\bbrack#1{\bigl[#1\bigr]}%
\global\long\def\Bbrack#1{\Bigl[#1\Bigr]}%

\global\long\def\sbrace#1{\{#1\}}%
\global\long\def\bbrace#1{\bigl\{#1\bigr\}}%
\global\long\def\Bbrace#1{\Bigl\{#1\Bigr\}}%

\global\long\def\abs#1{\lvert#1\rvert}%
\global\long\def\Par#1{\left(#1\right)}%
\global\long\def\Brack#1{\left[#1\right]}%
\global\long\def\Brace#1{\left\{  #1\right\}  }%

\global\long\def\inner#1{\langle#1\rangle}%
 
\global\long\def\binner#1#2{\left\langle {#1},{#2}\right\rangle }%

\global\long\def\norm#1{\lVert#1\rVert}%
\global\long\def\onenorm#1{\norm{#1}_{1}}%
\global\long\def\twonorm#1{\norm{#1}_{2}}%
\global\long\def\infnorm#1{\norm{#1}_{\infty}}%
\global\long\def\fronorm#1{\norm{#1}_{\text{F}}}%
\global\long\def\nucnorm#1{\norm{#1}_{*}}%
\global\long\def\staticnorm#1{\|#1\|}%
\global\long\def\statictwonorm#1{\staticnorm{#1}_{2}}%

\global\long\def\mmid{\mathbin{\|}}%

\global\long\def\wtilde{\widetilde{W}}%
\global\long\def\wt#1{\widetilde{#1}}%

\global\long\def\KL{\msf{KL}}%
\global\long\def\dtv{d_{\textrm{\textup{TV}}}}%
\global\long\def\FI{\msf{FI}}%
\global\long\def\tv{\msf{TV}}%

\global\long\def\PI{\msf{PI}}%

\global\long\def\cred#1{\textcolor{red}{#1}}%
\global\long\def\cblue#1{\textcolor{blue}{#1}}%
\global\long\def\cgreen#1{\textcolor{green}{#1}}%
\global\long\def\ccyan#1{\textcolor{cyan}{#1}}%

\global\long\def\iff{\Leftrightarrow}%
 
\global\long\def\textfrac#1#2{{\textstyle \frac{#1}{#2}}}%

\title{Sampling and Integration of Logconcave Functions by\\
Algorithmic Diffusion\date{}\author{Yunbum Kook\\ Georgia Tech\\  \texttt{yb.kook@gatech.edu} \and Santosh S. Vempala\\ Georgia Tech\\ \texttt{vempala@gatech.edu}}}
\maketitle
\begin{abstract}
We study the complexity of sampling, rounding, and integrating arbitrary
logconcave functions. Our new approach provides the first complexity
improvements in nearly two decades for general logconcave functions
for all three problems, and matches the best-known complexities for
the special case of uniform distributions on convex bodies. For the
sampling problem, our output guarantees are significantly stronger
than previously known, and lead to a streamlined analysis of statistical
estimation based on dependent random samples. 
\end{abstract}
\tableofcontents{}

\setcounter{page}{0}
\thispagestyle{empty}

\newpage{}

\section{Introduction}

Sampling and integration of logconcave functions are fundamental problems
with numerous applications and important special cases such as uniform
distributions over convex bodies and strongly logconcave densities.
The study of these problems has led to many useful techniques. Both
mathematically and algorithmically, general logconcave functions typically
provide the ``right'' general abstraction. For example, many classical
inequalities for convex bodies have natural extensions to logconcave
functions (e.g., Gr\"unbaum's theorem, Brunn-Minkowski and Pr\'ekopa-Leindler,
isotropic constant, etc.). The KLS hyperplane conjecture, first motivated
by the analysis of the $\bw$ for sampling convex bodies, is in the
setting of general logconcave densities. The current fastest algorithm
for estimating the volume of a convex body crucially uses sampling
from a sequence of logconcave densities, which is provably more efficient
than using a sequence of uniform distributions. Sampling logconcave
densities has many other applications as well, such as portfolio optimization,
simulated annealing, Bayesian inference, differential privacy etc. 

Sampling in high dimension is done by randomized algorithms based
on Markov chains. These chains are set up to have a desired stationary
distribution, which is relatively easy to ensure. For example, to
sample uniformly, it suffices that the Markov chain is symmetric.
Generally, to sample proportional to a desired function, it suffices
to ensure the ``detailed balance'' condition. The main challenge is
showing rapid mixing of the Markov chain, i.e., the convergence rate
to the stationary distribution is bounded by a (small) polynomial
in the dimension and other relevant parameters. 

The traditional analysis of Markov chains for sampling high-dimensional
distributions proceeds by analyzing the \emph{conductance} of the
Markov chain, the minimum conditional escape probability over all
subsets of the state space of measure at most half. Bounding this
is done by relating probabilistic (one-step distribution) distance
to geometric distance, and then using a purely isoperimetric inequality
for subsets of the support. Indeed, this approach led to several interesting
questions and useful techniques, in particular the development of
isoperimetric inequalities and the discovery of (nearly) tight bounds
in many settings. To describe background and known results, let us
first define the sampling problem (readers familiar with the problem
can skip to \S\ref{subsec:Results}).

\paragraph{Model and problems. }

We assume access to an integrable logconcave function $\exp(-V(x))$
in $\R^{n}$, via an evaluation oracle for a convex function $V:\Rn\to\R\cup\{\infty\}$.
We also assume that  there exists a point $x_{0}$ and parameters
$r,R$ such that for the distribution $\pi$ with density $\D\pi\propto\exp(-V)\,\D x$,
a ball of radius $r$ centered at $x_{0}$ is contained in the level
set of $\pi$ of measure $1/8$\footnote{We use the standard convention of $1/8$; any constant bounded away
from $1$ would suffice. In fact, we will use an even weaker condition;
see Problem~\ref{eq:ground_set} and Lemma~\ref{lem:relaxed-regularity}.} and $\E_{\pi}[\norm{X-x_{0}}^{2}]\leq R^{2}$. We refer to this as
a\emph{ well-defined function oracle}\footnote{Without loss of generality, we will assume by scaling that $r=1$.},
denoted by $\msf{Eval}_{\mc P}(V)$ where $\mc P$ indicates access
to the actual values of parameters in $\mc P$ (e.g., $\msf{Eval}_{x_{0},R}(V)$
presents both $x_{0}$ and $R$, while $\msf{Eval}(V)$ does not provide
any parameters).

Given this oracle, in this paper we consider three central problems:
(1) sample from the distribution $\pi$, (2) find an affine transformation
that places $\pi$ in near-isotropic position, and (3) estimate the
integral $\int e^{-V(x)}\,\D x$ (i.e., the normalization constant
of $\pi$). We measure the complexity in terms of the number of oracle
calls and the total number of arithmetic operations. 

\paragraph{Complexity of logconcave sampling. }

Sampling general logconcave functions, as an algorithmic problem,
was first studied by Applegate and Kannan \cite{applegate1991sampling},
who established an algorithm with complexity polynomial in the dimension
assuming Lipschitzness of the function over its support. They did
this by creating a sufficiently fine grid (so that the function changed
by at most a small constant within each grid cell) and used a discrete
``grid walk''. They sampled a grid point and used rejection sampling
to output a point from the cube associated with the grid point. The
output distribution is guaranteed to be within desired total variation
($\tv$) distance from the target distribution. Following many developments,
Lov\'asz and Vempala improved the bound to scale as $n^{2}R^{2}/r^{2}$
\cite{lovasz2007geometry,lovasz2006fast}. More specifically, \cite[Theorem 2.2, 2.3]{lovasz2007geometry}
analyzed the $\bw$ and $\har$ and proved a bound of 
\[
\frac{M^{4}}{\varepsilon^{4}}\frac{n^{2}R^{2}}{r^{2}}\log\frac{M}{\varepsilon}
\]
steps/queries to reach a distribution within $\varepsilon$ $\tv$-distance
of the target starting from an $M$-warm distribution. \cite[Theorem 1.1]{lovasz2006fast}
showed that $\har$ achieves the same guarantee in 
\[
\frac{n^{2}R^{2}}{r^{2}}\log^{\O(1)}\frac{nMR}{\varepsilon r}
\]
steps. We note that in all previous work on general logconcave sampling
the output guarantees are in $\tv$-distance. It is often desirable
to have stronger guarantees such as the $\KL$ or R\'enyi divergences. 

\paragraph{Classical sampling algorithms.}

For target density proportional to $e^{-V}$, the $\bw$ with parameter
$\delta$ proceeds as follows: at a point $x$ with $e^{-V(x)}>0$,
sample a uniform random point $y$ in the ball of radius $\delta$
centered at $x$; go to $y$ with probability $\min(1,\frac{e^{-V(y)}}{e^{-V(x)}})$,
staying at $x$ with the remaining probability. $\har$ does not need
a parameter: at current point $x$, pick a uniform random line $\ell$
through $x$, and go to a random point $y$ along $\ell$ with probability
proportional to $e^{-V(y)}$ (marginal along $\ell$). For technical
reasons, both of these walks (and almost all other walks) need to
be \emph{lazy}, i.e., they do nothing with probability $1/2$, and
do the above step with probability $1/2$.\emph{ }

\paragraph{Sampling and isoperimetry. }

The complexity of sampling is intuitively tied closely to the isoperimetry
of the target distribution. If the target has poor isoperimetry (roughly,
a small measure surface can partition the support into two large measure
subsets), then any ``local'' Markov chain will have difficulty moving
from one large subset to its complement. There are two alternative
views of isoperimetry --- functional and geometric. We recap the
definitions of isoperimetric constants.
\begin{defn}
We say that a probability measure $\pi$ on $\R^{n}$ satisfies a
\emph{Poincar\'e inequality} (PI) with parameter $C_{\msf{PI}}(\pi)$
if for all smooth functions $f:\R^{n}\to\R$,
\begin{equation}
\var_{\pi}f\leq\cpi(\pi)\,\E_{\pi}[\norm{\nabla f}^{2}]\,,\tag{\ensuremath{\msf{PI}}}\label{eq:pi}
\end{equation}
where $\var_{\pi}f=\E_{\pi}[\abs{f-\E_{\pi}f}^{2}]$.
\end{defn}

The Poincar\'e inequality is implied by the generally stronger log-Sobolev
inequality.
\begin{defn}
We say that a probability measure $\pi$ on $\R^{n}$ satisfies a
\emph{log-Sobolev inequality} (LSI) with parameter $C_{\msf{LSI}}(\pi)$
if for all smooth functions $f:\Rn\to\R$,
\begin{equation}
\ent_{\pi}(f^{2})\leq2\clsi(\pi)\,\E_{\pi}[\norm{\nabla f}^{2}]\,,\tag{\ensuremath{\msf{LSI}}}\label{eq:lsi}
\end{equation}
where $\ent_{\pi}(f^{2}):=\E_{\pi}[f^{2}\log f^{2}]-\E_{\pi}[f^{2}]\log\E_{\pi}[f^{2}]$. 
\end{defn}

For the geometric view, we define $\pi(\de S):=\liminf_{\veps\downarrow0}\frac{\pi(S_{\veps})-\pi(S)}{\veps}$,
where $S_{\veps}:=\{x:d(x,S)\leq\veps\}$. 
\begin{defn}
The \emph{Cheeger constant} $\cch(\pi)$ of a probability measure
$\pi$ on $\R^{n}$ is defined as 
\[
\cch(\pi)=\inf_{S\subset\R^{n},\pi(S)\le\frac{1}{2}}\frac{\pi(\partial S)}{\pi(S)}\,.
\]
\end{defn}

\begin{defn}
The \emph{log-Cheeger constant} $\clch(\pi)$ of a probability measure
$\pi$ on $\R^{n}$ is defined as 
\[
\clch(\pi)=\inf_{S\subset\R^{n},\pi(S)\le\frac{1}{2}}\frac{\pi(\partial S)}{\pi(S)\sqrt{\log(1/\pi(S))}}\,.
\]
\end{defn}

It is known that for logconcave measures, $C_{\msf{PI}}(\pi)=\Theta(\cch^{-2}(\pi))$
\cite{milman2009role} and $\clsi(\pi)=\Theta(\clch^{-2}(\pi))$ \cite{ledoux1994simple}.
Bounding these constants has been a major research topic for decades.
In recent years, following many improvements, it has been shown that
for isotropic logconcave measures, $\cpi(\pi)\lesssim\log n$ \cite{klartag2023logarithmic}
and for isotropic logconcave ones with support of diameter $D$, we
have $C_{\msf{LSI}}(\pi)\lesssim D$ \cite{lee2024eldan}. The former
is conjectured to be $\O(1)$ (the KLS conjecture), and the latter
is the best possible. 

The significance of these constants for algorithmic sampling became
clear with the analysis of the $\bw$, where the bound on its convergence
from a warm start to a logconcave distribution depends directly on
$\cch^{-2}(\pi)$ \cite{kannan1997random} (this was the original
motivation for the KLS conjecture) --- the mixing time of the $\bw$
starting from an $M$-warm distribution to reach $\veps$ $\tv$-distance
is bounded by $\O(n^{2}\cch^{-2}\poly\frac{M}{\veps})$. For the special
case of uniformly sampling convex bodies, using the notion of a \emph{$\sw$,
}this can be improved to $\O(Mn^{2}\cch^{-2}\log\frac{M}{\veps})$,
but using it in the analysis entails handling several technical difficulties.
In principle, this should be extendable to general logconcave functions,
albeit with formidable technical complications. We take a different
approach. 

\paragraph{Sampling, isoperimetry, and diffusion. }

The connection of isoperimetric constants to the convergence of continuous-time
\emph{diffusion} is a classical subject. The Langevin diffusion is
a canonical stochastic differential equation (SDE) given by
\[
\D X_{t}=-\nabla V(X_{t})\,\D t+\sqrt{2}\,\D B_{t}\quad X_{0}\sim\nu_{0}\,,
\]
where $B_{t}$ is the standard Brownian process, and $V$ is a smooth
function. Then, under mild conditions, the law $\nu_{t}$ of $X_{t}$
converges to the distribution with density $\nu$ proportional to
$e^{-V}$ in the limit. Moreover, it can be shown that
\[
\chi^{2}(\nu_{t}\mmid\nu)\leq\exp\bpar{-\frac{2t}{\cpi(\nu)}}\,\chi^{2}(\nu_{0}\mmid\nu)\,,\quad\text{and}\quad\KL(\nu_{t}\mmid\nu)\le\exp\bpar{-\frac{2t}{\clsi(\nu)}}\,\KL(\nu_{0}\mmid\nu)\,.
\]
A natural idea for a sampling algorithm is to discretize the diffusion
equation. This discretized algorithm was shown to converge in $\text{poly}(n,\beta,\cpi(\nu))$
and $\text{poly}(n,\beta,C_{\msf{LSI}}(\nu))$ iterations under \eqref{eq:pi}
\cite{chewi2021analysis} and \eqref{eq:lsi} \cite{vempala2019rapid}
respectively, assuming $\beta$-Lipschitzness of $\nabla V$. Can
this approach be extended to sampling general logconcave densities
without the gradient Lipschitzness?

\paragraph{Algorithmic diffusion.}

By \emph{algorithmic diffusion}, we mean the general idea of discretizing
a diffusion process and proving guarantees on the discretization error
and query complexity. Recently, \cite{kook2024inout} showed that
such an algorithm (called $\ino$ there), which can be viewed as a
$\ps$, works for sampling from the uniform distribution over a convex
body and recovers state-of-the-art complexity guarantees with substantially
improved output guarantees (R\'enyi-divergences). The guarantee was
extended to the R\'enyi-infinity (or pointwise) distance as well
\cite{kook2024renyi}. A nice aspect of the analysis is that it shows
convergence directly in terms of isoperimetric constants of target
distributions. As a result, it provides a unifying point of view for
analysis without requiring the use of technically sophisticated tools
such as the $\sw$ and $s$-conductance \cite{lovasz1993random,kannan1997random,lovasz2007geometry}.
We discuss the approach in more detail presently.

This brings us to the main motivation of the current paper: \emph{Can
we use algorithmic diffusion to obtain faster algorithms with stronger
guarantees for sampling, rounding, and integrating general logconcave
functions?}

\subsection{Results\label{subsec:Results}}

In this paper, we propose new approaches for general logconcave sampling,
rounding and integration, leading to the first improvements in the
complexity of these problems in nearly two decades \cite{lovasz2006fast}.
Our methods crucially rely on a reduction from general logconcave
sampling to \emph{exponential sampling} in one higher dimension. With
this reduction in hand, we develop a new framework for sampling and
improve the complexity of four fundamental problems: (1) logconcave
sampling from an $\O(1)$-warm start, (2) warm-start generation, (3)
isotropic rounding and (4) integration. For each problem, our improved
complexity for general logconcave distributions matches the current
best complexity for the uniform distribution over a convex body, a
setting that has received much attention for more than three decades
\cite{dyer1991random,lovasz1990mixing,lovasz1991compute,lovasz1993random,kannan1997random,lovasz2006hit,cousins2018gaussian,kook2024inout,kook2024renyi}.
We now present our main results, followed by a detailed discussion
of the challenges and techniques in \S\ref{ssec:techniques}, and
notations and preliminaries in \S\ref{sec:Preliminaries}.

\paragraph{Result 1: Logconcave sampling from a warm start.}

When a logconcave distribution $\pi^{X}\propto e^{-V}$ is presented
with the evaluation oracle for $V$, we instead consider the ``lifted''
distribution 
\[
\pi(x,t)\propto\exp(-nt)\cdot\ind[(x,t):V(x)\leq nt]
\]
in the augmented $(x,t)$-space, the $X$-marginal of which is exactly
$\pi^{X}$ (see Proposition~\ref{prop:x-marginal}). We quantify
in \S\ref{sec:lc-sampling} how this affects parameters pertinent
to sampling (e.g., isoperimetric constants). 

\begin{algorithm}[t]
\hspace*{\algorithmicindent} \textbf{Input:} initial pt. $z_{0}\sim\pi_{0}\in\mc P(\R^{n+1})$,
$\K=\{(x,t):V(x)\leq nt\}$, $k\in\mathbb{N}$, threshold $N$, variance
$h$.

\hspace*{\algorithmicindent} \textbf{Output:} $z_{k+1}=(x_{k+1},t_{k+1})$.

\begin{algorithmic}[1] \caption{$\protect\ps$ $\protect\psexp$\label{alg:prox-exp}}
\FOR{$i=0,\dotsc,k$}

\STATE Sample $y_{i+1}\sim\mc N(z_{i},hI_{n+1})$.\label{line:forward}

\STATE Sample $z_{i+1}\sim\mc N(y_{i+1}-hne_{n+1},hI_{n+1})|_{\K}$.
\label{line:backward}

\STATE$\quad(\uparrow)$ {\small Repeat $z_{i+1}\sim\mc N(y_{i+1}-hne_{n+1},hI_{n+1})$
until $z_{i+1}\in\mc K$. If $\#$attempts$_{i}\,\geq N$, declare
\textbf{Failure}.}

\ENDFOR \end{algorithmic}
\end{algorithm}

To sample from the exponential target $\pi$ using a given $\eu R_{\infty}$
warm-start, we use the $\ps$ (see Figure~\ref{fig:fig1}) with an
evaluation oracle that returns $nt$ conditioned on $V(x)\leq nt$.
One iteration of the sampler is reversible with respect to the stationary
distribution $\pi$, while it can be interpreted as a composition
of two complimentary diffusion processes. Discarding the $T$-component
of a sample, we are left with the $X$-component with law close to
our desired target $\pi^{X}$. In this regard, our sampler is a natural
generalization of $\ino$ \cite{kook2024inout}, the $\ps$ for uniform
sampling. Concretely, we establish mixing guarantees and query complexities
of our sampler for the target logconcave distribution $\pi^{X}$ in
terms of its Poincar\'e constant (which in general satisfies $\norm{\cov\pi^{X}}\leq\cpi(\pi^{X})\lesssim\norm{\cov\pi^{X}}\log n$
\cite{klartag2023logarithmic}). 
\begin{thm}
\label{thm:lc-warmstart-intro} For any logconcave distribution $\pi^{X}$
specified by a well-defined function oracle $\eval(V)$, for any given
$\eta,\veps\in(0,1)$, $q\geq2$, and $\pi_{0}^{X}$ with $\eu R_{\infty}(\pi_{0}^{X}\mmid\pi^{X})=\log M$,
we can use the $\ps$ $\psexp$ (Algorithm~\ref{alg:prox-exp}) with
suitable choices of parameters, so that with probability at least
$1-\eta$, we obtain a sample $X$ such that $\eu R_{q}(\law X\mmid\pi^{X})\leq\veps$,
using $\Otilde(qMn^{2}\,(\norm{\cov\pi^{X}}\vee1)\polylog\frac{1}{\eta\veps})$
evaluation queries in expectation. 
\end{thm}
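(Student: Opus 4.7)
The plan is to analyze $\psexp$ on the lifted exponential target $\pi$ on $\R^{n+1}$ and then push the resulting R\'enyi guarantee down to its $X$-marginal. I would break the argument into four steps.

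First, I would record the lifting reduction. Integrating out $t$ shows the $X$-marginal of $\pi$ equals $\pi^{X}$, so the data-processing inequality for $\eu R_{q}$ converts any bound $\eu R_{q}(\law z_{k}\mmid\pi)\leq\veps$ into $\eu R_{q}(\law x_{k}\mmid\pi^{X})\leq\veps$. The warm start also lifts: defining $\pi_{0}(x,t):=\pi_{0}^{X}(x)\,\pi(t\mid x)$ gives $\eu R_{\infty}(\pi_{0}\mmid\pi)=\eu R_{\infty}(\pi_{0}^{X}\mmid\pi^{X})=\log M$, and $\pi_{0}(t\mid x)$ is a one-dimensional truncated exponential which is trivial to sample. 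So it suffices to mix $\psexp$ from this lifted warm start.

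Second, I would control $\cpi(\pi)$. Since $\mc K$ is convex and $-nt$ is linear, $\pi$ is logconcave on $\R^{n+1}$, so Klartag's bound $\cpi\lesssim\norm{\cov}\polylog n$ applies. Block-decomposing the covariance and using that conditional on $X=x$ the law of $T$ is $\Expo(n)$ shifted by $V(x)/n$ (variance $1/n^{2}$), together with standard concentration of $V$ under logconcave $\pi^{X}$, gives $\norm{\cov\pi}\lesssim\norm{\cov\pi^{X}}\vee 1$; hence $\cpi(\pi)\lesssim(\norm{\cov\pi^{X}}\vee 1)\polylog n$. Third, I would invoke the R\'enyi convergence analysis of the $\ps$: each iteration of $\psexp$ is reversible with respect to $\pi$, and the standard Poincar\'e-based proximal-sampler bound gives a per-step contraction of $\eu R_{q}(\law z_{i}\mmid\pi)$ by a factor $(1+\Omega(h/(q\cpi(\pi))))^{-1}$. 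Starting from $\eu R_{\infty}(\pi_{0}\mmid\pi)=\log M$, driving $\eu R_{q}$ below $\veps$ takes $k=\Otilde\bpar{q\cpi(\pi)\log(\log M/\veps)/h}$ iterations.

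Fourth, and this is the main obstacle, I would bound the rejection cost and pick $h$ optimally. The drift $-hne_{n+1}$ is calibrated so that $\exp(-nt)\exp(-\norm{z-y}^{2}/(2h))$ is proportional (in $z$) to $\exp(-\norm{z-(y-hne_{n+1})}^{2}/(2h))$; the backward step is therefore exactly $\mc N(y-hne_{n+1},hI_{n+1})$ restricted to $\mc K$, whose acceptance probability is $\Omega(1)$ as long as $h\lesssim 1/n^{2}$, since then the drift magnitude $hn\lesssim 1/n$ is absorbed in the Gaussian scale $\sqrt{h}$. Under the $M$-warm start, the $\eu R_{\infty}$ change-of-measure bounds the expected number of rejection attempts at step $i$ by $O(M)$ times the stationary value $\Theta(1)$; setting $N=\Omtilde(\log(k/\eta))$ and applying a union bound keeps the total failure probability below $\eta$. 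Choosing $h\asymp 1/n^{2}$ and multiplying $k$ by the $O(M)$ expected attempts per step then yields
\[
\Otilde\bpar{qMn^{2}\,(\norm{\cov\pi^{X}}\vee 1)\polylog\tfrac{1}{\eta\veps}}
\]
total expected queries, as claimed. The hardest piece is the uniform-in-time control of the acceptance rate: although one step preserves $\pi$, intermediate laws drift through non-stationary measures, and I would need to verify that the R\'enyi warmness (and hence the $M$ factor governing acceptance) does not degrade along the chain, which should follow from the per-step R\'enyi contraction itself and the fact that $\eu R_{\infty}\to\eu R_{q}$ warmness is inherited through the Gaussian convolution in line~2 of Algorithm~\ref{alg:prox-exp}.
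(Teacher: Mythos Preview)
Your overall architecture matches the paper's: lift to the exponential target, control $\cpi(\pi)$ via $\norm{\cov\pi^{X}}$ and Klartag, invoke Poincar\'e-based R\'enyi decay of the proximal sampler, and bound the rejection cost of the backward step. Steps~1--3 are essentially right (one slip: under \eqref{eq:pi} the decay of $\eu R_{q}$ is \emph{additive} while $\eu R_{q}\geq 1$ and only then multiplicative, so the iteration count is $\Otilde(q\cpi(\pi)h^{-1}\log\tfrac{M}{\veps})$, not $\log\tfrac{\log M}{\veps}$; this is harmless inside $\Otilde$). The genuine gap is in step~4. Your assertion that the acceptance probability is $\Omega(1)$ whenever $h\lesssim 1/n^{2}$ is not true pointwise in $y$, and your threshold $N=\Omtilde(\log(k/\eta))$ is therefore far too small. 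The argument you give only addresses the deterministic drift $-hne_{n+1}$ of the backward mean; it says nothing about the random displacement of $y$ in the forward step. On the event that $y$ lands far from $\K$, the success probability $\ell(y)$ is arbitrarily small; under $\pi^{Y}$ this event is rare but not negligible, so $\E_{\pi^{Y}}[(1-\ell)^{N}]$ cannot be driven below $\eta/(kM)$ with logarithmic $N$. The paper actually takes $N=\Thetilde(kM/\eta)$ and $h\asymp(n^{2}\log\tfrac{kM}{\eta})^{-1}$.

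What makes this work is a geometric ingredient you never invoke: the inner-ball hypothesis $B_{1}(x_{0})\subset\msf L_{\pi^{X},g}$ gives the expansion bound $\int_{\K_{s}}e^{-nt}\big/\int_{\K}e^{-nt}\leq e^{O(sn)}$ (without it there is no control on how fast the sublevel integrals of the lifted density grow). This yields an effective domain $\widetilde\K=\K_{\delta}+h\alpha$ with $\pi^{Y}(\widetilde\K^{c})\leq\exp(-\delta^{2}/(2h)+O(\delta n))$ and a bound on $\int_{\widetilde\K}\ell^{-1}\,\D\pi^{Y}$; one then decomposes both $\E_{\pi^{Y}}[(1-\ell)^{N}]$ and $\E_{\pi^{Y}}[\ell^{-1}\wedge N]$ over $\widetilde\K^{c}$, a sublevel set $\{\ell\leq N^{-1}\log\tfrac{kM}{\eta}\}$, and the remainder, tuning $\delta\asymp 1/n$ together with $h,N$ as above. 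The expected per-step query count does come out $\Otilde(M)$, so your final complexity is right---but the failure-probability half of the argument, and hence the choice of $N$, requires this three-region decomposition rather than a uniform lower bound on $\ell$. Your last paragraph about warmness being preserved along the chain is fine (it follows from $\nu\leq M\pi\Rightarrow\nu P\leq M\pi P=M\pi$ for any Markov kernel fixing $\pi$) and is not where the difficulty actually sits.
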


See \S\ref{subsec:exp-mixing} for a more detailed description of
the heat flow perspective and its benefits for sampling. Our specific
setting of parameters can be found in Theorem~\ref{thm:exp-sampling}.
In comparison with the previous best complexity of $\Otilde(n^{2}\tr(\cov\pi^{X})\polylog\frac{M}{\veps})$,
which is for $\tv$-distance, the $\ps$ $\psexp$ achieves a provably
better rate (since $\norm{\cov\pi^{X}}\leq\tr(\cov\pi^{X})$) from
an $\O(1)$-warm start, and moreover does so in general $\eu R_{q}$-divergences.
This mixing rate matches the previously known best rate for the uniform
sampling by the $\bw$ \cite{kannan1997random} as well as $\ino$
\cite{kook2024inout}.

\begin{figure}
\includegraphics[width=\columnwidth]{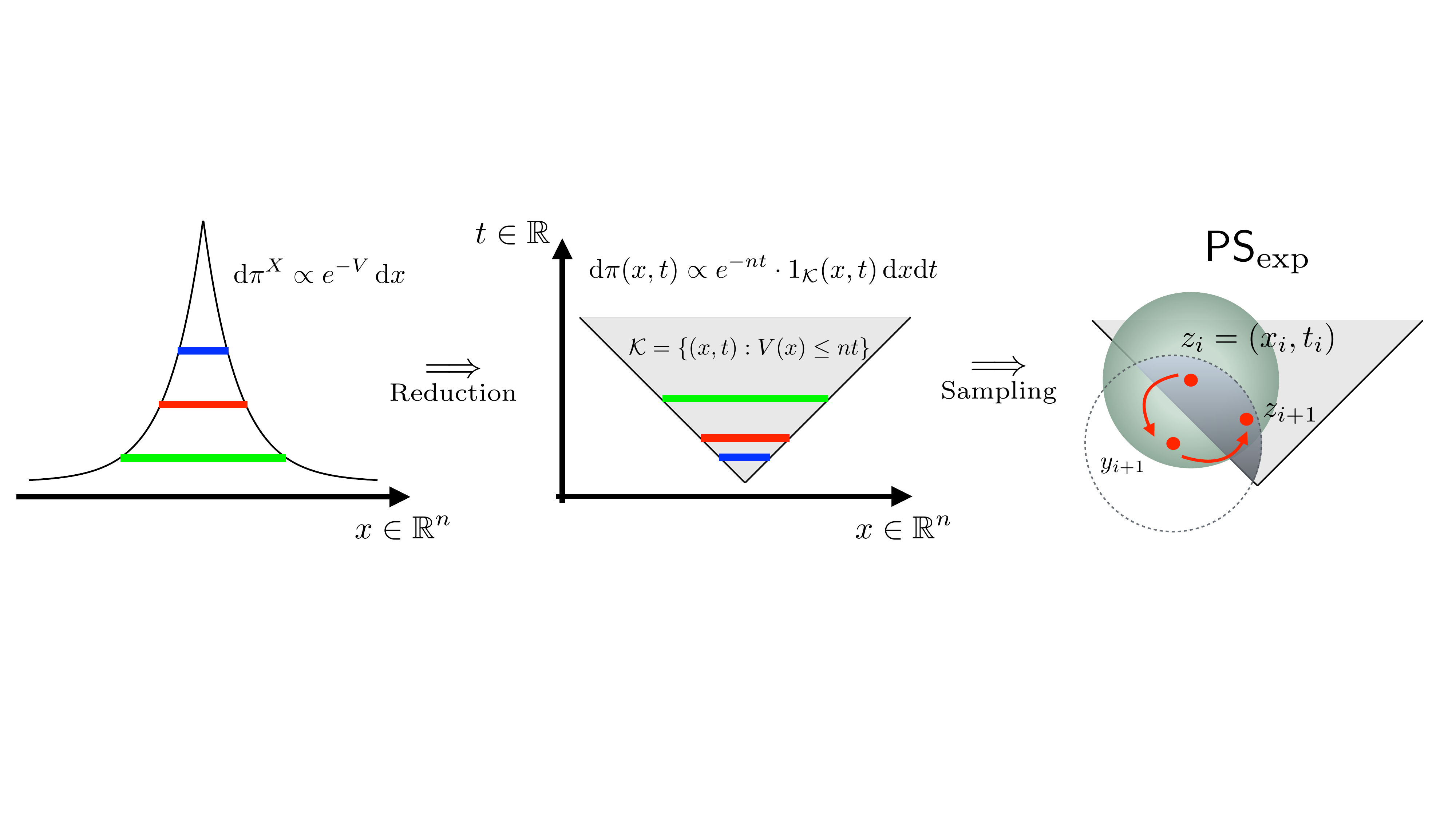} 

\caption{Reduction to an exponential distribution and sampling via the $\protect\ps$
$\protect\psexp$.\label{fig:fig1}}
\end{figure}

\paragraph{Result 2: Warm-start generation (sampling without a warm start).}

The $\ps$ $\psexp$ assumes access to a warm start in $\eu R_{\infty}$,
and generating such a good warm-start is an important and challenging
algorithmic problem in its own right. In \S\ref{sec:warm-start},
we propose $\tgc$ which generates an $\O(1)$-warm start in the $\eu R_{\infty}$-divergence
for any target logconcave distribution. This algorithm generalizes
$\gc$ \cite{cousins2018gaussian}, the known method for generating
a warm-start for the uniform distribution over a convex body. The
high-level idea is to follow a sequence $\{\mu_{i}\}_{i\in[m]}$ of
distributions, where $\mu_{1}$ is easy to sample from, $\mu_{i}$
and $\mu_{i+1}$ are close in some probability divergences, and $\mu_{m}$
is the desired target. 

Our algorithm first reduces the original target to the exponential
distribution and then follows annealing distributions of the form
\[
\mu_{\sigma^{2},\rho}(x,t)\propto\exp\bigl(-\frac{1}{2\sigma^{2}}\,\norm x^{2}-\rho t\bigr)\cdot\ind[\{V(x)\leq nt\}\cap\{\norm x=\O(R),\abs t=\O(1)\}]\,,
\]
with a carefully chosen schedule for updating the parameters $\sigma^{2}$
and $\rho$. Roughly speaking, we update the $X$-marginal through
$\gc$ and the $T$-marginal using an exponential \emph{tilt}. In
order to move across the annealing distributions, $\tgc$ requires
an efficient sampler for the intermediate annealing distributions,
ideally with guarantees in the $\eu R_{\infty}$-divergence in order
to relay $\eu R_{\infty}$-warmness guarantees along the annealing
scheme.  We use the $\ps$ for these intermediate distributions with
a $\eu R_{\infty}$-divergence guarantee, namely the $\ps$ $\psann$
(Algorithm~\ref{alg:prox-ann}) in \S\ref{ssec:prox-annealing}.
In our computational model, this sampler started at a previous annealing
distribution returns a sample with law $\mu$ satisfying $\eu R_{\infty}(\mu\mmid\mu_{\sigma^{2},\rho})\leq\veps$,
using $\Otilde(n^{2}\sigma^{2}\polylog\nicefrac{R}{\eta\veps})$ evaluation
queries in expectation (see Theorem~\ref{thm:annealing-query-comp}).

By sampling these annealing distributions through $\psann$, $\tgc$
obtains an $\O(1)$-warm start for $\pi$ (and hence for the desired
target $\pi^{X}$); then runs the $\ps$ $\psexp$ one final time
to obtain a sample with $\eu R_{\infty}$-guarantees (this is a very
strong notion of probability divergence that recovers all commonly
used distances such as $\tv,\KL,\chi^{2},$ or $\eu R_{q}$).
\begin{thm}
\label{thm:tgc-intro} For any logconcave distributions $\pi^{X}$
specified by $\eval_{x_{0},R}(V)$, for any given $\eta,\veps\in(0,1)$,
$\tgc$ (Algorithm~\ref{alg:tgc}) with probability at least $1-\eta$,
returns a sample with law $\nu$ such that $\eu R_{\infty}(\nu\mmid\pi^{X})\leq\veps$,
using $\Otilde(n^{2}(R^{2}\vee n)\polylog\nicefrac{1}{\eta\veps})$
evaluation queries in expectation. Hence, if $\pi^{X}$ is well-rounded
(i.e., $R^{2}\lesssim n$), then $\Otilde(n^{3}\polylog\nicefrac{1}{\eta\veps})$
queries suffice.
\end{thm}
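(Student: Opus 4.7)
The plan is to run $\tgc$ as an annealing scheme on the lifted exponential target $\pi \in \mc P(\R^{n+1})$, invoke $\psann$ at each intermediate phase, and finish with one call of $\psexp$ to attain the prescribed $\eu R_\infty$-accuracy. Since $\pi^X$ is the $X$-marginal of $\pi$ (Proposition~\ref{prop:x-marginal}) and $\eu R_\infty$ is non-increasing under pushforward, it suffices to sample $\pi$ to within $\eu R_\infty$-error $\veps$ and then discard the $T$-coordinate.

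For the annealing sequence $\{\mu_{\sigma_i^2,\rho_i}\}_{i=1}^m$, I would take $\sigma_1^2 = \Theta(1/n)$ centered at $x_0$ and $\rho_1$ large, so the first distribution is nearly a point mass and can be initialized by direct Gaussian rejection with $\O(1)$ expected attempts. I would then evolve $\sigma_i^2$ along a slow geometric schedule, with $\rho_i$ jointly tuned so that consecutive distributions are $\O(1)$-close in $\eu R_\infty$, stopping at $\sigma_m^2 = \Theta(R^2 \vee n)$ and $\rho_m = \Theta(1)$. The bound $\E_\pi \snorm{X-x_0}^2 \leq R^2$ and exponential concentration of $T$ imply that on the truncation slab $\{\snorm{x} = \O(R),\,\abs{t}=\O(1)\}$ the terminal $\mu_{\sigma_m^2,\rho_m}$ is $\O(1)$-warm with respect to $\pi$; choosing the slab to capture all but an $\veps$-fraction of $\pi$'s mass preserves this warmness globally up to $\veps$.

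Next I would account for the cost. By Theorem~\ref{thm:annealing-query-comp}, each phase uses $\Otilde(n^2 \sigma_{i+1}^2 \polylog(Rm/\eta))$ expected queries to return an $\O(1)$-warm sample in $\eu R_\infty$ with per-phase failure probability $\eta/(2m)$. Summing across the schedule, the cost is dominated by the last phase and totals $\Otilde(n^2(R^2\vee n)\polylog(1/(\eta\veps)))$ queries, and a union bound gives total success probability at least $1-\eta/2$. Starting from the resulting $\O(1)$-warm input for $\pi$, one call of $\psexp$ in its $\eu R_\infty$-variant (built from Theorem~\ref{thm:lc-warmstart-intro} as discussed in \S\ref{subsec:exp-mixing}) drives the $\eu R_\infty$-divergence to $\veps$ within the same order of expected queries and failure probability at most $\eta/2$; the lifting adds only $\O(1)$ to $\snorm{\cov\pi}$ beyond $\snorm{\cov\pi^X}\leq R^2$, keeping the bound within $\Otilde(n^2(R^2\vee n))$.

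The main obstacle will be calibrating the joint $(\sigma^2,\rho)$ schedule so that consecutive $\mu_{\sigma_i^2,\rho_i}$ are genuinely $\O(1)$-close in the pointwise $\eu R_\infty$-sense despite the non-smooth indicator $\ind[\{V(x)\leq nt\}]$ that couples $x$ and $t$. A multiplicative update of $\sigma^2$ by $1+c/\sqrt n$ perturbs $\exp(-\snorm{x}^2/(2\sigma^2))$ pointwise by a factor up to $\exp(c\snorm{x}^2/(2\sqrt n\,\sigma^2))$ on the slab, and the companion change $\rho_{i+1}-\rho_i$ must be chosen so that the corresponding perturbation of $\exp(-\rho t)$ compensates the shift of the boundary $V(x)=nt$ and remains uniformly bounded elsewhere. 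Verifying that the slab carries almost all mass throughout the schedule, and that the truncation introduces no density singularities near that boundary, is the delicate calibration underlying both the phase count $m$ and the final $\Otilde(n^2(R^2\vee n))$ complexity.
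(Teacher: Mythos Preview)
Your overall architecture—lift to the exponential target, anneal through $\mu_{\sigma^2,\rho}$ via $\psann$, finish with $\psexp$, then marginalize—is the paper's. But two specifics of your plan would block the proof.

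The tilt schedule runs the wrong way. The lifted target $\pi\propto e^{-nt}|_{\K}$ has tilt coefficient $n$, so the annealing must \emph{end} at $\rho_m=n$, not $\Theta(1)$: with your terminal $\rho_m=\Theta(1)$ the density ratio $\mu_{\sigma_m^2,\rho_m}/\pi$ carries a factor $e^{(n-\rho_m)t}$, which over the $t$-slab of width $\Theta(l)$ ranges over $\exp(\Theta(nl))$, so the terminal distribution is not even polynomially warm for $\pi$. Conversely, starting with $\rho_1$ large forces the initial distribution to concentrate on the curved boundary $\{V(x)=nt\}$, and there is no simple proposal with $\O(1)$ acceptance there. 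The paper does the reverse: $\rho_1=l^{-1}$ so the $t$-marginal is nearly uniform over the bounded slab and rejection with proposal $\mc N(x_0,n^{-1}I_n)\otimes\Unif$ succeeds in $\O(l)$ attempts (Lemma~\ref{lem:init}); then $\rho$ is raised geometrically to $n$ during Phase~I while $\sigma^2\in[n^{-1},1]$. Relatedly, a single ``slow geometric'' update $\sigma^2\gets\sigma^2(1+c/\sqrt n)$ cannot both preserve $\O(1)$-warmness and meet the $\Otilde(n^2(R^2\vee n))$ budget; the paper needs two regimes, $\sigma^2\gets\sigma^2(1+n^{-1})$ while $\sigma^2\le 1$ and then the accelerated $\sigma^2\gets\sigma^2(1+\sigma^2/(lR)^2)$ in Phase~II, so that each doubling of $\sigma^2$ costs $\Otilde(n^2R^2l^2)$ uniformly.

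The final $\eu R_\infty$ step does not follow from Theorem~\ref{thm:lc-warmstart-intro}, which gives only $\eu R_q$ for finite $q$ via \eqref{eq:pi}; the exponential target has no finite LSI constant, so one cannot upgrade to $q=\infty$ directly (Remark~\ref{rmk:LSI-mixing-exp}). The paper instead runs the last $\psexp$ on the truncated $\bar{\pi}=\pi|_{\bar{\K}}$, whose bounded support gives $\clsi(\bar{\pi})=\O(R^2l^2)$; uniform ergodicity plus the $L^\infty$-boosting scheme (Lemmas~\ref{lem:boosting} and~\ref{lem:ps-exp-Rinf}) then yield $\eu R_\infty(\nu\mmid\bar{\pi})\le\veps/3$, and the slab is chosen so that $\eu R_\infty(\bar{\pi}\mmid\pi)\le 2\veps/3$.
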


This improves the prior best complexity of $\Otilde(n^{3}R^{2}\polylog\frac{1}{\veps})$
for general logconcave sampling (in $\tv$-distance) from scratch
by \cite[Collorary 1.2]{lovasz2006fast}, and provides a much stronger
$\eu R_{\infty}$-guarantee. Moreover, this complexity matches the
best-known complexity for uniform sampling with the $\tv$-guarantee
by \cite{cousins2018gaussian} and with the $\eu R_{\infty}$-guarantee
by \cite{kook2024renyi}.

\paragraph{Result 3: Isotropic rounding of logconcave distributions.}

The above results on the complexity of logconcave sampling still have
dependence on the second moment $R^{2}$, so those are not fully polynomial
in the problem parameters. In \S\ref{sec:rounding}, we address this
issue by initially running an algorithm for isotropic rounding which
makes the covariance matrix of a given logconcave distribution near-isotropic
(i.e., $\cov\pi^{X}\approx I_{n}$). After this rounding, $R^{2}=\O(n)$,
and thus the guarantee above turns into $n^{3}$ at the cost of a
multiplicative factor of $\polylog R$ in the complexity. Isotropic
rounding is also a useful tool for many other high-dimensional algorithms. 

At a high level, just as in the warm-start generation, our algorithm
follows a sequence $\{\mu_{i}\}$ of distributions while updating
an affine map $F:\Rn\to\Rn$ along the way. An important subroutine
is to convert a well-rounded distribution (i.e., with $\tr(\cov)=\O(n)$)
to one that is near-isotropic (i.e., with $\cov\approx I_{n}$). To
this end, we generalize the approach taken in \cite{jia2021reducing}
for the uniform distribution --- repeatedly {[}draw a few samples
$\to$ compute a crude estimate of the covariance $\to$ identify
skewed directions from the estimation and upscale them{]}. In doing
so, we rely crucially on our sampler's mixing rate in terms of $\norm{\cov\pi^{X}}$
(instead of $\tr(\cov\pi^{X})$) in \textbf{Result 1} and its improved
complexity for warm-start generation in \textbf{Result 2}.
\begin{thm}
\label{thm:rounding-intro} For any logconcave distribution $\pi^{X}$
specified by  $\eval_{x_{0},R}(V)$, there exists a randomized algorithm
with query complexity of $\Otilde(n^{3.5}\polylog R)$ that finds
an affine map $F$ such that the pushforward of $\pi^{X}$ via $F$
is $1.01$-isotropic with probability at least $1-\O(n^{-1/2})$.
\end{thm}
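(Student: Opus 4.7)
Following the outline given just before Theorem~\ref{thm:rounding-intro}, the plan is to iteratively construct an affine map $F = F_m \circ \cdots \circ F_0$ so that the pushforward $F_\ast \pi^X$ is $1.01$-isotropic, generalizing the rounding procedure of \cite{jia2021reducing} from uniform bodies to general logconcave measures. In each round $i$, draw a small batch of samples from the current pushforward $\mu_i := (F_i \circ \cdots \circ F_0)_\ast \pi^X$, compute an empirical estimate of the top eigenstructure of $\cov\mu_i$, and define $F_{i+1}$ by rescaling along the skewed directions. The argument splits into a coarse phase aimed at reducing $\|\cov\mu_i\|$ from its initial value down to a constant, followed by a single fine phase that uses operator-norm concentration of the empirical covariance for (nearly) isotropic logconcave measures to go from constant-factor isotropic to $1.01$-isotropic.

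\paragraph{Invariants and per-sample cost.}
Two invariants are maintained throughout the coarse phase: (a) an operator-norm bound $\|\cov\mu_i\|\le B_i$ with $B_i$ halved per round, and (b) an $O(1)$-warm start in $\eu R_\infty$-divergence for $\mu_i$. Invariant (b) is threaded through the algorithm by interleaving the rounding updates with the annealing of $\tgc$ (Theorem~\ref{thm:tgc-intro}), so that the warmness is established and then preserved: each update $F_{i+1} = \widehat\Sigma_i^{-1/2}$ is a bounded Löwner perturbation of the identity that only shifts the $\eu R_\infty$-warmness by a constant factor. Under these invariants, each sample in round $i$ can be drawn via Algorithm~\ref{alg:prox-exp} at cost $\widetilde O(n^2 B_i)$ using Theorem~\ref{thm:lc-warmstart-intro}; the crucial point is that this per-sample cost scales with $\|\cov\mu_i\|$ rather than $\tr(\cov\mu_i)$, so the partial rescalings immediately reduce the sampling cost for subsequent rounds.

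\paragraph{Complexity accounting and main obstacle.}
Following \cite{jia2021reducing}, $\widetilde O(\sqrt n)$ samples per coarse round suffice to identify a rescaling that reduces $\|\cov\mu_i\|$ by a constant factor --- this sublinear sample budget (as opposed to the $\widetilde O(n)$ needed for complete operator-norm covariance estimation) is the key improvement that yields the $n^{3.5}$ rate. The coarse phase therefore has per-round cost $\widetilde O(n^{2.5} B_i)$, and summing over the $\widetilde O(\log R + \log n)$ rounds with $B_i$ geometric gives a total coarse-phase cost of $\widetilde O(n^{3.5}\polylog R)$; the fine phase adds only $\widetilde O(n^3)$. Each round succeeds with probability $1-O(n^{-3/2})$ by a suitable matrix concentration inequality, and union-bounding over the $\widetilde O(\log R)$ rounds yields the stated $1-O(n^{-1/2})$ success. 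The hard part will be proving that $\widetilde O(\sqrt n)$ samples per round really suffice: standard operator-norm concentration for empirical covariances of logconcave vectors requires $\widetilde O(n)$ samples, so one must follow \cite{jia2021reducing} and work directly with only the empirical top eigenspace, showing that even a coarse estimate of the dominant directions yields a multiplicative decrease in $\|\cov\|$. A secondary subtlety is carefully tracking the $\eu R_\infty$-warm-start invariant through the affine updates, which reduces to bounding the Jacobian of $F_{i+1}\circ F_i^{-1}$ and applying a chain rule for $\eu R_\infty$.
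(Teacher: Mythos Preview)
Your proposal contains a genuine arithmetic gap and misidentifies where the $\sqrt n$ factor in $n^{3.5}$ comes from. First, if $B_i$ halves each round starting from $B_0=\|\cov\pi^X\|$ and the per-round cost is $\Otilde(n^{2.5}B_i)$, then the geometric sum is dominated by its first term $\Otilde(n^{2.5}B_0)$; since $B_0$ can be as large as $R^2$, this is $\Otilde(n^{2.5}R^2)$, polynomial rather than polylogarithmic in $R$. Relatedly, the initial warm start for $\pi^X$ via $\tgc$ already costs $\Otilde(n^2(R^2\vee n))$, exceeding the budget whenever $R^2\gg n^{1.5}$; the vague ``interleaving'' suggestion does not resolve this. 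Second, the claim that $\Otilde(\sqrt n)$ samples suffice to halve $\|\cov\|$ is not what \cite{jia2021reducing} does and is false in general: if $\Sigma=B\cdot I_n$ then with $\sqrt n$ samples the empirical covariance has rank at most $\sqrt n$, so any rescaling based on it leaves an $(n-\sqrt n)$-dimensional subspace untouched and $\|\Sigma\|$ remains $B$.

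The paper's approach is structurally different. It never samples from $\pi^X$ until the very end; instead it works with the truncated distributions $\nu_r^X\propto e^{-V}|_{\msf L_{g}\cap B_r(0)}$ for $r=1,\delta,\delta^2,\ldots$ with $\delta=1+n^{-1/2}$. Each $\nu_r^X$ has bounded support, and the subroutine $\msf{Isotropize}$ brings a well-rounded logconcave measure to $1.01$-isotropic in $\Otilde(n^3)$ queries by \emph{doubling the inner radius} while keeping $\|\cov\|=\Otilde(n)$ (not by halving $\|\cov\|$). The key outer-loop lemma (Lemma~\ref{lem:outer-loop}) shows that if $F_\#^X\nu_r^X$ is near-isotropic then $(4F^X)_\#\nu_{\delta r}^X$ is again well-rounded; proving this for general logconcave (rather than uniform) distributions requires the exponential reduction and a new diameter bound for level sets (Lemma~\ref{lem:diameter-level-set}), neither of which appears in your outline. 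The $\sqrt n$ factor in $n^{3.5}$ is the number of outer iterations ($\sqrt n\log R$ steps to reach $r\asymp R$), not a per-round sample count.
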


Our complexity improves upon the previous best complexity of $\Otilde(n^{4}\polylog R)$
for general logconcave distributions \cite[Theorem 6.1]{lovasz2006fast}.
Moreover, our rate matches that of isotropic rounding for the uniform
distribution \cite{jia2021reducing,jia2024reducingisotropyvolumekls}.

\paragraph{Result 4: Integration of logconcave functions.}

Finally, we examine the complexity of integrating a logconcave function,
a classical application of logconcave sampling. To this end, we extend
an annealing approach in \cite{cousins2018gaussian} to general logconcave
integration in \S\ref{sec:integration}. Similar to $\tgc$, we follow
a sequence $\{f_{i}\}_{i\in[m]}$ of logconcave functions, where $f_{1}$
is easy to integrate, $\mu_{i}\propto f_{i}$ is an $\O(1)$-warm
start for $\mu_{i+1}\propto f_{i+1}$, the variance of $f_{i+1}/f_{i}$
with respect to $\mu_{i}$ is small enough (e.g., $\chi^{2}(\mu_{i+1}\mmid\mu_{i})=\O(m^{-1})$),
and $f_{m}$ is the target logconcave function. 
\begin{thm}
\label{thm:integration-intro} For any $\veps>0$ and  an integrable
well-rounded logconcave function $f=e^{-V}:\R^{n}\rightarrow\R$ presented
by $\eval_{x_{0},R}(V)$, there exists an algorithm that with probability
at least $3/4$, returns a $(1+\varepsilon)$-multiplicative approximation
to the integral of $f$ using $\Otilde(n^{3}/\varepsilon^{2})$ queries.
For an arbitrary logconcave $f$ given by a well-defined function
oracle, the complexity is bounded by $\Otilde(n^{3.5}\polylog R+n^{3}/\varepsilon^{2})$. 
\end{thm}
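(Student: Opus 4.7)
The plan is to extend the Gaussian cooling integration framework of \cite{cousins2018gaussian} to general logconcave $f=e^{-V}$ by integrating the lifted exponential measure $\pi(x,t)\propto e^{-nt}\ind[V(x)\le nt]$ from Result~1. Since $\int_{\Rn}f=n\int_{\K}e^{-nt}\,\D x\,\D t$ with $\K=\{(x,t):V(x)\le nt\}$, it suffices to estimate the normalizing constant $Z$ of $\pi$. I would reuse the tilted Gaussian family $\mu_{\sigma^2,\rho}$ underlying $\tgc$ (Result~2), setting up a schedule $\{(\sigma_i^2,\rho_i)\}_{i=0}^m$ whose initial measure $\mu_0$ is a narrowly-truncated Gaussian with closed-form integral and whose terminal measure $\mu_m$ is $\pi$. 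Letting $f_i$ be the unnormalized density of $\mu_i$ and $Z_i=\int f_i$, I would telescope $Z=Z_0\prod_{i=0}^{m-1} r_i$ with $r_i=Z_{i+1}/Z_i=\E_{\mu_i}[f_{i+1}/f_i]$ and estimate each ratio by the standard importance-sampling estimator
\begin{equation*}
\hat r_i=\frac{1}{N}\sum_{j=1}^N \frac{f_{i+1}(X_j^{(i)})}{f_i(X_j^{(i)})}, \qquad X_j^{(i)}\sim\mu_i,
\end{equation*}
whose normalized variance equals $\chi^2(\mu_{i+1}\mmid\mu_i)/N$.

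The $\tgc$ construction of Result~2 already yields an annealing schedule of length $m=\Otilde(\sqrt{n})$ along which consecutive measures satisfy $\eu R_\infty(\mu_{i+1}\mmid\mu_i)=O(1)$, which automatically implies $\chi^2(\mu_{i+1}\mmid\mu_i)=O(1)$. Aggregating the per-phase variances into the telescoping product requires $N=\Otilde(\sqrt{n}/\veps^2)$ samples per phase, for $\Otilde(n/\veps^2)$ samples overall; the $\eu R_\infty$-warmness chains $\psann$ (Result~2) across the schedule so that the sampler at phase $i+1$ is initialized from the output of phase $i$. Charging the per-sample query cost from Result~2 along the tilted Gaussian schedule and extending the cost accounting of \cite{cousins2018gaussian} to the joint $(\sigma^2,\rho)$ trajectory yields a total sampling cost of $\Otilde(n^3/\veps^2)$ in the well-rounded regime where $\norm{\cov\mu_i}=O(1)$ along the path, matching the claimed bound.

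For arbitrary (not well-rounded) logconcave $f$, the algorithm is preceded by the isotropic rounding of Theorem~\ref{thm:rounding-intro}, which at cost $\Otilde(n^{3.5}\polylog R)$ produces an affine map placing $\pi^X$ in near-isotropic position (with second moment $O(n)$); after rounding, the well-rounded analysis applies, for a total of $\Otilde(n^{3.5}\polylog R+n^3/\veps^2)$. The main technical step is the variance-telescoping analysis for the tilted Gaussian family: although the chi-squared bound per phase is inherited from $\tgc$, one must carefully propagate the bias introduced by approximate samples (the $\psann$ outputs are only in $\eu R_q$-divergence, not exactly from $\mu_i$) through the product $\prod_i \hat r_i$ and show that setting the per-phase sampler accuracy to $\veps/\poly(n)$ contributes only polylogarithmic overhead. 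A further subtlety is verifying that the per-sample costs across phases, which depend on both $\sigma_i^2$ and $\rho_i$, sum to $\Otilde(n^3/\veps^2)$, extending the Gaussian cooling summation of \cite{cousins2018gaussian} to the tilted setting.
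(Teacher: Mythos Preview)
Your high-level plan---lift to the exponential measure, anneal through the tilted Gaussian family, telescope via importance sampling, and preprocess by rounding---matches the paper's, but the quantitative accounting has two genuine gaps that prevent the argument from closing at $\Otilde(n^3/\veps^2)$.

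First, the $\tgc$ schedule of Result~2 does \emph{not} have length $\Otilde(\sqrt n)$: its Phase~I alone (updating $\sigma^2$ by a factor $1+\tfrac1n$ from $n^{-1}$ to $1$) already uses $\Otilde(n)$ inner phases, and the paper's integration schedule is even finer. More importantly, warm-start generation only needs $\eu R_\infty(\mu_i\mmid\mu_{i+1})=O(1)$, whereas integration needs the much stronger control $\chi^2(\mu_{i+1}\mmid\mu_i)=O(1/m)$ when there are $m$ phases, so that the product of $(1+\var)$ factors stays $1+O(\veps^2)$. Simply importing the warmness guarantee and using $\chi^2=O(1)$ per phase forces $N=\Theta(m/\veps^2)$ samples per phase, which is too many.

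Second, and decisively, the claim that $\norm{\cov\mu_i}=O(1)$ along the path is false: the sampler cost for $\mu_{\sigma^2,\rho}$ is governed by $\clsi(\mu_{\sigma^2,\rho})=\Theta(\sigma^2\vee l^2)$ (Lemma~\ref{lem:lsi-annealing}), so each sample costs $\Otilde(n^2\sigma^2)$, reaching $\Otilde(n^3)$ when $\sigma^2\asymp n$. With your uniform $N=\Otilde(\sqrt n/\veps^2)$ samples per phase, the last few phases alone cost $\Otilde(n^{3.5}/\veps^2)$. The paper circumvents this with the Gaussian-cooling cancellation: in Phase~III it takes $\Otilde(n/\sigma^2)$ phases per doubling with $\chi^2(\mu_{i+1}\mmid\mu_i)=O(\sigma^2/n)$ (Lemma~\ref{lem:faster_rate}), so that (i) the summed relative variance per doubling is $O(1/m)$ independently of $\sigma^2$, permitting a \emph{uniform} sample size $m=\Otilde(1/\veps^2)$, and (ii) the query cost per doubling is $(n/\sigma^2)\cdot m\cdot\Otilde(n^2\sigma^2)=\Otilde(mn^3)$, again independent of $\sigma^2$. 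Phases~I and~II use the analogous balance via Lemma~\ref{lem:slower_rate} with $\chi^2=O(1/n)$. Without this per-phase variance calibration, your cost accounting does not sum to $\Otilde(n^3/\veps^2)$; you would need either the paper's finer schedule or a non-uniform sample allocation $N(\sigma)\propto 1/\sigma$, neither of which is in the proposal.
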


This result improves the previous best complexity $\Otilde(n^{4}/\veps^{2})$
of integrating logconcave functions \cite[Theorem 1.3]{lovasz2006fast}.
Again, our improved complexity matches that for the uniform distribution
(i.e., volume computation) \cite{cousins2018gaussian,jia2021reducing,jia2024reducingisotropyvolumekls}.
We note that due to the strong guarantees of the new sampler, we can
streamline the analysis of errors and dependence among samples used
to estimate $f_{i+1}/f_{i}$, simplifying and strengthening earlier
analyses \cite{kannan1997random,lovasz2006simulated,cousins2018gaussian}.

\subsection{Techniques and background\label{ssec:techniques}}

In proving our results --- faster algorithms for the basic problems
of sampling (with and without a warm start), rounding and integration
--- we do not take the most direct route. Instead we develop several
techniques that appear to be interesting and we imagine will be useful
in other contexts as well. These include the reversible heat-flow
perspective for the design of polynomial-time algorithms providing
a direct connection to isoperimetric constants, sampling guarantees
in the strong notion of $\eu R_{\infty}$-distance (which has other
motivations such as differential privacy), relevant geometry of logconcave
functions, and a streamlined analysis of estimation errors of dependent
samples. 

\subsubsection{Going beyond the uniform distribution}

The uniform distribution over a convex body is  a special case of
a logconcave distribution. The best-known polynomial complexity for
uniform sampling is achieved by addressing three subproblems: (1)
sampling from a warm start, (2) generating a warm start and (3) performing
isotropic rounding. Extending the best-known guarantees and algorithms
for each of these problems to general logconcave densities is the
primary challenged addressed in this paper. In our proofs, we use
general properties of logconcave functions and avoid any structural
assumptions on the target density. 

The first challenge is to establish a logconcave sampler with a mixing
rate matching that of the best uniform samplers, such as the $\bw$
\cite{kannan1997random} or $\ino$ \cite{kook2024inout}. These uniform
samplers have a mixing rate of $n^{2}\norm{\cov\pi^{X}}$, while $\har$
in \cite{lovasz2006fast}, previously the best for general logconcave
distributions, has a rate of $n^{2}\tr(\cov\pi^{X})$. The improved
complexity bounds of these uniform samplers come from a better understanding
of isoperimetric properties, such as the Cheeger and Poincar\'e constants
for logconcave distributions, and this aspect is more direct for $\ino$.
Here we extend $\ino$ to general logconcave distributions. As we
elaborate in \S\ref{sssec:lc-sampling-warm}, this requires implementing
rejection sampling for the distributions of the form $\exp(-V(x)-\frac{1}{2h}\,\norm x^{2})$
(for a suitable $h$) using $\Otilde(1)$ queries. However, with only
logconcavity assumed for $e^{-V}$, determining a suitable proposal
distribution and analyzing its query complexity present significant
challenges.

\paragraph{Reduction to an exponential distribution.}

To extend methods from uniform to general logconcave distributions,
we leverage a conceptual connection between logconcave sampling and
convex optimization --- sampling from $\exp(-V)$ is analogous to
minimization of $V$. In optimization, we bound $V(x)$ with a new
variable $t$, and add the convex constraint $V(x)\leq t$. Inspired
by this, we consider an exponential distribution with density $\pi(x,t)\propto\exp(-nt)\cdot\ind[(x,t):V(x)\leq nt]$
on $\R^{n+1}$, with $X$-marginal $\pi^{X}\propto\exp(-V)$ (Proposition~\ref{prop:x-marginal}).
The scaling of $n$ in the exponential function is a natural choice
as we will see later. 

This reduction offers several advantages. First, the potential $V$
becomes linear, opening up a possible extension of previous analyses.
Second, it points to a clear path to generalizing ideas/guarantees
devised for uniform sampling. Specifically, the conditional law $\pi^{X|T=t}$
of $X$ given $T=t$ is the uniform distribution over the convex level
set $L_{nt}=\{V(x)\leq nt\}$. Thus, $\pi$ can be interpreted as
an average of the uniform distribution over $L_{nt}$ weighted by
$e^{-nt}$, and this idea is crucial to our subsequent algorithms.

A similar exponential reduction was discussed conceptually in \cite{kook2022sampling}
and developed further in \cite{kook2024gaussian} for a sampling analogue
of the interior-point method from optimization. Here, however, we
apply this reduction in our general setup, without specific assumptions
on the epigraph, and analyze structural properties of the reduced
distribution. For example, we bound the mean and variance along the
$T$-direction. This allows us to analyze how this reduction impacts
key parameters of the original logconcave distribution, including
the largest eigenvalue and trace of the covariance. 

\subsubsection{Logconcave sampling from a warm start\label{sssec:lc-sampling-warm}}

\paragraph{Sampling through diffusion.}

The proximal sampler \cite{lee2021structured} is essentially a Gibbs
sampler from two conditional distributions. For a target distribution
$\pi^{X}\propto\exp(-V)$, it introduces a new variable $Y$ and considers
the augmented distribution $\pi^{X,Y}\propto\exp(-V(x)-\frac{1}{2h}\,\norm{x-y}^{2})$
with parameter $h>0$. One iteration involves two steps: (i) $y\sim\pi^{Y|X=x}=\mc N(x,hI_{n})$
and (ii) $x\sim\pi^{X|Y=y}\propto\exp(-V(\cdot)-\frac{1}{2h}\,\norm{\cdot-y}^{2})$.
While step (i) is straightforward, step (ii) requires a nontrivial
sampling procedure.

The complexity of the $\ps$ involves a mixing analysis (to determine
the number of iterations for a desired accuracy to the target) and
the query complexity of implementing step (ii). For the first part,
\cite{chen2022improved} demonstrated that one iteration corresponds
to simulating the heat flow for (i) and then a \emph{time reversal}
of the heat flow for (ii) (see \S\ref{subsec:exp-mixing} for details).
This leads to exponential decay of the $\eu R_{q}$-divergence, with
the decay rate dependent on the isoperimetry of the target distribution
(e.g, \eqref{eq:pi} and \eqref{eq:lsi}). For the second part, prior
approaches typically use rejection sampling or a different logconcave
sampler (e.g., $\msf{MALA}$, $\msf{ULMC}$) under additional assumptions
such as the smoothness of $V$ or access to first-order and proximal
oracles for $V$.

\paragraph{Sampling without smoothness.}

Previous studies of the $\ps$ focused on smooth unconstrained distributions,
where the potential satisfies $\hess V\preceq\beta I_{n}$ for some
$\beta<\infty$, leaving open the complexities of uniform or general
logconcave sampling with hard constraints. \cite{kook2024inout} introduced
$\ino$, a version of the $\ps$ for uniform sampling over convex
bodies. With $V(x)=(\ind[x\in\K])^{-1}$ for a convex body $\K$,
one iteration of $\ino$ draws $y\sim\mc N(x,hI_{n})$ and then $x\sim\mc N(y,hI_{n})|_{\K}$
(the Gaussian truncated to $\mc K$) using rejection sampling on the
proposal $\mc N(y,hI_{n})$. They introduce a \emph{threshold} parameter
on the number of rejection trials which ensures that the algorithm
does not use too many queries. 

This diffusion-based approach turns out to be stronger than the $\bw$,
the previous best uniform sampler with query complexity $\Otilde(n^{2}\norm{\cov\pi^{X}}\polylog\frac{1}{\veps})$
for obtaining an $\veps$-close sample in $\tv$-distance. $\ino$
provides $\eu R_{q}$-divergence guarantees with a matching rate of
$n^{2}\norm{\cov\pi^{X}}$ for $\tv$-distance, and its analysis is
simpler than the $\bw$. Since the latter's analysis in \cite{kannan1997random}
goes through its biased version (called the $\sw$), it involves an
understanding of an additional rejection step for making the biased
distribution close to the uniform target, as well as the isoperimetric
constant of the biased distribution. In contrast, $\ino$ achieves
direct contraction towards the uniform target, with rate dependent
on isoperimetric constants of the original target, not a biased one,
and achieves stronger output guarantees. $\ino$'s approach has also
been extended to truncated Gaussian sampling \cite{kook2024renyi},
leading to a $\eu R_{\infty}$-guarantee that improves upon the $\tv$-guarantee
of the $\bw$ \cite{cousins2018gaussian}.

\paragraph{Proximal sampler for general logconcave distributions.}

This prompts a natural question if we can extend $\ino$ beyond constant
and quadratic potentials. Prior studies of the $\ps$ by \cite{chen2022improved,kook2024inout}
allow for an immediate mixing analysis through the isoperimetry of
logconcave distributions. However, we also need a query complexity
bound for sampling from $\pi^{X|Y=y}\propto\exp(-V(x)-\frac{1}{2h}\,\norm{x-y}^{2})$.
While one might consider using $\har$ \cite{lovasz2006fast} for
the second step, it requires \emph{roundedness} of this distribution,
and its current complexity is also unsatisfactory. Rejection sampling
is another option, but without smoothness assumptions for general
logconcave distributions, it is challenging to identify an appropriate
proposal distribution and to bound the expected number of trials.

To handle an arbitrary convex potential $V$, we apply the exponential
reduction, turning $V$ into a convex constraint and work with the
linear potential $nt$ in one higher dimension. Extending $\ino$
to this exponential distribution, we propose the $\ps$ $\psexp$
(Algorithm~\ref{alg:prox-exp}). Since $\cpi(\pi)\lesssim(\cpi(\pi^{X})+1)\log n$
(Lemma~\ref{lem:exp-cov}), the mixing rate of $\psexp$ for $\pi$
is close to that of the $\ps$ for $\pi^{X}$. Implementing the second
step is now simpler due to the linear potential, allowing us to extend
the analysis of $\ino$ for uniform and truncated Gaussian distributions
to exponential distributions (Lemma~\ref{lem:exp-perstep}). A new
technical ingredient is bounding the rate of increase of $\int_{\K_{\delta}}e^{-nt}/\int_{\K}e^{-nt}$
as $\delta$ grows, where $\K_{\delta}=\{z\in\R^{n+1}:d(z,\K)\leq\delta\}$.
We show that this growth rate is bounded by $e^{\delta n}$ (Lemma~\ref{lem:relaxed-regularity}
and~\ref{lem:exp-expansion}).

\subsubsection{Warm-start generation\label{subsec:warm-start-technical}}

Prior work on warm-start generation \cite{kannan1997random,lovasz2006simulated,cousins2018gaussian}
is based on using a sequence $\{\mu_{i}\}_{i\in[m]}$ of distributions,
where $\mu_{1}$ is easy to sample (e.g., uniform distribution over
a unit ball), each $\mu_{i}$ is close to $\mu_{i+1}$ in probabilistic
distance, and $\mu_{m}$ is the target. By moving along this sequence
with a suitable sampler for intermediate annealing distributions,
one can generate a warm start for the desired distribution. This approach
is more efficient than trying to directly go from $\mu_{1}$ to $\mu_{m}$.

\paragraph{Warm-start generation for uniform distributions.}

The state-of-the-art algorithm for uniform distributions over convex
bodies is $\gc$ \cite{cousins2018gaussian}. They set $\mu_{i}(x)\propto\exp(-\norm x^{2}/(2\sigma_{i}^{2}))$,
where $\sigma_{i}^{2}$ increases from $n^{-1}$ to $R^{2}$ according
to a suitable schedule. This approach ensures that $\mu_{i}$ is $\O(1)$-warm
with respect to $\mu_{i+1}$, and uses the $\bw$ to sample from truncated
Gaussians with $\tv$-guarantees. Using a coupling argument, they
showed that with high probability, this scheme outputs an $\veps$-close
sample to the uniform distribution in $\tv$-distance, using $\Otilde(n^{2}(R^{2}\vee n)\polylog\frac{1}{\veps})$
membership queries. \cite{kook2024renyi} later improved this by replacing
the $\bw$ with the $\ps$, achieving $\eu R_{\infty}$-guarantees
with the same complexity. As a result, $\gc$ transfers $\eu R_{\infty}$-warmness
across the sequence of distributions, and thus the complexity for
uniform sampling from a convex body remains the same even for $\eu R_{\infty}$-divergence
guarantees.

\paragraph{Going beyond uniform distributions.}

In \cite{cousins2018gaussian}, Cousins and Vempala raised the question
of whether their annealing strategy can be extended to arbitrary logconcave
distributions with complexity $\Otilde(n^{2}(R^{2}\vee n))$. A natural
choice for annealing distributions is $\mu(x)\propto\exp(-V(x)-\frac{1}{2\sigma^{2}}\,\norm x^{2})$,
which would still provide $\eu R_{\infty}$-closeness of consecutive
distributions and allow for accelerated updates to $\sigma^{2}$.
However, prior samplers lack the necessary guarantees for these distributions,
so we use the exponential reduction.

To generate a warm start for $\pi(x,t)\propto e^{-nt}|_{\K}$, it
seems natural to consider an annealing distribution obtained by multiplying
$\pi$ by a Gaussian in ``$(x,t)$'' for a direct application of $\gc$
in $\R^{n+1}$. However, due to different rates of changes in the
quadratic term and linear term in $t$ over an interval of length
$\O(1)$, these two terms do not properly cancel each other, which
implies that the warmness of $\mu_{m-1}$ with respect to $\mu_{m}=\pi$
is no longer $\O(1)$-bounded.

We address this by introducing $\mu_{\sigma^{2},\rho}(x,t)\propto\exp(-\frac{1}{2\sigma^{2}}\,\norm x^{2}-\rho t)\cdot\ind[(x,t)\in\bar{\K}]$
where $\sigma^{2}\in(0,R^{2}]$, $\rho\in(0,n]$, and $\bar{\K}=\{V(x)\leq nt\}\cap\{\norm x=\O(R),\norm t=\O(1)\}$.
Essentially, this runs $\gc$ along the $x$-direction with an exponential
tilt in the $t$-direction, ensuring that consecutive distributions
are $\O(1)$-close in $\eu R_{\infty}$. However, we need an efficient
sampler for $\mu_{\sigma^{2},\rho}$ with $\eu R_{\infty}$-guarantees
to maintain $\eu R_{\infty}$-warmness across the annealing scheme.

\paragraph{Sampling from annealing distributions.}

Given the form of the annealing distribution (the potential is a combination
of linear and quadratic terms), we use the $\ps$ to develop $\psann$
(Algorithm~\ref{alg:prox-ann}). The query complexity for rejection
sampling in the second step now can be derived from our analysis of
$\psexp$ (Lemma~\ref{lem:exp-perstep}) and the $\ps$ for truncated
Gaussians \cite{kook2024renyi}. For mixing with $\eu R_{\infty}$
guarantees, we apply a technique from \cite{kook2024renyi}. We obtain
a mixing rate of $\O(h^{-1}\clsi(\mu_{\sigma^{2},\rho}))$ for the
$\ps$ based on \eqref{eq:lsi} (Proposition~\ref{prop:ps-smooth}).
This results in only a \emph{doubly logarithmic} dependence on the
initial warmness, implying convergence from any \emph{feasible} start
with an overhead of $\polylog(n,R)$ (Lemma~\ref{lem:ps-annealing-iter}).
This \emph{uniform ergodicity} implies $L^{\infty}$-norm contraction
of density toward the target \cite{del2003contraction} (see Lemma~\ref{lem:boosting}),
leading to a $\eu R_{\infty}$-guarantee of $\psann$ without significant
overhead (Theorem~\ref{thm:warm-start}). 

In Lemma~\ref{lem:lsi-annealing}, we bound the LSI constant of the
annealing distribution $\mu_{\sigma^{2},\rho}$ by $\sigma^{2}\vee1$
via the Bakry-\'Emery criterion (Lemma~\ref{lem:bakry-emery}) and
Holley--Stroock perturbation principle (Lemma~\ref{lem:lsi-bdd-perturbation}).
For $\nu\propto\exp(-\frac{1}{2\sigma^{2}}\,\norm x^{2}-\frac{t^{2}}{2}-\rho t)|_{\bar{\K}}$,
since the potential of $\nu$ is $\min(\sigma^{-2},1)$-strongly convex,
its LSI without convex truncation is bounded by $\sigma^{2}\vee1$
through Bakry-\'Emery, and convex truncation to $\bar{\K}$ only
helps in satisfying the criterion \cite{bakry2014analysis}. Also,
as $\sup t-\inf t=\Theta(1)$ over $\bar{\K}$, the ratio of $\mu_{\sigma^{2},\rho}$
to $\nu$ is bounded below and above by $\Theta(1)$, so the perturbation
principle ensures that $\clsi(\mu_{\sigma^{2},\rho})\lesssim\clsi(\nu)\leq\sigma^{2}\vee1$.

\paragraph{Gaussian cooling with exponential tilt.}

With the query complexity of $\psann$ in mind, we design $\tgc$
(Algorithm~\ref{alg:tgc}) for warm-start generation for $\pi(x,t)\propto\exp(-nt)|_{\K}$.
We run rejection sampling with proposal $\mc N(0,n^{-1}I_{n})\otimes\text{Unif}\,(I_{t})$
for some interval $I_{t}$ of length $\Theta(1)$, and initial distribution
$\exp(-\frac{n}{2}\,\norm x^{2})|_{\bar{\K}}$, which is $\O(1)$-warm
with respect to $\mu_{1}=\exp(-\frac{n}{2}\,\norm x^{2}-t)|_{\bar{\K}}$.
In Phase I, we update the two parameters according to $\sigma^{2}\gets\sigma^{2}(1+n^{-1})$
and $\rho\gets\rho(1+n^{-1})$ while $\rho\leq n$ and $\sigma^{2}\leq1$.
Since Phase I involves $\Otilde(n)$ inner phases with complexities
of sampling from each annealing distributions being $\Otilde(n^{2}(\sigma^{2}\vee1))=\Otilde(n^{2})$,
the total complexity is $\Otilde(n^{3})$. In Phase II, we accelerate
$\sigma^{2}$-updates via $\sigma^{2}\gets\sigma^{2}(1+\sigma^{2}/R^{2})$
as in $\gc$. With $\Otilde(R^{2}/\sigma^{2})$ inner phases (for
doubling of $\sigma^{2}$) and sampling complexity $\Otilde(n^{2}\sigma^{2})$
per inner phase, this has total complexity $\Otilde(n^{2}R^{2})$.
At termination, $\psexp$ is run with $\mu_{R^{2},n}$ as the initial
distribution for target $\bar{\pi}\propto e^{-nt}|_{\bar{\K}}$, where
these two are close in $\eu R_{\infty}$. Using the LSI of $\bar{\pi}$
and the boosting scheme again, we can achieve an $\veps$-close sample
to $\pi$ (not $\bar{\pi}$) in the $\eu R_{\infty}$-divergence,
using $\Otilde(n^{2}(R^{2}\vee n)\polylog\frac{1}{\veps})$ queries
in total (Lemma~\ref{thm:warm-start}). 

\subsubsection{Rounding}

Rounding is the key to reducing the dependence on $R$ from $\poly R$
to $\polylog R$. 

\paragraph{Isotropic rounding for uniform distributions.}

The previous best rounding algorithm for uniform distributions, proposed
by \cite{jia2021reducing}, gradually isotropizes a sequence $\{\mu_{i}\}$
of distributions. For a convex body $\K$, they set $\mu_{i}=\text{Unif}\,(\K\cap B_{\delta^{i}}(0))$
for $\delta=1+n^{-1/2}$, increasing $i$ while $\delta^{i}\leq R$.
Their approach entails two important tasks: \underline{Outer loop}:
if $F(\K\cap B_{r})$ is near-isotropic for an affine map $F$, then
show that $F(\K\cap B_{\delta r})$ is well-rounded, and \underline{Inner loop}:
design an algorithm of $n^{3}$-complexity that isotropizes a given
well-rounded uniform distribution. The first task was accomplished
by \cite{jia2024reducingisotropyvolumekls} through Paouris' lemma
(i.e., exponential tail decay) and a universal property that the diameter
of an isotropic convex body is bounded by $n+1$.

The second task was addressed by repeating {[}draw samples $\to$
compute crude covariance estimation $\to$ upscale skewed directions
of the covariance estimation{]}. They first run $\gc$ to obtain a
warm start for a uniform distribution $\mu$ from a convex body. Then,
when the inner radius is $r$, the $\bw$ (or $\ino$) is used to
generate $r^{2}$ samples approximately distributed according to $\mu$.
These $r^{2}$ samples give a rough estimate of the covariance matrix
$\overline{\Sigma}$ of $\mu$ such that $\abs{\overline{\Sigma}-\Sigma}\precsim nI_{n}$,
where $\Sigma=\cov\mu$. Since the query complexity of these uniform
samplers is $n^{2}\norm{\Sigma}/r^{2}$, this procedure uses $n^{2}\norm{\Sigma}$
queries in total. Then, it computes the eigenvalue/vectors of $\overline{\Sigma}$
and scales up (by a factor of $2)$ the subspace spanned by eigenvectors
with eigenvalues less than $n$. One iteration of this process achieves
two key properties: the largest eigenvalue of the covariance $\Sigma$
increases by at most $n$ additively while $r=\inrad\mu$ \emph{almost}
doubles. Since the well-roundedness ensures that $\norm{\Sigma}=\O(n)$
initially, the complexity of one iteration remains as $\Otilde(n^{3})$
throughout. Since there are $\sqrt{n}\log R$ outer iterations, the
algorithm uses a total of $\Otilde(n^{3.5}\polylog R)$ queries.

\paragraph{Extension to general logconcave distributions.}

Our rounding algorithm essentially follows this approach, with several
technical refinements. First of all, we define a \emph{ground set}
of a general logconcave distribution, namely the level set $\msf L_{g}:=\{x:V(x)-\min V\leq10n\}$.
This ground set takes up most of measure due to the universal property
in terms of the potential value (Lemma~\ref{lem:LC-tail}). Focusing
on the ground set is the first step toward a streamlined extension
of the previous approach, so we consider the \emph{grounded distribution}
$\nu^{X}:=\pi^{X}|_{\msf L_{g}}$. Then, for $\nu_{r}^{X}:=\nu^{X}|_{B_{r}(0)}$,
we isotropize a sequence of distributions, given by $\nu_{1}^{X}\to\nu_{\delta}^{X}\to\nu_{\delta^{2}}^{X}\to\cdots\to\nu_{D}^{X}\to\nu^{X}\to\pi^{X}$
for $D=\Theta(R)$.

To analyze the outer loop, we show that for an affine map $F^{X}$
between $\Rn$, if $F_{\#}^{X}\nu_{r}^{X}$ is near-isotropic, then
$(4F^{X})_{\#}\nu_{\delta r}^{X}$ is well-rounded. Unfortunately,
a varying density of $F_{\#}^{X}\nu_{\delta r}^{X}$ poses a daunting
challenge in extending the previous proof in \cite{jia2024reducingisotropyvolumekls}
to general logconcave distributions. Nonetheless, we can resolve this
issue by working again with the exponential reduction (Lemma~\ref{lem:outer-loop}).
This extension asks for a universal property that an isotropic grounded
distribution has diameter of order $\O(n)$, similar to isotropic
uniform distributions. We show this in Lemma~\ref{lem:diameter-level-set}.
Transferring roundedness in the steps $\nu_{D}^{X}\to\nu^{X}$ and
$\nu^{X}\to\pi^{X}$ is relatively straightforward by combining the
change of measures and the reverse H\"older inequality for logconcave
distributions.

For the inner loop, we can still apply the algorithm from \cite{jia2021reducing}
(or its streamlined version in \cite{kook2024covariance}) with only
minor changes to constants. First, with the logconcave sampler $\psexp$
whose mixing rate depends on $\norm{\cov\pi^{X}}$ rather than $\tr(\cov\pi^{X})$,
the complexity analysis of the inner loop extends naturally to general
logconcave distributions. Next, we note that the proofs for controlling
$\norm{\Sigma_{i}}$ and $\tr\Sigma_{i}$ are identical to those for
uniform distributions. The proof for the doubling of the inner radius
at each iteration is nearly the same as in the uniform case, since
the existence of a large ball due to isotropy is also a \emph{universal}
property of logconcave distributions (see Lemma~\ref{lem:ball-in-isotropy}).

\subsubsection{Integration}

For integration, we use the stronger guarantees of our logconcave
sampler, along with a version of the $\tgc$ scheme to obtain a cubic
algorithm for \emph{well-rounded} logconcave functions. For general
logconcave functions, we use the rounding algorithm as a pre-processing
step, then apply the integration algorithm to the near-isotropic distribution
obtained after rounding. 

\paragraph{Volume computation through annealing.}

Similar to sampling, prior volume algorithms also follow a sequence
$\{f_{i}\}_{i\in[m]}$ of logconcave functions, moving across distributions
$\mu_{i}\propto f_{i}$ using a logconcave sampler. The annealing
scheme is designed in a way that $f_{1}$ is easy to integrate, $\mu_{i}$
is an $\O(1)$-warm start for $\mu_{i+1}$, the variance of the estimator
$E_{i}=f_{i+1}/f_{i}$ with respect to $\mu_{i}$ is bounded by 
$m^{-1}\E_{\mu_{i}}[E_{i}^{2}]$ (i.e., $\chi^{2}(\mu_{i+1}\mmid\mu_{i})=\O(m^{-1})$),
and $f_{m+1}=f$ is the target logconcave function. Since $\E_{\mu_{i}}E_{i}=\int f_{i+1}/\int f_{i}$,
accurate estimations of all $E_{i}$ guarantee that the product $\int f_{1}\times E_{1}\cdots E_{m}$
is a good estimator of $\int f_{1}\cdot\E E_{1}\cdots\E E_{m}=\int f$.

The best-known algorithm, $\gc$ \cite{cousins2018gaussian}, uses
unnormalized Gaussian densities $f_{i}(x)=\exp(-\nicefrac{\norm x^{2}}{2\sigma_{i}^{2}})\cdot\ind_{\K}(x)$
for a convex body $\K$, along with an update schedule for $\sigma_{i}^{2}$.
As described earlier in \S\ref{subsec:warm-start-technical}, its
uses the $\bw$ to sample from $\mu_{i}\propto f_{i}$ with $\tv$-guarantees. 

\paragraph{Extension to general logconcave functions.}

Our integration algorithm follows this approach, but once again in
the lifted space. With $\psann$ used for sampling, we follow a modified
version of $\tgc$ for ease of analysis, particularly for variance
control. We use $f_{i}(x,t)=\exp(-\nicefrac{\norm x^{2}}{2\sigma_{i}^{2}}-\rho_{i}t)|_{\K}$
as the intermediate annealing functions, and ensure that $\eu R_{\infty}(\mu_{i}\mmid\mu_{i+1})=\O(1)$
for efficient sampling from $\mu_{i+1}$, and $\eu R_{2}(\mu_{i+1}\mmid\mu_{i})=\exp(\O(m^{-1}))$
for $\var_{\mu_{i}}(E_{i}/\E_{\mu_{i}}E_{i})=\O(m^{-1})$. Since $\O(1)$-warmness
can be shown as in $\tgc$, we elaborate on technical tools for variance
control along with design of the algorithm. In Phase I, we go from
$\exp(-\nicefrac{n\norm x^{2}}{2})|_{\K}$ to $\exp(-\nicefrac{\norm x^{2}}{2})|_{\K}$
with the update $\sigma^{2}\gets\sigma^{2}(1+\nicefrac{1}{n})$, where
variance control is achieved by the logconcavity of $a\mapsto a^{n}\int h^{a}$
for a logconcave function $h$ \cite{kalai2006simulated}. In Phase
II, we move from $\exp(-\nicefrac{\norm x^{2}}{2}-t)|_{\K}$ to $\exp(-\nicefrac{\norm x^{2}}{2}-nt)|_{\K}$
with the update $\rho\gets\rho\,(1+\nicefrac{1}{n})$, where variance
control follows from the previous lemma and another lemma in \cite{cousins2018gaussian}.
Lastly in Phase III, we move from $\exp(-\nicefrac{\norm x^{2}}{2}-nt)|_{\K}$
to $\exp(-nt)|_{\K}$ with the update $\sigma^{2}\gets\sigma^{2}(1+\nicefrac{\sigma^{2}}{n})$,
and we use the lemma in \cite{cousins2018gaussian} again for variance
control.

\paragraph{Streamlined statistical analysis.}

If all samples used are \emph{independent,} and the estimators have
moderate variance, namely $\var_{\mu_{i}}(\nicefrac{E_{i}}{\E E_{i}})=\O(m^{-1})$,
then $\var(E_{1}\cdots E_{m})=(\prod(1+m^{-1})-1)\,(\E E_{i})^{2}\approx\O(1)\,(\E[E_{1}\cdots E_{m}])^{2}$,
which implies concentration of the estimator around $\E[E_{1}\cdots E_{m}]=\E E_{1}\cdots\E E_{m}$
through Chebyshev's inequality. However, samples given by (say) the
$\bw$ in $\gc$ are \emph{approximately} distributed according to
$\mu_{i}$. In fact, samples drawn from $\mu_{i}$ and $\mu_{i+1}$
are \emph{dependent}, since a sample from $\mu_{i}$ is used as a
warm-start for $\mu_{i+1}$. This means that the estimator $E_{1}\cdots E_{m}$
is a \emph{biased estimator} of $\E E_{1}\cdots\E E_{m}(\neq\E[E_{1}\cdots E_{m}])$.
These two issues complicate statistical analysis in prior work on
how close the estimator is to the integral of $f$, and require additional
technical tools to address them. For the first issue, previous work
used a coupling argument based on $\tv$-distance (referred to as
``divine intervention'') to account for the effects of approximate
distributions (rather than exactly $\mu_{i}$). For the second, they
used the notion of $\alpha$-mixing \cite{rosenblatt1956central}
(referred to as ``$\mu$-independence'' therein) to bound the bias
of the product estimator.

We simplify this statistical analysis substantially by using stronger
guarantees of our sampler $\psann$. For the first issue, when $\bar{\mu}_{i}$
denotes the actual law of a sample $X_{i}$ satisfying $\eu R_{\infty}(\bar{\mu}_{i}\mmid\mu_{i})\leq\veps$,
we can notice that the probability of a bad event (any event) with
respect to $\bar{\mu}_{i}$ (instead of $\mu_{i}$) only increases
by at most a multiplicative factor of $1+\veps$. Since the mixing
time of $\psann$ has a polylogarithmic dependence on $\nicefrac{1}{\veps}$,
we can set $\veps$ polynomially small (i.e., $\veps\gets\veps/\poly(n,R)$),
so we can enforce $\eu R_{\infty}(\otimes_{i}\bar{\mu}_{i}\mmid\otimes_{i}\mu_{i})\leq\veps$
 without a huge overhead in the query complexity. For the second
issue, we use notion of $\beta$-mixing (or the coefficient of absolute
regularity) \cite{kolmogorov1960strong}, which is stronger than $\alpha$-mixing.
This quantity basically measures the discrepancy between a joint distribution
and the product of marginal distributions by $\norm{\law(X_{i},X_{i+1})-\law X_{i}\otimes\law X_{i+1}}_{\tv}=\norm{\law(X_{i},X_{i+1})-\bar{\mu}_{i}\otimes\bar{\mu}_{i+1}}_{\tv}\leq\beta$.
Since the mixing of $\psann$ via \eqref{eq:lsi} ensures mixing from
\emph{any start} (Lemma~\ref{lem:ps-annealing-iter}), we can easily
bound $\beta$ by $\O(\veps)$ (see \eqref{eq:beta-bound} for details).
Thus, when analyzing the probability of a bad event, we can replace
$\otimes\bar{\mu}_{i}$ with $\law(X_{1},\dots,X_{m})$ at the additive
cost of $\O(m\veps)$ in probability (we can replace $\veps\gets m\veps$
once again).

\subsection{Preliminaries\label{sec:Preliminaries}}

Let $\mc P(\Rn)$ be the family of probability measures (distributions)
on $\Rn$ that are absolutely continuous with respect to the Lebesgue
measure. We use the same symbol for a distribution and density. For
a set $S$ and its indicator function $\ind_{S}(x)=[x\in S]$, we
use $\mu|_{S}$ to denote a distribution $\mu$ truncated to $S$
(i.e., $\mu|_{S}\propto\mu\cdot\ind_{S}$). For a measurable map $F:\Rn\to\Rn$
and $\mu\in\mc P(\Rn)$, the \emph{pushforward} measure $F_{\#}\mu$
is defined as $F_{\#}\mu(A)=(\mu\circ F^{-1})(A)$ for a measurable
set $A$. For $a,b\in\R$, we use $a\vee b$ and $a\wedge b$ to indicate
their maximum and minimum, respectively. We use $B_{r}^{n}(x_{0})$
to denote the $n$-dimensional ball of radius $r>0$ centered at $x_{0}\in\Rn$,
dropping the superscript $n$ if there is no confusion. We use $a=\Otilde(b)$
to denote $a=\mc O(b\polylog b)$. For two symmetric matrices $A,B$,
we use $\abs A\preceq B$ to denote $-B\preceq A\preceq B$. Unless
specified otherwise, for a vector $v$ and a PSD matrix $M$, $\norm v$
and $\norm{\Sigma}$ refer to the $\ell_{2}$-norm of $v$ and the
operator norm of $\Sigma$, respectively. We use $\overline{\R}$
to denote the extended real number system $\R\cup\{\pm\infty\}$.

We recall notions of common probability divergences/distances between
distributions. 
\begin{defn}
\label{def:p-dist} For $\mu,\nu\in\mc P(\Rn)$, the \emph{$f$-divergence}
of $\mu$ towards $\nu$ with $\mu\ll\nu$ is defined as, for a convex
function $f:\R_{+}\to\R$ with $f(1)=0$ and $f'(\infty)=\infty$,
\[
D_{f}(\mu\mmid\nu):=\int f\bpar{\frac{\D\mu}{\D\nu}}\,\D\nu\,.
\]
For $q\in(1,\infty)$, the \emph{$\KL$-divergence and $\chi^{q}$-divergence}
correspond to $f(x)=x\log x$ and $x^{q}-1$, respectively. The \emph{$q$-R\'enyi
divergence} is defined as 
\[
\eu R_{q}(\mu\mmid\nu):=\tfrac{1}{q-1}\log\bpar{\chi^{q}(\mu\mmid\nu)+1}\,.
\]
The \emph{R\'enyi-infinity divergence} is defined as
\[
\eu R_{\infty}(\mu\mmid\nu):=\log\esssup_{\mu}\frac{\D\mu}{\D\nu}\,.
\]
A distribution $\mu$ is said to be \emph{$M$-warm with respect to
a distribution} $\nu$ if $\frac{\mu(S)}{\nu(S)}\le M$ for any measurable
subset $S$ (i.e., $\mu$ is $\exp(\eu R_{\infty}(\mu\mmid\nu))$-warm\emph{
}with respect to $\nu$). The \emph{total variation} (TV) distance
for $\mu,\nu\in\mc P(\Rn)$ is defined by 
\[
\norm{\mu-\nu}_{\msf{TV}}:=\frac{1}{2}\int|\mu(x)-\nu(x)|\,\D x=\sup_{S\in\mc F}\abs{\mu(S)-\nu(S)}\,,
\]
where $\mc F$ is the collection of all measurable subsets of $\Rn$. 
\end{defn}

We recall $\KL=\lim_{q\downarrow1}\eu R_{q}\leq\eu R_{q}\leq\eu R_{q'}\leq\eu R_{\infty}$
for $1\leq q\leq q'$ and $2\,\norm{\cdot}_{\tv}^{2}\leq\KL\leq\eu R_{2}=\log(\chi^{2}+1)\leq\chi^{2}$.
We refer readers to \cite{van2014renyi} for basic properties of the
R\'enyi-divergence (e.g., continuity/monotonicity in $q$). The data
processing inequality is given in Lemma~\ref{lem:DPI} for reference.

\section{Sampling from logconcave distributions\label{sec:lc-sampling}}

In this section, we study the complexity of sampling from a logconcave
distribution under the setup stated in the introduction.  We remark
that the ``roundness'' (or regularity) of the distribution is specified
through the existence of a ball in a level set of \textbf{constant
measure}. We find it more convenient to work with the \textbf{values
of the potential} (see Lemma~\ref{lem:relaxed-regularity}). More
precisely, we will focus on a particular level set, called the \emph{ground
set} of $\pi$, defined as 
\begin{equation}
\msf L_{\pi,g}:=\{x\in\Rn:V(x)-\min V\leq10n\}\,.\label{eq:ground_set}
\end{equation}
Note that the ground set of $\pi$ is \emph{independent} of the normalization
constant of $\pi$, which justifies using $\pi$ in the subscript
to indicate its uniqueness. We use $\msf L_{V,g}$ to reveal a specific
choice of the potential $V$. We also define the notation: $V_{0}:=\min V$,
and a level set $\msf L_{\pi,l}:=\{x\in\Rn:V(x)-\min V\leq nl\}$.
\ul{Hereafter, in $\eval_{\mc P}(V)$, the requirement of inclusion
of $B_{1}(x_{0})$ in the level set of $\pi$ of measure $1/8$ is
replaced by $B_{1}(x_{0})\subset\msf L_{\pi^{X},g}$.} As we will
see in Lemma~\ref{lem:relaxed-regularity}, this is a weaker condition. 
\begin{problem}
[Zeroth-order logconcave sampling from a warm start] \label{prob:exp-sampling}
Assume access to a well-defined function oracle $\eval(V)$ for convex
$V:\Rn\to\overline{\R}$.  Given target accuracy $\veps>0$ and $M$-warm
initial distribution for $\pi^{X}$, how many evaluation queries do
we need to obtain a sample $X_{*}$ such that
\[
D(\law X_{*},\pi^{X})\leq\veps\quad\text{for }D=\tv,\KL,\eu R_{q},\text{ etc.}
\]
\end{problem}

Note that this \emph{does not} require the knowledge of $x_{0},R$,
and $\min V$. Also, if the inner radius is $r$, then we can simply
rescale the whole system by $x\mapsto r^{-1}x$. 

We prove \textbf{Result 1} in this section:
\begin{thm}
[Restatement of Theorem~\ref{thm:lc-warmstart-intro}] In the setting
of Problem~\ref{prob:exp-sampling}, there exists an algorithm that
for given $\veps,\eta\in(0,1)$, $q\in[2,\infty)$, and $M$-warm
initial distribution for $\pi^{X}$, with probability at least $1-\eta$
returns a sample $X_{*}$ satisfying $\eu R_{q}(\law X_{*}\mmid\pi^{X})\leq\veps$,
using $\Otilde(qMn^{2}(\norm{\cov\pi^{X}}\vee1)\log^{4}\nicefrac{1}{\eta\veps})$
evaluation queries in expectation.
\end{thm}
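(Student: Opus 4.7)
The plan is to apply the $\psexp$ (Algorithm~\ref{alg:prox-exp}) to the lifted exponential distribution $\pi(x,t) \propto \exp(-nt) \cdot \ind[V(x) \leq nt]$ on $\R^{n+1}$, whose $X$-marginal recovers $\pi^X$ by Proposition~\ref{prop:x-marginal}. First I would lift the warm start: from $\pi_0^X$ with $\eu R_\infty(\pi_0^X \mmid \pi^X) = \log M$, take $\pi_0(x,t) := \pi_0^X(x)\,\pi^{T\mid X=x}(t)$ so that $\frac{\D \pi_0}{\D \pi}(x,t) = \frac{\D \pi_0^X}{\D \pi^X}(x)$, giving $\eu R_\infty(\pi_0 \mmid \pi) = \log M$ verbatim. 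Running $\psexp$ on $\pi$ and discarding the $T$-coordinate produces a sample whose law is close to $\pi^X$ in $\eu R_q$, by the data processing inequality.

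Next I would bound the mixing rate using the reversible heat-flow interpretation of the $\ps$: the forward step $y \sim \mc N(z,hI_{n+1})$ evolves the law by heat flow for time $h$, and the backward step $z \sim \mc N(y - hne_{n+1}, hI_{n+1})|_{\K}$ is exactly the time-reversal of the heat flow starting from $\pi$. Following the proximal-sampler analysis in the style of \cite{chen2022improved}, this yields per-iteration contraction of $\eu R_q$ with rate $\tfrac{h}{\cpi(\pi) + h}$, so that $\Otilde(q h^{-1}\,\cpi(\pi)\,\log\tfrac{\log M}{\veps})$ outer iterations suffice to reach $\eu R_q(\law X_* \mmid \pi^X) \leq \veps$. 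Combining the bound $\cpi(\pi) \lesssim (\cpi(\pi^X) + 1)\log n$ with Klartag's estimate $\cpi(\pi^X) \lesssim \norm{\cov \pi^X}\log n$, the outer iteration count is $\Otilde(q h^{-1} (\norm{\cov \pi^X}\vee 1))$ up to polylog factors.

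The main obstacle is bounding the per-iteration cost of the backward rejection sampling from $\mc N(y - hne_{n+1}, hI_{n+1})|_{\K}$, and this is where the exponential reduction pays off. Standard arguments reduce the expected number of rejection trials to the ratio $\int_{\K_\delta} e^{-ns}\,/\,\int_{\K} e^{-ns}$ averaged at the relevant scale $\delta \asymp \sqrt{h}$ along the $(n{+}1)$-st coordinate. Because the linear potential $ns$ controls the growth, the key technical lemma (Lemma~\ref{lem:exp-expansion}, with input from Lemma~\ref{lem:relaxed-regularity} converting the ground-set condition $B_1(x_0) \subset \msf L_{\pi^X,g}$ into the required regularity of $\K$) gives the expansion bound $e^{\O(\delta n)}$. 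Choosing $h = \Theta(1/n^2)$ then keeps the expected trial count $\O(1)$ per outer step, so the threshold $N = \Otilde(\log\tfrac{k}{\eta})$ on rejection attempts guarantees, via a union bound over the $k$ outer iterations, that the algorithm does not declare \textbf{Failure} except with probability $\eta$.

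Combining the pieces: outer iterations $\Otilde(q M n^2 (\norm{\cov \pi^X}\vee 1)\polylog\tfrac{1}{\eta\veps})$ (with the extra factor $M$ arising from relaying $\log M$ through the mixing bound applied at $\veps' = \veps/M$ or equivalently absorbed into the warm-start analysis of \cite{kook2024renyi}), multiplied by an expected $\O(1)$ inner trials, yields the advertised $\Otilde(qMn^2(\norm{\cov \pi^X} \vee 1)\log^4 \tfrac{1}{\eta\veps})$ expected evaluation complexity with success probability at least $1-\eta$. The routine components are the lift of the warm start, the polylog cost of translating a Klartag-type Poincaré bound from $\pi^X$ to $\pi$, and the union bound on rejection failures; the substantive input is the sharp control of the exponential-measure expansion rate $e^{\O(\delta n)}$ under $\delta$-thickening of $\K$.
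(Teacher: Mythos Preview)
Your high-level plan matches the paper: lift the warm start to $\pi$ via $\pi_0 = \pi_0^X \cdot \pi^{T\mid X}$, run $\psexp$, invoke Poincar\'e-based contraction (Lemma~\ref{lem:exp-mixing}) with $\cpi(\pi)\lesssim(\|\cov\pi^X\|\vee1)\log n$ (Lemma~\ref{lem:exp-cov} plus Klartag), and control the backward rejection step through the expansion bound $\int_{\K_\delta}e^{-nt}/\int_\K e^{-nt}\le e^{\O(\delta n)}$ (Lemma~\ref{lem:exp-expansion}). That is exactly the paper's route.

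However, your accounting for the linear $M$ factor is wrong, and this is not cosmetic. Under \eqref{eq:pi} the mixing bound \eqref{eq:cpi-renyi} gives only a $\log\tfrac{M}{\veps}$ (not $\log\tfrac{\log M}{\veps}$) dependence, so the outer iteration count is $k=\Otilde(qn^2(\|\cov\pi^X\|\vee1)\log\tfrac{M}{\veps})$ with no $M$ factor. The $M$ enters through the \emph{per-step rejection cost}: when the current law $\mu$ is only $M$-warm for $\pi^Z$, the forward image $\mu*\mc N(0,hI)$ is $M$-warm for $\pi^Y$, and one can only bound $\E_{\mu*\mc N}[\ell^{-1}\wedge N]\le M\,\E_{\pi^Y}[\ell^{-1}\wedge N]=\Otilde(M)$ (Lemma~\ref{lem:exp-perstep}). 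Your claim of ``expected $\O(1)$ inner trials'' is therefore unjustified, and the explanation that $M$ ``arises from relaying $\log M$ through the mixing bound at $\veps'=\veps/M$'' cannot produce a multiplicative $M$; it only adds another $\log M$.

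Relatedly, the threshold cannot be $N=\Otilde(\log\tfrac{k}{\eta})$: the failure probability $\E[(1-\ell)^N]$ is controlled by splitting into $\{\ell\ge N^{-1}\log\tfrac{kM}{\eta}\}$ and its complement, and the complement contributes at most $\tfrac{\log(kM/\eta)}{N}\cdot e^{\O(\delta n)}$; to make this $\le \tfrac{\eta}{kM}$ you need $N=\Thetilde(kM/\eta)$, as in Lemma~\ref{lem:exp-perstep}. With your polylog $N$, the union bound over $k$ iterations does not close. Once you fix the source of the $M$ factor (per-step cost, not mixing) and set $N$ accordingly, the product $k\times\Otilde(M)$ recovers the stated $\Otilde(qMn^2(\|\cov\pi^X\|\vee1)\log^4\tfrac{1}{\eta\veps})$.
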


See Theorem~\ref{thm:exp-sampling} for details. When $\norm{\cov\pi^{X}}=\Omega(1)$
(the primary regime for algorithmic applications later), this complexity
bound generalizes the well-known complexity bound of $\Otilde(qMn^{2}\norm{\cov\pi^{X}}\polylog\nicefrac{1}{\eta\veps})$
for uniform sampling from a convex body \cite{kook2024inout}. Also,
if the inner radius is $r$, then the query complexity includes an
additional multiplicative factor of $r^{-2}$.

\subsection{Reduction to an exponential distribution\label{ssec:exp-reduction}}

The logconcave sampling problem can be reduced to the following exponential
sampling problem:
\begin{equation}
\D\pi(x,t)\propto\exp(-nt)\,\D x\D t\quad\text{subject to }\K:=\bbrace{(x,t)\in\Rn\times\R:V(x)\leq nt}\,.\tag{exp-red}\label{eq:exp-reduction}
\end{equation}
We note that convexity of the constraint $\K$ follows from that of
$V$, and that the $X$-marginal of $\pi$ exactly matches the desired
target $\pi^{X}\propto\exp(-V)$. 
\begin{prop}
\label{prop:x-marginal} The $X$-marginal $\pi^{X}$ of $\pi$ is
proportional to $\exp(-V)$.
\end{prop}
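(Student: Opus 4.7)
The plan is to compute the $X$-marginal directly by integrating out $t$ from the joint density of $\pi$, exploiting the fact that for each fixed $x$, the constraint $V(x) \leq nt$ reduces to a half-line condition $t \geq V(x)/n$.

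First, I would write the unnormalized joint density as $\exp(-nt)\,\ind[V(x) \leq nt]$, noting that by convexity of $V$, the set $\mathcal{K}$ is convex, and for every $x$ in the effective domain of $V$ (i.e.\ where $V(x) < \infty$), the $t$-slice $\{t : (x,t) \in \mathcal{K}\}$ is precisely the half-line $[V(x)/n, \infty)$. For $x$ outside the domain of $V$, the slice is empty, so the marginal density at such $x$ vanishes, consistent with $\exp(-V(x)) = 0$.

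Next, I would perform the one-dimensional integral in $t$:
\[
\pi^X(x) \;\propto\; \int_{V(x)/n}^{\infty} \exp(-nt)\,\D t \;=\; \frac{1}{n}\,\exp(-V(x))\,.
\]
The prefactor $1/n$ is a global constant independent of $x$, so it can be absorbed into the overall normalizing constant of $\pi^X$. Hence $\pi^X(x) \propto \exp(-V(x))$, as claimed.

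The only subtlety worth flagging is ensuring that the joint distribution $\pi$ is well-defined (i.e.\ the integral over $\mathcal{K}$ is finite), which follows from integrability of $\exp(-V)$ together with the $1/n$ factor from the $t$-integration: $\int_{\mathcal{K}} \exp(-nt)\,\D x\,\D t = \frac{1}{n}\int \exp(-V(x))\,\D x < \infty$. There is no real obstacle here; this is a straightforward Fubini computation whose role is mainly to make the rest of the paper's exponential reduction rigorous.
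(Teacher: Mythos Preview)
Your proposal is correct and follows essentially the same approach as the paper: integrate out $t$ over the half-line $[V(x)/n,\infty)$ to obtain $n^{-1}\exp(-V(x))$, then absorb the constant into the normalization. The paper's proof is the same Fubini computation, just written with the normalizing constant explicit rather than using the $\propto$ shorthand.
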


\begin{proof}
We just integrate out $t$ as follows:
\[
\pi^{X}(x)=\int\pi(x,t)\,\D t=\frac{\int\exp(-nt)|_{[V(x)\leq nt]}\,\D t}{\int\exp(-nt)|_{[V(x)\leq nt]}\,\D t\D x}=\frac{\exp\bpar{-V(x)}}{\int\exp\bpar{-V(x)}\,\D x}\,,
\]
where the last equality follows from 
\[
\int\exp(-nt)|_{[V(x)\leq nt]}\,\D t=\int_{V(x)/n}^{\infty}\exp(-nt)\,\D t=\frac{1}{n}\int_{V(x)}^{\infty}\exp(-u)\,\D u=\frac{1}{n}\exp\bpar{-V(x)}\,,
\]
which completes the proof.
\end{proof}
This implies that in order to sample from $\exp(-V)$, one can just
focus on sampling $(x,t)\sim\pi$ and take $x$-part only. Precisely,
once we obtain a sample $(X,T)\in\R^{n+1}$ such that $D\bigl(\law(X,T)\mmid\pi\bigr)\leq\veps$
for the $f$-divergence or R\'enyi divergence $D$, the data-processing
inequality (DPI) (Lemma~\ref{lem:DPI}) ensures that $D\bpar{\law X\mmid\pi^{X}}\leq\veps$.
Therefore, the general logconcave sampling problem can be reduced
to sampling from a particular form of the exponential distribution
above. While it allows us to move to the simpler problem, for its
mixing analysis, we will need to know the mean and variance along
the new dimension as well as the operator norm of the covariance matrix
of $\pi$.

\paragraph{Mean.}

For the mean $(\mu_{x},\mu_{t}):=\E_{\pi}[(X,T)]\in\Rn\times\R$,
we establish a useful bound on $\mu_{t}=\E_{\pi}T$.
\begin{lem}
\label{lem:exp-mean} In the setting of Problem~\ref{prob:exp-sampling},
it holds that $0\leq\mu_{t}-V_{0}/n\leq\sqrt{10}$.
\end{lem}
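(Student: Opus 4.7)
My plan is to integrate out $T$ conditionally on $X$ and reduce the claim to a statement about $\mathbb{E}_{\pi^X}[V(X)-V_0]$. Given $X=x$, the conditional law of $T$ under $\pi$ has density proportional to $\exp(-nt)\,\ind[t\geq V(x)/n]$, which is a rate-$n$ exponential supported on $[V(x)/n,\infty)$. Hence $\mathbb{E}_{\pi}[T\mid X=x]=V(x)/n+1/n$, and combining the tower property with Proposition~\ref{prop:x-marginal} gives
\[
\mu_t-\frac{V_0}{n} \;=\; \frac{1}{n} \;+\; \frac{1}{n}\,\mathbb{E}_{\pi^X}\!\bigl[V(X)-V_0\bigr].
\]
The lower bound $\mu_t-V_0/n\geq 1/n>0$ is then immediate since $V(X)\geq V_0$ almost surely, so all the work is in the upper bound on $\mathbb{E}_{\pi^X}[V(X)-V_0]$.

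To control this expectation, I would write it in layer-cake form
\[
\frac{1}{n}\,\mathbb{E}_{\pi^X}\!\bigl[V(X)-V_0\bigr] \;=\; \int_0^\infty \pi^X\bigl(V(X)-V_0\geq nl\bigr)\,\D l
\]
and invoke the tail estimate for logconcave measures referenced in the paper as Lemma~\ref{lem:LC-tail}. This is the same ingredient that motivates the definition of the ground set $\msf L_{\pi^X,g}=\{V-V_0\leq 10n\}$: the measure of the super-level sets $\{V-V_0\geq nl\}$ decays geometrically in $l$ once $l$ is bounded away from zero. Splitting the layer-cake integral at a suitable threshold (e.g.\ $l=1$) and plugging in the exponential decay yields an explicit constant, and then the $1/n$ summand is absorbed comfortably within the slack of $\sqrt{10}$.

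The only real obstacle is matching the constant: one needs Lemma~\ref{lem:LC-tail} to be normalized so that the resulting integral, plus $1/n$, is at most $\sqrt{10}$. I do not expect any serious difficulty here; with the ground-set normalization $V-V_0\leq 10n$ adopted in this paper, the associated exponential tail decay is far stronger than what this lemma demands, and the argument should reduce to a short direct calculation. If Lemma~\ref{lem:LC-tail} is instead stated as a second-moment bound $\mathbb{E}_{\pi^X}[(V(X)-V_0)^2]\leq 10 n^2$, a single application of Cauchy--Schwarz would accomplish the same task and explain the appearance of $\sqrt{10}$ directly.
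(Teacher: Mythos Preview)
Your approach is essentially the same as the paper's: both compute $\mu_t=n^{-1}(\E_{\pi^X}[V]+1)$ (you via conditional expectation, the paper via direct integration), reduce to bounding $\E_{\pi^X}[V-V_0]$, and then apply the layer-cake formula together with the logconcave tail bound of Lemma~\ref{lem:LC-tail}. The only caveat is that matching the constant $\sqrt{10}$ is more delicate than you suggest: the paper splits at $\beta=2$ (since Lemma~\ref{lem:LC-tail} applies only for $\beta\geq 2$), evaluates the remaining integral via the Gamma function, and then checks small $n$ by hand and large $n$ via monotonicity of the digamma; your Cauchy--Schwarz alternative does not apply, as Lemma~\ref{lem:LC-tail} is a tail bound rather than a second-moment bound.
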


\begin{proof}
We directly compute $\mu_{t}=\E_{\pi^{T}}T=\E_{\pi}T$ as follows:
\begin{align*}
\mu_{t} & =\E_{\pi}T=\frac{\int te^{-nt}|_{[V(x)\leq nt]}\,\D t\D x}{\int e^{-nt}|_{[V(x)\leq nt]}\,\D t\D x}=\frac{\int\int_{V(x)/n}^{\infty}te^{-nt}\,\D t\D x}{n^{-1}\int\exp(-V)\,\D x}=\frac{\int\int_{V(x)}^{\infty}ue^{-u}\,\D u\D x}{n\int\exp(-V)\,\D x}\\
 & =\frac{\int\bpar{[-ue^{-u}]_{V(x)}^{\infty}+\int_{V(x)}^{\infty}e^{-u}\,\D u}\,\D x}{n\int\exp(-V)\,\D x}=\frac{\int(V+1)\,\exp(-V)\,\D x}{n\int\exp(-V)\,\D x}\\
 & =n^{-1}\E_{\pi^{X}}(V+1)=n^{-1}\E_{\pi}(V+1)\,.
\end{align*}

We now prove $\E_{\pi}(V-V_{0})\le\sqrt{10}n$. For $\pi^{X}=\exp(-W)/\int\exp(-W)$
with $W:=V-V_{0}\geq0$, Lemma~\ref{lem:LC-tail} ensures that for
$\beta\geq2$,
\[
\P_{\pi}\bpar{W(X)\geq\beta\,(n-1)}\leq(e^{1-\beta}\beta)^{n-1}\,.
\]
Therefore, for $n\geq2$ and the Gamma function $\Gamma$,
\begin{align*}
\E_{\pi}W & =\int_{0}^{\infty}\P_{\pi}\bpar{W(X)\geq u}\,\D u\\
 & =\int_{0}^{2(n-1)}\P_{\pi}\bpar{W(X)\geq u}\,\D u+\int_{2(n-1)}^{\infty}\P_{\pi}\bpar{W(X)\geq u}\,\D u\\
 & \leq2\,(n-1)+(n-1)\int_{2}^{\infty}\P_{\pi}\bpar{W(X)\geq\beta\,(n-1)}\,\D\beta\\
 & \leq2\,(n-1)+(n-1)\int_{0}^{\infty}(e^{1-\beta}\beta)^{n-1}\,\D\beta\,\\
 & \leq n\,\Bpar{2\,\frac{n-1}{n}+\bpar{\frac{e}{n-1}}^{n-1}\frac{\Gamma(n)}{n}}\,.
\end{align*}
When $n=1$, by using $\P_{\pi}(W(X)\geq t)\leq e^{-t}$ by \cite[Lemma 5.6(a)]{lovasz2007geometry},
we also have $\E_{\pi}W\leq1$. One can manually check that the final
bound is at most $\sqrt{10}n$ for $n\leq7$. When $n\geq8$, since
the digamma function $\psi(x):=\log\Gamma(x)$ satisfies $\psi(x)\leq\log x$,
the second term in the parenthesis is decreasing in $n$, and thus
$\E_{\pi}W\leq\sqrt{10}n$ as well. Substituting this bound back to
the expression of $\mu_{t}$, it follows that 
\[
\frac{V_{0}}{n}\leq\mu_{t}\leq\frac{V_{0}}{n}+\sqrt{10}\,,
\]
which completes the proof.
\end{proof}

\paragraph{Covariance.}

Let us denote the covariance of $\pi,\pi^{X},\pi^{T}$ by $\Sigma:=\E_{\pi}\bbrack{\bpar{(X,T)-(\mu_{x},\mu_{t})}^{\otimes2}}$,
$\Sigma_{x}:=\E_{\pi^{X}}[(X-\mu_{x})^{\otimes2}]$, $\Sigma_{t}:=\E_{\pi^{T}}[(T-\mu_{t})^{2}]$,
respectively. We study two important quantities here --- the largest
eigenvalue (i.e., operator norm of $\Sigma$) and trace (i.e., second
moment of $\pi$).
\begin{lem}
\label{lem:exp-cov} The variance in the $t$-direction is at most
$160$. Moreover, $\tr\Sigma\leq\tr\Sigma_{x}+160$, and $\norm{\Sigma}\leq2\,(\norm{\Sigma_{x}}+160)$.
\end{lem}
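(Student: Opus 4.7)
The plan is to first reduce the three bounds to a single estimate, namely a bound on $\Var_{\pi^X} V$, and then obtain that estimate from the tail inequality in Lemma~\ref{lem:LC-tail} (the same tool used in the proof of Lemma~\ref{lem:exp-mean}).

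First I would compute $\E_\pi T^2$ explicitly. Integrating $t$ out exactly as in Lemma~\ref{lem:exp-mean}, and using integration by parts twice to get $\int_V^\infty s^2 e^{-s}\,\D s = (V^2+2V+2)\,e^{-V}$, one finds
\[
\E_\pi T^2 = \frac{\int\int_{V(x)/n}^{\infty} t^2 e^{-nt}\,\D t\,\D x}{n^{-1}\int e^{-V(x)}\,\D x} = \frac{1}{n^2}\,\E_{\pi^X}\bpar{V^2 + 2V + 2}.
\]
Combined with $\mu_t = n^{-1}\E_{\pi^X}(V+1)$ from Lemma~\ref{lem:exp-mean}, a direct expansion gives $\Sigma_t = \E_\pi T^2 - \mu_t^2 = (\Var_{\pi^X} V + 1)/n^2$. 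Since $V_0 := \min V$ is constant, $\Var_{\pi^X} V = \Var_{\pi^X} W$ for $W := V - V_0 \geq 0$, and $\Var W \leq \E_{\pi^X} W^2$.

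Next I would bound $\E_{\pi^X} W^2$. Writing $\E W^2 = 2\int_0^\infty u\,\P_{\pi^X}(W \geq u)\,\D u$ and splitting the integral at $u = 2(n-1)$, the contribution on $[0, 2(n-1)]$ is bounded trivially by $4(n-1)^2$. On $[2(n-1), \infty)$ I substitute $u = \beta(n-1)$ and use the tail bound $\P_{\pi^X}(W \geq \beta(n-1)) \leq (e^{1-\beta}\beta)^{n-1}$ from Lemma~\ref{lem:LC-tail}, reducing the estimate to an incomplete Gamma integral; Stirling's formula together with $(1+\tfrac{1}{n-1})^n \leq e \cdot \tfrac{n}{n-1}$ controls that tail by $\O((n-1)\sqrt{n})$. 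This yields $\E_{\pi^X} W^2 \leq 4(n-1)^2 + \O((n-1)\sqrt n)$, so $\Sigma_t = (\Var_{\pi^X} V + 1)/n^2 \leq 160$ for all $n \geq 2$. The case $n = 1$ is handled separately via the exponential tail $\P_{\pi^X}(W \geq t) \leq e^{-t}$ from \cite[Lemma 5.6(a)]{lovasz2007geometry}, which gives $\Var W \leq 2$, hence $\Sigma_t \leq 3$.

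The remaining two bounds are then immediate. The trace of $\Sigma$ is the sum of the marginal variances of $X_1, \dots, X_n, T$, so $\tr\Sigma = \tr\Sigma_x + \Sigma_t \leq \tr\Sigma_x + 160$. For the operator norm, I would take an arbitrary unit vector $(u, s) \in \R^n \times \R$ and use
\[
(u,s)^\T \Sigma\, (u,s) = \Var_\pi(u^\T X + sT) \leq 2\,\Var_\pi(u^\T X) + 2 s^2\,\Var_\pi T,
\]
where the inequality $\Var(A+B) \leq 2\,(\Var A + \Var B)$ follows from $2\abs{\Cov(A,B)} \leq \Var A + \Var B$. Bounding each term by $2 \snorm{\Sigma_x} \snorm{u}^2$ and $2 s^2 \Sigma_t$ and using $\snorm u^2 + s^2 = 1$ yields $\snorm\Sigma \leq 2\,(\snorm{\Sigma_x} + \Sigma_t) \leq 2\,(\snorm{\Sigma_x} + 160)$.

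The only substantive step is controlling the tail integral for $\E W^2$; everything else is algebra plus the marginalization identities already established. The constant $160$ is loose (the computation actually gives something closer to $\Sigma_t \leq 5$ for large $n$), but stating it as $160$ keeps the bookkeeping clean and absorbs the finite-$n$ corrections.
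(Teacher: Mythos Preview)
Your proof is correct but follows a different route from the paper's. The paper dispatches the variance bound in three lines using the reverse H\"older inequality (Lemma~\ref{lem:reverse-Holder}) applied to the logconcave marginal $\pi^T$: since $T - V_0/n \geq 0$ a.s.,
\[
\E_\pi[|T-\mu_t|^2] \leq \E_\pi[|T - V_0/n|^2] \leq 16\,(\E_\pi|T - V_0/n|)^2 = 16\,(\mu_t - V_0/n)^2 \leq 160,
\]
invoking only the first-moment bound $\mu_t - V_0/n \leq \sqrt{10}$ already established in Lemma~\ref{lem:exp-mean}. You instead derive the exact identity $\Sigma_t = (\Var_{\pi^X} V + 1)/n^2$ (equivalently, $T \stackrel{d}{=} V(X)/n + E/n$ with $E \sim \Expo(1)$ independent of $X$) and then redo a tail computation on $W = V - V_0$ via Lemma~\ref{lem:LC-tail} to control $\E_{\pi^X} W^2$.

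Your approach is longer but buys something: the structural identity for $\Sigma_t$ is clean, and your estimate is asymptotically sharp ($\Sigma_t \to 4$), whereas the paper's route gives only the crude $160$. The paper's approach is shorter because it black-boxes both the first-moment work from Lemma~\ref{lem:exp-mean} and the moment comparison from reverse H\"older, avoiding any new integral bounds. The trace and operator-norm steps are identical in both proofs.
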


\begin{proof}
Let us bound $\E_{\pi}[\abs{T-\mu_{t}}^{2}]=\E_{\pi^{T}}[\abs{T-\mu_{t}}^{2}]$.
By the reverse H\"older (Lemma~\ref{lem:reverse-Holder}),
\[
\E_{\pi}[\abs{T-\mu_{t}}^{2}]\leq\E_{\pi}[\abs{T-V_{0}/n}^{2}]\leq16\,(\E_{\pi}\abs{T-V_{0}/n})^{2}\leq160\,.
\]
Hence,
\[
\tr\Sigma=\E_{\pi}[\norm{X-\mu_{x}}^{2}+\abs{T-\mu_{t}}^{2}]=\E_{\pi^{X}}[\norm{X-\mu_{x}}^{2}]+\E_{\pi^{T}}[\abs{T-\mu_{t}}^{2}]\leq R^{2}+160\,.
\]

As for the largest eigenvalue, it follows that for a unit vector $v=(v_{x},v_{t})\in\Rn\times\R$,
\[
\E_{\pi}\bbrack{\bpar{v_{x}\cdot(X-\mu_{x})+v_{t}\,(T-\mu_{t})}^{2}}\leq2\,\E_{\pi}\bbrack{\bpar{v_{x}\cdot(X-\mu_{x})}^{2}}+2\,\E_{\pi}\bbrack{\bpar{v_{t}\,(T-\mu_{t})}^{2}}\leq2\,(\norm{\Sigma_{x}}+160)\,.
\]
Hence, $\norm{\Sigma}\leq2\,(\norm{\Sigma_{x}}+160)$.
\end{proof}
We now show that the `measure-based' regularity implies the `value-based'
one.
\begin{lem}
\label{lem:relaxed-regularity} When a level set of $\pi^{X}\propto\exp(-V)$
of measure $\msf m<1$ contains $B_{r}(x_{0})$ for $r>0$ and $x_{0}\in\Rn$,
the level set $\msf L_{V,l}=\{V(x)-V_{0}\leq\ell n\}$ with $\ell=5+13\log\frac{e}{1-\msf m}$
contains $B_{r}(x_{0})$.
\end{lem}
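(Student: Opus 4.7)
\textbf{Reduction to a one-dimensional tail bound.} My plan is to reduce the statement to a single inequality on the potential value. Fix the level set value $c$ with $\pi^{X}(\{V\leq c\})=\msf{m}$ and $B_{r}(x_{0})\subseteq\{V\leq c\}$. Then every $x\in B_{r}(x_{0})$ obeys $V(x)\leq c$, so it suffices to show $c-V_{0}\leq\ell n$; the ball radius $r$ and center $x_{0}$ drop out entirely. This turns the lemma into a purely one-dimensional tail estimate for the (shifted, nonnegative) convex potential $W:=V-V_{0}$.

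\textbf{Invoking the logconcave tail bound.} The relevant tool is the potential-value tail inequality already used in the proof of Lemma~\ref{lem:exp-mean} (Lemma~\ref{lem:LC-tail}): for $n\geq2$ and $\beta\geq2$,
\[
\P_{\pi^{X}}\bpar{W(X)\geq\beta(n-1)}\leq(e^{1-\beta}\beta)^{n-1}.
\]
Write $L:=\log\frac{1}{1-\msf{m}}$, so that $\ell=18+13L$ since $\log\frac{e}{1-\msf{m}}=1+L$; and set $\beta^{\star}:=(c-V_{0})/(n-1)$. If $\beta^{\star}<2$, then $c-V_{0}<2(n-1)\leq2n\leq\ell n$ and we are done. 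Otherwise, $\pi^{X}(V\geq c)=1-\msf{m}$ combined with the tail bound yields
\[
(n-1)\,g(\beta^{\star})\leq L,\qquad g(\beta):=\beta-1-\log\beta.
\]
Since $g$ is strictly increasing on $[1,\infty)$, it suffices to verify the reversed inequality at $\beta=\ell n/(n-1)$, namely $(n-1)\,g\bpar{\ell n/(n-1)}\geq L$; this forces $\beta^{\star}\leq\ell n/(n-1)$ and hence $c-V_{0}\leq\ell n$. The edge case $n=1$ is handled directly by the one-dimensional bound $\P_{\pi^{X}}(W\geq t)\leq e^{-t}$ from \cite[Lemma~5.6(a)]{lovasz2007geometry}, giving $c-V_{0}\leq L\leq\ell$.

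\textbf{Arithmetic verification and the main obstacle.} All that remains is to check the explicit bound. For $n\geq2$, $n/(n-1)\leq2$, so by monotonicity of $g$,
\[
g\bpar{\tfrac{\ell n}{n-1}}\geq\ell-1-\log(2\ell).
\]
The function $\ell/2-\log(2\ell)$ has derivative $\half-\ell^{-1}>0$ for $\ell>2$ and equals $9-\log 36>0$ at $\ell=18$, so $\log(2\ell)\leq\ell/2$ whenever $\ell\geq18$. With $\ell=18+13L\geq18$, this gives $g\bpar{\ell n/(n-1)}\geq\ell/2-1=8+\tfrac{13}{2}L\geq L$, and multiplying by $n-1\geq1$ completes the proof. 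There is no conceptual obstacle: the only ``hard'' part is this constant-chasing, and the constants $5$ and $13$ in the statement of $\ell$ are chosen precisely to absorb the logarithmic correction $\log\beta^{\star}$ and the ratio $n/(n-1)\leq2$. The estimate above in fact leaves substantial slack, indicating the choice is comfortable rather than tight.
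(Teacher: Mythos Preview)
Your proof is correct and takes a genuinely different, more elementary route than the paper. The paper argues via the exponential reduction: it lifts to $\pi(x,t)\propto e^{-nt}|_{\K}$, invokes the variance bound on the $T$-marginal (Lemma~\ref{lem:exp-cov}) together with the exponential-decay tail (Lemma~\ref{lem:LC-exponential-decay}) to obtain $\P_{\pi}(T-V_{0}/n\geq\ell)\leq 1-\msf m$, and then translates this back to $\pi^{X}(\msf L_{V,\ell})\geq\msf m$, which by nestedness of level sets forces $B_{r}(x_{0})\subset\msf L_{V,\ell}$. You instead stay entirely in $\R^{n}$, apply the potential-value tail bound (Lemma~\ref{lem:LC-tail}) directly to $\pi^{X}$, and invert the resulting inequality $(n-1)\,g(\beta^{\star})\leq L$ for $g(\beta)=\beta-1-\log\beta$.

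What each buys: your argument is self-contained and avoids the detour through $\R^{n+1}$ and Lemmas~\ref{lem:exp-mean}--\ref{lem:exp-cov}, at the cost of handling $n=1$ separately and doing the explicit inversion of $g$. The paper's route is consistent with its broader methodology of routing everything through the lifted exponential distribution, and it sidesteps the $\log\beta$ correction by packaging the tail as a sub-exponential bound on $T$. Both ultimately rest on the same logconcave tail phenomenon, and your constant-chasing makes transparent why $\ell=18+13L$ suffices with room to spare.
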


\begin{proof}
Consider the exponential reduction $\pi(x,t)\propto\exp(-nt)|_{\K}$
for $\K=\{(x,t):V(x)\leq nt\}$. Since the variance in the $t$-direction
is at most $160$, Lemma~\ref{lem:LC-exponential-decay} leads to
\[
\P_{\pi}(\abs{T-\mu_{t}}\geq13c)=\P_{\pi^{T}}(\abs{T-\mu_{t}}\geq13c)\leq\exp(-c+1)\quad\text{for any }c>0\,,
\]
which implies $\P_{\pi}(T\geq5+V_{0}/n+13c)\leq\P_{\pi}(T\geq\mu_{t}+13c)\leq\exp(-c+1)$.
Hence, for $c=-\log\frac{e}{1-\msf m}$, we have $\P_{\pi}(T-V_{0}/n\leq\ell)\geq\msf m$.
Since $\K\cap[t\leq V_{0}/n+\ell]=\K\cap(\msf L_{V,l}\times\R)$,
it follows from $\P_{\pi}(T\leq V_{0}/n+\ell)=\pi^{X}(\msf L_{V,l})$
that $\msf L_{V,l}$ contains $B_{r}(x_{0})$.
\end{proof}

\subsection{Proximal sampler for the reduced exponential distribution\label{ssec:prox-exp}}

We now propose a sampling algorithm for this exponential distribution.
For $z:=(x,t)\in\Rn\times\R$ and parameter $h>0$, we use the $\ps$
($\PS$) to sample from
\[
\pi^{Z,Y}(z,y)\propto\exp\bpar{-\alpha^{\T}z-\frac{1}{2h}\,\norm{z-y}^{2}}|_{\K}\quad\text{for }\alpha=ne_{n+1}\,,
\]
where its $Z$-marginal $\pi^{Z}$ corresponds to our desired target
$\pi^{Z}\propto\exp(-\alpha^{\T}z)|_{\K}$.

The $\ps$ for this target, called $\psexp$ (Algorithm~\ref{alg:prox-exp}),
repeats the following two steps:
\begin{itemize}
\item {[}Forward{]} $y\sim\pi^{Y|Z=z}=\mc N(z,hI_{n+1})$.
\item {[}Backward{]} $z\sim\pi^{Z|Y=y}\propto\exp(-\alpha^{\T}z-\frac{1}{2h}\,\norm{y-z}^{2})|_{\K}=\mc N(y-h\alpha,hI_{n+1})|_{\K}$
for $\alpha=ne_{n+1}$,
\end{itemize}
and the backward step is implemented by rejection sampling with proposal
$\mc N(y-h\alpha,hI_{n+1})$.

As in previous work, the analysis can be divided into two parts: (1)
mixing towards the target (\S\ref{subsec:exp-mixing}), and (2) the
query complexity of implementing the backward step (\S\ref{subsec:exp-backward-complexity}).
The first part can be analyzed similarly to prior work on uniform
sampling \cite{kook2024inout} or constrained Gaussian sampling \cite{kook2024renyi},
while the second part introduces additional technical challenges.

\subsubsection{Convergence rate \label{subsec:exp-mixing}}

One iteration of the $\ps$ (one forward step $+$ one backward step)
for $\pi^{X}\propto\exp(-V)$ can be viewed as a composition of forward
and backward heat flow, where it can be written as follows: for $\pi(x,y)\propto\exp\bpar{-V(x)-\frac{1}{2h}\norm{x-y}^{2}}$,
repeat (i) $y_{i+1}\sim\pi^{Y|X=x_{i}}=\mc N(x_{i},hI_{n})$ and (ii)
$x_{i+1}\sim\pi^{X|Y=y_{i+1}}$. To make this perspective concrete,
we borrow expositions from \cite[p.4-5]{kook2024inout} and \cite[\S8.3]{chewi2023log}.
For the laws of $x_{i}$ and $y_{i}$ denoted by $\mu_{i}^{X}$ and
$\mu_{i}^{Y}$, the forward step can be described as $\mu_{i+1}^{Y}=\int\pi^{Y|X=x}\,\mu_{i}^{X}(\D x)=\mu_{i}^{X}*\mc N(0,hI_{n})$.
This step can be seen as simulating a Brownian motion with initial
measure $\mu_{i}^{X}$ for time $h$:
\[
\D Z_{t}=\D B_{t}\quad\text{with }Z_{0}\sim\mu_{i}^{X}\Longrightarrow Z_{h}\sim\mu_{i+1}^{Y}\,.
\]
Let us denote $\mu P_{t}:=\mu*\mc N(0,tI_{n})$ for $\mu\in\mc P(\Rn)$
(i.e., $(P_{t})_{t\geq0}$ is the heat semigroup). Next, the backward
step corresponds to $\mu_{i+1}^{X}=\int\pi^{X|Y=y}\,\mu_{i+1}^{Y}(\D y)$,
and it turns out that this process can also be represented by an SDE
(called the backward heat flow): for a new Brownian motion $B_{t}$,
\[
\D Z_{t}^{\leftarrow}=\nabla\log(\pi^{X}P_{h-t})(Z_{t}^{\leftarrow})\,\D t+\D B_{t}\quad\text{for }t\in[0,h]\,.
\]
Its construction ensures that if $Z_{0}^{\leftarrow}\sim\delta_{y}$
(a point mass at $y$), then $Z_{h}^{\leftarrow}\sim\pi^{X|Y=y}$.
Hence, if we initialize this SDE with $Z_{0}^{\leftarrow}\sim\mu_{i+1}^{Y}$,
then the law of $Z_{h}^{\leftarrow}$ corresponds to $\int\pi^{X|Y=y}\,\mu_{i+1}^{Y}(\D y)=\mu_{i+1}^{X}$.

This perspective establishes a natural connection between mixing guarantees
and constants of functional inequalities (e.g., \eqref{eq:pi} and
\eqref{eq:lsi}) for a target distribution. \cite{chen2022improved}
has already demonstrated this connection for the $\ps$ in the case
of smooth, unconstrained distributions.
\begin{prop}
[{\cite[Theorem 3 and 4]{chen2022improved}}] \label{prop:ps-smooth}
Assume that $\pi\in\mc P(\Rn)$ has a smooth density, and let $P$
denote the Markov kernel of the $\ps$. If $\pi$ satisfies \eqref{eq:lsi}
with constant $\clsi$, then for any $q\geq1$ and distribution $\mu$
with smooth density and $\mu\ll\pi$,
\begin{equation}
\eu R_{q}(\mu P\mmid\pi)\leq\frac{\eu R_{q}(\mu\mmid\pi)}{(1+h/C_{\msf{LSI}})^{2/q}}\,.\label{eq:LSI-contraction}
\end{equation}
If $\pi$ satisfies \eqref{eq:pi} with constant $\cpi$, then 
\begin{equation}
\chi^{2}(\mu P\mmid\pi)\leq\frac{\chi^{2}(\mu\mmid\pi)}{(1+h/C_{\msf{PI}})^{2}}\,.\label{eq:PI-contraction}
\end{equation}
Moreover, for all $q\geq2$,
\begin{equation}
\eu R_{q}(\mu P\mmid\pi)\leq\begin{cases}
\eu R_{q}(\mu\mmid\pi)-\frac{2\log(1+h/C_{\msf{PI}})}{q} & \text{if }\eu R_{q}(\mu\mmid\pi)\geq1\,,\\
\frac{\eu R_{q}(\mu\mmid\pi)}{(1+h/C_{\msf{PI}})^{2/q}} & \text{if }\eu R_{q}(\mu\mmid\pi)\leq1\,.
\end{cases}\label{eq:cpi-renyi}
\end{equation}
\end{prop}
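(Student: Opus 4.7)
The proof splits along the two substeps of one proximal-sampler iteration, which can be viewed as the composition of (a) the forward heat semigroup $\mu \mapsto \mu P_h = \mu \ast \mathcal{N}(0, hI_n)$, and (b) the backward Markov kernel $K^{\leftarrow}(y, \cdot) = \pi^{X \mid Y = y}$ determined by the target. Because $K^{\leftarrow}$ sends both $\mu P_h \to \mu P$ and $\pi P_h \to \pi$ through the same Markov channel, the data-processing inequality yields, for every $f$-divergence or R\'enyi divergence $D$,
\[
D(\mu P \mmid \pi) \;\le\; D(\mu P_h \mmid \pi P_h)\,.
\]
It therefore suffices to study the dissipation of divergences along the simultaneous heat flow $(\mu_t, \pi_t) := (\mu P_t, \pi P_t)$ for $t \in [0, h]$, and at the end invoke this inequality at $t = h$.

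The second step is to derive the Fisher-type dissipation identity for the heat flow. Writing $\psi_t := \mu_t / \pi_t$ and differentiating along $\partial_t \mu_t = \tfrac{1}{2}\Delta \mu_t$ and $\partial_t \pi_t = \tfrac{1}{2}\Delta \pi_t$, integration by parts gives
\[
\frac{d}{dt}\int \psi_t^{q}\,\pi_t\,dx \;=\; -\frac{q(q-1)}{2}\int |\nabla \psi_t|^{2}\,\psi_t^{q-2}\,\pi_t\,dx\,.
\]
Now I would use the fact that convolution with $\mathcal{N}(0, tI_n)$ preserves PI (resp.\ LSI) with an additive update: if $\pi$ satisfies \eqref{eq:pi} (resp.\ \eqref{eq:lsi}) with constant $C$, then $\pi_t$ does so with constant at most $C + t$. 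For \eqref{eq:PI-contraction} ($q = 2$), applying PI of $\pi_t$ to $\psi_t - 1$ yields $\tfrac{d}{dt}(\chi^{2} + 1) \le -\tfrac{2}{\cpi + t}\,\chi^{2}$, and integrating this non-autonomous ODE over $[0, h]$ produces exactly the factor $(1 + h/\cpi)^{-2}$. For \eqref{eq:LSI-contraction}, an analogous step uses LSI of $\pi_t$ applied to $\psi_t^{q/2}$ to control the $q$-R\'enyi dynamics, and Gr\"onwall-type integration of the resulting differential inequality delivers the contraction factor $(1 + h/\clsi)^{2/q}$.

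The R\'enyi--PI bound \eqref{eq:cpi-renyi} requires more care, because a Poincar\'e inequality controls only variance, not entropy, and so one cannot directly get the LSI-style multiplicative contraction for large $q$. My plan is to split into two regimes: when $\eu R_q(\mu \mmid \pi) \ge 1$, the bound $\chi^{q} + 1 \ge e^{(q-1)\eu R_q}$ converts the multiplicative $\chi^{q}$-contraction into the stated additive decrease $-\tfrac{2}{q}\log(1 + h/\cpi)$; when $\eu R_q \le 1$, the elementary inequality $\log(1 + x) \ge \tfrac{1}{2} x$ on $[0, 1]$ recovers the multiplicative rate matching the $\chi^{2}$ bound. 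The main obstacle throughout is upgrading the naive exponential bound $e^{-h/C}$ (which would follow from a fixed-constant Gr\"onwall argument) to the sharper polynomial factor $(1 + h/C)^{-2}$. This forces the analysis to track the genuinely \emph{non-autonomous} improvement of the PI/LSI constant of $\pi_t$ along the heat flow, and to integrate the resulting time-dependent differential inequality carefully rather than invoking a uniform functional inequality.
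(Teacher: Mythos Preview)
Your decomposition is right in spirit---one iteration is forward heat flow followed by the backward kernel $\pi^{X\mid Y}$---but the accounting is off by exactly one factor of $(1+h/C)^{-1}$, and the error is in how you treat the backward step.

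The data-processing inequality for the backward channel gives only
\[
D(\mu P\mmid\pi)\;\le\;D(\mu P_h\mmid\pi P_h)\,,
\]
i.e.\ \emph{non-expansion}, not contraction. Your forward heat-flow computation then has to supply the entire factor $(1+h/C_{\msf{PI}})^{-2}$, and it cannot: with the convention $\partial_t\mu_t=\tfrac12\Delta\mu_t$ (so that $\mu_h=\mu\ast\mc N(0,hI_n)$), the dissipation identity for $q=2$ is
\[
\frac{\D}{\D t}\,\chi^{2}(\mu_t\mmid\pi_t)\;=\;-\int|\nabla\psi_t|^{2}\,\pi_t\,,
\]
without your factor of $2$. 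Combined with $\cpi(\pi_t)\le\cpi+t$ this yields $\tfrac{\D}{\D t}\log\chi^{2}\le-(\cpi+t)^{-1}$, and integrating over $[0,h]$ gives only $(1+h/\cpi)^{-1}$. The same mismatch occurs for the $\eu R_q$ bounds.

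The missing factor comes from a genuine contraction in the \emph{backward} step, not from DPI. As the paper describes (and as \cite{chen2022improved} proves), the backward step is realized by the SDE
\[
\D Z_t^{\leftarrow}=\nabla\log(\pi P_{h-t})(Z_t^{\leftarrow})\,\D t+\D B_t\,,\quad t\in[0,h]\,,
\]
initialized at $\mu P_h$; this is a Langevin-type diffusion whose instantaneous invariant measure $\pi P_{h-t}$ has $\cpi$ (resp.\ $\clsi$) at most $\cpi+(h-t)$. Running the same non-autonomous Gr\"onwall argument along this reverse flow produces a second factor of $(1+h/C)^{-1}$, and the product of the two gives the stated $(1+h/C)^{-2}$ (and likewise the exponent $2/q$ in \eqref{eq:LSI-contraction}). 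So the fix is not a sharper forward estimate but an honest differential-inequality analysis of the backward SDE, mirroring the forward one.
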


Hence, its mixing times are roughly $qh^{-1}\clsi\log\frac{\log\eu R_{q}}{\veps}$
and $qh^{-1}\cpi\log\frac{\eu R_{q}}{\veps}$ under \eqref{eq:lsi}
and \eqref{eq:pi}, respectively. We can extend this mixing result
to constrained distributions under mild assumptions. 
\begin{lem}
[{\cite[Lemma 22]{kook2024inout}}] \label{lem:extension} Let
$\pi$ be a probability measure, absolutely continuous with respect
to the Lebesgue measure over $\K$. For $\{P_{t}\}_{t\geq0}$ the
heat semigroup, the forward and backward heat flow equations given
by
\begin{align*}
\partial_{t}\mu_{t} & =\frac{1}{2}\,\Delta\mu_{t}\,,\\
\partial_{t}\mu_{t}^{\leftarrow} & =-\Div\bigl(\mu_{t}^{\leftarrow}\nabla\log(\pi P_{h-t})\bigr)+\frac{1}{2}\,\Delta\mu_{t}^{\leftarrow}\quad\text{with }\mu_{0}^{\leftarrow}=\mu_{h}\,,
\end{align*}
admit solutions on $(0,h]$, and the weak limit $\lim_{t\to h}\mu_{t}^{\leftarrow}=\mu_{h}^{\leftarrow}$
exists for any initial measure $\mu_{0}$ with support\footnote{The original result in \cite{kook2024inout} is proven only for a
bounded support, but it can be extended to an unbounded support since
the almost sure pointwise convergence of $\pi*\mc N(0,\veps I_{n})$
to $\pi$ holds as $\veps\to0$ \cite[Theorem 8.15]{folland1999real}.} contained in $\supp\pi$. Moreover, for any $q$-R\'enyi divergence
with $q\in(1,\infty)$, 
\[
\eu R_{q}(\mu_{h}^{\leftarrow}\mmid\pi)\leq\lim_{t\downarrow0}\eu R_{q}(\mu_{h-t}^{\leftarrow}\mmid\pi P_{t})\,.
\]
\end{lem}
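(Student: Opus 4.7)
The plan is to follow [kook2024inout, Lemma 22], which handles the bounded-support case, and to extend to unbounded $\supp \pi$ by a regularization that exploits the almost-sure pointwise convergence of $\pi * \mc N(0, \varepsilon I_n)$ to $\pi$ as $\varepsilon \downarrow 0$ from Folland's Theorem 8.15. The argument splits naturally into three tasks: existence of both flows on $(0, h)$, existence of the weak limit at $t = h$, and the R\'enyi inequality via lower semicontinuity.

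For existence on $(0, h)$, the forward flow is trivial since $\mu_t = \mu_0 * \mc N(0, tI_n)$ is smooth, strictly positive, and weakly solves $\partial_t \mu_t = \tfrac{1}{2}\Delta \mu_t$. For the backward flow, I would work on subintervals $[0, h-\delta]$: on such an interval, $\pi P_{h-t} = \pi * \mc N(0, (h-t)I_n)$ is $C^\infty$ and strictly positive, so the drift $\nabla \log(\pi P_{h-t})$ is smooth and locally bounded, and standard parabolic theory (or the equivalent It\^o SDE) produces a unique solution. Concatenating over $\delta \downarrow 0$ yields a solution on $[0, h)$, and this step is unchanged whether $\supp \pi$ is bounded or not.

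The weak limit at $t = h$ is the delicate point. In the bounded case, $\mu_t^{\leftarrow}$ remains confined to a fixed Euclidean thickening of $\supp \pi$, so Prokhorov's theorem yields tightness, and uniqueness of the limit is obtained by identifying it with the Gibbs conditional $\int \pi^{X|Y=y}\, \mu_h^Y(\D y)$ via disintegration. For unbounded $\supp \pi$, I would replace $\pi$ with $\pi_\varepsilon := \pi P_\varepsilon$, which is smooth and strictly positive on $\R^n$; apply the bounded-case machinery with tightness coming from a second-moment Lyapunov estimate for the backward SDE, using that Gaussian smoothing preserves log-concavity of $\pi$ and therefore yields a one-sided Lipschitz drift; and then send $\varepsilon \downarrow 0$ using the pointwise convergence $\pi_\varepsilon \to \pi$ from Folland.

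Finally, for the R\'enyi bound I would invoke joint lower semicontinuity of $\eu R_q(\cdot \mmid \cdot)$ under weak convergence for $q \in (1, \infty)$. Since $\mu_{h-t}^{\leftarrow} \to \mu_h^{\leftarrow}$ weakly by the previous step, and $\pi P_t \to \pi$ weakly (in fact pointwise in density, again by Folland) as $t \downarrow 0$, lower semicontinuity gives
\[
\eu R_q(\mu_h^{\leftarrow} \mmid \pi) \leq \liminf_{t \downarrow 0} \eu R_q(\mu_{h-t}^{\leftarrow} \mmid \pi P_t),
\]
which is the claim. The hard part will be the no-escape-of-mass bound as $t \uparrow h$ in the unbounded setting: one must uniformly control the tails of $\mu_t^{\leftarrow}$ on the approach to $t = h$, and this is precisely where preservation of log-concavity under Gaussian convolution, and the ensuing dissipation-type estimate on the second moment of $\mu_t^{\leftarrow}$, become indispensable.
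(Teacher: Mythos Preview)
The paper does not supply its own proof of this lemma; it is imported wholesale from \cite[Lemma~22]{kook2024inout}, with only a footnote indicating how to lift the bounded-support hypothesis. So there is no in-paper argument to compare against beyond that footnote.

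Your three-part outline---existence on $(0,h)$ via smoothness of the convolved drift, existence of the weak limit at $t=h$, and the R\'enyi bound via joint lower semicontinuity of $\eu R_q$ under weak convergence---matches the shape of the original argument. Where you diverge from the paper is in handling unbounded support. The footnote asserts that the \emph{only} place the bounded-support assumption was used in the original proof is to establish $\pi * \mc N(0,\varepsilon I_n) \to \pi$ almost everywhere as $\varepsilon \downarrow 0$, and that this holds for any integrable density by Folland's approximate-identity theorem. In other words, no auxiliary regularization $\pi \mapsto \pi_\varepsilon$ and second limit passage is needed: once one notes that this pointwise convergence does not require compact support, the original proof goes through verbatim. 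Your proposed detour through $\pi_\varepsilon$ and a moment-based tightness argument is unnecessary.

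That detour also introduces a gap: you invoke log-concavity of $\pi$ (``Gaussian smoothing preserves log-concavity'') to obtain a one-sided Lipschitz drift and a second-moment Lyapunov bound, but the lemma as stated does not assume $\pi$ is log-concave---it only assumes absolute continuity. In every application in this paper $\pi$ is indeed log-concave, so the gap is harmless in context, but if you pursue your route the extra hypothesis must be made explicit.
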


 Using this lemma, we now establish a mixing of the $\ps$ (for potentially
constrained distributions) in $\chi^{2}$ or $\eu R_{q}$ for any
$q\geq2$ under \eqref{eq:pi}, and in $\eu R_{q}$ for any $q\geq1$
under \eqref{eq:lsi}.
\begin{lem}
\label{lem:exp-mixing} Let $P$ be the Markov kernel of the $\ps$
and $\mu$ be any initial distribution supported on $\K\subset\Rn$.
Then, the contractions \eqref{eq:LSI-contraction} under \eqref{eq:lsi},
and \eqref{eq:PI-contraction}-\eqref{eq:cpi-renyi} under \eqref{eq:pi}
remain valid for $\clsi=\clsi(\pi^{X})$ and $\cpi=\cpi(\pi^{X})$.
\end{lem}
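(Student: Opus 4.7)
The plan is to extend the smooth-case argument of Proposition~\ref{prop:ps-smooth} to the constrained target $\pi^X$ via the heat-flow decomposition of one $\ps$ step described just above Lemma~\ref{lem:extension}, using that extension lemma to handle the singular boundary of $\K$. Concretely, I would decompose one step as $\mu \mapsto \mu P_h \mapsto \mu_h^\leftarrow = \mu P$, where the first map is convolution with $\mc N(0, h I_n)$ (forward heat semigroup) and the second is the backward flow with drift $\nabla \log(\pi^X P_{h-t})$; stationarity of $\pi^X$ shows the same backward flow initialized at $\pi^X P_h$ returns $\pi^X$. Although $\mu$ and $\pi^X$ may only be supported on $\K$, both $\mu_s^\leftarrow$ and $\pi^X P_{h-s}$ admit smooth positive densities on all of $\Rn$ for every $s \in [0, h)$, because convolution with a non-degenerate Gaussian mollifies the support and Fokker--Planck evolutions from a smooth initial condition stay smooth.

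With this smoothing in hand, for each $s \in [0, h)$ I can run the CCSW-style computation underlying Proposition~\ref{prop:ps-smooth} on the smooth pair $(\mu_s^\leftarrow, \pi^X P_{h-s})$: differentiating $s \mapsto \eu R_q(\mu_s^\leftarrow \mmid \pi^X P_{h-s})$ produces the Fisher-information identity, which is controlled below via the LSI (resp.\ PI) of $\pi^X P_{h-s}$, and integrating over $s \in [0, h-t]$ yields the same shape of contraction as \eqref{eq:LSI-contraction}, \eqref{eq:PI-contraction}, and \eqref{eq:cpi-renyi}, but with the relevant constant of $\pi^X$ replaced by that of $\pi^X P_{h-s}$. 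At $s = 0$, the data-processing inequality for the heat semigroup gives the base
\[
\eu R_q(\mu_0^\leftarrow \mmid \pi^X P_h) = \eu R_q(\mu P_h \mmid \pi^X P_h) \le \eu R_q(\mu \mmid \pi^X),
\]
which seeds the integrated inequality along $s \in [0, h-t]$.

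To finish, I would invoke Lemma~\ref{lem:extension} to pass to the limit $t \downarrow 0$,
\[
\eu R_q(\mu P \mmid \pi^X) = \eu R_q(\mu_h^\leftarrow \mmid \pi^X) \le \lim_{t \downarrow 0} \eu R_q(\mu_{h-t}^\leftarrow \mmid \pi^X P_t),
\]
and combine this with the integrated contraction above. The hard part will be the limit: the smoothed-level bound involves $\clsi(\pi^X P_u)$ and $\cpi(\pi^X P_u)$ for $u > 0$ rather than the target constants themselves. Log-concavity of $\pi^X$ ensures that $\pi^X P_u$ remains log-concave for every $u > 0$ and that $\clsi(\pi^X P_u) \to \clsi(\pi^X)$, $\cpi(\pi^X P_u) \to \cpi(\pi^X)$ as $u \downarrow 0$; together with monotone/dominated convergence applied inside the integrated Fisher-information inequality, this allows $t \downarrow 0$ to reproduce precisely the rates of Proposition~\ref{prop:ps-smooth} with $\clsi(\pi^X)$ and $\cpi(\pi^X)$, yielding the lemma.
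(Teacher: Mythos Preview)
Your overall strategy matches the paper's: smooth by the forward heat semigroup, run the CCSW-style contraction on smooth objects, then invoke Lemma~\ref{lem:extension} to pass to the boundary. The paper, however, avoids redoing the differential Fisher-information computation by noticing that $\mu_{h-\epsilon}^\leftarrow$ is \emph{exactly} the output of one $\ps$ step with smooth target $\pi_\epsilon:=\pi^X P_\epsilon$ and step size $h-\epsilon$ applied to the smooth initial $\mu_\epsilon:=\mu P_\epsilon$ (the backward drift $\nabla\log(\pi^X P_{h-s})$ is the same in both pictures). Hence Proposition~\ref{prop:ps-smooth} applies as a black box, DPI handles the initial side, and one sends $\epsilon\downarrow 0$. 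Your route works too, but does strictly more computation.

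There is a genuine imprecision in your last step. After integrating the differential inequality over $s\in[0,h-t]$ and letting $t\downarrow 0$, the contraction factor you obtain is
\[
\exp\Bigl(-\tfrac{2}{q}\int_0^h \frac{\D u}{C(\pi^X P_u)}\Bigr),
\]
with $C\in\{\cpi,\clsi\}$. The mere convergence $C(\pi^X P_u)\to C(\pi^X)$ as $u\downarrow 0$ (plus monotone/dominated convergence) lets you take $t\downarrow 0$ inside the integral, but it does \emph{not} turn that integral into $\log(1+h/C(\pi^X))$; the integral involves $C(\pi^X P_u)$ for all $u\in(0,h]$, not just small $u$. To recover the exact rates \eqref{eq:LSI-contraction}--\eqref{eq:cpi-renyi} you need the additive stability bound $C(\pi^X P_u)\le C(\pi^X)+u$ for every $u\ge 0$ (Chafa\"i, cited in the paper as \cite[Corollary 13]{chafai2004entropies}), which gives $\int_0^h \frac{\D u}{C(\pi^X P_u)}\ge \log(1+h/C(\pi^X))$. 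The paper uses this same bound explicitly; once you insert it, your argument goes through.
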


\begin{proof}
For any $\epsilon>0$, we have that $\mu_{\epsilon}=\mu*\mc N(0,\epsilon I_{n})$
and $\pi_{\epsilon}=\pi^{X}*\mc N(0,\epsilon I_{n})$ are $C^{\infty}$-smooth.
Let $D$ denote either $\chi^{2}$ or $\eu R_{q}$, and define $C_{\epsilon}:=1+\tfrac{h-\epsilon}{C_{*}(\pi^{X})+\epsilon}\leq1+\tfrac{h-\epsilon}{C_{*}(\pi_{\epsilon})}$
for $*\in\{\msf{PI},\msf{LSI}\}$, where the last inequality holds
due to \cite[Corollary 13]{chafai2004entropies}:
\[
\cpi\bigl(\pi^{X}*\mc N(0,\epsilon I_{n})\bigr)\leq\cpi(\pi^{X})+\epsilon\quad\&\quad\clsi\bigl(\pi^{X}*\mc N(0,\epsilon I_{n})\bigr)\leq\clsi(\pi^{X})+\epsilon\,.
\]
Hence, by Proposition~\ref{prop:ps-smooth} with step size $h-\epsilon$,
we obtain that if $\eu R_{q}(\mu\mmid\pi^{X})\leq1$, for $\mu_{0}=\mu$,
\[
D(\mu_{h-\epsilon}^{\leftarrow}\mmid\pi_{\epsilon})\leq C_{\epsilon}^{-1}D(\mu_{\epsilon}\mmid\pi_{\epsilon})\leq C_{\epsilon}^{-1}D(\mu\mmid\pi^{X})\,,
\]
where the last inequality follows from the DPI (Lemma~\ref{lem:DPI})
for the $f$-divergence ($\chi^{2}$) and R\'enyi divergence $\eu R_{q}$.
By the lower semicontinuity of $D$ (Lemma~\ref{lem:extension}),
sending $\epsilon\to0$ leads to 
\[
D(\mu P\mmid\pi^{X})=D(\mu_{h}^{\leftarrow}\mmid\pi^{X})\leq C_{0}^{-1}D(\mu\mmid\pi^{X})=\frac{\chi^{2}(\mu\mmid\pi^{X})}{\bpar{1+h/\cpi(\pi^{X})}^{2}}\ \text{ or }\ \frac{\eu R_{q}(\mu\mmid\pi^{X})}{\bpar{1+h/\cpi(\pi^{X})}^{2/q}}\,.
\]
When $\eu R_{q}(\mu\mmid\pi^{X})\geq1$, a similar limiting argument
shows that 
\[
\eu R_{q}(\mu P\mmid\pi^{X})\leq\eu R_{q}(\mu_{\epsilon}\mmid\pi_{\epsilon})-\frac{2\log C_{\epsilon}}{q}\leq\eu R_{q}(\mu\mmid\pi^{X})-\frac{2\log C_{0}}{q}\,,
\]
where in the first inequality we used the lower-semicontinuity of
$\eu R_{q}$, and in the second inequality used the DPI and then sent
$\epsilon\to0$. Contraction in $\eu R_{q}$ for any $q\geq1$ under
\eqref{eq:lsi} follows from the same argument above.
\end{proof}
We can simply set $X\gets Z=(X,T)$ and $Y\gets Y$ above to obtain
a mixing guarantee of the $\psexp$ in terms of $\cpi(\pi^{Z})$,
which satisfies $\cpi(\pi^{Z})\lesssim(\cpi(\pi^{X})+1)\log n$ due
to \cite{klartag2023logarithmic} and Lemma~\ref{lem:exp-cov}.
\begin{rem}
\label{rmk:LSI-mixing-exp} In general, one cannot expect a mixing
result of the $\psexp$ based on \eqref{eq:lsi} even if the original
distribution $\pi^{X}\propto\exp(-V)$ satisfies the LSI, as it does
not hold in general for an exponential distribution. Hence, before
sampling, an additional step such as truncation to a bounded convex
set may be required. We will go through this procedure in \S\ref{sec:warm-start}
in order to invoke a mixing result via the LSI (see, for instance,
Lemma~\ref{lem:ps-exp-Rinf}).
\end{rem}

\subsubsection{Complexity of implementing the backward step \label{subsec:exp-backward-complexity}}

As opposed to the forward step, it is unclear how to sample from the
backward distribution $\pi^{Z|Y=y}\propto\exp(-\alpha^{\T}z-\frac{1}{2h}\,\norm{y-z}^{2})|_{\K}$
for $\alpha=ne_{n+1}$. Due to 

\[
\alpha^{\T}z+\frac{1}{2h}\,\norm{z-y}^{2}=\frac{1}{2h}\,\norm{z-(y-h\alpha)}^{2}+\alpha^{\T}y-\frac{h}{2}\,\norm{\alpha}^{2}\,,
\]
the distribution can be written as 
\[
\pi^{Z|Y=y}\propto\exp\bpar{-\frac{1}{2h}\,\norm{z-(y-h\alpha)}^{2}}\big|_{\K}\,.
\]
Thus, we use rejection sampling with proposal $\mc N(y-h\alpha,hI_{n+1})$,
accepting a proposed sample only if it lies within $\K$ (which can
be checked by using the evaluation oracle). The expected number of
trials until the first acceptance is 
\[
\frac{1}{\ell(y)}:=\frac{(2\pi h)^{(n+1)/2}}{\int_{\mc K}\exp(-\frac{1}{2h}\,\norm{z-(y-h\alpha)}^{2})\,\D z}\,.
\]

We show in this section that the expected number of wasted proposals
is moderate under suitable choices of variance $h$ and threshold
$N$ in the $\psexp$. To this end, we first establish analytical
and geometrical properties of $\pi^{Y}=\pi^{Z}*\mc N(0,hI_{n+1})$.

\paragraph{Density after taking the forward heat flow.}

We deduce a density of $\pi^{Y}=\pi^{Z}*\mc N(0,hI_{n+1})$, using
the definition of convolution.
\begin{lem}
[Density of $\pi^Y$] \label{lem:exp-piY-density} For $\pi^{Z}\propto\exp(-\alpha^{\T}z)|_{\K}$,
the density of $\pi^{Y}=\pi^{Z}*\mc N(0,hI_{n+1})$ is
\[
\pi^{Y}(y)=\frac{\int_{\mc K}\exp(-\frac{1}{2h}\,\norm{z-(y-h\alpha)}^{2})\,\D z}{\int_{\mc K}\exp(-\alpha^{\T}z)\,\D z\cdot(2\pi h)^{(n+1)/2}}\,\exp\bpar{-\alpha^{\T}y+\half\,h\norm{\alpha}^{2}}=\frac{\ell(y)\exp(-\alpha^{\T}y+\half\,h\norm{\alpha}^{2})}{\int_{\mc K}\exp(-\alpha^{\T}z)\,\D z}\,.
\]
\end{lem}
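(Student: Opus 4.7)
The plan is to compute $\pi^Y(y)$ directly from the definition of convolution: $\pi^Y(y) = \int \pi^Z(z)\,\phi_{hI_{n+1}}(y-z)\,\D z$, where $\phi_{hI_{n+1}}$ is the density of $\mc N(0,hI_{n+1})$. Substituting the explicit forms of $\pi^Z(z) = \exp(-\alpha^\T z)\,\ind_{\K}(z)/\int_{\K}\exp(-\alpha^\T z')\,\D z'$ and $\phi_{hI_{n+1}}(y-z) = (2\pi h)^{-(n+1)/2}\exp(-\norm{y-z}^2/(2h))$, the convolution integral becomes an integral over $\K$ of $\exp\bpar{-\alpha^\T z - \tfrac{1}{2h}\norm{z-y}^2}$ divided by $(2\pi h)^{(n+1)/2}\int_{\K}\exp(-\alpha^\T z')\,\D z'$.

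The key algebraic step is precisely the completing-the-square identity already stated immediately before the lemma:
\[
\alpha^\T z + \frac{1}{2h}\norm{z-y}^2 = \frac{1}{2h}\bnorm{z-(y-h\alpha)}^2 + \alpha^\T y - \frac{h}{2}\norm{\alpha}^2\,.
\]
Since the last two terms do not depend on $z$, they pull outside the $\D z$ integral as the factor $\exp(-\alpha^\T y + \tfrac{h}{2}\norm{\alpha}^2)$. What remains inside is exactly $\int_{\K}\exp\bpar{-\tfrac{1}{2h}\norm{z-(y-h\alpha)}^2}\,\D z$, which by the definition of $\ell(y)$ equals $\ell(y)\cdot(2\pi h)^{(n+1)/2}$.

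Combining these pieces, the $(2\pi h)^{(n+1)/2}$ factors cancel between numerator and denominator, yielding
\[
\pi^Y(y) = \frac{\ell(y)\,\exp\bpar{-\alpha^\T y + \tfrac{1}{2}h\norm{\alpha}^2}}{\int_{\K}\exp(-\alpha^\T z)\,\D z}\,,
\]
which is the claimed identity. There is no substantive obstacle here; it is a bookkeeping calculation. The only thing to be a bit careful about is to keep the two expressions for the intermediate numerator (the one showing the Gaussian integral explicitly and the one using $\ell(y)$) aligned, since the lemma statement displays both simultaneously.
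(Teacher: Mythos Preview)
Your proposal is correct and follows essentially the same approach as the paper: write $\pi^{Y}(y)$ as the convolution integral, apply the completing-the-square identity $\alpha^{\T}z+\tfrac{1}{2h}\norm{z-y}^{2}=\tfrac{1}{2h}\norm{z-(y-h\alpha)}^{2}+\alpha^{\T}y-\tfrac{h}{2}\norm{\alpha}^{2}$, and recognize the remaining integral as $\ell(y)\,(2\pi h)^{(n+1)/2}$.
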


\begin{proof}
By the definition of the convolution,
\begin{align*}
\pi^{Y}(y) & =\frac{\int_{\mc K}\exp(-\alpha^{\T}z-\frac{1}{2h}\,\norm{z-y}^{2})\,\D z}{\int_{\mc K}\exp(-\alpha^{\T}z)\,\D z\cdot(2\pi h)^{(n+1)/2}}=\frac{\int_{\mc K}\exp(-\frac{1}{2h}\,\norm{z-(y-h\alpha)}^{2})\,\D z}{\int_{\mc K}\exp(-\alpha^{\T}z)\,\D z\cdot(2\pi h)^{(n+1)/2}}\,\exp\bpar{-\alpha^{\T}y+\half\,h\norm{\alpha}^{2}}\\
 & =\frac{\ell(y)}{\int_{\mc K}\exp(-\alpha^{\T}z)\,\D z}\exp\bpar{-\alpha^{\T}y+\half\,h\norm{\alpha}^{2}}\,,
\end{align*}
which completes the proof.
\end{proof}
We now identify an \emph{effective domain} of $\pi^{Y}$, which takes
up most of measure of $\pi^{Y}$. Below, $\K_{\delta}:=\{x:d(x,\K)\leq\delta\}$
denotes the $\delta$-\emph{blow up} of $\K$.
\begin{lem}
\label{lem:exp-eff-domain} In the setting of Problem~\ref{prob:exp-sampling},
for $\widetilde{\K}:=\K_{\delta}+h\alpha$, if $\delta\geq13hn$,
then
\[
\pi^{Y}(\widetilde{\K}^{c})\leq\exp\bpar{-\frac{\delta^{2}}{2h}+13\delta n}\,.
\]
\end{lem}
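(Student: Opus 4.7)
The plan is to exploit the convolutional representation $\pi^Y = \pi^Z \ast \mc N(0, h I_{n+1})$ and reduce the statement to a Gaussian tail bound. Write a draw $Y \sim \pi^Y$ as $Y = Z + \xi$ where $Z \sim \pi^Z$ is supported on $\K$ and $\xi \sim \mc N(0, h I_{n+1})$ is independent. Unpacking $\widetilde{\K} = \K_\delta + h\alpha$, the event $\{Y \in \widetilde{\K}^c\}$ is exactly $\{Z + \xi - h\alpha \notin \K_\delta\} = \{d(Z + \xi - h\alpha, \K) > \delta\}$. Since $Z \in \K$, the triangle inequality gives
\[
d(Z + \xi - h\alpha, \K) \;\leq\; \|(Z + \xi - h\alpha) - Z\| \;=\; \|\xi - h\alpha\|,
\]
so this event is contained in $\{\|\xi - h\alpha\| > \delta\}$, and independence of $\xi$ and $Z$ yields
\[
\pi^Y(\widetilde{\K}^c) \;\leq\; \P\bpar{\|\xi - h\alpha\| > \delta}.
\]

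It then suffices to estimate this Gaussian tail. Since $\alpha = n e_{n+1}$, the variable $\|\xi - h\alpha\|^2/h$ is a noncentral $\chi^2_{n+1}$ with noncentrality $h n^2$; its moment generating function $\E[\exp(\lambda \|\xi - h\alpha\|^2)] = (1 - 2\lambda h)^{-(n+1)/2} \exp\!\bpar{\lambda h^2 n^2/(1 - 2\lambda h)}$ is finite for $\lambda \in (0, 1/(2h))$. By Chernoff's inequality,
\[
\P\bpar{\|\xi - h\alpha\|^2 \geq \delta^2} \;\leq\; (1 - 2\lambda h)^{-(n+1)/2} \exp\!\Bpar{-\lambda \delta^2 + \frac{\lambda h^2 n^2}{1 - 2\lambda h}}.
\]
Choosing $\lambda = \tfrac{1}{2h}(1 - 2hn/\delta)$, valid by the hypothesis $\delta \geq 13 h n$, produces a leading term $-(1 - 2hn/\delta)\delta^2/(2h) = -\delta^2/(2h) + n\delta$, a noncentrality correction $\lambda h^2 n^2/(1 - 2\lambda h)$ that simplifies to at most $n\delta/4$, and a chi-squared prefactor $\tfrac{n+1}{2}\log(\delta/(2hn))$. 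Using $\log x \leq x$ together with $\delta \geq 13 h n$, all subdominant corrections combine into an additive term bounded by $12\delta n$, which together with the $+n\delta$ from the leading term assembles to the claimed bound $\exp(-\delta^2/(2h) + 13\delta n)$.

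The main obstacle is the last step: extracting the exact constant $13$ in the exponent requires carefully balancing the choice of $\lambda$ against three competing contributions — the noncentrality shift produced by $h\alpha$, the chi-squared prefactor, and the leading Gaussian term — and the hypothesis $\delta \geq 13 h n$ is precisely what forces the subdominant corrections to aggregate into at most $12\delta n$. In the regime $\delta \leq 26 hn$ the claim is vacuous (the exponent is nonnegative and the bound exceeds $1$), so one may freely restrict attention to the nontrivial regime $\delta > 26 h n$ when sharpening constants.
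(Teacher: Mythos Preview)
Your reduction $\pi^{Y}(\widetilde{\K}^{c})\le\P(\|\xi-h\alpha\|>\delta)$ is correct, but it is too lossy to yield the lemma, and no amount of sharpening the Chernoff argument can fix this. The point is that $\|\xi\|$ concentrates near $\sqrt{(n+1)h}$, and the lemma is applied in the regime $h\asymp n^{-2}$, $\delta\asymp n^{-1}$. There $\sqrt{(n+1)h}\asymp n^{-1/2}\gg\delta$, so $\P(\|\xi-h\alpha\|>\delta)$ is essentially $1$, whereas the claimed bound $\exp(-\delta^{2}/(2h)+13\delta n)$ is already tiny once $\delta$ exceeds $26hn$ by a constant factor (e.g.\ at $\delta=27hn$ with $h=n^{-2}$ the exponent is about $-13.5$). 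Concretely, your ``subdominant'' prefactor $\tfrac{n+1}{2}\log(\delta/(2hn))$ is \emph{not} bounded by $12\delta n$: applying $\log x\le x$ gives $\tfrac{(n+1)\delta}{4hn}$, which dominates $12\delta n$ whenever $h\ll 1/n$.

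What your argument discards is the geometry of $\K$ beyond the single fact $Z\in\K$. The paper's proof hinges on Lemma~\ref{lem:exp-expansion}, which says $\int_{\K_{s}}e^{-\alpha^{\T}v}/\int_{\K}e^{-\alpha^{\T}v}\le e^{13sn}$; this is where the unit-ball assumption $B_{1}(x_{0})\subset\msf L_{\pi^{X},g}$ from Problem~\ref{prob:exp-sampling} enters, and it is exactly what prevents the $\delta$-neighbourhood from carrying too much mass. The proof then proceeds by writing out the density of $\pi^{Y}$, changing variables $v=y-h\alpha$, bounding $\K$ by a supporting halfspace to reduce the inner Gaussian integral to a one-dimensional tail $\Phi^{c}(h^{-1/2}d(v,\K))$, and applying the coarea formula plus integration by parts in the distance variable. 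After invoking Lemma~\ref{lem:exp-expansion} for the layer $\K_{s}\setminus\K$, what remains is a genuinely one-dimensional Gaussian integral $\int_{\delta}^{\infty}(2\pi h)^{-1/2}e^{-s^{2}/(2h)+13sn}\,\D s$, which gives the stated bound. The $(n{+}1)$-dimensional chi-square tail never appears; the inner-ball structure is what collapses the problem to one dimension.
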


\begin{proof}
Using the density formula of $\pi^{Y}$ in Lemma~\ref{lem:exp-piY-density},
\begin{align*}
\int_{\mc K}e^{-\alpha^{\T}z}\,\D z\cdot\pi^{Y}(\widetilde{\K}^{c}) & =\frac{1}{(2\pi h)^{(n+1)/2}}\int_{\widetilde{\K}^{c}}\int_{\mc K}e^{-\frac{1}{2h}\,\norm{z-(y-h\alpha)}^{2}}e^{-\alpha^{\T}y+\frac{1}{2}\,h\norm{\alpha}^{2}}\,\D z\D y\\
 & \underset{(i)}{=}\frac{1}{(2\pi h)^{(n+1)/2}}\int_{\mc K_{\delta}^{c}}\int_{\mc K}e^{-\frac{1}{2h}\,\norm{z-v}^{2}}e^{-\alpha^{\T}v-\half\,h\norm{\alpha}^{2}}\,\D z\D v\\
 & \underset{(ii)}{\leq}\frac{1}{(2\pi h)^{(n+1)/2}}\int_{\mc K_{\delta}^{c}}\int_{\mc H(v)}e^{-\frac{1}{2h}\,\norm{z-v}^{2}}e^{-\alpha^{\T}v-\half\,h\norm{\alpha}^{2}}\,\D z\D v\\
 & =\int_{\mc K_{\delta}^{c}}e^{-\alpha^{\T}v-\half\,h\norm{\alpha}^{2}}\Bpar{\int_{d(v,\mc K)}^{\infty}\frac{e^{-\frac{t^{2}}{2h}}}{\sqrt{2\pi h}}\,\D t}\,\D v=:(\#)\,,
\end{align*}
where $(i)$ follows from the change of variables by $v=y-h\alpha$,
and in $(ii)$ for $v\in\K_{\delta}^{c}$, $\mc H(v)$ denotes the
supporting half-space at $\msf{proj}_{\mc K}(v)$ containing $\mc K$,
where $\msf{proj}_{\mc K}(v)$ denotes the projection of $v$ to $\K$.
Denoting the CDF of the standard Gaussian by $\Phi$, we have
\[
\frac{1}{\sqrt{2\pi h}}\int_{s}^{\infty}e^{-\frac{t^{2}}{2h}}\,\D t=1-\Phi(h^{-1/2}s)=:\Phi^{c}(h^{-1/2}s)\,.
\]
By the co-area formula and integration by parts, for the $n$-dimensional
Hausdorff measure $\mc H^{n}$,
\begin{align}
(\#) & \underset{\text{coarea}}{=}e^{-\half h\norm{\alpha}^{2}}\int_{\delta}^{\infty}\Phi^{c}(h^{-1/2}s)\int_{\de\mc K_{s}}e^{-\alpha^{\T}v}\,\mc H^{n}(\D v)\D s\nonumber \\
 & \underset{\text{IBP}}{=}e^{-\half h\norm{\alpha}^{2}}\Bpar{\Bbrack{\Phi^{c}(h^{-1/2}s)\int_{0}^{s}\int_{\de\mc K_{t}}e^{-\alpha^{\T}v}\,\mc H^{n}(\D v)\D t}_{s=\delta}^{\infty}+\int_{\delta}^{\infty}\frac{e^{-\frac{s^{2}}{2h}}}{\sqrt{2\pi h}}\int_{0}^{s}\int_{\de\mc K_{t}}e^{-\alpha^{\T}v}\,\mc H^{n}(\D v)\D t\D s}\nonumber \\
 & =e^{-\half h\norm{\alpha}^{2}}\Bpar{\underbrace{\Bbrack{\Phi^{c}(h^{-1/2}s)\int_{\mc K_{s}\backslash\mc K}e^{-\alpha^{\T}v}\,\D v}_{s=\delta}^{\infty}}_{=:\msf I}+\int_{\delta}^{\infty}\frac{e^{-\frac{s^{2}}{2h}}}{\sqrt{2\pi h}}\int_{\mc K_{s}\backslash\mc K}e^{-\alpha^{\T}v}\,\D v\D s}\,.\label{eq:last1}
\end{align}
As shown later in Lemma~\ref{lem:exp-expansion}, this quantity
is bounded by $e^{13sn}\int_{\K}\exp(-\alpha^{\T}v)\,\D v$. From
a standard bound on $\Phi^{c}$ \cite[Theorem 20]{ruskai2000study}
given by
\begin{equation}
\Phi^{c}(h^{-1/2}s)\leq\frac{\exp(-\frac{s^{2}}{2h})}{\sqrt{2\pi}}\quad\text{for any }s>0\,,\label{eq:Gaussian-tail}
\end{equation}
it follows that $\Phi^{c}(h^{-1/2}s)\int_{\mc K_{s}\backslash\mc K}e^{-\alpha^{\T}v}\,\D v$
vanishes at $s=\infty$, and thus the term $\msf I$ can bounded by
$0$.

Putting the bound in Lemma~\ref{lem:exp-expansion} into \eqref{eq:last1},
\[
\int_{\mc K}e^{-\alpha^{\T}z}\,\D z\cdot\pi^{Y}(\widetilde{\K}^{c})\leq\int_{\delta}^{\infty}\frac{1}{\sqrt{2\pi h}}\,e^{-\frac{s^{2}}{2h}}e^{13sn}\,\D s\cdot\int_{\mc K}e^{-\alpha^{\T}z}\,\D z\,.
\]
Dividing both sides by $\int_{\mc K}e^{-\alpha^{\T}z}\,\D z$ and
invoking the standard bound on $\Phi^{c}$ again, we obtain 
\begin{align*}
\pi^{Y}(\widetilde{\K}^{c}) & \leq\int_{\delta}^{\infty}\frac{1}{\sqrt{2\pi h}}\,e^{-\frac{s^{2}}{2h}}e^{13sn}\D s=e^{52hn^{2}}\Bpar{1-\Phi\bpar{\frac{\delta-13hn}{h^{1/2}}}}\\
 & \leq e^{52hn^{2}}\exp\bpar{-\frac{(\delta-13hn)^{2}}{2h}}=\exp\bpar{-\frac{\delta^{2}}{2h}+13\delta n}\,,
\end{align*}
which completes the proof. 
\end{proof}
This analytical property suggests taking $\delta\asymp1/n$ and $h\asymp1/n^{2}$.
Thus, we set $h=\tfrac{c}{13^{2}n^{2}}$ and $\delta=\tfrac{t}{13n}$
for parameters $c,t>0$, under which $\delta\geq13hn$ turns into
$t\geq c$, and 
\[
\pi^{Y}(\widetilde{\K}^{c})\leq\exp\bpar{-\frac{t^{2}}{2c}+t}\,.
\]

We now prove an interesting result quantifying how fast $\int_{\mc K_{s}}\exp(-\alpha^{\T}v)\,\D v$
increases in $s$.
\begin{lem}
\label{lem:exp-expansion} In the setting of Problem~\ref{prob:exp-sampling},
\[
\frac{\int_{\mc K_{s}}\exp(-\alpha^{\T}v)\,\D v}{\int_{\mc K}\exp(-\alpha^{\T}v)\,\D v}\leq\exp(13sn)\,.
\]
\end{lem}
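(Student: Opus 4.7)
The plan is to exploit convexity of $\K$ (inherited from convexity of $V$) via an inscribed-ball / dilation argument, applied to the weighted measure $e^{-\alpha^{\T}v}\,\D v$. Since we are not given any Lipschitz bound on $V$, any proof must be purely geometric, relying only on the convexity of $\K$ and the regularity built into the ground-set assumption. First, by translating $V \mapsto V - V_{0}$ and correspondingly shifting the $t$-coordinate by $V_{0}/n$, the set $\K$ is rigidly translated (so $\K_{s}$ translates the same way) and the weight $e^{-nt}$ picks up a global multiplicative factor that cancels in the ratio; thus we may assume $V_{0}=0$.

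Second, I will locate an inscribed unit ball in $\K$. I claim that $B_{1}(x_{*})\subset\K$ for $x_{*}:=(x_{0},\,V_{0}/n+11)=(x_{0},11)$: any $(x,t)\in B_{1}(x_{*})$ has $\norm{x-x_{0}}\le1$, so by the ground-set hypothesis $B_{1}(x_{0})\subset\msf L_{\pi^{X},g}$ we get $V(x)\le V_{0}+10n=10n$, and simultaneously $t\ge11-1=10$, hence $nt\ge10n\ge V(x)$, i.e.\ $(x,t)\in\K$.

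Third, the standard convex inradius bound --- any convex body $C$ containing $B_{r}(x_{*})$ satisfies $C_{s}\subseteq x_{*}+(1+s/r)(C-x_{*})$, proved by writing any $y_{K}+w\in C_{s}$ as the convex combination $\tfrac{1}{1+s/r}\,y_{K}+\tfrac{s/r}{1+s/r}\,(x_{*}+w/s)$ of two points in $C$ --- applied with $r=1$ yields $\K_{s}\subseteq x_{*}+(1+s)(\K-x_{*})$. Change of variables $y=x_{*}+(1+s)(y'-x_{*})$ has Jacobian $(1+s)^{n+1}$ and gives $\alpha^{\T}y=(1+s)\,\alpha^{\T}y'-s\,\alpha^{\T}x_{*}$, so
\[
\int_{\K_{s}}e^{-\alpha^{\T}y}\,\D y\;\le\;(1+s)^{n+1}\,e^{s\alpha^{\T}x_{*}}\int_{\K}e^{-(1+s)\alpha^{\T}y'}\,\D y'\,,
\]
with $e^{s\alpha^{\T}x_{*}}=e^{11sn}$.

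Finally, using the epigraph structure to integrate out $t$ via $\int_{V(x)/n}^{\infty}e^{-cnt}\,\D t=e^{-cV(x)}/(cn)$, the ratio $\int_{\K}e^{-(1+s)\alpha^{\T}y'}\,\D y'\big/\int_{\K}e^{-\alpha^{\T}y'}\,\D y'$ becomes $\tfrac{1}{1+s}\cdot\tfrac{\int e^{-(1+s)V}}{\int e^{-V}}\le\tfrac{1}{1+s}$, where the inequality uses $V\ge V_{0}=0$. Combining and using $(1+s)^{n}\le e^{ns}$ gives $\int_{\K_{s}}e^{-\alpha^{\T}y}\,\D y\le e^{12sn}\int_{\K}e^{-\alpha^{\T}y}\,\D y$, which is stronger than the claimed $e^{13sn}$. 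The only nontrivial step is identifying the inscribed unit ball inside the \emph{unbounded} epigraph $\K$ using only the ground-set condition on $\pi^{X}$; after that, the inradius dilation and the change-of-variables computation are routine.
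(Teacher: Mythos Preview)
Your proof is correct and follows essentially the same approach as the paper: locate the unit ball $B_{1}(x_{0},V_{0}/n+11)\subset\K$ from the ground-set hypothesis, then use the convex dilation $\K_{s}\subset x_{*}+(1+s)(\K-x_{*})$ and change variables. The only difference is in the last step: the paper crudely bounds $e^{-nst}\le e^{11sn}$ using $t\ge -11$ on the translated $\K$, whereas you integrate out $t$ exactly and use $V\ge 0$ to get $\int e^{-(1+s)V}/\int e^{-V}\le 1$, which absorbs one factor of $(1+s)$ and yields the slightly sharper $e^{12sn}$. (One cosmetic remark: your phrasing ``writing $y_{K}+w$ as the convex combination'' is off---it is the contracted point $x_{*}+\tfrac{1}{1+s}(y_{K}+w-x_{*})$ that equals that convex combination, which is what shows it lies in $\K$.)
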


\begin{proof}
Due to the assumption on the ground set $\msf L_{V,g}=\{V(x)-V_{0}\leq10n\}$,
it holds that $\K\cap[t=V_{0}/n+10]$ contains $B_{1}^{n}(x_{0})$.
Then, the definition of $\K$ ensures that $\K$ contains $B_{1}^{n+1}(x_{0},V_{0}/n+11)$.

We now bound the quantity of interest: for $t_{0}=11+V_{0}/n$,
\begin{align*}
\frac{\int_{\K_{s}}\exp(-\alpha^{\T}v)\,\D v}{\int_{\K}\exp(-\alpha^{\T}v)\,\D v} & =\frac{\int_{\K_{s}}\exp(-nt)\,\D t\D x}{\int_{\K}\exp(-nt)\,\D t\D x}=\frac{\int_{\K_{s}-t_{0}e_{n+1}}\exp(-nt)\,\D t\D x}{\int_{\K-t_{0}e_{n+1}}\exp(-nt)\,\D t\D x}\\
 & \underset{(i)}{\leq}\frac{\int_{(1+s)(\K-t_{0}e_{n+1})}\exp(-nt)\,\D t\D x}{\int_{\K-t_{0}e_{n+1}}\exp(-nt)\,\D t\D x}\\
 & \underset{(ii)}{=}(1+s)^{n+1}\frac{\int_{\K-t_{0}e_{n+1}}\exp(-nt)\exp(-nst)\,\D t\D x}{\int_{\K-t_{0}e_{n+1}}\exp(-nt)\,\D t\D x}\\
 & \underset{(iii)}{\leq}e^{s(n+1)}e^{11sn}\frac{\int_{\K-t_{0}e_{n+1}}\exp(-nt)\,\D t\D x}{\int_{\K-t_{0}e_{n+1}}\exp(-nt)\,\D t\D x}\leq e^{13sn}\,,
\end{align*}
where $(i)$ follows from $\K_{s}-t_{0}e_{n+1}=(\K-t_{0}e_{n+1})_{s}\subset(1+s)(\K-t_{0}e_{n+1})$
due to $B_{1}(x_{0},0)\subset\K-t_{0}e_{n+1}$, and in $(ii)$ we
used the change of variables by scaling the whole system by $(1+s)$,
and $(iii)$ follows from $t\geq-11$ in $\K-t_{0}e_{n+1}$.
\end{proof}

\paragraph{Per-step guarantees.}

Using the properties established above, we provide concrete choices
of parameters, bounding a per-step query complexity for the backward
step under a \emph{warm start}. 
\begin{lem}
\label{lem:exp-perstep} In the setting of Problem~\ref{prob:exp-sampling},
let $\mu$ be an $M$-warm initial distribution for $\pi^{Z}$. Given
$k\in\mbb N$ and $\eta\in(0,1)$, set $S=\frac{16kM}{\eta}(\geq16)$,
$h=\frac{1}{n^{2}}\frac{(\log\log S)^{2}}{13^{2}\log S}$ and $N=S\log^{2}S$.
Then, per iteration, the failure probability is at most $\eta/k$,
and the expected number of queries is $\Otilde(M\log^{2}\tfrac{k}{\eta})$.
\end{lem}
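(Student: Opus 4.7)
My plan is to control the expected number of rejection trials and the probability of exhausting the threshold $N$ by decomposing each expectation over the effective domain $\widetilde{\K} = \K_{\delta} + h\alpha$ of Lemma~\ref{lem:exp-eff-domain} and its complement. Since the forward step is convolution with a Gaussian, $\mu^{Y} := \mu * \mc N(0, hI_{n+1})$ inherits $M$-warmness with respect to $\pi^{Y}$. Given $Y=y$, the number of trials is a truncated geometric with success probability $\ell(y)$, so per iteration the expected queries are bounded by $\int_{\widetilde{\K}} \mu^{Y}(y)/\ell(y)\,\D y + N\,\mu^{Y}(\widetilde{\K}^{c})$, and the failure probability is bounded by $\int_{\widetilde{\K}} \mu^{Y}(y)\,e^{-N\ell(y)}\,\D y + \mu^{Y}(\widetilde{\K}^{c})$, using $(1-\ell(y))^{N} \leq e^{-N\ell(y)}$.

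On $\widetilde{\K}^{c}$, warmness combined with Lemma~\ref{lem:exp-eff-domain} gives $\mu^{Y}(\widetilde{\K}^{c}) \leq M\exp(-\delta^{2}/(2h) + 13\delta n)$. On $\widetilde{\K}$, the key identity comes from Lemma~\ref{lem:exp-piY-density}: after substituting $v = y - h\alpha$ and invoking the regularity bound of Lemma~\ref{lem:exp-expansion},
\[
\int_{\widetilde{\K}} \frac{\pi^{Y}(y)}{\ell(y)}\,\D y \;=\; \frac{e^{-h\norm{\alpha}^{2}/2}}{\int_{\K} e^{-\alpha^{\T} v}\,\D v}\int_{\K_{\delta}} e^{-\alpha^{\T} v}\,\D v \;\leq\; \exp\bpar{-\tfrac{h\norm{\alpha}^{2}}{2} + 13\delta n}\,.
\]
The analogous failure integral on $\widetilde{\K}$ picks up a favorable factor $1/(eN)$ from the elementary inequality $s\,e^{-Ns} \leq 1/(eN)$, giving $\int_{\widetilde{\K}} \mu^{Y}(y)\,e^{-N\ell(y)}\,\D y \leq (M/eN)\exp(-h\norm{\alpha}^{2}/2 + 13\delta n)$.

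It remains to plug in $h = (\log\log S)^{2}/(13^{2} n^{2}\log S)$, so $h\norm{\alpha}^{2} = hn^{2} = o(1)$, and choose $\delta = 2(\log\log S)/(13n)$, so that $13\delta n = 2\log\log S$ and $\delta^{2}/(2h) = 2\log S$; the admissibility hypothesis $\delta \geq 13hn$ in Lemma~\ref{lem:exp-eff-domain} reduces to $2\log\log S \geq (\log\log S)^{2}/\log S$, which holds once $S$ exceeds a constant. The good-region expected-trial bound is then $M\,e^{13\delta n} = O(M(\log S)^{2})$, while $\mu^{Y}(\widetilde{\K}^{c}) \leq M(\log S)^{2}/S^{2}$, so with $N = S\log^{2} S$ the capped contribution $N\mu^{Y}(\widetilde{\K}^{c})$ is also $O(M\log^{2}S)$. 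The two pieces of failure contribute $O(M/(eN)\cdot (\log S)^{2}) = O(\eta/k)$ and $O(M\log^{2}S/S^{2}) = O(\eta\log^{2}S/(kS))$, both below $\eta/k$ for $S = 16kM/\eta \geq 16$. Since $\log^{2}S = \Otilde(\log^{2}(k/\eta))$, the per-iteration complexity is $\Otilde(M\log^{2}(k/\eta))$ as claimed. The main subtlety is the joint tuning of $h$ and $\delta$: enlarging $\delta$ suppresses the tail $e^{-\delta^{2}/(2h)}$ but inflates the good-region estimate through $e^{13\delta n}$, and the choice $\delta \asymp \log\log S/n$ is precisely what allows $h$ to be as large as $\polylog/n^{2}$ rather than $1/n^{2}$, which is critical to the downstream mixing analysis.
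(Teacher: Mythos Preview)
Your argument is correct and follows essentially the same route as the paper: reduce to $\pi^{Y}$ via warmness, split over $\widetilde{\K}$ and $\widetilde{\K}^{c}$, apply Lemma~\ref{lem:exp-eff-domain} on the complement, and on $\widetilde{\K}$ combine the density formula of Lemma~\ref{lem:exp-piY-density} with the growth bound of Lemma~\ref{lem:exp-expansion} after the shift $v=y-h\alpha$. The only notable difference is that for the failure probability on $\widetilde{\K}$ you exploit the factor $\ell(y)$ coming from $\pi^{Y}$ together with the pointwise bound $s\,e^{-Ns}\le 1/(eN)$, whereas the paper further splits $\widetilde{\K}$ according to whether $\ell(y)$ exceeds the threshold $N^{-1}\log(3kM/\eta)$; your single-step treatment is a mild simplification that yields the same final bounds (your $\delta=2\log\log S/(13n)$ differs from the paper's $\delta=\log\log S/(13n)$ only by a harmless constant).
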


\begin{proof}
We note that $\mu*\mc N(0,hI_{n+1})$ is $M$-warm with respect to
$\pi^{Y}$ due to $\D\mu/\D\pi^{X}\leq M$. Hence, the failure probability
per iteration is $\E_{\mu*\mc N(0,hI_{n+1})}[(1-\ell)^{N}]\leq M\E_{\pi^{Y}}[(1-\ell)^{N}]$.
For $\widetilde{\K}=\K_{\delta}+h\alpha$,
\begin{align*}
\E_{\pi^{Y}}[(1-\ell)^{N}] & =\int_{\widetilde{\K}^{c}}\cdot+\int_{\widetilde{\K}\cap[\ell\geq N^{-1}\log(3kM/\eta)]}\cdot+\int_{\widetilde{\K}\cap[\ell\leq N^{-1}\log(3kM/\eta)]}\cdot\\
 & \underset{(i)}{\leq}\pi^{Y}(\widetilde{\K}^{c})+\int_{[\ell\geq N^{-1}\log(3kM/\eta)]}\exp(-\ell N)\,\D\pi^{Y}\\
 & \qquad+\int_{\widetilde{\K}\cap[\ell\leq N^{-1}\log(3kM/\eta)]}\frac{\ell(y)\exp(-\alpha^{\T}y+\half\,h\norm{\alpha}^{2})}{\int_{\mc K}\exp(-\alpha^{\T}v)\,\D v}\,\D y\\
 & \underset{(ii)}{\leq}\exp\bpar{-\frac{t^{2}}{2c}+t}+\frac{\eta}{3kM}+\frac{\log(3kM/\eta)}{N}\,\frac{\int_{\widetilde{\K}}\exp(-\alpha^{\T}y+\half\,h\norm{\alpha}^{2})\,\D y}{\int_{\mc K}\exp(-\alpha^{\T}v)\,\D v}\\
 & \underset{(iii)}{=}\exp\bpar{-\frac{t^{2}}{2c}+t}+\frac{\eta}{3kM}+\frac{\log(3kM/\eta)}{N}\,\frac{\int_{\K_{\delta}}\exp(-\alpha^{\T}v-\half\,h\norm{\alpha}^{2})\,\D v}{\int_{\mc K}\exp(-\alpha^{\T}v)\,\D v}\\
 & \underset{(iv)}{\leq}\exp\bpar{-\frac{t^{2}}{2c}+t}+\frac{\eta}{3kM}+\frac{\log(3kM/\eta)}{N}\,e^{13\delta n}\\
 & \leq\frac{\eta}{kM}\,,
\end{align*}
where in $(i)$ we used the density formula of $\pi^{Y}$ (Lemma~\ref{lem:exp-piY-density}),
$(ii)$ follows from Lemma~\ref{lem:exp-eff-domain}, $(iii)$ is
due to the change of variables via $y=v+h\alpha$, in $(iv)$ we used
Lemma~\ref{lem:exp-expansion}, and the last line follows from the
choices of $c=\frac{(\log\log S)^{2}}{13^{2}\log S}$, $t=\log\log S$,
and $N=S\log^{2}S$. Therefore, $\E_{\mu*\mc N(0,hI_{n+1})}[(1-\ell)^{N}]\leq\eta/k$.
Note that under these choices of parameters, we have $t\geq c$ and
so $\delta\geq13hn$, as requested in Lemma~\ref{lem:exp-eff-domain}.

We now bound the expected number of trials until the first acceptance.
In a similar vein to above, the expected number is bounded by $M\E_{\pi^{Y}}[\ell^{-1}\wedge N]$,
where
\begin{align*}
\int_{\R^{n+1}}\frac{1}{\ell}\wedge N\,\D\pi^{Y} & \leq\int_{\widetilde{\K}}\frac{1}{\ell}\,\D\pi^{Y}+N\pi^{Y}(\widetilde{\K}^{c})\leq\frac{\int_{\K_{\delta}}\exp(-\alpha^{\T}v-\half\,h\norm{\alpha}^{2})\,\D v}{\int_{\mc K}\exp(-\alpha^{\T}v)\,\D v}+N\exp\bpar{-\frac{t^{2}}{2c}+t}\\
 & \leq e^{t}+N\exp\bpar{-\frac{t^{2}}{2c}+t}\leq\log S+\log^{2}S\\
 & \leq2\log^{2}\frac{16kM}{\eta}\,.
\end{align*}
Therefore, from an $M$-warm start, the expected number of queries
per iteration is $\Otilde(M\log^{2}\tfrac{k}{\eta})$. 
\end{proof}
Combining the results on the mixing and per-iteration complexity together,
we can conclude the query complexity of sampling from a logconcave
distribution, as claimed in Theorem~\ref{thm:lc-warmstart-intro}.
\begin{thm}
\label{thm:exp-sampling} In the setting of Problem~\ref{prob:exp-sampling},
consider $\pi(x,t)\propto\exp(-nt)|_{\K}$ given in \eqref{eq:exp-reduction}.
For any given $\eta,\veps\in(0,1)$, $q\geq2$, $k\in\mbb N$ defined
below, the $\psexp$ with $h=(13^{2}n^{2}\log\frac{16kM}{\eta})^{-1}$,
$N=\frac{16kM}{\eta}\log^{2}\frac{16kM}{\eta}$, and initial distribution
$\mu_{0}$ which is $M$-warm with respect to $\pi$ achieves $\eu R_{q}(\mu_{k}\mmid\pi)\leq\veps$
(so $\eu R_{q}(\mu_{k}^{X}\mmid\pi^{X})\leq\veps$) after $k=\Otilde(qn^{2}(\norm{\Sigma_{x}}\vee1)\log^{2}\frac{M}{\eta\veps})$
iterations, where $\mu_{k}$ is the law of $k$-th iterate, and $\Sigma_{x}$
is the covariance matrix of $\pi^{X}\propto\exp(-V)$. With probability
$1-\eta$, the algorithm iterates $k$ times successfully, using $\Otilde(qMn^{2}(\norm{\Sigma_{x}}\vee1)\log^{4}\frac{1}{\eta\veps})$
evaluation queries in expectation. Moreover, an $M$-warm start for
$\pi^{X}$ can be used to generate an $M$-warm start for $\pi$.
\end{thm}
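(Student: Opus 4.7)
The plan is to combine the per-iteration contraction of Lemma~\ref{lem:exp-mixing} with the per-iteration failure/query accounting of Lemma~\ref{lem:exp-perstep}, after first controlling the relevant isoperimetric constant of the lifted target $\pi$ in terms of $\norm{\Sigma_x}$.

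First I would verify the warmness transfer: if $\mu_0^X$ is $M$-warm for $\pi^X$ and we sample $T\mid X$ from $\pi^{T\mid X}$, then for $\mu_0(x,t)=\mu_0^X(x)\,\pi^{T\mid X=x}(t)$ and $\pi(x,t)=\pi^X(x)\,\pi^{T\mid X=x}(t)$, the ratio collapses to $\mu_0^X/\pi^X\leq M$, giving $\mu_0$ $M$-warm for $\pi$. Next, by Lemma~\ref{lem:exp-cov}, $\norm{\cov\pi}\leq 2(\norm{\Sigma_x}+160)$, and by \cite{klartag2023logarithmic} applied after shifting and (if needed) isotropizing the covariance argument, $\cpi(\pi)\lesssim\norm{\cov\pi}\log(n+1)\lesssim(\norm{\Sigma_x}\vee 1)\log n$. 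Thus the Poincar\'e constant of the lifted target is, up to $\polylog$ factors, $\norm{\Sigma_x}\vee 1$.

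For the mixing count, I would run the two-regime contraction in \eqref{eq:cpi-renyi} of Proposition~\ref{prop:ps-smooth}, extended to constrained $\pi$ via Lemma~\ref{lem:exp-mixing}. Starting from $\eu R_q(\mu_0\mmid\pi)\leq\log M$, the high-divergence regime ($\eu R_q\geq 1$) takes $\Otilde(qh^{-1}\cpi(\pi)\log M)$ iterations to drop below~$1$, after which the low-divergence regime contracts geometrically, requiring $\Otilde(qh^{-1}\cpi(\pi)\log(1/\veps))$ further iterations. With the choice $h=\Theta(n^{-2}/\log(kM/\eta))$ from Lemma~\ref{lem:exp-perstep}, we get $h^{-1}=\Otilde(n^2)$ after resolving the implicit dependence $\log(kM/\eta)=\Otilde(\log\frac{M}{\eta\veps})$, yielding
\[
k=\Otilde\bpar{qn^2\,(\norm{\Sigma_x}\vee 1)\,\log^2\tfrac{M}{\eta\veps}}\,.
\]
Applying the DPI (Lemma~\ref{lem:DPI}) to the $(X,T)\mapsto X$ projection transfers $\eu R_q(\mu_k\mmid\pi)\leq\veps$ to $\eu R_q(\mu_k^X\mmid\pi^X)\leq\veps$.

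For the failure probability and expected queries, Lemma~\ref{lem:exp-perstep} is tailored so that with the given $S=16kM/\eta$, $h$, and $N$, each iteration fails with probability at most $\eta/k$ and uses $\Otilde(M\log^2(k/\eta))$ expected evaluation queries. A union bound over $k$ iterations gives total failure probability $\leq\eta$; on the success event, the expected total query count multiplies to
\[
k\cdot\Otilde\bpar{M\log^2\tfrac{k}{\eta}}=\Otilde\bpar{qMn^2\,(\norm{\Sigma_x}\vee 1)\,\log^4\tfrac{1}{\eta\veps}}\,,
\]
absorbing $\log M$ into $\Otilde$. The only delicate point, and the main obstacle, is the self-referential choice of $h$ through $S=16kM/\eta$: one must set $k$ and $h$ simultaneously so that the contraction factor $\log(1+h/\cpi(\pi))\gtrsim h/\cpi(\pi)$ used in the mixing estimate remains consistent with the value of $h$ fixed by Lemma~\ref{lem:exp-perstep}. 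This is resolved by observing that, under $h\lesssim 1/(n^2\log(kM/\eta))$ and the polynomial target $\veps$, both the contraction per step and the per-step query bound depend on $k$ only logarithmically, so a single iteration of this fixed-point argument produces the claimed $\log^2$ and $\log^4$ factors.
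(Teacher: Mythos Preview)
Your proposal is correct and follows essentially the same approach as the paper's own proof: warmness transfer via the factorization $\pi=\pi^{X}\pi^{T\mid X}$, the Poincar\'e bound $\cpi(\pi)\lesssim(\norm{\Sigma_x}\vee1)\log n$ from Lemma~\ref{lem:exp-cov} and \cite{klartag2023logarithmic}, the two-regime $\eu R_q$ contraction from Lemma~\ref{lem:exp-mixing}, and the per-step failure/query accounting from Lemma~\ref{lem:exp-perstep} combined by a union bound. The self-referential dependence of $h$ on $k$ that you flag is exactly what the paper isolates in Remark~\ref{rem:self-dependence}, resolved by the elementary observation $x\geq c\log x$ for $x\geq 2c\log c$.
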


\begin{proof}
By Lemma~\ref{lem:exp-mixing} and Lemma~\ref{lem:exp-cov}, the
$\psexp$ should iterate 
\[
k=\O\bpar{qh^{-1}\cpi(\pi)\log\frac{M}{\veps}}=\O\bpar{qh^{-1}(\norm{\Sigma_{x}}\vee1)\log\frac{M}{\veps}}\underset{\text{Remark \ref{rem:self-dependence}}}{=}\Otilde\bpar{qn^{2}(\norm{\Sigma_{x}}\vee1)\log^{2}\frac{M}{\eta\veps}}
\]
times to achieve $\veps$-distance in $\eu R_{q}$. Under the choice
of $h$ and $N$ as claimed, each iteration succeeds with probability
$1-\nicefrac{k}{\eta}$ and uses $\Otilde(M\log^{2}\nicefrac{k}{\eta})$
queries in expectation by Lemma~\ref{lem:exp-perstep}. Therefore,
throughout $k$ iterations, the algorithm succeeds with probability
$1-\eta$, using 
\[
\Otilde\bpar{qMn^{2}(\norm{\Sigma_{x}}\vee1)\log^{4}\frac{1}{\eta\veps}}
\]
 evaluation queries in expectation.

Next, an $M$-warm start $\mu^{X}$ for $\pi^{X}$ can be lifted to
an $M$-warm distribution on $\R^{n+1}$ for $\pi$ through the following
procedure: generate $x\sim\mu^{X}$ and draw $t\sim\pi^{T|X=x}\propto e^{-nt}\cdot\ind[t\geq\nicefrac{V(x)}{n}]$,
obtaining $(x,t)$. Note that $\pi^{T|X=x}$ can be sampled easily
by drawing $u\sim\Unif\,([0,1])$ and taking $F^{-1}(u)$, where $F$
is the CDF of $\pi^{T|X=x}$. Then, the density of the law of $(x,t)$
is $\mu^{X}\cdot\pi^{T|X}$, and it follows from $\pi(x,t)=\pi^{X}\cdot\pi^{T|X}$
that
\[
\frac{\mu^{X}\pi^{T|X}}{\pi}=\frac{\mu^{X}}{\pi^{X}}\leq M\,,
\]
which completes the proof.
\end{proof}
\begin{rem}
[Self-dependence of parameteres] \label{rem:self-dependence} Upon
substitution of the chosen $h$, the required number of iterations
would be 
\[
k\gtrsim qn^{2}(\norm{\Sigma_{x}}\vee1)\log\frac{M}{\veps}\log\frac{kM}{\eta}\,,
\]
where $k$ shows up in both sides. When $c\geq1$, since $x\geq c\log x$
holds for $x\geq2c\log c$, the above bound on $k$ can be turned
into 
\[
k\gtrsim qn^{2}(\norm{\Sigma_{x}}\vee1)\log\frac{M}{\veps}\log\bpar{qn^{2}(\norm{\Sigma_{x}}\vee1)\frac{M}{\eta}\log\frac{M}{\veps}}=\Otilde\bpar{qn^{2}(\norm{\Sigma_{x}}\vee1)\log\frac{M}{\veps}\log\frac{M}{\eta}}\,.
\]
\end{rem}

\section{Warm-start generation via tilted Gaussian cooling\label{sec:warm-start}}

In the previous section, we assumed access to an $M$-warm distribution
for $\pi(x,t)\propto\exp(-nt)|_{\K}$. Here we address the problem
of generating a warm start (i.e., $M=\O(1)$). 
\begin{problem}
[Warm-start generation] \label{prob:warm-start-generation} Given
access to a well-defined function oracle $\eval_{x_{0},R}(V)$ for
convex $V:\Rn\to\Rext$, what is the complexity of generating an
$\O(1)$-warm start $\mu^{X}$ with respect to $\pi^{X}$ (i.e., $\eu R_{\infty}(\mu^{X}\mmid\pi^{X})=\O(1)$)?
\end{problem}

Compared to Problem~\ref{prob:exp-sampling}, this \textbf{requires}
access to $x_{0}$ and $R$. We prove \textbf{Result 2} in this section:
\begin{thm}
[Restatement of Theorem~\ref{thm:tgc-intro}] In the setting of
Problem~\ref{prob:warm-start-generation}, there exists an algorithm
that for given $\eta,\veps\in(0,1)$, with probability at least $1-\eta$
returns a sample $X_{*}$ satisfying $\eu R_{\infty}(\law X_{*}\mmid\pi^{X})\leq\veps$,
using $\Otilde(n^{2}(R^{2}\vee n)\polylog\nicefrac{1}{\eta\veps})$
evaluation queries in expectation. In particular, if $\pi^{X}$ is
well-rounded (i.e., $R^{2}\asymp n$), then we need $\Otilde(n^{3}\polylog\nicefrac{1}{\eta\veps})$
queries in expectation.
\end{thm}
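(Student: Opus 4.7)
The plan is to prove Theorem~\ref{thm:tgc-intro} by reducing to the exponential lift $\pi(x,t)\propto e^{-nt}|_{\K}$ (which has the desired $X$-marginal by Proposition~\ref{prop:x-marginal}) and then running an annealing scheme $\tgc$ over the family
\[
\mu_{\sigma^{2},\rho}(x,t)\propto\exp\bpar{-\tfrac{1}{2\sigma^{2}}\,\norm{x}^{2}-\rho t}\cdot\ind[\bar{\K}]\,,\qquad \bar{\K}=\K\cap\{\norm{x}=\O(R),\,|t-V_{0}/n|=\O(1)\}\,,
\]
starting from a rejection-sampled initializer at $(\sigma^{2},\rho)=(n^{-1},1)$ (Gaussian $\otimes$ uniform on a short $t$-interval), which is $\O(1)$-warm with respect to $\mu_{n^{-1},1}$ because $\bar{\K}$ carries most of $\pi$'s mass via Lemma~\ref{lem:LC-tail} and the $T$-tail estimate underlying Lemma~\ref{lem:relaxed-regularity}. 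The truncation to $\bar{\K}$ is crucial: without it, the analogue of $\gc$ breaks because the Gaussian-in-$x$ and linear-in-$t$ pieces drift at different rates over $O(1)$-length phases and consecutive distributions are not $\O(1)$-close in $\eu R_{\infty}$.

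Next I would verify the two structural inputs that make the annealing go through. First, consecutive-distribution closeness: updating $\sigma^{2}\to\sigma^{2}(1+1/n)$ and $\rho\to\rho(1+1/n)$ in Phase~I, and $\sigma^{2}\to\sigma^{2}(1+\sigma^{2}/R^{2})$ in Phase~II, yields $\eu R_{\infty}(\mu_{\text{old}}\mmid\mu_{\text{new}})=\O(1)$ because the log-density ratio is a bounded quadratic/linear on $\bar{\K}$ (here I would use $\norm{x}\leq\O(R)$ and $|t|\leq\O(1)$ to control the exponent explicitly). Second, LSI for each $\mu_{\sigma^{2},\rho}$: by Bakry--\'Emery on the auxiliary $\nu\propto\exp(-\tfrac{1}{2\sigma^{2}}\norm{x}^{2}-\tfrac{t^{2}}{2}-\rho t)|_{\bar{\K}}$ (strong convexity constant $\min(\sigma^{-2},1)$, convex truncation only helps), followed by Holley--Stroock to pass from $\nu$ to $\mu_{\sigma^{2},\rho}$ (ratio bounded since $t$ varies over a set of length $\Theta(1)$), I get $\clsi(\mu_{\sigma^{2},\rho})\lesssim\sigma^{2}\vee1$ (Lemma~\ref{lem:lsi-annealing}). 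This feeds Proposition~\ref{prop:ps-smooth} and Lemma~\ref{lem:exp-mixing} so that the $\ps$ $\psann$ --- which uses a shifted-Gaussian forward step and rejection-sampled backward step whose expected query cost is $\Otilde(1)$ by the same argument as Lemma~\ref{lem:exp-perstep} adapted to the quadratic-plus-linear potential --- mixes in $\Otilde(h^{-1}(\sigma^{2}\vee1))$ iterations.

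I would then bound total complexity by summing over phases. Phase~I: $\Otilde(n)$ inner phases, each sampled with $\Otilde(n^{2}(\sigma^{2}\vee1))=\Otilde(n^{2})$ queries, gives $\Otilde(n^{3})$. Phase~II: $\Otilde(\log(R^{2}/1)\cdot R^{2}/\sigma^{2})$ aggregated over the doubling schedule with per-phase cost $\Otilde(n^{2}\sigma^{2})$ collapses (telescoping) to $\Otilde(n^{2}R^{2})$. Combined, the budget is $\Otilde(n^{2}(R^{2}\vee n))$. To upgrade to $\eu R_{\infty}$-closeness rather than KL/$\chi^{2}$, I would invoke the boosting lemma (\cite{del2003contraction}, cited as Lemma~\ref{lem:boosting}): under LSI the $\ps$ is uniformly ergodic, so after $\log\log$-many extra iterations the density ratio is boosted from any feasible start to $\eu R_{\infty}\leq\veps$; since mixing in KL has doubly-logarithmic dependence on the initial warmness, we pay only a $\polylog(n,R,1/\veps)$ overhead. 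Finally, after reaching $\mu_{R^{2},n}$ (which is $\eu R_{\infty}$-close to the truncated target $\bar{\pi}\propto e^{-nt}|_{\bar{\K}}$), one last invocation of $\psexp$ with the same boosting argument produces $\eu R_{\infty}(\law X_{*}\mmid\pi)\leq\veps$; marginalizing via the data-processing inequality (Lemma~\ref{lem:DPI}) gives the same bound against $\pi^{X}$. The failure probability is controlled by setting the per-iteration rejection threshold to a union bound over $\tilde{O}(1)$-many $\psann$/$\psexp$ calls.

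The main obstacle I anticipate is propagating $\eu R_{\infty}$-warmness across the annealing scheme: pointwise ratios, unlike $\tv$ or $\chi^{2}$, are not preserved automatically by the sampler's output guarantees, so the combination ``LSI contraction in KL plus boosting to $\eu R_{\infty}$'' must be checked carefully at each handoff to ensure that the input to the next phase is still $\O(1)$-warm pointwise and that the doubly-logarithmic overhead does not inflate the polylog factor beyond what the claim allows. A secondary technical point is proving the $\Otilde(1)$-query bound for $\psann$'s backward step for the quadratic-plus-linear potential on $\bar{\K}$; I expect this to follow by adapting Lemma~\ref{lem:exp-perstep} using the $\eu R_{\infty}$-warmness of the current iterate and a Gaussian tail estimate analogous to Lemma~\ref{lem:exp-eff-domain}, now for the shifted proposal $\mc N(y-h(\sigma^{-2}x,\rho e_{n+1}),hI)$.
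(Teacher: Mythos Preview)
Your proposal follows essentially the same architecture as the paper: reduce to the exponential lift, truncate to $\bar{\K}$, anneal through the family $\mu_{\sigma^{2},\rho}$ using $\psann$, bound the LSI constant via Bakry--\'Emery plus Holley--Stroock, upgrade to $\eu R_{\infty}$ by uniform ergodicity and the boosting lemma, finish with one call to $\psexp$, and sum phase costs. The key tools and the complexity accounting all match.

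There is one genuine gap you should repair: the $t$-truncation width. You take $|t-V_{0}/n|=\O(1)$, but a constant-width slab captures only $1-\Theta(1)$ of $\pi$'s mass, so $\eu R_{\infty}(\bar{\pi}\mmid\pi)=\Theta(1)$ and the very last inequality $\eu R_{\infty}(\law X_{*}\mmid\pi)\le\veps$ cannot hold. The paper instead sets $|t-V_{0}/n|=\O(l)$ with $l=\Theta(\log\tfrac{1}{\veps})$, so that $\bar{\K}$ carries $1-\veps/3$ mass and $\eu R_{\infty}(\bar{\pi}\mmid\pi)\le 2\veps/3$. (Note also that the final $\psexp$ must target $\bar{\pi}$, not $\pi$: the untruncated exponential distribution does not satisfy LSI, so the boosting argument would fail there; see Remark~\ref{rmk:LSI-mixing-exp}.) This single change cascades through your argument: (i) the auxiliary $\nu$ must use $\tfrac{t^{2}}{2l^{2}}$ rather than $\tfrac{t^{2}}{2}$, since otherwise the Holley--Stroock ratio $\exp(t^{2}/2)$ blows up over a length-$\O(l)$ interval; the correct bound is $\clsi(\mu_{\sigma^{2},\rho})\lesssim\sigma^{2}\vee l^{2}$, not $\sigma^{2}\vee 1$; (ii) the initial $\rho$ must be $\O(l^{-1})$ rather than $1$, for otherwise the uniform-in-$t$ initializer is $\exp(\O(l))$-warm (not $\O(1)$-warm) for $\mu_{n^{-1},\rho}$ over the longer $t$-interval; and (iii) the $\rho$-update becomes $\rho\gets\rho(1+\tfrac{1}{ln})$ and Phase~II runs until $\sigma^{2}=l^{2}R^{2}$. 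All of these corrections are $\polylog(1/\veps)$ factors already absorbed in $\Otilde$, so your headline complexity survives; but as written the proof breaks at the final step.
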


As in the previous section, we tackle this problem after reducing
it to the exponential distribution through \eqref{eq:exp-reduction},
and then truncate this distribution to a convex domain.

\paragraph{Convex truncation.}

Using Lemma~\ref{lem:LC-exponential-decay} with $\E_{\pi^{X}}[\norm{X-x_{0}}^{2}]\leq R^{2}$,
we have that $\text{for any }l\geq1$,
\[
\P_{\pi^{X}}(\norm{X-x_{0}}\geq Rl)\leq\exp(-l+1).
\]
In addition, since $\E_{\pi^{T}}[\abs{T-\mu_{t}}^{2}]=\E_{\pi}[\abs{T-\mu_{t}}^{2}]\leq160$
(Lemma~\ref{lem:exp-cov}), we also have 
\[
\P_{\pi^{T}}(\abs{T-\mu_{t}}\geq13l)\leq\exp(-l+1)\quad\text{for any }l\geq1\,.
\]
Putting these two together and using $\mu_{t}\leq V_{0}/n+5\leq V(x_{0})/n+5$
(Lemma~\ref{lem:exp-mean}), for $\veps>0$, $l=\log\frac{2e}{\veps}$,
and $B=\frac{V(x_{0})}{n}+5+13l$,
\begin{align*}
\P_{\pi}\bbrace{\bpar{B_{Rl}^{n}(x_{0})\times(-\infty,B]}^{c}} & \leq\P_{\pi}\bigl\{\bigl(B_{Rl}^{n}(x_{0})\times\R\bigr)^{c}\bigr\}+\P_{\pi}\bbrace{\bpar{\Rn\times[B,\infty)}^{c}}\\
 & =\P_{\pi^{X}}\bpar{\Rn\backslash B_{Rl}^{n}(x_{0})}+\P_{\pi^{T}}\bpar{\R\backslash[B,\infty)}\leq\frac{\veps}{2}+\frac{\veps}{2}=\veps\,.
\end{align*}
Thus, $\bar{\K}:=\K\cap\{B_{Rl}(x_{0})\times(-\infty,B]\}$ takes
up $(1-\veps)$ measure of $\pi$, and $B_{Rl}(x_{0})\times(V_{0}/n,B]$
could be thought of as the \emph{essential domain} of $\pi$. Moreover,
since $x_{0}\in\msf L_{\pi^{X},V}$ (i.e., $V(x_{0})-V_{0}\leq10n$),
\[
\frac{V(x_{0})}{n}+5+13l-\frac{V_{0}}{n}\leq15+13l\,,
\]
so the diameter of $\bar{\K}$ is $\O(lR)$. Going forward, we use
$\bar{\pi}$ to denote $\pi|_{\bar{\K}}\propto\exp(-nt)|_{\bar{\K}}$.

\subsection{Tilted Gaussian Cooling}

We would like to find a sequence $\{\mu_{i}\}_{0\leq i\leq m}$ of
distributions, where $\mu_{0}$ is easy to sample, $\mu_{m}$ is our
target, and $\mu_{i}$ is $\O(1)$-warm with respect to $\mu_{i+1}$.
Then, as $\mu_{i}$ is warm for $\mu_{i+1}$, we can run an efficient
sampler to go from $\mu_{i}$ to $\mu_{i+1}$.

$\gc$, introduced in \cite{cousins2018gaussian}, can generate an
$\O(1)$-warm start for the uniform distribution over a convex body.
However, extending this to general logconcave distributions has remained
elusive. A straightforward extension to the exponential distribution
$\bar{\pi}$ fails due to an exponential change in the value of $e^{-nt}$
in $\bar{\K}$; roughly speaking, while $t$ may vary by at most $\Theta(l)$,
the density can become exponentially small relative to its maximum.
This significant change in value prevents us from achieving $\O(1)$-warmness,
necessitating additional ideas to address this challenge.

We propose $\tgc$ (Algorithm~\ref{alg:tgc}), which handles the
$x$-direction through the $\gc$ algorithm and the $t$-direction
through an exponential tilt. Its annealing distributions have two
parameters $\sigma^{2}$ and $\rho$, given as
\[
\mu_{\sigma^{2},\rho}(x,t)\propto\exp\bpar{-\frac{1}{2\sigma^{2}}\,\norm{x-x_{0}}^{2}-\rho t}\big|_{\bar{\K}}\,.
\]
We can translate the whole system by $(x,t)\mapsto(x-x_{0},t-\frac{V(x_{0})}{n}-11)$
so that the shifted $\bar{\K}$ contains a unit ball centered at the
origin (i.e., $x_{0}=0$), and $t$ is within $[-21,13l-6]$ over
$\bar{\K}$.

We develop in \S\ref{ssec:prox-annealing} a sampling algorithm $\psann$
for these annealing distributions. Assuming this sampler is given,
we provide an outline of $\tgc$. For the annealing distribution $\mu_{i}:=\mu_{\sigma_{i}^{2},\rho_{i}}$,
we use $\bar{\mu}_{i}$ to denote an approximate distribution close
to $\mu_{i}$ produced by the sampler. $\tgc$ has two main annealing
phases:
\begin{itemize}
\item Initialization ($\sigma^{2}=n^{-1}$ and $\rho=l^{-1}$)
\begin{itemize}
\item Initial distribution: $\mu_{0},\bar{\mu}_{0}\propto\exp(-\frac{n}{2}\,\norm x)^{2}|_{\bar{\K}}$
(run rejection sampling with proposal $\mc N(0,n^{-1}I_{n})\otimes\text{Unif}\,([-21,13l-6])$)
\item Target distribution: $\mu_{1}\propto\exp(-\frac{n}{2}\,\norm x^{2}-t/l)|_{\bar{\K}}$. 
\end{itemize}
\item Phase I ($n^{-1}\leq\sigma^{2}\leq1$ and $l^{-1}\leq\rho\leq n$)
\begin{itemize}
\item Run the $\psann$ with initial $\bar{\mu}_{i}$, target $\mu_{i+1}$,
and accuracy $\log2$ in $\eu R_{\infty}$, where 
\[
\sigma_{i+1}^{2}=\sigma_{i}^{2}\bpar{1+\frac{1}{n}}\quad\&\quad\rho_{i+1}=\rho_{i}\bpar{1+\frac{1}{ln}}\,.
\]
\end{itemize}
\item Phase II ($1\leq\sigma^{2}\leq l^{2}R^{2}$)
\begin{itemize}
\item Run the $\psann$ with initial $\bar{\mu}_{i}$, target $\mu_{i+1}$,
and accuracy $\log2$ in $\eu R_{\infty}$, where 
\[
\sigma_{i+1}^{2}=\sigma_{i}^{2}\bpar{1+\frac{\sigma_{i}^{2}}{l^{2}R^{2}}}\,.
\]
\end{itemize}
\item Termination $(\sigma^{2}=l^{2}R^{2})$
\begin{itemize}
\item Run the $\psexp$ with initial $\mu_{l^{2}R^{2},n}(x,t)$ and target
$\bar{\pi}\propto\exp(-nt)|_{\bar{\K}}$.
\end{itemize}
\end{itemize}
\begin{algorithm}[t]
\hspace*{\algorithmicindent} \textbf{Input: }convex $\K=\{(x,t):V(x)\leq nt\}$,
accuracy $\veps>0$, target dist. $\pi(x,t)\propto\exp(-nt)|_{\K}$
s.t. $B_{1}^{n}(x_{0})\subset\msf L_{\pi^{X},g}$ and $\E_{\pi}[\norm{X-x_{0}}^{2}]\leq R^{2}$.

\hspace*{\algorithmicindent} \textbf{Output:} a sample $z$ with
law approximately close to $\pi$.

\begin{algorithmic}[1] \caption{$\protect\tgc$\label{alg:tgc}}
\STATE Let $\bar{\K}=\K\cap\bigl(B_{Rl}^{n}(x_{0})\times(-\infty,V(x_{0})/n+5+13l]\bigr)$
with $l=\log\frac{6}{\veps}$. 

\STATE Draw $z_{0}\sim\exp(-\frac{n}{2}\,\norm{x-x_{0}}^{2})|_{\bar{\K}}$
via rejection sampling with proposal $\mc N(x_{0},n^{-1}I_{n})\otimes\text{Unif}\,\bigl((V(x_{0})/n-10,V(x_{0})/n+5+13l]\bigr)$.

\STATE Denote $\mu_{\sigma^{2},\rho}:=\exp(-\frac{1}{2\sigma^{2}}\,\norm{x-x_{0}}^{2}-\rho t)|_{\bar{\K}}$
for $\sigma^{2}=n^{-1}$ and $\rho=l^{-1}$.

\STATE Get $z\sim\psann$ with initial $z_{0}$, target $\mu_{\sigma^{2},\rho}$,
accuracy $\log2$ (in $\eu R_{\infty}$), and success prob. $\frac{\eta}{\Otilde(l^{2}R^{2}\vee n)}$.

\WHILE{$\sigma^{2}\in[n^{-1},l^{2}R^{2}]$ and $\rho\in[l^{-1},n]$} 

\STATE Sample $z$ from $\psann$ with the same setup. Then update
\begin{align*}
\sigma^{2} & \gets\sigma^{2}(1+\frac{1}{n})\quad\&\quad\rho\gets\min\bigl(n,\rho\,(1+\frac{1}{ln})\bigr)\ \text{when }\sigma^{2}\in\bbrack{\frac{1}{n},1},\,\rho\in\bbrack{\frac{1}{l},n}\\
\sigma^{2} & \gets\sigma^{2}(1+\frac{\sigma^{2}}{l^{2}R^{2}})\ \text{when }\sigma^{2}\in[1,l^{2}R^{2}]\,.
\end{align*}
\ENDWHILE

\STATE Return $z\sim\psexp$ with initial $\mu_{l^{2}R^{2},n}$,
target $\pi|_{\bar{\K}}$, accuracy $\varepsilon/3$, and success
prob. $\frac{\eta}{\Otilde(l^{2}R^{2}\vee n)}$.

\end{algorithmic} 
\end{algorithm}

\subsection{Proximal sampler for annealing distributions\label{ssec:prox-annealing}}

We employ $\PS$ to sample from the annealing distributions $\{\mu_{\sigma^{2},\rho}\}$.
For $v:=(x,t)\in\Rn\times\R$ and $w:=(y,s)\in\Rn\times\R$, $\PS$
for this target and variance $h$, called $\psann$ (Algorithm~\ref{alg:prox-ann}),
can be implemented by repeating 
\begin{itemize}
\item {[}Forward{]} $w\sim\mu^{W|V=v}=\mc N(v,hI_{n+1})$.
\item {[}Backward{]} For $\tau:=\frac{\sigma^{2}}{h+\sigma^{2}}<1$, $y_{\tau}:=\tau y$,
and $h_{\tau}:=\tau h$, run
\[
v\sim\mu^{V|W=w}\propto\exp\bpar{-\frac{1}{2\sigma^{2}}\,\norm x^{2}-\rho t-\frac{1}{2h}\,\norm{w-v}^{2}}\big|_{\bar{\K}}\propto\bbrack{\mc N(y_{\tau},h_{\tau}I_{n})\times\mc N(s-\rho h,h)}\big|_{\bar{\mc K}}\,,
\]
\end{itemize}
where the backward step is implemented by rejection sampling with
the proposal $\mc N(y_{\tau},h_{\tau}I_{n})\times\mc N(s-\rho h,h)$.
Note that $\psexp$ may be thought of as a special case of $\psann$
with $\sigma^{2}=\infty$. As in \S\ref{ssec:prox-exp}, we analyze
its mixing and per-step complexity.
\begin{algorithm}[t]
\hspace*{\algorithmicindent} \textbf{Input:} initial pt. $v_{0}\sim\pi_{0}\in\mc P(\R^{n+1})$,
truncated $\bar{\K}$, $k\in\mathbb{N}$, threshold $N$, variances
$h$, parameters $\sigma^{2},\rho>0$. 

\hspace*{\algorithmicindent} \textbf{Output:} $v_{k+1}=(x_{k+1},t_{k+1})$.

\begin{algorithmic}[1] \caption{$\protect\psann$\label{alg:prox-ann}}

\STATE Let $\tau=\frac{\sigma^{2}}{h+\sigma^{2}}$. Denote $v=(x,t)$
and $w=(y,s)$ in $\Rn\times\R$.

\FOR{$i=0,\dotsc,k$}

\STATE Sample $w_{i+1}\sim\mc N(v_{i},hI_{n+1})$.

\STATE Sample $v_{i+1}\sim\msf G(w_{i+1})|_{\bar{\K}}$ for $\msf G(w):=\mc N(\tau y,\tau hI_{n})\times\mc N(s-\rho h,h)$\@.

\STATE $\quad(\uparrow)$ Repeat $v_{i+1}\sim\msf G(w_{i+1})$ until
$v_{i+1}\in\bar{\K}$. If $\#$attempts$_{i}$$\,\geq N$, declare
\textbf{Failure}.

\ENDFOR 

\end{algorithmic}
\end{algorithm}

\subsubsection{Convergence rate \label{subsec:annealing-mixing}}

For $\psann$, we use a mixing result based on \eqref{eq:lsi}, rather
than \eqref{eq:pi}. As opposed to the exponential distribution, the
annealing distributions have $\clsi(\mu)<\infty$, so we can now invoke
the \eqref{eq:lsi}-mixing result in Lemma~\ref{lem:exp-mixing}.
Therefore, for $q\geq1$, the kernel $P$ of  $\psann$, and initial
distribution $\nu$ with $\nu\ll\mu$, it holds that
\begin{equation}
\eu R_{q}(\nu P\mmid\mu)\leq\frac{\eu R_{q}(\nu\mmid\mu)}{\bigl(1+h/C_{\msf{LSI}}(\mu)\bigr)^{2/q}}\,.\label{eq:exp-contraction-LSI}
\end{equation}

We provide a bound on $\clsi(\mu)$, using the Bakry-\'Emery criterion
and bounded perturbation.
\begin{lem}
\label{lem:lsi-annealing}$\clsi(\mu)\lesssim\sigma^{2}\vee l^{2}$
for $\mu(x,t)\propto\exp(-\frac{1}{2\sigma^{2}}\,\norm x^{2}-\rho t)|_{\bar{\K}}$.
\end{lem}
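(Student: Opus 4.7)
The plan is to sandwich $\mu$ between an explicit strongly log-concave auxiliary measure (to which Bakry--Émery applies cleanly) and $\mu$ itself (which differs by a bounded-oscillation perturbation). Concretely, I will introduce the comparison measure
\[
\nu(x,t) \;\propto\; \exp\Bpar{-\frac{1}{2\sigma^{2}}\,\norm{x}^{2} \;-\; \frac{t^{2}}{2l^{2}} \;-\; \rho t}\Big|_{\bar{\K}}\,,
\]
obtained from $\mu$ by a gentle quadratic tilt in the $t$-direction. Its potential $U_{\nu}$ is an $\alpha$-strongly convex quadratic with $\alpha=\min(\sigma^{-2},l^{-2})$, plus a linear term, plus the convex constraint encoded by $\ind_{\bar{\K}}$. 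The choice of the coefficient $1/(2l^{2})$ is deliberately the one that balances the Bakry--Émery bound against the Holley--Stroock perturbation cost (any smaller coefficient blows up the perturbation; any larger one blows up the Bakry--Émery constant).

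The first step is to bound $\clsi(\nu)$. Restricting a strongly log-concave measure to a convex set is the same as adding a convex (indicator) summand to the potential, whose generalized Hessian is PSD, so $\nu$ remains $\alpha$-strongly log-concave. Invoking the Bakry--Émery criterion (Lemma~\ref{lem:bakry-emery}) then yields
\[
\clsi(\nu) \;\leq\; \alpha^{-1} \;=\; \sigma^{2}\vee l^{2}\,.
\]

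The second step is to transfer this bound to $\mu$ by the Holley--Stroock perturbation principle (Lemma~\ref{lem:lsi-bdd-perturbation}). Up to an additive normalization constant, $\log(\mu/\nu)(x,t) = t^{2}/(2l^{2})$. Since on $\bar{\K}$ we have $t\in[-21,\,13l-6]$ after the translation made in the setup, the oscillation of $t^{2}/(2l^{2})$ over $\bar{\K}$ is a universal $O(1)$ quantity independent of $\sigma$ and $l$. Consequently,
\[
\clsi(\mu) \;\leq\; \exp\bpar{O(1)}\cdot\clsi(\nu) \;\lesssim\; \sigma^{2}\vee l^{2}\,.
\]

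The only nontrivial point is the first step: one must apply Bakry--Émery to a density supported on the convex body $\bar{\K}$ rather than on all of $\R^{n+1}$. This is the standard extension in which a convex constraint contributes a PSD term (in the distributional sense) to $\nabla^{2}U_{\nu}$, so the strong-convexity parameter of the quadratic part is retained -- in the paper's phrasing, ``convex truncation only helps in satisfying the criterion.'' Everything else is routine.
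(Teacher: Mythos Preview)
Your proposal is correct and follows essentially the same approach as the paper: introduce the auxiliary measure $\nu$ with an added $t^{2}/(2l^{2})$ quadratic, invoke Bakry--\'Emery together with the fact that convex truncation only helps, and then transfer to $\mu$ via Holley--Stroock using that $t\in[-21,13l-6]$ on $\bar{\K}$ makes the oscillation of $t^{2}/(2l^{2})$ a universal $O(1)$. The only cosmetic difference is that the paper defines $\nu$ on $\R^{n+1}$ first and then truncates, whereas you build the truncation into $\nu$ from the outset.
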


\begin{proof}
Consider a probability measure defined by
\[
\nu(x,t)\propto\exp\bpar{-\frac{1}{2\sigma^{2}}\,\norm x^{2}-\frac{1}{2l^{2}}\,t^{2}-\rho t}\,.
\]
It is $\min(\sigma^{-2},l^{-2})$-strongly logconcave, so $\clsi(\nu)\leq\sigma^{2}\vee l^{2}$
by the Bakry-\'Emery criterion (Lemma~\ref{lem:bakry-emery}). It
is well known that a convex truncation of strongly logconcave measures
only helps in satisfying the criterion, so $\clsi(\nu|_{\bar{\K}})\leq\sigma^{2}\vee l^{2}$
\cite[Theorem 3.3.2]{wang2013analysis}. Since
\[
\frac{\D\mu}{\D\nu|_{\bar{\K}}}=\ind[(x,t)\in\bar{\K}]\exp\bpar{\frac{t^{2}}{2l^{2}}}\,\frac{\int_{\bar{\K}}\exp(-\frac{1}{2\sigma^{2}}\,\norm x^{2}-\frac{1}{2l^{2}}\,t^{2}-\rho t)}{\int_{\bar{\K}}\exp(-\frac{1}{2\sigma^{2}}\,\norm x^{2}-\rho t)}\,,
\]
and $t\in[-21,13l-6]$ over $\bar{\K}$, we have
\[
\exp\bigl(-\Theta(1)\bigr)\leq\inf_{\bar{\K}}\frac{\D\mu}{\D\nu|_{\bar{\K}}}\leq\sup_{\bar{\K}}\frac{\D\mu}{\D\nu|_{\bar{\K}}}\leq\exp\bpar{\Theta(1)}\,.
\]
By the Holley--Stroock perturbation principle (Lemma~\ref{lem:lsi-bdd-perturbation}),
\[
\clsi(\mu)\leq\exp\bpar{\Theta(1)}\,\clsi(\nu|_{\bar{\K}})\lesssim\sigma^{2}\vee l^{2}\,,
\]
which completes the proof.
\end{proof}
Combining the exponential contraction \eqref{eq:exp-contraction-LSI}
and the bound on the LSI constant, we conclude that if $\psann$ started
at an initial distribution $\nu$ iterates $\O(qh^{-1}\clsi(\mu)\log\frac{\eu R_{q}(\nu\mmid\mu)}{\veps})=\O(qh^{-1}(\sigma^{2}\vee l^{2})\log\frac{\eu R_{q}(\nu\mmid\mu)}{\veps})$
times, then it achieves $\veps$-distance within the target $\nu$
in $\eu R_{q}$-divergence.

We can strengthen this $\eu R_{q}$-divergence result to the stronger
$\eu R_{\infty}$-divergence by following the boosting scheme used
in \cite{kook2024renyi}, which stems from a classical result in \cite{del2003contraction}.
\begin{lem}
[{\cite[Theorem 14]{kook2024renyi}}] \label{lem:boosting} Consider
a Markov chain with initial distribution $\nu$ and kernel $P$ reversible
with respect to a stationary distribution $\mu$. Then,
\[
\bnorm{\frac{\D(\nu P^{k})}{\D\mu}-1}_{L^{\infty}}\leq2\,\bnorm{\frac{\D\nu}{\D\mu}-1}_{L^{\infty}}\esssup_{x}\norm{\delta_{x}P^{k}-\mu}_{\tv}\,,
\]
and in particular, $\eu R_{\infty}(\nu P^{k}\mmid\mu)\leq2\,\norm{\frac{\D\nu}{\D\mu}-1}_{L^{\infty}}\esssup_{x}\norm{\delta_{x}P^{k}-\mu}_{\tv}$.
\end{lem}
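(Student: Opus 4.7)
The plan is to exploit $\mu$-reversibility to identify the density of $\nu P^{k}$ with the action of the Markov operator $P^{k}$ on $f := \D\nu/\D\mu$, and then bound the resulting integrand via $\tv$-duality after centering. Concretely, if $\nu = f\mu$ and $P$ is $\mu$-reversible, then self-adjointness of $P$ on $L^{2}(\mu)$ gives, for every bounded test function $g$,
\[
\int g \, \D(\nu P) \;=\; \int (Pg)\, f \, \D\mu \;=\; \int g\,(Pf)\, \D\mu ,
\]
so $\D(\nu P)/\D\mu = Pf$ $\mu$-a.e., and iterating yields $\D(\nu P^{k})/\D\mu = P^{k}f$. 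Since $P^{k}\mathbf{1} = \mathbf{1}$ (stationarity), the quantity of interest equals $P^{k}(f-1)$, and the centered function $g := f-1$ satisfies $\int g \, \D\mu = 0$.

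Next, for $\mu$-a.e. $x$ I would write $P^{k}g(x) = \int g \, \D(\delta_{x}P^{k})$ and subtract the vanishing quantity $\int g \, \D\mu$ to obtain
\[
P^{k}g(x) \;=\; \int g \, \D(\delta_{x}P^{k} - \mu).
\]
Since $\delta_{x}P^{k} - \mu$ is a signed measure of total mass zero with $\norm{\delta_{x}P^{k} - \mu}_{\tv} = \tfrac{1}{2}\int \abs{\D(\delta_{x}P^{k} - \mu)}$, the trivial duality bound $\abs{\int g \, \D\lambda} \leq \norm{g}_{\infty}\int\abs{\D\lambda}$ gives
\[
\abs{P^{k}g(x)} \;\leq\; 2\,\norm{g}_{\infty}\,\norm{\delta_{x}P^{k} - \mu}_{\tv}.
\]
Taking an essential supremum over $x$ and plugging back $g = f-1$ yields the first displayed inequality.

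For the R\'enyi-infinity bound, I would use the elementary fact that $\esssup_{\mu}\D(\nu P^{k})/\D\mu \geq 1$ (the density integrates to one), so writing $\eps$ for the $L^{\infty}$ bound just established, $\esssup_{\mu}\D(\nu P^{k})/\D\mu \leq 1+\eps$, and hence $\eu R_{\infty}(\nu P^{k}\mmid\mu) = \log \esssup_{\mu}\D(\nu P^{k})/\D\mu \leq \log(1+\eps) \leq \eps$, where the last step uses $\log(1+x)\leq x$. Substituting the first inequality for $\eps$ completes the proof. The only step that requires care is the identification $\D(\nu P^{k})/\D\mu = P^{k}f$, which genuinely uses reversibility (it fails for non-reversible $P$); once that is in hand, the rest is $\tv$-duality plus a one-line inequality for the logarithm, and the factor of $2$ is exactly the conversion $\int\abs{\D\lambda} = 2\norm{\lambda}_{\tv}$ for a mean-zero signed measure $\lambda$.
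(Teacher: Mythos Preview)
Your proof is correct. The paper does not supply its own proof of this lemma; it simply cites \cite[Theorem~14]{kook2024renyi}. Your argument---identify $\D(\nu P^{k})/\D\mu$ with $P^{k}f$ via reversibility (self-adjointness of $P$ on $L^{2}(\mu)$), subtract $\int g\,\D\mu=0$ to center, then apply the duality bound $\bigl|\int g\,\D\lambda\bigr|\le \norm{g}_{\infty}\int|\D\lambda|=2\norm{g}_{\infty}\norm{\lambda}_{\tv}$---is exactly the standard route and matches what the cited reference does. The final step $\eu R_{\infty}\le\log(1+\eps)\le\eps$ is immediate.
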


In words, a uniform bound on the $\tv$-distance from \emph{any start}
implies the R\'enyi-infinity bound. To use this result, we establish
the uniform ergodicity of $\psann$.
\begin{lem}
\label{lem:ps-annealing-iter} For given $\veps\in(0,1)$, let $P$
be the Markov kernel of $\psann$ with variance $h$ and target distribution
$\mu$. Then, $\sup_{v_{0}\in\mc{\bar{K}}}\norm{\delta_{v_{0}}P^{k}-\mu}_{\tv}\leq\veps$
for $k=\O(h^{-1}(\sigma^{2}\vee l^{2})\log\frac{n+h^{-1}R^{2}}{\veps})$.
If an initial distribution $\nu$ is $M$-warm with respect to $\mu$,
then $\eu R_{\infty}(\nu P^{k}\mmid\mu)\leq\veps$ after $k=\O(h^{-1}(\sigma^{2}\vee l^{2})\log\frac{M(n+h^{-1}R^{2})}{\veps})$
iterations.
\end{lem}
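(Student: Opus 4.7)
The plan is to prove the uniform TV-mixing bound from any point mass first, and then boost it to the $\eu R_\infty$ bound from an $M$-warm start via Lemma~\ref{lem:boosting}. I cannot apply the R\'enyi contraction \eqref{eq:exp-contraction-LSI} directly from $\delta_{v_0}$ since every R\'enyi divergence from a point mass to $\mu$ is infinite, so I first perform a one-step burn-in to produce $\nu_1:=\delta_{v_0}P$ and bound $\eu R_2(\nu_1\mmid\mu)$ by a polynomial quantity; the LSI-driven contraction then carries the rest.

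For the burn-in, applying the data processing inequality to the backward kernel (which maps $\mu^Y:=\mu*\mc N(0,hI_{n+1})$ to $\mu$) yields $\eu R_2(\nu_1\mmid\mu)\leq\eu R_2(\mc N(v_0,hI_{n+1})\mmid\mu^Y)$. I control the right side by noting that $\mc N(v_0,hI_{n+1})$ is concentrated within $\O(\sqrt{(n+1)h})$ of $v_0$ with peak density $(2\pi h)^{-(n+1)/2}$, while $\mu^Y$ admits a matching lower bound on that neighborhood because $v_0\in\bar{\K}$, the diameter of $\bar{\K}$ is $\O(lR)$, and the potential $\tfrac{1}{2\sigma^2}\|x\|^2+\rho t$ of $\mu$ varies by at most $\O(h^{-1}R^2+n)$ on $\bar{\K}$ under the allowed ranges of $\sigma^2$ and $\rho$. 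A direct Gaussian computation then gives $\eu R_2(\nu_1\mmid\mu)=\O(n+h^{-1}R^2)$ up to polylogarithmic factors in $l$.

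I next apply \eqref{eq:exp-contraction-LSI} with $q=2$ and the bound $\clsi(\mu)\lesssim\sigma^2\vee l^2$ from Lemma~\ref{lem:lsi-annealing} to the remaining $k-1$ iterations, obtaining
\[
\eu R_2(\nu_k\mmid\mu)\leq\frac{\eu R_2(\nu_1\mmid\mu)}{(1+h/\clsi(\mu))^{k-1}}.
\]
Requiring the right side below $2\veps^2$ yields $k-1=\O\bpar{h^{-1}(\sigma^2\vee l^2)\log\tfrac{n+h^{-1}R^2}{\veps}}$, and Pinsker's inequality $2\,\|\cdot\|_\tv^2\leq\KL\leq\eu R_2$ converts this into the claimed TV bound uniformly over $v_0\in\bar{\K}$. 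For the $\eu R_\infty$ claim, I observe that $\psann$ is reversible with respect to $\mu$ (being a Gibbs sampler from the augmented density $\mu^{V,W}$) and invoke Lemma~\ref{lem:boosting}: for an $M$-warm $\nu$ one has $\bnorm{\D\nu/\D\mu-1}_{L^\infty}\leq M$, so it suffices to drive $\sup_{v_0}\|\delta_{v_0}P^k-\mu\|_\tv\leq\veps/(2M)$, which costs only an additive $\log M$ in the iteration count and matches $k=\O\bpar{h^{-1}(\sigma^2\vee l^2)\log\tfrac{M(n+h^{-1}R^2)}{\veps}}$.

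The main obstacle is the burn-in estimate: I need a polynomial $\eu R_2$ bound despite the $h^{-(n+1)/2}$-scale peak density of the forward Gaussian and the possibly small value of $\mu^Y$ far from $\mu$'s mode. The key is that the relevant neighborhood has radius $\O(\sqrt{(n+1)h})$ and that $v_0\in\bar{\K}$ lies in a region where the potential is not too large, both of which follow from the explicit geometry of $\bar{\K}$ fixed by $\tgc$. Everything thereafter — the R\'enyi contraction and the boosting — is mechanical given Lemmas~\ref{lem:lsi-annealing} and~\ref{lem:boosting}.
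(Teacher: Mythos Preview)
Your overall architecture --- one-step burn-in from $\delta_{v_0}$ to obtain a polynomial R\'enyi bound, then the LSI contraction \eqref{eq:exp-contraction-LSI} with $\clsi(\mu)\lesssim\sigma^2\vee l^2$ from Lemma~\ref{lem:lsi-annealing}, then Pinsker for the uniform TV bound, and finally Lemma~\ref{lem:boosting} for the $\eu R_\infty$ claim --- is exactly the paper's. The divergence is in how the burn-in estimate is justified, and there your argument has a gap.

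The paper bounds $(\delta_{v_0}P)/\mu$ pointwise (in fact an $\eu R_\infty$ bound). After writing out the forward-then-backward density, it applies Young's inequality $\tfrac{1}{2h}\|w-z\|^2\le\tfrac{3}{4h}\|w\|^2+\tfrac{3D^2}{2h}$ (with $D=\operatorname{diam}\bar\K\lesssim lR$) inside the backward normalizer; this reduces the $\bar\K$-integral there to the full normalization constant $\int_{\bar\K}\exp(-\tfrac{1}{2\sigma^2}\|x\|^2-\rho t)$, which cancels against the same constant in $\mu$. The outcome is $\delta_{v_0}P/\mu\le 2^{(n+1)/2}e^{6h^{-1}D^2}$, a bound depending only on $n,h,D$ and \emph{completely independent of $\sigma^2$ and $\rho$}.

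Your DPI reduction to $\eu R_2(\mc N(v_0,hI)\mmid\mu^Y)$ is legitimate, but the assertion that the potential $\tfrac{1}{2\sigma^2}\|x\|^2+\rho t$ varies by $\O(h^{-1}R^2+n)$ on $\bar\K$ is false as stated: the actual variation is $\O(l^2R^2/\sigma^2+nl)$, which carries no $h$-dependence whatsoever. For example, with $\sigma^2=n^{-1}$ and $h$ of order $1$ --- both permitted, since the lemma is stated for arbitrary $h$ --- the variation is $\Theta(nl^2R^2)$ while your claimed bound is $\Theta(R^2+n)$. The whole point of the Young-inequality step is to avoid ever invoking the range of the potential: the cancellation of normalizers is what produces the clean $h^{-1}D^2$ term with no reference to $\sigma^2$ or $\rho$. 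Your DPI route can be repaired by the same device (lower-bound $\mu^Y(y)$ by replacing $\tfrac{1}{2h}\|y-z\|^2$ with $\tfrac{3}{4h}\|y\|^2+\tfrac{3D^2}{2h}$ inside the convolution so that $\int_{\bar\K}e^{-U}$ cancels), after which the two arguments coincide.
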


\begin{proof}
We first bound the warmness of $\delta_{v_{0}}P$ for $\mu$. By the
definition of $\psann$, the forward step brings $\delta_{v_{0}}$
to $\mc N(v_{0},hI_{n+1})$, and then the backward step $\pi^{V|W=(y,s)}(x,t)\propto\exp(-\frac{1}{2\sigma^{2}\,}\norm x^{2}-\rho t-\frac{1}{2h}\norm{w-v}^{2})|_{\bar{\K}}$
brings it to, for $v=(x,t)$ and $w=(y,s)$
\[
(\delta_{v_{0}}P)(x,t)=\ind_{\bar{\K}}(x,t)\,\int\frac{\exp(-\frac{1}{2\sigma^{2}}\,\norm x^{2}-\rho t-\frac{1}{2h}\,\norm{w-v}^{2})\exp(-\frac{1}{2h}\,\norm{w-v_{0}}^{2})}{(2\pi h)^{(n+1)/2}\int_{\bar{\K}}\exp(-\frac{1}{2\sigma^{2}}\,\norm z^{2}-\rho l-\frac{1}{2h}\,\norm{(y,s)-(z,l)}^{2})\,\D z\D l}\,\D y\D s\,.
\]
For $D:=\text{diam}\,\bar{\K}\lesssim Rl$, it follows from Young's
inequality ($\norm{a-b}^{2}\leq3\,\norm a^{2}+\frac{3}{2}\,\norm b^{2}$)
that
\[
\frac{1}{2h}\,\norm{(y,s)-(z,l)}^{2}\leq\frac{3}{4h}\,\norm{(y,s)}^{2}+\frac{3}{2h}\,\norm{(z,l)}^{2}\leq\frac{3}{4h}\,\norm{(y,s)}^{2}+\frac{3D^{2}}{2h}\,.
\]
Hence,
\[
\delta_{v_{0}}P\leq\frac{\exp(\frac{3D^{2}}{2h})\exp(-\frac{1}{2\sigma^{2}}\,\norm x^{2}-\rho t)}{(2\pi h)^{\frac{n+1}{2}}\int_{\bar{\K}}\exp(-\frac{1}{2\sigma^{2}}\,\norm z^{2}-\rho l)\,\D z\D l}\int\exp\bigl(-\frac{1}{2h}\,\norm{w-v}^{2}-\frac{1}{2h}\,\norm{w-v_{0}}^{2}+\frac{3}{4h}\,\norm w^{2}\bigr)\,\D y\D s\,.
\]
The exponent of the integrand can be bounded as follows:
\begin{align*}
 & -\frac{1}{2h}\,\norm{w-v}^{2}-\frac{1}{2h}\,\norm{w-v_{0}}^{2}+\frac{3}{4h}\,\norm w^{2}\\
= & -\frac{1}{4h}\,\norm{w-2(v+v_{0})}^{2}-\frac{1}{2h}\,\norm v^{2}-\frac{1}{2h}\,\norm{v_{0}}^{2}+\frac{1}{h}\,\norm{v+v_{0}}^{2}\\
\leq & -\frac{1}{4h}\,\norm{w-2(v+v_{0})}^{2}+\frac{4D^{2}}{h}\,,
\end{align*}
so it follows that
\[
\delta_{v_{0}}P\leq\frac{2^{(n+1)/2}\exp(\frac{6D^{2}}{h})\exp(-\frac{1}{2\sigma^{2}}\,\norm x^{2}-\rho t)}{\int_{\bar{\K}}\exp(-\frac{1}{2\sigma^{2}}\,\norm z^{2}-\rho l)\,\D z\D l}\,.
\]
Dividing this by $\mu(x,t)\propto\exp(-\frac{1}{2\sigma^{2}}\,\norm x^{2}-\rho t)|_{\bar{\K}}$,
we have
\[
\frac{\delta_{v_{0}}P}{\mu}\leq2^{\frac{n+1}{2}}\exp(6h^{-1}D^{2})\,,
\]
which implies $\eu R_{q}(\delta_{v_{0}}P\mmid\mu)\lesssim n+h^{-1}D^{2}$.
Thus, $\psann$ can achieve $\eu R_{q}(\delta_{v_{0}}P^{k}\mmid\mu)\leq\veps$
for $k=\O(qh^{-1}(\sigma^{2}\vee l^{2})\log\frac{n+h^{-1}D^{2}}{\veps})$,
and due to $2\,\norm{\cdot}_{\tv}^{2}\leq\lim_{q\downarrow1}\eu R_{q}$,
\[
\sup_{v_{0}\in\mc{\bar{K}}}\norm{\delta_{v_{0}}P^{k}-\mu}_{\tv}\leq\veps\,.
\]

For the second claim, replacing $\veps$ with $\veps/M$ and using
Lemma~\ref{lem:boosting} with $\bnorm{\frac{\D\nu}{\D\mu}-1}_{L_{\infty}}\leq M$,
we have $\bnorm{\frac{\D(\nu P^{k})}{\D\mu}-1}_{L_{\infty}}\leq\veps$
and thus $\eu R_{\infty}(\nu P^{k}\mmid\mu)\leq\veps$.
\end{proof}
Similarly, we can establish a $\eu R_{\infty}$-guarantee for the
exponential sampling considered in \S\ref{ssec:prox-exp}. 
\begin{lem}
\label{lem:ps-exp-Rinf} For given $\veps\in(0,1)$, let $P$ be the
Markov kernel of $\psexp$ with variance $h$ and target distribution
$\bar{\pi}\propto\exp(-nt)|_{\bar{\K}}$. Then, $\sup_{z_{0}\in\mc{\bar{K}}}\norm{\delta_{z_{0}}P^{k}-\bar{\pi}}_{\tv}\leq\veps$
for $k=\O(h^{-1}R^{2}l^{2}\log\frac{n+h^{-1}R^{2}}{\veps})$. If an
initial distribution $\nu$ is $M$-warm with respect to $\bar{\pi}$,
then $\eu R_{\infty}(\nu P^{k}\mmid\bar{\pi})\leq\veps$ after $k=\O(h^{-1}R^{2}l^{2}\log\frac{M(n+h^{-1}R^{2})}{\veps})$
iterations.
\end{lem}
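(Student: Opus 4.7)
The plan is to mirror closely the proof of Lemma~\ref{lem:ps-annealing-iter}, replacing the annealing target by $\bar{\pi}$. The argument has three ingredients: an LSI bound for $\bar{\pi}$, a one-step density bound from a point mass, and the boosting lemma to upgrade $\tv$ to $\eu R_{\infty}$.

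First, I will bound $\clsi(\bar{\pi})$. Consider the auxiliary strongly logconcave measure $\nu(x,t)\propto\exp(-\tfrac{1}{2R^{2}l^{2}}\norm{x}^{2}-\tfrac{1}{2l^{2}}t^{2}-nt)$, whose potential is $\min(R^{-2}l^{-2},l^{-2})$-strongly convex; hence the Bakry--\'Emery criterion (Lemma~\ref{lem:bakry-emery}) gives $\clsi(\nu)\lesssim R^{2}l^{2}$, and convex truncation to $\bar{\K}$ does not worsen this. Since $\norm{x}\lesssim Rl$ and $\abs{t}\lesssim l$ on $\bar{\K}$, the quadratic perturbation is bounded by a constant on $\bar{\K}$, so by the Holley--Stroock principle (Lemma~\ref{lem:lsi-bdd-perturbation}) we obtain $\clsi(\bar{\pi})\lesssim R^{2}l^{2}$.

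Second, I will bound the one-step density ratio $\delta_{z_{0}}P/\bar{\pi}$ from any $z_{0}\in\bar{\K}$. After a harmless translation making $\bar{\K}$ contain the origin, write $(\delta_{z_{0}}P)(z)/\bar{\pi}(z)=\int\mc N(z_{0},hI_{n+1})(w)\,\mc N(z,hI_{n+1})(w)/\bar{\pi}^{Y}(w)\,\D w$, and lower-bound $\bar{\pi}^{Y}(w)=\int_{\bar{\K}}\bar{\pi}(z')\mc N(z',hI_{n+1})(w)\,\D z'$ by pulling out $\exp(-\tfrac{3}{4h}\norm{w}^{2}-\tfrac{3D^{2}}{2h})$ via Young's inequality with $D:=\mathrm{diam}\,\bar{\K}\lesssim Rl$, exactly as in Lemma~\ref{lem:ps-annealing-iter}. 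Crucially, the factor $\bar{\pi}(z')$ cancels between numerator and denominator, so the specific form of the linear potential $-nt$ is irrelevant for the resulting estimate. Completing the square in $w$ and integrating yields $\delta_{z_{0}}P/\bar{\pi}\lesssim 2^{(n+1)/2}\exp\bpar{\O(D^{2}/h)}$, hence $\eu R_{q}(\delta_{z_{0}}P\mmid\bar{\pi})\lesssim n+h^{-1}R^{2}l^{2}$ for every $q\geq1$.

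Third, apply the LSI contraction from Lemma~\ref{lem:exp-mixing} with $\clsi(\bar{\pi})\lesssim R^{2}l^{2}$: after $k$ further iterations, $\eu R_{1}(\delta_{z_{0}}P^{k+1}\mmid\bar{\pi})\leq(n+h^{-1}R^{2}l^{2})\exp\bpar{-\Omega(kh/\clsi(\bar{\pi}))}$, and Pinsker's inequality converts this into the uniform bound $\sup_{z_{0}\in\bar{\K}}\norm{\delta_{z_{0}}P^{k}-\bar{\pi}}_{\tv}\leq\veps$ for $k=\O(h^{-1}R^{2}l^{2}\log\tfrac{n+h^{-1}R^{2}l^{2}}{\veps})$, which absorbs $\log l^{2}$ into the constant, matching the claim. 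For the warm-start version, apply the boosting Lemma~\ref{lem:boosting} together with the reversibility of $\psexp$ with respect to $\bar{\pi}$ and $\bnorm{\D\nu/\D\bar{\pi}-1}_{L^{\infty}}\leq M$, concluding $\eu R_{\infty}(\nu P^{k}\mmid\bar{\pi})\leq2M\sup_{z_{0}}\norm{\delta_{z_{0}}P^{k}-\bar{\pi}}_{\tv}$; replacing $\veps$ by $\veps/(2M)$ costs only an additive $\log M$ inside the logarithm. The one point requiring care is verifying that the Young's-inequality step goes through despite the unbounded-looking $-nt$ factor in $\bar{\pi}$, and this is resolved precisely by the $\bar{\pi}$-cancellation noted above, reducing the estimate to a purely geometric bound depending only on $\mathrm{diam}\,\bar{\K}$.
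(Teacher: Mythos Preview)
Your proposal is correct and follows essentially the same approach as the paper: a one-step warmness bound via Young's inequality on the diameter $D\lesssim Rl$ of $\bar{\K}$, LSI-based contraction, and the boosting lemma. The only cosmetic difference is that the paper justifies $\clsi(\bar{\pi})=\O(D^{2})$ by citing the standard bounded-support LSI fact directly, whereas you spell it out via Bakry--\'Emery plus Holley--Stroock (mirroring the proof of Lemma~\ref{lem:lsi-annealing}); both yield the same bound.
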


\begin{proof}
As before, for $z_{0}\in\bar{\K}$, the composition of the forward
and backward step brings $\delta_{z_{0}}$ to
\[
(\delta_{z_{0}}P)(x,t)=\ind_{\bar{\K}}(x,t)\int\frac{\exp(-nt-\frac{1}{2h}\,\norm{y-z}^{2})\exp(-\frac{1}{2h}\,\norm{y-z_{0}}^{2})}{(2\pi h)^{(n+1)/2}\int_{\bar{\K}}\exp(-nt-\frac{1}{2h}\,\norm{y-z}^{2})\,\D z}\,\D y\,.
\]
Using $\frac{1}{2h}\,\norm{y-z}^{2}\leq\frac{3}{4h}\norm y^{2}+\frac{3}{2h}\norm z^{2}\leq\frac{3}{4h}\norm y^{2}+\frac{3D^{2}}{2h}$
(Young's inequality), for $D:=\text{diam}\,\bar{\K}\lesssim Rl$,
\begin{align*}
\delta_{z_{0}}P & \leq\frac{\exp(\frac{3D^{2}}{2h})\exp(-nt)}{(2\pi h)^{(n+1)/2}\int_{\bar{\K}}\exp(-nt)\,\D x\D t}\int\exp\bigl(-\frac{1}{2h}\,\norm{y-z}^{2}-\frac{1}{2h}\,\norm{y-z_{0}}^{2}+\frac{3}{4h}\,\norm y^{2}\bigr)\,\D y\\
 & \leq\frac{2^{(n+1)/2}\exp(\frac{6D^{2}}{h})\exp(-nt)}{\int_{\bar{\K}}\exp(-nt)\,\D x\D t}\,,
\end{align*}
where the last line follows from
\begin{align*}
 & -\frac{1}{2h}\norm{y-z}^{2}-\frac{1}{2h}\norm{y-z_{0}}^{2}+\frac{3}{4h}\norm y^{2}\\
= & -\frac{1}{4h}\norm{y-2(z+z_{0})}^{2}-\frac{1}{2h}\norm z^{2}-\frac{1}{2h}\norm{z_{0}}^{2}+\frac{1}{h}\norm{z+z_{0}}^{2}\\
\leq & -\frac{1}{4h}\norm{y-2(z+z_{0})}^{2}+\frac{4D^{2}}{h}\,.
\end{align*}
Dividing $\delta_{z_{0}}P$ by $\bar{\pi}(x,t)\propto\exp(-nt)|_{\bar{\K}}$,
\[
\frac{\delta_{z_{0}}P}{\bar{\pi}}\leq2^{\frac{n+1}{2}}\exp(6h^{-1}D^{2})\,,
\]
which implies $\eu R_{q}(\delta_{z_{0}}P\mmid\bar{\pi})\lesssim\frac{q}{q-1}\,(n+h^{-1}D^{2})$.
Noting that the truncated distribution $\bar{\pi}$ has a bounded
support, so $\clsi(\bar{\pi})=\O(D^{2})$, the $\psexp$ can achieve
$\eu R_{q}(\delta_{z_{0}}P^{k}\mmid\bar{\pi})\leq\frac{\veps}{M}$
after $k=\O(qh^{-1}D^{2}\log\frac{M(n+h^{-1}D^{2})}{\veps})$ iterations,
and
\[
\sup_{z_{0}\in\mc{\bar{K}}}\norm{\delta_{z_{0}}P^{k}-\bar{\pi}}_{\tv}\leq\frac{\veps}{M}\,.
\]
Hence, one can complete the proof using Lemma~\ref{lem:boosting}
with $\bnorm{\frac{\D\nu}{\D\bar{\pi}}-1}_{L_{\infty}}\leq M$.
\end{proof}

\subsubsection{Complexity of implementing the backward step \label{subsec:annealing-backward-complexity}}

We recall the backward step: for $\tau=\frac{\sigma^{2}}{h+\sigma^{2}}<1$,
$y_{\tau}=\tau y$, and $h_{\tau}=\tau h$, 
\[
v\sim\mu^{V|W=w}\propto\exp\bpar{-\frac{1}{2\sigma^{2}}\,\norm x^{2}-\rho t-\frac{1}{2h}\,\norm{w-v}^{2}}\big|_{\bar{\K}}\propto\bbrack{\mc N(y_{\tau},h_{\tau}I_{n})\times\mc N(s-\rho h,h)}\big|_{\bar{\mc K}}\,.
\]
Since we use the proposal $\mc N(y_{\tau},h_{\tau}I_{n})\times\mc N(s-\rho h,h)$
for rejection sampling, the success probability of each trial at $w=(y,v)$
is 
\[
\ell(w):=\frac{1}{(2\pi h_{\tau})^{n/2}(2\pi h)^{1/2}}\,\int_{\mc{\bar{K}}}\exp\bpar{-\frac{1}{2h_{\tau}}\,\bnorm{x-y_{\tau}}^{2}-\frac{1}{2h}\,\abs{t-(s-\rho h)}^{2}}\,\D x\D t\,.
\]

\paragraph{Density after taking the forward step.}

As in Lemma~\ref{lem:exp-piY-density}, we deduce the density of
$\mu^{W}=\mu^{V}*\mc N(0,hI_{n+1})$ by computing the convolution
directly.
\begin{lem}
[Density of $\mu^W$] \label{lem:anneal-muW-density} For $\mu^{V}\propto\exp(-\frac{\norm x^{2}}{2\sigma^{2}}-\rho t)|_{\bar{\K}}$,
\begin{align*}
\mu^{W}(y,s) & =\frac{\int_{\bar{\K}}\exp(-\frac{1}{2h_{\tau}}\,\norm{x-y_{\tau}}^{2}-\frac{1}{2h}\,\abs{t-(s-\rho h)}^{2})\,\D x\D t}{\int_{\bar{\K}}\exp(-\frac{1}{2\sigma^{2}}\,\norm x^{2}-\rho t)\,\D x\D t\cdot(2\pi h)^{(n+1)/2}}\,\exp\bpar{-\frac{1}{2\tau\sigma^{2}}\,\norm{y_{\tau}}^{2}}\exp\bpar{-\rho s+\half\,\rho^{2}h}\\
 & =\frac{\tau^{n/2}\ell(y,s)}{\int_{\bar{\K}}\exp(-\frac{1}{2\sigma^{2}}\norm x^{2}-\rho t)\,\D x\D t}\,\exp\bpar{-\frac{1}{2\tau\sigma^{2}}\,\norm{y_{\tau}}^{2}}\exp\bpar{-\rho s+\half\,\rho^{2}h}\,,
\end{align*}
\end{lem}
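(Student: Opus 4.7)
\bigskip

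\noindent\textbf{Proof proposal.} The statement is a direct computation: I unfold the convolution $\mu^{W} = \mu^{V} * \mathcal{N}(0, hI_{n+1})$ and complete the square separately in the $x$-block and in the $t$-block, exploiting the fact that the potential of $\mu^{V}$ splits additively as $\frac{1}{2\sigma^{2}}\norm{x}^{2} + \rho t$ and that the convolution kernel also splits as $\frac{1}{2h}\norm{y-x}^{2} + \frac{1}{2h}(s-t)^{2}$. Writing $Z := \int_{\bar{\K}} \exp\bigl(-\tfrac{1}{2\sigma^{2}}\norm{x}^{2} - \rho t\bigr)\D x\D t$, I start from
\[
\mu^{W}(y,s) = \frac{1}{Z\,(2\pi h)^{(n+1)/2}} \int_{\bar{\K}} \exp\Bigl(-\tfrac{1}{2\sigma^{2}}\norm{x}^{2} - \rho t - \tfrac{1}{2h}\norm{y-x}^{2} - \tfrac{1}{2h}(s-t)^{2}\Bigr)\,\D x\D t.
\]

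The first key step is the $x$-completion. Combining the quadratic forms in $x$ with coefficient $\tfrac{1}{2\sigma^{2}} + \tfrac{1}{2h} = \tfrac{h+\sigma^{2}}{2h\sigma^{2}} = \tfrac{1}{2h_{\tau}}$, and matching the linear term in $x$ to $-2\tau\inner{x,y}$ (since $\tfrac{2\sigma^{2}}{h+\sigma^{2}} = 2\tau$), one obtains
\[
-\tfrac{1}{2\sigma^{2}}\norm{x}^{2} - \tfrac{1}{2h}\norm{y-x}^{2} = -\tfrac{1}{2h_{\tau}}\,\norm{x - y_{\tau}}^{2} - \tfrac{\tau(1-\tau)}{2h_{\tau}}\,\norm{y}^{2},
\]
and the identity $\tfrac{\tau(1-\tau)}{2h_{\tau}} = \tfrac{\tau}{2\sigma^{2}} = \tfrac{1}{2\tau\sigma^{2}}\cdot\tau^{2}$ converts the residual into exactly $\tfrac{1}{2\tau\sigma^{2}}\norm{y_{\tau}}^{2}$, matching the lemma. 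The second key step is the $t$-completion, which is a one-dimensional shift:
\[
-\rho t - \tfrac{1}{2h}(s-t)^{2} = -\tfrac{1}{2h}\bigl(t - (s-\rho h)\bigr)^{2} - \rho s + \tfrac{1}{2}\,\rho^{2}h.
\]

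Substituting both identities into the integrand, the factors that do not depend on $(x,t)$ pull out of the integral, giving
\[
\mu^{W}(y,s) = \frac{\exp\bigl(-\tfrac{1}{2\tau\sigma^{2}}\norm{y_{\tau}}^{2} - \rho s + \tfrac{1}{2}\rho^{2}h\bigr)}{Z\,(2\pi h)^{(n+1)/2}} \int_{\bar{\K}} \exp\Bigl(-\tfrac{1}{2h_{\tau}}\,\norm{x-y_{\tau}}^{2} - \tfrac{1}{2h}\,\bigl(t-(s-\rho h)\bigr)^{2}\Bigr)\,\D x\D t,
\]
which is the first claimed expression. To obtain the second, I recognize the integral as $(2\pi h_{\tau})^{n/2}(2\pi h)^{1/2}\,\ell(y,s)$ by definition of $\ell$, and use $(2\pi h_{\tau})^{n/2}(2\pi h)^{1/2}/(2\pi h)^{(n+1)/2} = \tau^{n/2}$ to collapse the prefactor to $\tau^{n/2}\ell(y,s)/Z$.

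There is no real obstacle here — the content is purely the two completions of squares and a bookkeeping check that the residual term from the $x$-completion is precisely the factor $\exp(-\tfrac{1}{2\tau\sigma^{2}}\norm{y_{\tau}}^{2})$ appearing in the statement; all other simplifications follow from the definitions $\tau = \tfrac{\sigma^{2}}{h+\sigma^{2}}$, $h_{\tau} = \tau h$, $y_{\tau} = \tau y$.
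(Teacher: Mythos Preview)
Your proof is correct and follows exactly the approach indicated in the paper, which simply says to compute the convolution directly as in Lemma~\ref{lem:exp-piY-density}; you have supplied the completing-the-square details that the paper omits. There is nothing to add.
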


We now claim that the following is the \emph{effective domain} of
$\mu^{W}$:
\begin{equation}
\widetilde{\K}=\left[\begin{array}{cc}
\tau^{-1}I_{n}\\
 & 1
\end{array}\right]\bar{\mc K}_{\delta}+\left[\begin{array}{c}
0\\
\rho h
\end{array}\right]\,.\label{eq:annealing-eff-domain}
\end{equation}

\begin{lem}
\label{lem:annealing-eff-domain} In the setting of Problem~\ref{prob:warm-start-generation},
for $\widetilde{\K}$ in \eqref{eq:annealing-eff-domain}, if $\delta\geq24hn$,
then
\[
\mu^{W}(\widetilde{\K}^{c})\leq\exp\bpar{-\frac{\delta^{2}}{2h}+24\delta n+hn^{2}}\,.
\]
\end{lem}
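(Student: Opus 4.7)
My approach is to adapt the proof of Lemma~\ref{lem:exp-eff-domain} to handle the anisotropic backward Gaussian (with covariance $\mathrm{diag}(h_\tau I_n, h)$, where $h_\tau = \tau h \leq h$) and the quadratic-in-$x$ plus linear-in-$t$ density $\psi(u,r) := \exp\bigl(-\tfrac{\|u\|^2}{2\tau\sigma^2} - \rho r - \tfrac{\rho^2 h}{2}\bigr)$ arising from the annealing target. Let $Z := \int_{\bar{\K}} \exp(-\|x\|^2/(2\sigma^2) - \rho t)\,\D x\,\D t$.

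Starting from the density formula in Lemma~\ref{lem:anneal-muW-density}, I change variables $(u, r) := (\tau y, s - \rho h)$, which has Jacobian $\tau^{-n}$ and maps $\widetilde{\K}^c$ bijectively onto $\bar{\K}_\delta^c$:
\[ Z\,\mu^W(\widetilde{\K}^c) \;=\; \frac{\tau^{-n}}{(2\pi h)^{(n+1)/2}} \int_{\bar{\K}_\delta^c} \psi(u, r) \int_{\bar{\K}} e^{-\|x-u\|^2/(2h_\tau) - (t-r)^2/(2h)}\,\D x\,\D t\,\D u\,\D r. \]
As in Lemma~\ref{lem:exp-eff-domain}, I extend the inner $\bar{\K}$-integral to the supporting half-space at $\mathrm{proj}_{\bar{\K}}(u, r)$. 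Since the unit outer normal $n=(n_x, n_t)$ satisfies $n^\top\mathrm{diag}(h_\tau I_n, h)\,n \leq h$, the anisotropic half-space integral is bounded by $\tau^{n/2}(2\pi h)^{(n+1)/2}\,\Phi^c(d(u,r)/\sqrt h)$, leaving an overall prefactor $\tau^{-n/2}$. The co-area formula and integration by parts, identical to those in Lemma~\ref{lem:exp-eff-domain} (with \eqref{eq:Gaussian-tail} killing the boundary term at infinity), then reduce the problem to bounding
\[ \mu^W(\widetilde{\K}^c) \;\leq\; \frac{\tau^{-n/2}}{Z} \int_\delta^\infty \frac{e^{-s^2/(2h)}}{\sqrt{2\pi h}} \int_{\bar{\K}_s \setminus \bar{\K}} \psi(u,r)\,\D u\,\D r\,\D s. \]

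The main new ingredient is an analogue of Lemma~\ref{lem:exp-expansion} for the $\psi$-weighted volume. After the shift of coordinates in which $B_1(0) \subset \bar{\K}$ and $t \in [-21, 13l-6]$ on $\bar{\K}$, convexity gives $\bar{\K}_s \subset (1+s)\bar{\K}$; rescaling $w = v/(1+s)$ and using $(1+s)^{n+1} \leq e^{(n+1)s}$, $(1+s)^2 \geq 1$ (to discard the quadratic-in-$s$ correction in the $\|w_x\|^2$ term, which only strengthens the bound since the exponent is negative), and $w_t \geq -21$ (to bound $e^{-s\rho w_t} \leq e^{21 s \rho}$), I obtain
\[ \int_{\bar{\K}_s} \psi\,\D u\,\D r \;\leq\; e^{(n+1+21\rho)s}\, e^{-\rho^2 h/2}\,Z. \]

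Plugging this in and completing the square in $s$ yields, for $a := n+1+21\rho$ and $\delta \geq ah$,
\[ \mu^W(\widetilde{\K}^c) \;\leq\; \tau^{-n/2}\,\exp\Bigl(-\tfrac{\delta^2}{2h} + a\delta\Bigr). \]
Using $\rho \leq n$ (the regime of $\tgc$) gives $a\delta \leq 23n\delta$, while $\log \tau^{-n/2} = \tfrac{n}{2}\log(1 + h/\sigma^2) \leq nh/(2\sigma^2) \leq \tfrac{n^2 h}{2}$ using $\sigma^2 \geq n^{-1}$; together these fit comfortably within the claimed $24\delta n + hn^2$, and the hypothesis $\delta \geq 24hn$ in turn ensures $\delta \geq ah$. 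The main obstacle is establishing the dilation estimate for $\psi$ with the quadratic-plus-linear potential; the anisotropy $h_\tau < h$ is handled cleanly by operator-norm domination, producing the $\tau^{-n/2}$ factor that is ultimately absorbed into the $hn^2$ term.
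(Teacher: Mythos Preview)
Your proof is correct and follows essentially the same route as the paper: density formula $\to$ change of variables $(u,r)=(\tau y,\,s-\rho h)$ $\to$ half-space domination of the inner $\bar{\K}$-integral $\to$ co-area formula and integration by parts $\to$ an expansion estimate for the $\psi$-weighted volume of $\bar{\K}_s$ $\to$ Gaussian tail bound. Your observation that $n^{\top}\mathrm{diag}(h_{\tau}I_{n},h)\,n\le h$ turns the anisotropic half-space integral into $\tau^{n/2}(2\pi h)^{(n+1)/2}\,\Phi^{c}(d/\sqrt{h})$, leaving a bare $\tau^{-n/2}$ prefactor, is exactly what the paper does (it simply writes the normalizer as $(2\pi h_{\tau})^{n/2}(2\pi h)^{1/2}$).

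The one substantive difference is in how $\tau^{-n/2}$ is controlled. You bound it by $\exp\bigl(nh/(2\sigma^{2})\bigr)\le e^{n^{2}h/2}$, which requires $\sigma^{2}\ge n^{-1}$. The paper instead packages $\tau^{-n/2}$ together with the expansion integral into Lemma~\ref{lem:annealing-expansion}, which it proves for \emph{all} $\sigma^{2}>0$ by a case split: for $\sigma^{2}\ge 1/(2n)$ its argument is essentially yours (yielding $e^{hn^{2}+12sn}$), while for $\sigma^{2}\le 1/(2n)$ it applies the map $(z,l)\mapsto(\tau^{-1/2}z,l)$ and a probability-ratio/Chebyshev trick to absorb $\tau^{-n/2}$ without ever paying $nh/\sigma^{2}$. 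Consequently your argument establishes the lemma only in the regime $\sigma^{2}\ge n^{-1}$. This covers the entire range used by $\tgc$, so all downstream applications go through; but the lemma as stated---and as invoked in Theorem~\ref{thm:annealing-query-comp}, which is phrased for arbitrary $\sigma>0$---is slightly more general than what you have proved.
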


\begin{proof}
Using the density formula of $\mu^{W}$ in Lemma~\ref{lem:anneal-muW-density},
for $v=(x,t)$, $w=(y,s)$, and $\bar{\K}_{\delta}^{c}:=(\bar{\K}_{\delta})^{c}$,
\begin{align*}
 & \int_{\bar{\mc K}}\exp\bpar{-\frac{1}{2\sigma^{2}}\,\norm x^{2}-\rho t}\,\D x\D t\cdot\mu^{W}(\widetilde{\K}^{c})\\
 & =\frac{1}{(2\pi h)^{\frac{n+1}{2}}}\int_{\widetilde{\K}^{c}}\int_{\bar{\K}}\exp\bpar{-\frac{1}{2h_{\tau}}\,\bnorm{x-y_{\tau}}^{2}-\frac{1}{2h}\,\abs{t-(s-\rho h)}^{2}}\exp\bpar{-\frac{\norm{y_{\tau}}^{2}}{2\tau\sigma^{2}}-\rho s+\frac{\rho^{2}h}{2}}\,\D v\D w\\
 & \underset{(i)}{=}\frac{\tau^{-n/2}}{(2\pi h_{\tau})^{n/2}(2\pi h)^{1/2}}\int_{\bar{\K}_{\delta}^{c}}\int_{\bar{\mc K}}\exp\bpar{-\frac{1}{2h_{\tau}}\,\bnorm{x-y}^{2}-\frac{1}{2h}\,\abs{t-s}^{2}}\exp\bpar{-\frac{\norm y^{2}}{2\tau\sigma^{2}}-\rho s-\frac{\rho^{2}h}{2}}\,\D v\D w\\
 & \underset{(ii)}{\leq}\tau^{-n/2}\int_{\mc{\bar{K}}_{\delta}^{c}}e^{-\frac{1}{2\tau\sigma^{2}}\,\norm y^{2}-\rho s}\frac{1}{(2\pi h_{\tau})^{n/2}(2\pi h)^{1/2}}\int_{\mc H(w)}\exp\bpar{-\frac{1}{2h_{\tau}}\,\bnorm{x-y}^{2}-\frac{1}{2h}\,\abs{t-s}^{2}}\,\D v\D w\\
 & \leq\tau^{-n/2}\int_{\bar{\mc K}_{\delta}^{c}}e^{-\frac{1}{2\tau\sigma^{2}}\,\norm y^{2}-\rho s}\Bpar{\frac{1}{\sqrt{2\pi h}}\int_{d(w,\bar{\K})}^{\infty}\exp\bpar{-\frac{u^{2}}{2h}}\,\D u}\,\D w=:(\#)
\end{align*}
where in $(i)$ we used the change of variables via $y\gets\tau y$
and $s\gets s-\rho h$, and in $(ii)$ $\mc H(w)$ denotes the supporting
half-space at $\msf{proj}_{\bar{\mc K}}(w)$ containing $\bar{\mc K}$
for given $w\in\bar{\K}_{\delta}^{c}$. By the co-area formula and
integration by parts, for the $n$-dimensional Hausdorff measure $\mc H^{n}$,
\begin{align*}
(\#) & \underset{\text{coarea}}{=}\tau^{-n/2}\int_{\delta}^{\infty}\Phi^{c}(h^{-1/2}u)\int_{\de\bar{\K}_{u}}e^{-\frac{1}{2\tau\sigma^{2}}\,\norm y^{2}-\rho s}\,\mc H^{n}(\D w)\D u\\
 & \underset{\text{IBP}}{=}\tau^{-n/2}\Bbrack{\Phi^{c}(h^{-1/2}u)\int_{0}^{u}\int_{\de\bar{\K}_{l}}e^{-\frac{1}{2\tau\sigma^{2}}\,\norm y^{2}-\rho s}\,\mc H^{n}(\D w)\D l}_{u=\delta}^{\infty}\\
 & \qquad+\tau^{-n/2}\int_{\delta}^{\infty}\frac{1}{\sqrt{2\pi h}}\exp\bpar{-\frac{u^{2}}{2h}}\int_{0}^{u}\int_{\de\bar{\K}_{l}}e^{-\frac{1}{2\tau\sigma^{2}}\,\norm y^{2}-\rho s}\,\mc H^{n}(\D w)\D l\D u\,,
\end{align*}
and the double integral term is bounded as
\begin{align*}
\tau^{-n/2}\int_{0}^{u}\int_{\de\bar{\K}_{l}}e^{-\frac{1}{2\tau\sigma^{2}}\,\norm y^{2}-\rho s}\,\mc H^{n}(\D w)\D l & =\tau^{-n/2}\int_{\bar{\K}_{u}\backslash\bar{\K}}e^{-\frac{1}{2\tau\sigma^{2}}\,\norm y^{2}-\rho s}\,\D w\\
 & \underset{\text{Lemma }\ref{lem:annealing-expansion}}{\leq}2\exp(hn^{2}+24un)\int_{\bar{\K}}e^{-\frac{1}{2\sigma^{2}}\,\norm y^{2}-\rho s}\,\D y\D s\,.
\end{align*}
Using the tail bound on $\Phi^{c}$ in \eqref{eq:Gaussian-tail},
$\tau^{-n/2}\Phi^{c}(h^{-1/2}u)\int_{\bar{\K}_{u}\backslash\bar{\K}}\exp(-\frac{1}{2\tau\sigma^{2}}\,\norm y^{2}-\rho s)\,\D w$
vanishes at $u=\infty$, and thus the first term of $(\#)$ can bounded
by $0$. Hence,
\[
\int_{\bar{\K}}e^{-\frac{1}{2\sigma^{2}}\,\norm x^{2}-\rho t}\,\D x\D t\cdot\mu^{W}(\widetilde{\K}^{c})\leq2\int_{\delta}^{\infty}\frac{1}{\sqrt{2\pi h}}\exp\bpar{-\frac{u^{2}}{2h}}\exp(hn^{2}+24un)\,\D u\int_{\bar{\K}}e^{-\frac{1}{2\sigma^{2}}\,\norm x^{2}-\rho t}\,\D x\D t\,.
\]
Dividing both sides by $\int_{\bar{\K}}\exp(-\frac{1}{2\sigma^{2}}\,\norm x^{2}-\rho t)$
and using the standard bound on $\Phi^{c}$ when $\delta\geq23hn$,
\begin{align*}
\mu^{W}(\widetilde{\K}^{c}) & \leq2\int_{\delta}^{\infty}\frac{1}{\sqrt{2\pi h}}e^{-\frac{s^{2}}{2h}}e^{hn^{2}+24sn}\,\D s=2e^{hn^{2}}\Bpar{1-\Phi\bpar{\frac{\delta-24hn}{h^{1/2}}}}\\
 & \leq\exp\bpar{-\frac{\delta^{2}}{2h}+24\delta n+hn^{2}}\,,
\end{align*}
which completes the proof. 
\end{proof}
This suggests taking $\delta\asymp1/n$ and $h\asymp1/n^{2}$ as in
the exponential sampling. Precisely, under the choice of $h=\tfrac{c}{24^{2}n^{2}}$
and $\delta=\tfrac{c/24+t}{n}$ for parameters $c,t>0$, it follows
that $\delta\geq24hn$ and
\[
\pi^{Y}(\widetilde{\K}^{c})\leq\exp\bpar{-\frac{t^{2}}{2c}+c}\,.
\]

We control how fast $\int_{\bar{\K}_{u}}\exp(-\frac{1}{2\sigma^{2}}\,\norm x^{2}-\rho t)$
increases in $u$.
\begin{lem}
\label{lem:annealing-expansion} In the setting of Lemma~\ref{lem:annealing-eff-domain},
for $\tau=\frac{\sigma^{2}}{h+\sigma^{2}}<1$, $s>0$, and $\rho\in(0,n]$,
\[
\tau^{-n/2}\int_{\bar{\K}_{s}}e^{-\frac{1}{2\tau\sigma^{2}}\,\norm z^{2}-\rho l}\,\D z\D l\leq2\exp(hn^{2}+24sn)\int_{\bar{\K}}e^{-\frac{1}{2\sigma^{2}}\,\norm z^{2}-\rho l}\,\D z\D l\,.
\]
\end{lem}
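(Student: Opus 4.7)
The plan is to mirror the scaling argument used in Lemma~\ref{lem:exp-expansion}, modified to accommodate both the Gaussian factor $e^{-\|z\|^{2}/(2\tau\sigma^{2})}$ and the exponential tilt $e^{-\rho l}$ simultaneously. After the translation specified before Algorithm~\ref{alg:tgc}, the unit ball $B_{1}^{n+1}(0)$ is contained in $\bar{\K}$, so standard convexity yields the inclusion $\bar{\K}_{s}\subset(1+s)\bar{\K}$. I would therefore bound the integral over $\bar{\K}_{s}$ by the integral over $(1+s)\bar{\K}$, and then apply the change of variables $(z,l)=(1+s)(z',l')$ to return the domain to $\bar{\K}$, picking up a Jacobian $(1+s)^{n+1}$ and transforming the exponent to $-\frac{(1+s)^{2}}{2\tau\sigma^{2}}\|z'\|^{2}-(1+s)\rho l'$.

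Next, I would split this rescaled exponent into the target exponent $-\frac{1}{2\sigma^{2}}\|z'\|^{2}-\rho l'$ plus a remainder, and bound the remainder pointwise on $\bar{\K}$. For the quadratic part, using $\tau^{-1}=1+h/\sigma^{2}$, the identity
\[
\frac{(1+s)^{2}}{\tau\sigma^{2}}-\frac{1}{\sigma^{2}}=\frac{(1+s)^{2}-1}{\sigma^{2}}+\frac{(1+s)^{2}h}{\sigma^{4}}\geq0
\]
shows the remainder enters with a minus sign in the exponential, so it can be discarded for an upper bound. For the linear part, $-(1+s)\rho l'=-\rho l'-s\rho l'$, and the translation placing $t\in[-21,13l-6]$ gives $l'\geq-21$ on $\bar{\K}$; combined with $\rho\leq n$ this yields the pointwise bound $e^{-s\rho l'}\leq e^{21sn}$.

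Finally, I would bound the numerical prefactor $\tau^{-n/2}(1+s)^{n+1}$ via the elementary estimates $(1+s)^{n+1}\leq e^{(n+1)s}\leq e^{2sn}$ (for $n\geq1$) and $\tau^{-n/2}=(1+h/\sigma^{2})^{n/2}\leq\exp(nh/(2\sigma^{2}))\leq e^{hn^{2}}$, where the last step absorbs the Gaussian Jacobian by invoking the lower bound $\sigma^{2}\gtrsim1/n$ that is implicit throughout the $\tgc$ schedule (since $\sigma^{2}$ is initialized at $n^{-1}$ and only increases). Multiplying the prefactor by the pointwise bound $e^{21sn}$ from the previous step produces $e^{hn^{2}+23sn}\leq2\exp(hn^{2}+24sn)$, which yields the claim. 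The main obstacle I anticipate is precisely this handling of the $\tau^{-n/2}$ factor: it is absorbed into the stated $e^{hn^{2}}$ only when $\sigma^{2}\gtrsim1/n$, so the constants in the estimates $\log(1+x)\leq x$ and $h/\sigma^{2}\leq2hn$ must be tracked carefully to avoid an extraneous $\sigma^{-2}$ blowup.
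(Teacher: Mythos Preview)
Your argument is essentially the paper's treatment of the regime $\sigma^{2}\geq\tfrac{1}{2n}$ (its second case), and in that regime your bounds are correct: the inclusion $\bar{\K}_{s}\subset(1+s)\bar{\K}$, the rescaling, the sign of the quadratic remainder, and the estimate $-s\rho l'\leq 21sn$ all go through, and $\tau^{-n/2}\leq\exp\bigl(\tfrac{nh}{2\sigma^{2}}\bigr)\leq e^{hn^{2}}$ is valid precisely when $\sigma^{2}\geq\tfrac{1}{2n}$.

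The gap is the other regime. The lemma is stated for arbitrary $\sigma>0$ (and Theorem~\ref{thm:annealing-query-comp} is likewise stated for $\sigma>0$), and for $\sigma^{2}<\tfrac{1}{2n}$ the inequality $h/\sigma^{2}\leq 2hn$ is simply false; no amount of constant-tracking repairs it, because $\tau^{-n/2}=(1+h/\sigma^{2})^{n/2}$ genuinely blows up as $\sigma^{2}\to 0$. Your proposed resolution---importing the schedule constraint $\sigma^{2}\geq 1/n$ from $\tgc$---yields a weaker statement that happens to suffice for the algorithms in the paper, but it does not prove the lemma as written.

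The paper handles $\sigma^{2}\leq\tfrac{1}{2n}$ by a different mechanism: rather than bounding $\tau^{-n/2}$ directly, it absorbs this factor into the change of variables $F:(z,l)\mapsto(\tau^{-1/2}z,l)$, which converts the weight $e^{-\|z\|^{2}/(2\tau\sigma^{2})}$ on $\hat{\K}$ into the target weight $e^{-\|z\|^{2}/(2\sigma^{2})}$ on the dilated set $F\hat{\K}$. The remaining ratio $\int_{F\hat{\K}}/\int_{\hat{\K}}$ is then controlled by writing it as a ratio of probabilities under $\mathcal{N}(0,\sigma^{2}I_{n})\otimes\mathsf{Exp}$ and invoking Chebyshev: since $\sigma^{2}\leq\tfrac{1}{2n}$, the Gaussian places mass at least $1-n\sigma^{2}\geq\tfrac12$ on the unit ball contained in $\hat{\K}$, so this ratio is at most $2$. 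This is where the leading factor of $2$ in the lemma originates.
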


\begin{proof}
Consider when $\sigma^{2}\leq\nicefrac{1}{2n}$. For $F:(z,l)\mapsto(\tau^{-1/2}z,l)$
between $\R^{n+1}$, and $\hat{\K}:=\bar{\K}+\frac{V(x_{0})}{n}-\frac{V_{0}}{n}$,
\begin{align*}
 & \tau^{-n/2}\int_{\mc{\hat{\K}}_{s}}\exp\bpar{-\frac{1}{2\tau\sigma^{2}}\,\norm z^{2}-\rho l}\\
 & \underset{(i)}{\le}\tau^{-n/2}\int_{(1+s)\hat{\K}}\exp\bpar{-\frac{1}{2\tau\sigma^{2}}\,\norm z^{2}-\rho l}=(1+s)^{n}\int_{F\hat{\K}}\exp\bpar{-\frac{(1+s)^{2}}{2\sigma^{2}}\,\norm z^{2}-\rho l\,(1+s)}\\
 & \underset{(ii)}{\leq}e^{13sn}\int_{F\hat{\K}}\exp\bpar{-\frac{1}{2\sigma^{2}}\,\norm z^{2}-\rho l}\\
 & \underset{(iii)}{=}e^{13sn}\int_{\hat{\K}}\exp\bpar{-\frac{1}{2\sigma^{2}}\,\norm z^{2}-\rho l}\times\frac{\P_{\mc N(0,\sigma^{2}I_{n})\times\msf{Exp}}\bpar{(X,T)\in F\hat{\K})}}{\P_{\mc N(0,\sigma^{2}I_{n})\times\msf{Exp}}\bpar{(X,T)\in\hat{\K})}}\\
 & \underset{(iv)}{\leq}e^{13sn}\int_{\hat{\K}}\exp\bpar{-\frac{1}{2\sigma^{2}}\,\norm z^{2}-\rho l}\times\bpar{\P_{\mc N(0,\sigma^{2}I_{n})\times\msf{Exp}}\bpar{\norm X\leq1,T\geq0}}^{-1}\\
 & \underset{(v)}{\leq}\frac{e^{13sn}e^{11sn}}{1-d\sigma^{2}}\int_{\hat{\K}}\exp\bpar{-\frac{1}{2\sigma^{2}}\,\norm z^{2}-\rho l}\leq2e^{24sn}\int_{\hat{\K}}\exp\bpar{-\frac{1}{2\sigma^{2}}\,\norm z^{2}-\rho l}\,.
\end{align*}
In $(i)$, we used $\hat{\K}_{s}\subset(1+s)\,\hat{\K}$ due to $B_{1}(0)\subset\hat{\K}$.
$(ii)$ follows from $l\geq-11$ in $F\hat{\K}$. In $(iii)$, $\msf{Exp}$
denotes the $1$-dim. exponential distribution with density proportional
to $\exp(-\rho t)|_{[-t_{0},\infty)}$ for $t_{0}=\min_{\hat{\K}}t\geq-11$.
$(iv)$ and $(v)$ follow from $B_{1}(0)\subset\hat{\K}$ and Chebyshev's
inequality with $\rho\leq n$, respectively.

When $\sigma^{2}\geq\nicefrac{1}{2n}$, starting from the $(i)$-line
and using $\tau^{-n/2}\leq\exp(\nicefrac{hn}{2\sigma^{2}})$,
\begin{align*}
\tau^{-n/2}\int_{\hat{\K}_{s}}\exp\bpar{-\frac{1}{2\tau\sigma^{2}}\,\norm z^{2}-\rho l} & \leq\exp\bpar{\frac{hn}{2\sigma^{2}}+sn}\int_{\hat{\K}}\exp\bpar{-\frac{1}{2\tau\sigma^{2}}\,\norm z^{2}-\rho l\,(1+s)}\\
 & \leq\exp(hn^{2}+12sn)\int_{\hat{\K}}\exp\bpar{-\frac{1}{2\sigma^{2}}\,\norm z^{2}-\rho l}\,.
\end{align*}
We complete the proof by combining the two bounds in both cases.
\end{proof}

\paragraph{Per-step guarantees.}

We provide the per-step query complexity and failure probability of
implementing the backward step from a warm start.
\begin{lem}
\label{lem:annealing-perstep} In the setting of Problem~\ref{prob:warm-start-generation},
let $\nu$ be an $M$-warm initial distribution for $\mu^{V}=\mu_{\sigma^{2},\rho}$
with $\sigma^{2}>0$ and $\rho\in(0,n]$. Given $k\in\mbb N$ and
$\eta\in(0,1)$, set $Z=\frac{16kM}{\eta}(\geq16)$, $h=\frac{1}{n^{2}}\frac{(\log\log Z)^{2}}{1200^{2}\log Z}$
and $N=2Z\log^{2}Z$. Then, per iteration, the failure probability
is at most $\eta/k$, and the expected number of queries is $\Otilde(M\log^{2}\tfrac{k}{\eta})$.
\end{lem}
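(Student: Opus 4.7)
The plan is to mirror the proof of Lemma~\ref{lem:exp-perstep} but with the annealing-specific density formula (Lemma~\ref{lem:anneal-muW-density}), the effective-domain bound (Lemma~\ref{lem:annealing-eff-domain}), and the volume-expansion bound (Lemma~\ref{lem:annealing-expansion}). The only substantive differences from the exponential case are the extra $\exp(hn^{2})$ factor in the expansion lemma and the $\tau$-rescaling in the backward proposal, both of which are $\O(1)$ under our parameter choices since $hn^{2}=\O\bpar{(\log\log Z)^{2}/\log Z}=\O(1)$.

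First I would reduce to a bound under $\mu^{W}$: since $\nu$ is $M$-warm with respect to $\mu^{V}$, $\nu*\mc N(0,hI_{n+1})$ is $M$-warm with respect to $\mu^{W}=\mu^{V}*\mc N(0,hI_{n+1})$, and so the per-iteration failure probability is at most $M\,\E_{\mu^{W}}[(1-\ell)^{N}]$ and the expected number of trials is at most $M\,\E_{\mu^{W}}[\ell^{-1}\wedge N]$. With the effective domain $\widetilde{\K}$ from~\eqref{eq:annealing-eff-domain}, I would split each expectation as
\[
\int\cdot\,\D\mu^{W}=\int_{\widetilde{\K}^{c}}\cdot+\int_{\widetilde{\K}\cap[\ell\geq N^{-1}\log(3kM/\eta)]}\cdot+\int_{\widetilde{\K}\cap[\ell\leq N^{-1}\log(3kM/\eta)]}\cdot\,.
\]

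For the first region I would apply Lemma~\ref{lem:annealing-eff-domain} with $h=c/(1200^{2}n^{2})$ where $c=(\log\log Z)^{2}/\log Z$ and $\delta=(c/24+t)/n$ for $t=\log\log Z$, giving a tail bound $\exp(-t^{2}/(2c)+c)\leq\eta/(3kM)$. For the middle region I would use $(1-\ell)^{N}\leq e^{-\ell N}\leq\eta/(3kM)$ by the thresholding of $\ell$. For the third (the delicate one) I would substitute the density formula of Lemma~\ref{lem:anneal-muW-density}, perform the change of variables $y\gets y-\tfrac{\tau y}{\tau}$ and $s\gets s+\rho h$ that appears implicitly in the formula, and use Lemma~\ref{lem:annealing-expansion} to bound
\[
\frac{\tau^{-n/2}\int_{\bar{\K}_{\delta}}\exp(-\tfrac{1}{2\tau\sigma^{2}}\,\norm{y_{\tau}}^{2}-\rho s)\,\D y\D s}{\int_{\bar{\K}}\exp(-\tfrac{1}{2\sigma^{2}}\,\norm x^{2}-\rho t)\,\D x\D t}\leq 2\exp(hn^{2}+24\delta n)\,,
\]
so this third piece is at most $N^{-1}\log(3kM/\eta)\cdot 2\exp(hn^{2}+24\delta n)$. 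Under our parameter choices $24\delta n = \O(\log\log Z)$ and $hn^{2}=\O(1)$, so this expansion factor is $\O(\polylog Z)$, and choosing $N=2Z\log^{2}Z$ suffices to drive the piece below $\eta/(3kM)$. Summing the three pieces yields $\E_{\mu^{W}}[(1-\ell)^{N}]\leq\eta/(kM)$, hence the claimed per-step failure probability $\eta/k$.

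For the expected-trial bound I would split $\E_{\mu^{W}}[\ell^{-1}\wedge N]$ as $\int_{\widetilde{\K}}\ell^{-1}\,\D\mu^{W}+N\mu^{W}(\widetilde{\K}^{c})$. Using the density formula to cancel the $\ell(y,s)$ that appears in the numerator of $\mu^{W}$ against the $1/\ell$, the first integral becomes exactly the expansion ratio of Lemma~\ref{lem:annealing-expansion}, bounded by $2\exp(hn^{2}+24\delta n)=\O(\polylog Z)$; the second is $\O(N\,\eta/(kM))=\O(\log^{2}Z)$. So the expected number of queries per iteration is $M\cdot\O(\log^{2}Z)=\Otilde(M\log^{2}(k/\eta))$. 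The main obstacle is the bookkeeping in step three: matching exponents so that $\exp(-\delta^{2}/(2h)+24\delta n+hn^{2})$ dominates the $N\,\exp(24\delta n)$ overhead, which is exactly what the choices $h\asymp(\log\log Z)^{2}/(n^{2}\log Z)$ and $\delta\asymp\log\log Z/n$ achieve, just as in the exponential case of Lemma~\ref{lem:exp-perstep}.
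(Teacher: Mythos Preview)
Your approach is correct and follows the paper's proof exactly: the same $M$-warm reduction to $\E_{\mu^{W}}[\cdot]$, the same three-region split of $\E_{\mu^{W}}[(1-\ell)^{N}]$, and the same application of Lemmas~\ref{lem:anneal-muW-density}, \ref{lem:annealing-eff-domain}, and \ref{lem:annealing-expansion} (with the $\ell$-cancellation in the expected-trial bound). Two small fixes in your bookkeeping: the change of variables should read $(y,s)\mapsto(\tau y,\,s-\rho h)$, which sends $\widetilde{\K}$ to $\bar{\K}_{\delta}$ and produces the $\tau^{-n/2}$ prefactor (your ``$y\gets y-\tau y/\tau$'' collapses to $0$); and your choice $t=\log\log Z$ makes the expansion factor $\exp(24\delta n)\asymp(\log Z)^{24}$, which $N=2Z\log^{2}Z$ cannot absorb---the paper instead takes $t=\tfrac{1}{24}\log\log Z$ and $c=\tfrac{(\log\log Z)^{2}}{4\cdot24^{2}\log Z}$ in the $h=c/(24^{2}n^{2})$ parameterization, so that $\exp(25t)\le\log^{2}Z$ and $\exp(-t^{2}/(4c))\le Z^{-1}$ hold simultaneously.
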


\begin{proof}
Since $\nu*\mc N(0,hI_{n+1})$ is $M$-warm with respect to $\mu^{W}$
due to $\D\nu/\D\mu^{V}\leq M$, the probability for the bad event
per iteration is $\E_{\nu*\mc N(0,hI_{n+1})}[(1-\ell)^{N}]\leq M\E_{\mu^{W}}[(1-\ell)^{N}]$.
For $\widetilde{\K}$ given in \eqref{eq:annealing-eff-domain},
\begin{align*}
\E_{\mu^{W}}[(1-\ell)^{N}] & =\int_{\widetilde{\K}^{c}}\cdot+\int_{\widetilde{\K}\cap[\ell\geq N^{-1}\log(3kM/\eta)]}\cdot+\int_{\widetilde{\K}\cap[\ell\leq N^{-1}\log(3kM/\eta)]}\cdot\\
 & \underset{(i)}{\leq}\mu^{W}(\widetilde{\K}^{c})+\int_{[\ell\geq N^{-1}\log(3kM/\eta)]}\exp(-\ell N)\,\D\mu^{W}\\
 & \qquad+\int_{\widetilde{\K}\cap[\ell\leq N^{-1}\log(3kM/\eta)]}\frac{\tau^{n/2}\ell(y,s)\exp(-\frac{1}{2\tau\sigma^{2}}\,\norm{y_{\tau}}^{2})\exp(-\rho s+\half\,\rho^{2}h)}{\int_{\bar{\K}}\exp(-\frac{1}{2\sigma^{2}}\,\norm x^{2}-\rho t)}\\
 & \underset{(ii)}{\leq}\exp\bpar{-\frac{t^{2}}{2c}+c}+\frac{\eta}{3kM}+\frac{\log(3kM/\eta)}{N}\frac{\int_{\widetilde{\K}}\tau^{n/2}\exp(-\frac{1}{2\tau\sigma^{2}}\,\norm{y_{\tau}}^{2}-\rho s+\half\,\rho^{2}h)}{\int_{\bar{\K}}\exp(-\frac{1}{2\sigma^{2}}\,\norm x^{2}-\rho t)}\\
 & \underset{(iii)}{=}\exp\bpar{-\frac{t^{2}}{2c}+c}+\frac{\eta}{3kM}+\frac{\log(3kM/\eta)}{N}\frac{\tau^{-n/2}\int_{\bar{\K}_{\delta}}\exp(-\frac{1}{2\tau\sigma^{2}}\,\norm y^{2}-\rho s-\half\,\rho^{2}h)}{\int_{\bar{\K}}\exp(-\frac{1}{2\sigma^{2}}\,\norm x^{2}-\rho t)}\\
 & \underset{(iv)}{\leq}\exp\bpar{-\frac{t^{2}}{2c}+c}+\frac{\eta}{3kM}+\frac{\log(3kM/\eta)}{N}\,2\exp(hn^{2}+24\delta n)\\
 & \underset{(v)}{\leq}\exp\bpar{-\frac{t^{2}}{4c}}+\frac{\eta}{3kM}+\frac{\log(3kM/\eta)}{N}\,2\exp(25t)\leq\frac{\eta}{kM}\,,
\end{align*}
where in $(i)$ we used the density formula of $\mu^{W}$ in Lemma~\ref{lem:anneal-muW-density},
$(ii)$ follows from Lemma~\ref{lem:annealing-eff-domain}, $(iii)$
is due to the change of variables, in $(iv)$ we used Lemma~\ref{lem:annealing-expansion},
and $(v)$ follows from the choices of $c=\frac{(\log\log Z)^{2}}{4\cdot24^{2}\log Z}$,
$t=\tfrac{1}{24}\log\log Z$, and $N=2Z\log^{2}Z$. Therefore, $\E_{\nu*\mc N(0,hI_{n+1})}[(1-\ell)^{N}]\leq\tfrac{\eta}{k}$.

The expected number of trials until the first acceptance is bounded
by $M\E_{\mu^{W}}[\ell^{-1}\wedge N]$, where
\begin{align*}
\int_{\R^{n+1}}\frac{1}{\ell}\wedge N\,\D\mu^{W} & \leq\int_{\widetilde{\K}}\frac{1}{\ell}\,\D\mu^{W}+N\mu^{W}(\widetilde{\K}^{c})\\
 & \underset{(i)}{\leq}\frac{\int_{\widetilde{\K}}\tau^{n/2}\exp(-\frac{1}{2\tau\sigma^{2}}\,\norm{y_{\tau}}^{2}-\rho s+\half\rho^{2}h)\,\D y\D s}{\int_{\bar{\K}}\exp(-\frac{1}{2\sigma^{2}}\,\norm x^{2}-\rho t)\,\D x\D t}+N\exp\bpar{-\frac{t^{2}}{2c}+c}\\
 & \underset{(ii)}{\leq}2\exp(hn^{2}+24\delta n)+N\exp\bpar{-\frac{t^{2}}{2c}+c}\\
 & \leq4\log^{2}Z\,.
\end{align*}
where $(i)$ follows from Lemma~\ref{lem:anneal-muW-density} and
Lemma~\ref{lem:annealing-eff-domain}, and in $(ii)$ we repeat the
arguments around $(iii),(iv)$ above. Therefore, the expected number
of queries per iteration is $\Otilde(M\log^{2}\tfrac{k}{\eta})$.
\end{proof}
Combining the convergence rate and per-step guarantees of the $\psann$,
we establish the query complexity of sampling from the annealing distribution
from an $M$-warm start:
\begin{thm}
\label{thm:annealing-query-comp} In the setting of Problem~\ref{prob:warm-start-generation}
without knowledge of $x_{0}$ and $V_{0}$, for any given $\eta,\veps\in(0,1)$,
$k\in\mbb N$ defined below, $\psann$ with $h=(1200^{2}n^{2}\log\frac{16kM}{\eta})^{-1}$,
$N=\frac{32kM}{\eta}\log^{2}\frac{16kM}{\eta}$, and initial distribution
$\mu_{0}$ which is $M$-warm with respect to $\mu:=\mu_{\sigma^{2},\rho}$
(where $\sigma>0$ and $\rho\in(0,n]$) achieves $\eu R_{\infty}(\mu_{k}\mmid\mu)\leq\veps$
after $k=\Otilde(n^{2}(\sigma^{2}\vee l^{2})\log^{2}\frac{MR}{\eta\veps})$
iterations, where $\mu_{k}$ is the law of $k$-th iterate. With probability
at least $1-\eta$, the algorithm iterates $k$ times successfully,
using $\Otilde(Mn^{2}(\sigma^{2}\vee l^{2})\log^{4}\frac{R}{\eta\veps})$
expected number of evaluation queries in total. 
\end{thm}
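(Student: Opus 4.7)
The argument is a direct assembly of three results already developed in \S\ref{subsec:annealing-mixing} and \S\ref{subsec:annealing-backward-complexity}: the $\eu R_\infty$ mixing rate of $\psann$ (Lemma~\ref{lem:ps-annealing-iter}), the bound $\clsi(\mu_{\sigma^2,\rho}) \lesssim \sigma^2 \vee l^2$ (Lemma~\ref{lem:lsi-annealing}), and the per-iteration failure bound together with the expected rejection cost (Lemma~\ref{lem:annealing-perstep}), sewn together by a union bound across iterations.

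First I would fix the target number of iterations. Plugging the chosen step size $h = (1200^2 n^2 \log\frac{16kM}{\eta})^{-1}$ and the LSI bound $\clsi(\mu) \lesssim \sigma^2 \vee l^2$ into Lemma~\ref{lem:ps-annealing-iter}, the iteration count required to obtain $\eu R_\infty(\mu_k \mmid \mu) \leq \veps$ from the $M$-warm start $\mu_0$ becomes
\[
k \;\gtrsim\; h^{-1}(\sigma^2 \vee l^2)\,\log\frac{M(n + h^{-1}R^2)}{\veps} \;=\; \Otilde\!\Bpar{n^2 (\sigma^2 \vee l^2)\,\log\tfrac{kM}{\eta}\,\log\tfrac{MR}{\veps}}\,.
\]
Since $k$ appears on both sides, I would resolve this self-referential inequality exactly as in Remark~\ref{rem:self-dependence} (using $x \geq c\log x$ whenever $x \geq 2c\log c$), which collapses the right-hand side to $k = \Otilde(n^2(\sigma^2 \vee l^2)\log^2 \tfrac{MR}{\eta\veps})$, matching the claimed bound.

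Second, I would feed this value of $k$ into Lemma~\ref{lem:annealing-perstep}. The stated choices $h = (1200^2 n^2 \log\tfrac{16kM}{\eta})^{-1}$ and $N = 2Z\log^2 Z$ with $Z = 16kM/\eta$ are precisely those required by the lemma, so each backward step fails with probability at most $\eta/k$ and uses $\Otilde(M \log^2(k/\eta))$ evaluation queries in expectation. A union bound over the $k$ iterations then gives overall success probability at least $1-\eta$, while linearity of expectation yields total expected query complexity
\[
k \cdot \Otilde\!\Bpar{M\log^2\tfrac{k}{\eta}} \;=\; \Otilde\!\Bpar{Mn^2(\sigma^2\vee l^2)\,\log^4 \tfrac{R}{\eta\veps}}\,,
\]
with the remaining $\log n$, $\log M$, $\log l$ factors absorbed into $\Otilde$.

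The only real subtlety is the self-consistent calibration of $h$: it must be small enough for the LSI-based contraction in Lemma~\ref{lem:ps-annealing-iter} to yield the advertised $k$, yet large enough that the proposal $\mc N(y_\tau, h_\tau I_n) \times \mc N(s-\rho h, h)$ in the backward step accepts with non-negligible probability as required by Lemma~\ref{lem:annealing-perstep}. Verifying that the stated $h$ and $N$ satisfy both constraints simultaneously, and carefully inverting the implicit inequality defining $k$, is essentially the entire content of the proof once the preceding lemmas are invoked.
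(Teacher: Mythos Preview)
Your proposal is correct and follows essentially the same approach as the paper's proof: invoke Lemma~\ref{lem:ps-annealing-iter} (which already absorbs Lemma~\ref{lem:lsi-annealing}) to obtain the iteration count $k$, resolve the self-referential dependence on $k$ as in Remark~\ref{rem:self-dependence}, then apply Lemma~\ref{lem:annealing-perstep} with a union bound over the $k$ iterations to get the failure probability and total expected query count. The paper's proof is simply a terser statement of exactly this assembly.
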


\begin{proof}
By Lemma~\ref{lem:ps-annealing-iter}, $\psann$ with given $h$
should iterate $k=\Otilde(n^{2}(\sigma^{2}\vee l^{2})\log^{2}\frac{MR}{\eta\veps})$
times to achieve $\veps$-distance in $\eu R_{\infty}$. Under the
choice of $h$ and $N$ as claimed, each iteration succeeds with probability
$1-\nicefrac{\eta}{k}$ and uses $\Otilde(M\log^{2}\nicefrac{k}{\eta})$
queries in expectation by Lemma~\ref{lem:annealing-perstep}. Therefore,
throughout $k$ iterations, the algorithm succeeds with probability
$1-\eta$, using $\Otilde(Mn^{2}(\sigma^{2}\vee l^{2})\log^{4}\frac{R}{\eta\veps})$
queries.
\end{proof}

\subsection{R\'enyi-infinity sampling for logconcave distributions}

Now that we have a sampler for the annealing distribution with $\eu R_{\infty}$-guarantee,
we analyze a query complexity of each phase in $\tgc$.

\paragraph{Initialization.}

In the initialization where $\sigma^{2}=1/n$ and $\rho=1/l$, we
run $\psann$ with initial distribution $\mu_{0}=\exp(-\frac{n}{2}\,\norm x^{2})|_{\bar{\K}}$
and target $\mu_{1}\propto\exp(-\frac{n}{2}\,\norm x^{2}-t/l)|_{\bar{\K}}$.
Recall that $l=\log\frac{2e}{\veps}$. 
\begin{lem}
[Initialization] \label{lem:init} Rejection sampling for the initial
$\mu_{0}\propto\exp(-\frac{n}{2}\,\norm x^{2})|_{\bar{\K}}$ takes
$\O(l)$ queries in expectation. Given $\eta\in(0,1)$, $\psann$\emph{
}with\emph{ }initial $\mu_{0}$, target $\mu_{1}\propto\exp(-\frac{n}{2}\,\norm x^{2}-t/l)|_{\bar{\K}}$,
$h^{-1}=\widetilde{\Theta}(n^{2}\log\frac{k}{\eta})$, $N=\Thetilde(k/\eta)$,
and $k=\Thetilde(n^{2}l^{2}\log^{4}\frac{MR}{\eta})$ iterates $k$
times with probability at least $1-\eta$, returning a sample with
law $\bar{\mu}_{1}$ satisfying 
\[
\esssup_{\mu_{1}}\,\bigl|\frac{\D\bar{\mu}_{1}}{\D\mu_{1}}-1\bigr|\leq1\,,
\]
where the expected number of evaluation queries used is $\Otilde(n^{2}l^{2}\log^{4}\frac{R}{\eta})$.
\end{lem}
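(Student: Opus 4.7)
The plan is to handle the two claims of the lemma separately: the first reduces to a direct acceptance-probability calculation for the rejection sampler generating $\mu_0$, and the second reduces to a direct application of Theorem~\ref{thm:annealing-query-comp} once warmness of $\mu_0$ with respect to $\mu_1$ is verified.

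For the rejection sampler, I would note that the proposal $q := \mc{N}(x_0, n^{-1} I_n) \otimes \Unif\bigl((V(x_0)/n - 10,\,V(x_0)/n + 5 + 13l]\bigr)$ matches $\mu_0$ on the $x$-marginal up to normalization, so the acceptance event is exactly $\{(X,T) \in \bar\K\}$. Because $B_1(x_0) \subset \msf L_{\pi^X, g}$ (so $V(x_0) - V_0 \leq 10n$), convexity of $V$ gives $V(X) \leq V(x_0) + 20n$ on $\{\norm{X - x_0} \leq 2\}$, an event of constant probability under $\mc{N}(x_0, n^{-1} I_n)$ by a standard $\chi^2$-tail estimate. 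Conditional on this, $T \geq V(X)/n$ holds under the uniform draw with probability $\Omega(1)$ whenever $l = \log(6/\veps) \geq \log 6$, and $\norm{X - x_0} \leq Rl$ holds with overwhelming probability. Hence the acceptance rate is $\Omega(1)$, which is certainly $\Omega(1/l)$, so $O(l)$ expected trials suffice.

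For the $\psann$ claim I would first check that $\mu_0$ is $O(1)$-warm with respect to $\mu_1$. After the translation placing $x_0 = 0$ and $t \in [-21, 13l - 6]$ on $\bar\K$, the density ratio is
\[
\frac{\mu_0(x,t)}{\mu_1(x,t)} = e^{t/l} \cdot \frac{\int_{\bar\K} \exp(-\tfrac{n}{2}\norm{x}^2 - s/l)\,\D x\D s}{\int_{\bar\K} \exp(-\tfrac{n}{2}\norm{x}^2)\,\D x\D s}.
\]
For $l \geq 1$, both $t/l$ and $s/l$ lie in $[-21/l,\,13 - 6/l] \subset [-21, 13]$, so the pointwise exponential factor and the normalizer ratio are each bounded by universal constants, giving warmness $M = O(1)$ (explicitly $\leq e^{28}$). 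I would then invoke Theorem~\ref{thm:annealing-query-comp} with $\sigma^2 = 1/n$, $\rho = 1/l \in (0, n]$, warmness $M$, target $\eu R_\infty$-accuracy $\log 2$, and failure probability $\eta$. Using $\sigma^2 \vee l^2 = l^2$, this directly supplies the stated $h^{-1} = \widetilde{\Theta}(n^2 \log(k/\eta))$, $N = \widetilde{\Theta}(k/\eta)$, iteration count $k = \widetilde{\Theta}(n^2 l^2 \log^2(R/\eta))$, total expected query cost $\widetilde{O}(n^2 l^2 \log^4(R/\eta))$, and an output $\bar\mu_1$ with $\eu R_\infty(\bar\mu_1 \mmid \mu_1) \leq \log 2$. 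Since this bound means $\D\bar\mu_1/\D\mu_1 \in [0, 2]$ essentially, the conclusion $\esssup_{\mu_1} \abs{\D\bar\mu_1/\D\mu_1 - 1} \leq 1$ follows at once.

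The only place needing genuine care is the warmness bookkeeping—confirming that the $V(x_0)/n$-dependence cancels between the pointwise $e^{t/l}$ factor and the ratio of normalizers, so that the warmness constant is truly universal rather than absorbing hidden dependence on $l$, $n$, or $R$. Beyond that, no step presents a serious obstacle: the annealing schedule of Algorithm~\ref{alg:tgc} was designed precisely so that consecutive distributions are $O(1)$-warm and Theorem~\ref{thm:annealing-query-comp} applies off the shelf.
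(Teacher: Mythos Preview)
Your approach matches the paper's: bound the rejection-sampler acceptance probability via a sub-region of $\bar{\K}$, then verify $O(1)$-warmness of $\mu_0$ with respect to $\mu_1$ and invoke Theorem~\ref{thm:annealing-query-comp}. The warmness computation and the final invocation are correct and essentially identical to the paper's.

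There is one slip in the rejection-sampling step: the assertion that ``convexity of $V$ gives $V(X)\le V(x_0)+20n$ on $\{\|X-x_0\|\le 2\}$'' is false. Convexity, together with an upper bound on a ball, yields \emph{lower} bounds outside the ball, not upper bounds; take for instance $V(x)=c\max(0,\|x-x_0\|-1)$ with $c$ arbitrarily large, which is convex with $V\equiv 0$ on $B_1(x_0)$ yet unbounded on $B_2(x_0)\setminus B_1(x_0)$. The fix is immediate: restrict to $\{\|X-x_0\|\le 1\}$ instead, where $V(X)\le V_0+10n$ holds directly by the ground-set inclusion $B_1(x_0)\subset\msf L_{\pi^X,g}$, and $\P_{\mc N(x_0,n^{-1}I_n)}(\|X-x_0\|\le 1)$ is still a universal constant. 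The paper takes exactly this route, lower-bounding $\int_{\bar{\K}}\exp(-\tfrac n2\|x\|^2)$ by the integral over $B_1(0)$ times a unit $t$-interval; your probabilistic phrasing is equivalent once the radius is corrected.
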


\begin{proof}
In the initialization through rejection sampling, the expected number
of trials until the first acceptance is bounded by 
\begin{align*}
\frac{\int_{\R^{n}}\exp(-\frac{n}{2}\,\norm x^{2})\,\D x\cdot(15+13l)}{\int_{\bar{\K}}\exp(-\frac{n}{2}\,\norm x^{2})\,\D x\D t} & \lesssim\frac{l\int_{\Rn}\exp(-\frac{n}{2}\norm x^{2})\,\D x}{\int_{B_{1}(0)}\exp(-\frac{n}{2}\,\norm x^{2})\,\D x}=l\,\bigl(\P_{\mc N(0,n^{-1}I_{n})}(\norm X\leq1)\bigr)^{-1}\\
 & \underset{(i)}{\leq}l\,(1-e^{-c\sqrt{n}})^{-1}=\O(l)\,,
\end{align*}
where $(i)$ follows from Paouris' lemma (Lemma~\ref{lem:version-Paouris})
for some universal constant $c$.

The warmness of $\mu_{0}$ with respect to $\mu_{1}$ is bounded as
\begin{align*}
\frac{\D\mu_{0}}{\D\mu_{1}} & =\sup_{(x,t)\in\bar{\K}}\frac{\exp(-\frac{n}{2}\,\norm x^{2})}{\int_{\bar{\K}}\exp(-\frac{n}{2}\,\norm x^{2})\,\D x\D t}\frac{\int_{\bar{\K}}\exp(-\frac{n}{2}\norm x^{2}-t/l)\,\D x\D t}{\exp(-\frac{n}{2}\norm x^{2}-t/l)}\\
 & \lesssim\sup_{\bar{\K}}\exp(t/l)\,\frac{\int_{\bar{\K}}\exp(-\frac{n}{2}\norm x^{2}-t/l)\,\D x\D t}{\int_{\bar{\K}}\exp(-\frac{n}{2}\,\norm x^{2})\,\D x}\lesssim1\,.
\end{align*}
Thus, $\psann$ with initial $\mu_{0}$ and target $\mu_{1}$ can
be run with target error of $\log2$ in the $\eu R_{\infty}$-divergence,
under a suitable choice of parameters given in Theorem~\ref{thm:annealing-query-comp}.
\end{proof}

\paragraph{Phase I.}

In the first phase ($\sigma^{2}\in[\frac{1}{n},1]$ and $\rho\in[\frac{1}{l},n]$),
these parameters are updated according to 
\[
\sigma_{i+1}^{2}=\sigma_{i}^{2}\bpar{1+\frac{1}{n}}\quad\&\quad\rho_{i+1}=\rho_{i}\bpar{1+\frac{1}{ln}}\,.
\]
Note that $\psann$ is initialized at $\bar{\mu}_{i}$ (not $\mu_{i}$)
for the target $\mu_{i+1}$.
\begin{lem}
[Phase I] \label{lem:tgc-phase1} Given $\eta\in(0,1)$, Phase I
of $\tgc$ started at $\bar{\mu}_{1}$ succeeds with probability at
least $1-\eta$, returning a sample with law $\bar{\mu}_{\msf I}$
satisfying 
\[
\esssup_{\bar{\K}}\,\bigl|\frac{\D\bar{\mu}_{\msf I}}{\D\mu_{\msf I}}-1\bigr|\leq1\quad\text{for }\mu_{\msf I}\propto\exp(-\frac{1}{2}\,\norm x^{2}-nt)|_{\bar{\K}}\,,
\]
using $\psann$ with suitable parameters. Phase I uses $\Otilde(n^{3}l^{3}\log^{5}\frac{R}{\eta})$
evaluation queries in expectation. 
\end{lem}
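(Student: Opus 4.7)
The plan is to track the sequence $\{\bar{\mu}_i\}$ of approximate samples produced by $\psann$ across the Phase~I inner iterations, and to show inductively that each satisfies $\esssup_{\bar{\K}}\abs{\D\bar{\mu}_i/\D\mu_i - 1}\leq1$, i.e., $\eu R_{\infty}(\bar{\mu}_i\mmid\mu_i)\leq\log 2$. Phase~I comprises $m_{\msf I} = \Otilde(nl)$ inner phases: the $\rho$-update $\rho\gets\rho(1+1/(ln))$ carrying $\rho$ from $1/l$ up to $n$ is the slower of the two updates and so determines the iteration count, while $\sigma^2\gets\sigma^2(1+1/n)$ brings $\sigma^2$ to $1$ after $\Otilde(n)$ steps.

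The central step is the per-step warmness $\eu R_{\infty}(\mu_i\mmid\mu_{i+1}) = \O(1)$ for consecutive annealing distributions. The pointwise density ratio
\[
\frac{\mu_i(x,t)}{\mu_{i+1}(x,t)} = \frac{Z_{i+1}}{Z_i}\,\exp\bpar{-\frac{\norm x^2}{2(n+1)\sigma_i^2} + (\rho_{i+1} - \rho_i)\,t}
\]
has a Gaussian factor at most $1$ (since $\sigma_{i+1}^2>\sigma_i^2$), while the tilt factor is bounded by $e^{13}$ because $\rho_{i+1}-\rho_i = \rho_i/(ln)\leq1/l$ and $|t|\leq 13l$ on $\bar{\K}$. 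The hard part is the normalization ratio $Z_{i+1}/Z_i$; this is controlled by a $\gc$-style \Holder argument on the Gaussian part, together with the bounded-perturbation principle for the $\rho$-update whose contribution to the integrand shifts by $\O(1)$ per step on the bounded slab $\bar{\K}$. Combining these yields $Z_{i+1}/Z_i = \O(1)$, exactly as in the uniform $\gc$ analysis.

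With the per-step warmness in hand, the inductive step proceeds by combining $\eu R_{\infty}(\bar{\mu}_i\mmid\mu_i)\leq\log 2$ with $\eu R_{\infty}(\mu_i\mmid\mu_{i+1}) = \O(1)$ to conclude that $\bar{\mu}_i$ is $\O(1)$-warm with respect to $\mu_{i+1}$. Theorem~\ref{thm:annealing-query-comp}, applied with $M=\O(1)$, target accuracy $\log 2$ in $\eu R_\infty$, and per-call failure probability $\eta/m_{\msf I}$, then produces $\bar{\mu}_{i+1}$ satisfying $\eu R_{\infty}(\bar{\mu}_{i+1}\mmid\mu_{i+1})\leq\log 2$ using $\Otilde(n^2 l^2\log^4(m_{\msf I} R/\eta))$ evaluation queries in expectation (throughout Phase~I we have $\sigma_i^2\leq 1\leq l^2$, so $\sigma^2\vee l^2 = l^2$). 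A union bound over the $m_{\msf I}$ calls yields overall success probability at least $1-\eta$.

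Multiplying the per-iteration cost by $m_{\msf I}=\Otilde(nl)$ delivers the claimed total $\Otilde(n^3 l^3\log^5(R/\eta))$, and the terminal iterate $\bar{\mu}_{\msf I}$ satisfies the required $\eu R_\infty$-closeness to $\mu_{\msf I}\propto\exp(-\norm x^2/2 - nt)|_{\bar{\K}}$. The one non-routine step is the two-parameter normalization bound $Z_{i+1}/Z_i = \O(1)$ that handles the Gaussian scaling and exponential tilt simultaneously; everything else is direct propagation of the R\'enyi-infinity guarantee via Theorem~\ref{thm:annealing-query-comp} combined with a union bound.
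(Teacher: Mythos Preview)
Your proposal is correct and follows essentially the same structure as the paper's proof: count $\Otilde(nl)$ inner phases, establish $\O(1)$-warmness of $\mu_i$ with respect to $\mu_{i+1}$, propagate the $\eu R_\infty$ guarantee inductively via Theorem~\ref{thm:annealing-query-comp} with failure probability $\eta/m_{\msf I}$, and union-bound. One terminological point: the bound on the Gaussian part of $Z_{i+1}/Z_i$ is not a \Holder argument but a change of variables $x\mapsto(1+\tfrac1n)^{-1/2}x$, which gives $\int_{\bar{\K}}e^{-\|x\|^2/(2\sigma_{i+1}^2)-\rho_i(t-t_0)}\le(1+\tfrac1n)^{n/2}\int_{T\bar{\K}}e^{-\|x\|^2/(2\sigma_i^2)-\rho_i(t-t_0)}\le\sqrt{e}\int_{\bar{\K}}(\cdots)$ using $T\bar{\K}\subset\bar{\K}$ (each $t$-slice of $\bar{\K}$ is convex and contains the origin in $x$ after the translation).
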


\begin{proof}
Since it takes at most $ln$ many inner phases to double $\sigma^{2}$
and $\rho$, the total number of inner phases within Phase I is $m=\O(nl\log nl)=\Otilde(nl)$.

The warmness between consecutive annealing distributions is bounded
as
\begin{align*}
\frac{\D\mu_{i}}{\D\mu_{i+1}} & \leq\sup_{(x,t)\in\bar{\K}}\frac{\exp(-\frac{1}{2\sigma_{i}^{2}}\,\norm x^{2}-\rho_{i}t)}{\exp(-\frac{1}{2\sigma_{i+1}^{2}}\,\norm x^{2}-\rho_{i+1}t)}\frac{\int_{\bar{\K}}\exp(-\frac{1}{2\sigma_{i+1}^{2}}\,\norm x^{2}-\rho_{i+1}t)\,\D x\D t}{\int_{\bar{\K}}\exp(-\frac{1}{2\sigma_{i}^{2}}\,\norm x^{2}-\rho_{i}t)\,\D x\D t}\\
 & \leq\sup_{(x,t)\in\bar{\K}}\exp\bpar{\frac{\rho_{i}\,(t-t_{0})}{ln}}\frac{\int_{\bar{\K}}\exp\bigl(-\frac{1}{2\sigma_{i+1}^{2}}\,\norm x^{2}-\rho_{i}\,(t-t_{0})\bigr)\,\D x\D t}{\int_{\bar{\K}}\exp\bigl(-\frac{1}{2\sigma_{i}^{2}}\,\norm x^{2}-\rho_{i}\,(t-t_{0})\bigr)\,\D x\D t}\\
 & \underset{(i)}{\lesssim}\frac{(1+\frac{1}{n})^{n/2}\int_{T\bar{\K}}\exp\bigl(-\frac{1}{2\sigma_{i}^{2}}\,\norm x^{2}-\rho_{i}\,(t-t_{0})\bigr)\,\D x\D t}{\int_{\bar{\K}}\exp\bigl(-\frac{1}{2\sigma_{i}^{2}}\,\norm x^{2}-\rho_{i}\,(t-t_{0})\bigr)\,\D x\D t}\leq\sqrt{e}\,,
\end{align*}
where in $(i)$ we used $\rho_{i}\leq n$ and $t-t_{0}\lesssim l$,
and the change of variables via $T:(x,t)\mapsto((1+\frac{1}{n})^{-1/2}x,t)$. 

Due to the design of the algorithm, $\psann$ always ensures that
$\esssup_{\mu_{i}}|\frac{\D\bar{\mu}_{i}}{\D\mu_{i}}-1|\leq1$, and
thus $\frac{\D\bar{\mu}_{i}}{\D\mu_{i+1}}\leq2\sqrt{e}$. Hence, for
each inner phase, $\psann$ with target accuracy $\log2$ in $\eu R_{\infty}$,
target failure probability $\eta/m$, warmness $M=2\sqrt{e}$, and
suitable parameters succeeds with probability $1-\eta/m$, returning
a sample with law $\bar{\mu}_{i+1}$ such that
\[
\esssup_{\bar{\K}}\,\bigl|\frac{\D\bar{\mu}_{i+1}}{\D\mu_{i+1}}-1\bigr|\leq1\,,
\]
and using $\Otilde(n^{2}l^{2}\log^{4}\frac{R}{\eta})$ many evaluation
queries in expectation. Repeating this argument $m$ many times, we
complete the proof.
\end{proof}

\paragraph{Phase II.}

In this second phase ($\sigma^{2}\in[1,l^{2}R^{2}]$), only $\sigma^{2}$
is updated according to
\[
\sigma_{i+1}^{2}=\sigma_{i}^{2}\bpar{1+\frac{\sigma_{i}^{2}}{l^{2}R^{2}}}\,.
\]

\begin{lem}
[Phase II] \label{lem:tgc-phase2} Given $\eta\in(0,1)$, Phase II
of $\tgc$ started at $\bar{\mu}_{\msf I}$ succeeds with probability
at least $1-\eta$, returning a sample with law $\bar{\mu}_{\msf{II}}$
satisfying 
\[
\esssup_{\bar{\K}}\,\bigl|\frac{\D\bar{\mu}_{\msf{II}}}{\D\mu_{\msf{II}}}-1\bigr|\leq1\quad\text{for }\mu_{\msf{II}}\propto\exp(-\frac{1}{2l^{2}R^{2}}\,\norm x^{2}-nt)|_{\bar{\K}}\,,
\]
using $\psann$ with suitable parameters. Phase II uses $\Otilde(n^{2}R^{2}l^{4}\log^{4}\frac{1}{\eta})$
 queries in expectation. 
\end{lem}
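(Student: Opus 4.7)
The plan is to mirror the Phase~I argument (Lemma~\ref{lem:tgc-phase1}), adjusting for the Phase~II schedule in which only $\sigma^{2}$ is updated while $\rho=n$ is held fixed. First I would count the inner phases: since the update factor is $1+\sigma_{i}^{2}/(l^{2}R^{2})$, within a doubling block $\sigma^{2}\in[s,2s]$ there are $O(l^{2}R^{2}/s)$ steps, and summing over the $O(\log(l^{2}R^{2}))$ doubling blocks from $\sigma^{2}=1$ to $\sigma^{2}=l^{2}R^{2}$ yields at most $m_{\msf{II}}=\Otilde(l^{2}R^{2})$ inner phases.

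Next I would bound the $\eu R_{\infty}$-warmness between consecutive targets $\mu_{i}(x,t)\propto\exp(-\|x\|^{2}/(2\sigma_{i}^{2})-nt)|_{\bar{\K}}$. Since only $\sigma^{2}$ changes, the density ratio decomposes as
\[
\frac{\D\mu_{i}}{\D\mu_{i+1}}(x,t)=\exp\bigl(-\tfrac{1}{2}\,\alpha_{i}\,\|x\|^{2}\bigr)\cdot\frac{Z_{i+1}}{Z_{i}}\,,\qquad\alpha_{i}:=\frac{1}{\sigma_{i}^{2}}-\frac{1}{\sigma_{i+1}^{2}}=\frac{\sigma_{i}^{2}}{\sigma_{i+1}^{2}\,l^{2}R^{2}}\leq\frac{1}{l^{2}R^{2}}\,.
\]
The first factor is at most $1$, while $\|x\|\leq Rl$ over $\bar{\K}$ (after translating $x_{0}$ to the origin) yields
\[
\frac{Z_{i+1}}{Z_{i}}=\E_{\mu_{i}}\bigl[\exp(\alpha_{i}\|x\|^{2}/2)\bigr]\leq\exp\Bpar{\frac{R^{2}l^{2}}{2\,l^{2}R^{2}}}=\sqrt{e}\,.
\]
Hence $\sup_{\bar{\K}}\D\mu_{i}/\D\mu_{i+1}\leq\sqrt{e}$, and the invariant $\esssup_{\bar{\K}}|\D\bar{\mu}_{i}/\D\mu_{i}-1|\leq1$ maintained by the algorithm promotes this to $\D\bar{\mu}_{i}/\D\mu_{i+1}\leq M$ for a universal constant $M$.

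With $O(1)$-warmness in hand, I would invoke Theorem~\ref{thm:annealing-query-comp} for each inner phase with target accuracy $\log2$ in $\eu R_{\infty}$ and failure probability $\eta/m_{\msf{II}}$; a union bound then yields overall success probability at least $1-\eta$ while preserving the $L^{\infty}$-invariant into $\bar{\mu}_{\msf{II}}$. Theorem~\ref{thm:annealing-query-comp} gives expected per-phase cost $\Otilde(n^{2}(\sigma_{i}^{2}\vee l^{2})\,\polylog(R/\eta))$. Splitting the doubling structure into $s\leq l^{2}$ (where each phase costs $\Otilde(n^{2}l^{2})$ and the number of phases within this regime is $\Otilde(l^{2}R^{2})$, giving $\Otilde(n^{2}l^{4}R^{2})$) and $s\geq l^{2}$ (where each doubling block contributes $\Otilde(n^{2}\cdot s\cdot l^{2}R^{2}/s)=\Otilde(n^{2}l^{2}R^{2})$ and there are $\Otilde(\log R)$ such blocks), the total comes out to $\Otilde(n^{2}R^{2}l^{4}\log^{4}(1/\eta))$ queries, as claimed.

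The main obstacle is the careful bookkeeping of these two regimes in the per-phase cost and the union-bound over the non-uniform inner-phase schedule. The warmness step itself is cleaner than in Phase~I: because $\rho$ is frozen and the extra Gaussian factor $\exp(-\alpha_{i}\|x\|^{2}/2)$ is maximized at the origin, the change-of-variables trick used for the $\rho$-tilt in Phase~I is replaced by the direct truncation bound $\|x\|\leq Rl$, which is exactly what makes the second-moment estimate $\E_{\mu_{i}}[\exp(\alpha_{i}\|x\|^{2}/2)]\leq\sqrt{e}$ tight given the Phase~II schedule.
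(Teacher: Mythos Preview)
Your proposal is correct and follows essentially the same approach as the paper's proof. The warmness bound (first factor $\leq 1$, normalization ratio $\leq\sqrt{e}$ via $\|x\|\leq Rl$ on $\bar{\K}$), the $O(1)$-warm invariant giving $\D\bar{\mu}_{i}/\D\mu_{i+1}\leq 2\sqrt{e}$, the per-phase invocation of Theorem~\ref{thm:annealing-query-comp} with failure probability $\eta/m_{\msf{II}}$, and the doubling-block complexity accounting all match; your regime split $s\lessgtr l^{2}$ is just a slightly more explicit version of the paper's one-line bound $\tfrac{l^{2}R^{2}}{\sigma^{2}}\cdot(\sigma^{2}\vee l^{2})\leq l^{4}R^{2}$ per doubling block.
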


\begin{proof}
We note that for given $\sigma^{2}$, it takes at most $l^{2}R^{2}/\sigma^{2}$
inner phases to double $\sigma^{2}$, so the total number of inner
phases within Phase II is $m=\Otilde(l^{2}R^{2})$.

The warmness between consecutive annealing distributions is bounded
as
\begin{align*}
\frac{\D\mu_{i}}{\D\mu_{i+1}} & \leq\sup_{(x,t)\in\bar{\K}}\frac{\exp(-\frac{1}{2\sigma_{i}^{2}}\,\norm x^{2}-nt)}{\exp(-\frac{1}{2\sigma_{i+1}^{2}}\,\norm x^{2}-nt)}\frac{\int_{\bar{\K}}\exp(-\frac{1}{2\sigma_{i+1}^{2}}\,\norm x^{2}-nt)\,\D x\D t}{\int_{\bar{\K}}\exp(-\frac{1}{2\sigma_{i}^{2}}\,\norm x^{2}-nt)\,\D x\D t}\\
 & \leq\frac{\int_{\bar{\K}}\exp(-\frac{1}{2\sigma_{i+1}^{2}}\,\norm x^{2}-nt)\,\D x\D t}{\int_{\bar{\K}}\exp(-\frac{1}{2\sigma_{i}^{2}}\,\norm x^{2}-nt)\,\D x\D t}\leq\sqrt{e}\,,
\end{align*}
where the last line holds, since $\norm x^{2}\leq l^{2}R^{2}$ on
$\bar{\K}$, 
\[
\exp\bpar{-\frac{1}{2\sigma_{i+1}^{2}}\norm x^{2}}=\exp\bpar{-\frac{1}{2\sigma_{i}^{2}}\norm x^{2}}\exp\bpar{\frac{1}{2(l^{2}R^{2}+\sigma_{i}^{2})}\norm x^{2}}\le\sqrt{e}\exp\bpar{-\frac{1}{2\sigma_{i}^{2}}\norm x^{2}}\,.
\]
As in the analysis for Phase I, this implies that due to the design
of the algorithm,
\[
\frac{\D\bar{\mu}_{i}}{\D\mu_{i+1}}\leq2\sqrt{e}\,.
\]
Hence, for each inner phase, $\psann$ with target accuracy $\log2$
in $\eu R_{\infty}$, target failure probability $\eta/m$, warmness
$M=2\sqrt{e}$, and suitable parameters succeeds with probability
$1-\eta/m$, returning a sample with law $\bar{\mu}_{i+1}$ such that
\[
\esssup_{\bar{\K}}\,\bigl|\frac{\D\bar{\mu}_{i+1}}{\D\mu_{i+1}}-1\bigr|\leq1\,,
\]
and using $\Otilde(n^{2}(\sigma^{2}\vee l^{2})\log^{4}\frac{R}{\eta})$
many evaluation queries in expectation. Each doubling part requires
\[
\Otilde\bpar{\frac{l^{2}R^{2}}{\sigma^{2}}\times n^{2}(\sigma^{2}\vee l^{2})\log^{4}\frac{R}{\eta}}=\Otilde\bpar{n^{2}R^{2}l^{4}\log^{4}\frac{1}{\eta}}
\]
many queries in expectation. Therefore, the claim follows from multiplying
this by $\log_{2}(lR)$.
\end{proof}

\paragraph{Termination.}

In the last phase, $\psexp$ is run with initial distribution $\bar{\mu}_{\msf{II}}$
and target $\bar{\pi}\propto e^{-nt}|_{\bar{\K}}$.
\begin{lem}
[Termination] Given $\eta,\veps\in(0,1)$, $\psexp$\emph{ with }initial
distribution $\bar{\mu}_{\msf{II}}$, $h^{-1}=\widetilde{\Theta}(n^{2}\log\frac{k}{\eta})$,
$N=\Thetilde(k/\eta)$, and $k=\Thetilde(n^{2}R^{2}l^{2}\log^{4}\frac{M}{\eta\veps})$
iterates $k$ times with probability at least $1-\eta$, returning
a sample with law $\nu$ satisfying 
\[
\esssup_{\bar{\pi}}\,\bigl|\frac{\D\nu}{\D\bar{\pi}}-1\bigr|\leq\veps\,,
\]
where the expected number of evaluation queries is $\Otilde(n^{2}R^{2}l^{2}\log^{4}\frac{1}{\eta\veps})$.
\end{lem}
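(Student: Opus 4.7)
The plan is to apply the $\eu R_{\infty}$-mixing bound of $\psexp$ on the bounded support $\bar{\K}$ (Lemma~\ref{lem:ps-exp-Rinf}) together with the per-iteration rejection-sampling analysis (Lemma~\ref{lem:exp-perstep}), after first verifying that $\bar{\mu}_{\msf{II}}$ is an $\O(1)$-warm start for the target $\bar{\pi}\propto e^{-nt}|_{\bar{\K}}$.

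For warmness, the density ratio $\D\mu_{\msf{II}}/\D\bar{\pi}$ is proportional to $\exp(-\tfrac{1}{2l^{2}R^{2}}\,\norm{x}^{2})$, and since $\norm{x}\leq Rl$ on $\bar{\K}$, this exponent lies in $[-\tfrac{1}{2},0]$. Hence $\mu_{\msf{II}}$ is $\sqrt{e}$-warm with respect to $\bar{\pi}$, and since Phase II (Lemma~\ref{lem:tgc-phase2}) delivers $\bar{\mu}_{\msf{II}}$ with $\esssup_{\bar{\K}}|\D\bar{\mu}_{\msf{II}}/\D\mu_{\msf{II}}-1|\leq1$, the approximate iterate $\bar{\mu}_{\msf{II}}$ is $M$-warm for $\bar{\pi}$ with $M=2\sqrt{e}=\O(1)$. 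With $M=\O(1)$, Lemma~\ref{lem:ps-exp-Rinf} applied at the prescribed $h^{-1}=\Thetilde(n^{2}\log(k/\eta))$ and target accuracy $\veps$ guarantees $\esssup_{\bar{\pi}}|\D\nu/\D\bar{\pi}-1|\leq\veps$ (through the two-sided conclusion of Lemma~\ref{lem:boosting}) after $k=\Thetilde(h^{-1}R^{2}l^{2}\log\frac{M}{\eta\veps})=\Thetilde(n^{2}R^{2}l^{2}\log^{4}\frac{1}{\eta\veps})$ iterations, where the self-dependence between $k$ and $h$ is resolved exactly as in Remark~\ref{rem:self-dependence}.

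For the per-step cost, with the stated $N=\Thetilde(k/\eta)$, Lemma~\ref{lem:exp-perstep} bounds each iteration's rejection-sampling failure probability by $\eta/k$ and its expected query cost by $\Otilde(M\log^{2}(k/\eta))=\Otilde(\log^{2}(k/\eta))$. A union bound over the $k$ iterations yields overall failure probability at most $\eta$ and total expected query count $k\cdot\Otilde(\log^{2}(k/\eta))=\Otilde(n^{2}R^{2}l^{2}\log^{4}\frac{1}{\eta\veps})$, matching the claim. The one place that merits care is the implicit use of $\clsi(\bar{\pi})=\O(R^{2}l^{2})$ inside Lemma~\ref{lem:ps-exp-Rinf}: this follows from the bounded diameter $\lesssim Rl$ of $\bar{\K}$ via Bakry--\'Emery combined with Holley--Stroock (as in Lemma~\ref{lem:lsi-annealing}), and the boosting scheme of Lemma~\ref{lem:boosting} then converts the uniform $\tv$-ergodicity into the desired two-sided $L^{\infty}$ bound on the density with only a doubly-logarithmic overhead in $M$.
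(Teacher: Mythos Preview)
Your proof is correct and follows essentially the same approach as the paper: establish $\O(1)$-warmness of $\bar{\mu}_{\msf{II}}$ with respect to $\bar{\pi}$ via the bound $\norm{x}^{2}\leq l^{2}R^{2}$ on $\bar{\K}$, then invoke Lemma~\ref{lem:ps-exp-Rinf} for the $\eu R_{\infty}$-mixing and Lemma~\ref{lem:exp-perstep} for the per-iteration failure and query cost. Your version is simply more explicit than the paper's (which compresses the per-step analysis into the final query count), and your justification of $\clsi(\bar{\pi})=\O(R^{2}l^{2})$ via a bounded quadratic perturbation plus Holley--Stroock is a legitimate way to unpack what the paper states tersely as ``$\bar{\pi}$ has bounded support, so $\clsi(\bar{\pi})=\O(D^{2})$''.
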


\begin{proof}
As $\norm x^{2}\leq l^{2}R^{2}$ on $\bar{\K}$, the warmness between
$\bar{\mu}_{\msf{II}}$ and $\bar{\pi}$ is bounded as
\[
\frac{\D\mu_{\msf{II}}}{\D\bar{\pi}}\leq\sup_{(x,t)\in\bar{\K}}\frac{\exp(-\frac{1}{2l^{2}R^{2}}\,\norm x^{2}-nt)}{\exp(-nt)}\frac{\int_{\bar{\K}}\exp(-nt)\,\D x\D t}{\int_{\bar{\K}}\exp(-\frac{1}{2l^{2}R^{2}}\,\norm x^{2}-nt)\,\D x\D t}\leq\sqrt{e}\,,
\]
and thus $\frac{\D\bar{\mu}_{\msf{II}}}{\D\bar{\pi}}\leq2\sqrt{e}$.
Therefore, by Lemma~\ref{lem:ps-exp-Rinf}, $\psexp$ with initial
$\bar{\mu}_{\msf{II}}$ and suitable parameters above outputs a sample
with law $\nu$ such that 
\[
\esssup_{\bar{\K}}\,\bigl|\frac{\D\nu}{\D\bar{\pi}}-1\bigr|\leq\veps\,,
\]
using $\Otilde(n^{2}R^{2}l^{2}\log^{4}\frac{1}{\eta\veps})$ evaluation
queries in expectation.
\end{proof}

\paragraph{Final guarantee.}

As claimed in Theorem~\ref{thm:lc-warmstart-intro}, we prove that
$\tgc$ provides an $\O(1)$-warm start for $\pi(x,t)\propto\exp(-nt)|_{\K}$,
using $\Otilde(n^{2}(R^{2}\vee n))$ evaluation queries in expectation.
\begin{thm}
[Warm-start generation] \label{thm:warm-start} In the setting of
Problem~\ref{prob:warm-start-generation}, consider $\pi\propto\exp(-nt)|_{\K}$
given in \eqref{eq:exp-reduction}. For given $\eta,\veps\in(0,1)$,
$\tgc$ succeeds with probability at least $1-\eta$, returning a
sample with law $\nu$ such that $\eu R_{\infty}(\nu\mmid\pi)\leq\veps$,
using $\Otilde(n^{2}(R^{2}\vee n)\log^{8}\nicefrac{1}{\eta\veps})$
evaluation queries in expectation. In particular, if $\pi^{X}$ is
well-rounded (i.e., $R^{2}=\O(n)$), then we need $\Otilde(n^{3}\log^{8}\nicefrac{1}{\eta\veps})$
queries in expectation.
\end{thm}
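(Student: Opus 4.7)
The plan is to chain the four stages of $\tgc$ --- initialization, Phase I, Phase II, and termination --- propagating $\eu R_\infty$-warmness across them and then converting closeness to the truncated target $\bar\pi$ into closeness to the original target $\pi$. Crucially, the annealing schedule was designed so that consecutive idealized distributions $\mu_i,\mu_{i+1}$ are already $\O(1)$-warm (as verified in each phase lemma), so the warm-start hypothesis of $\psann$ and $\psexp$ is satisfied throughout, and the argument reduces to careful bookkeeping of failure probabilities, query counts, and accuracy budgets.

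First I would invoke Lemma~\ref{lem:init} to produce $\bar\mu_1$ with $\esssup_{\mu_1}|\D\bar\mu_1/\D\mu_1 - 1|\leq 1$ in $\Otilde(n^2 l^2\log^4(R/\eta))$ queries. Lemma~\ref{lem:tgc-phase1} then carries us from $\bar\mu_1$ to $\bar\mu_{\msf I}$ close to $\mu_{\msf I}\propto\exp(-\tfrac{1}{2}\norm x^2 - nt)|_{\bar\K}$ in $\Otilde(n^3 l^3\log^5(R/\eta))$ queries, and Lemma~\ref{lem:tgc-phase2} advances us to $\bar\mu_{\msf{II}}$ close to $\mu_{\msf{II}}\propto\exp(-\tfrac{1}{2l^2R^2}\norm x^2 - nt)|_{\bar\K}$ in $\Otilde(n^2 R^2 l^4\log^4(1/\eta))$ queries. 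A final invocation of $\psexp$ from $\bar\mu_{\msf{II}}$, at target accuracy $\veps/3$ in $\eu R_\infty$, delivers $\nu$ with $\eu R_\infty(\nu\mmid\bar\pi)\leq \veps/3$ in $\Otilde(n^2 R^2 l^2\log^4(1/\eta\veps))$ queries. With $l=\log(6/\veps)$, summing the four contributions and taking a union bound over the stages (each tuned to failure probability $\eta/4$) gives total failure at most $\eta$ and total expected query complexity $\Otilde\bpar{n^2(R^2\vee n)\log^8(1/\eta\veps)}$.

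To complete the proof I would convert the $\eu R_\infty$-guarantee against the truncated target $\bar\pi = \pi|_{\bar\K}$ into one against $\pi$. By the analysis at the start of \S\ref{sec:warm-start} with the choice $l=\log(6/\veps)$ in Algorithm~\ref{alg:tgc}, the convex truncation loses at most $\veps/3$ mass, so $\pi(\bar\K)\geq 1 - \veps/3$ and $\D\bar\pi/\D\pi \leq (1-\veps/3)^{-1}$ on $\bar\K$. Since $\nu$ is supported on $\bar\K$,
\[
\frac{\D\nu}{\D\pi} = \frac{\D\nu}{\D\bar\pi}\cdot\frac{\D\bar\pi}{\D\pi}\leq \frac{e^{\veps/3}}{1 - \veps/3}\leq e^\veps,
\]
so $\eu R_\infty(\nu\mmid\pi)\leq \veps$, as required. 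The well-rounded case $R^2 = \O(n)$ follows by substitution.

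The main subtlety is not any single phase but the joint accuracy budgeting: the truncation error, the termination accuracy, the slack $\log 2$ in $\eu R_\infty$-closeness carried through every inner phase, and the per-iteration failure probabilities of $\psann$ must all compose without breaking either the final $\eu R_\infty(\nu\mmid\pi)\leq\veps$ guarantee or the claimed complexity. This works out because Lemma~\ref{lem:ps-annealing-iter} together with the boosting Lemma~\ref{lem:boosting} make the per-phase mixing time only polylogarithmic in the inverse accuracy, so setting inner accuracies and failure probabilities to $\poly(\veps/(nR))$ contributes only polylogarithmic overhead, absorbed into the $\log^8(1/\eta\veps)$ factor.
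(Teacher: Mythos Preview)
Your proposal is correct and follows essentially the same approach as the paper: combine the four stage lemmas (initialization, Phase~I, Phase~II, termination) with a union bound on failure probabilities at level $\eta/4$, sum the query complexities with $l=\Theta(\log(1/\veps))$, and then pass from $\bar\pi$ to $\pi$ via the truncation bound. The only cosmetic difference is that the paper writes the final step as $\eu R_\infty(\nu\mmid\pi)\leq\eu R_\infty(\nu\mmid\bar\pi)+\eu R_\infty(\bar\pi\mmid\pi)$ and bounds each term separately, whereas you multiply the density ratios directly; both are the same computation.
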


\begin{proof}
Combining the four lemmas of this section with $\eta/4$ and $\veps/3$
in place of $\eta$ and $\veps$, we conclude that by the union bound,
the failure probability is at most $\eta$. Next, the choice of $l=\O(\log\frac{1}{\veps})$
leads to 
\[
\sup_{\K}\frac{\bar{\pi}}{\pi}\leq\frac{1}{\P_{\pi}(Z\in\bar{\K})}\leq1+\frac{2\veps}{3}\,.
\]
Therefore, the claim follow from 
\[
\eu R_{\infty}(\nu\mmid\pi)\leq\eu R_{\infty}(\nu\mmid\bar{\pi})+\eu R_{\infty}(\bar{\pi}\mmid\pi)\leq\frac{\veps}{3}+\log\bpar{1+\frac{2\veps}{3}}\leq\veps\,.\qedhere
\]
\end{proof}

\section{Application 1: Rounding logconcave distributions\label{sec:rounding}}

We have shown that the query complexity of generating a warm start
(in fact, sampling from $\pi^{X}$ with $\eu R_{\infty}$-guarantee)
is $\Otilde(n^{2}(R^{2}\vee n))$. However, $R$ could be arbitrarily
large so this query complexity is not actually polynomial in the problem
parameters. Hence, prior work combines sampling algorithms with a
\emph{rounding} procedure, which makes $\pi^{X}$ less skewed (e.g.,
$R^{2}=\poly n$). For instance, near-isotropic rounding makes $\cov\pi^{X}$
nearly the identity, under which $\E_{\pi^{X}}[\norm{X-\E X}^{2}]\approx n$. 
\begin{problem}
[Isotropic rouding] \label{prob:isotropic-rounding} Assume access
to a well-defined function oracle $\eval_{x_{0},R}(V)$ for convex
$V:\Rn\to\overline{\R}$.  Given $\msf s\geq1$, what is the query
complexity of finding an affine transformation $F:\Rn\to\Rn$ such
that $F_{\#}\pi^{X}$ satisfies 
\[
\msf s^{-1}I_{n}\preceq\cov F_{\#}\pi^{X}=F[\cov\pi^{X}]F^{\T}\preceq\msf sI_{n}\,.
\]
\end{problem}

We may assume that $\min V=0$, since it can be found using the evaluation
oracle when $x_{0}$ is given. This problem is equivalent to estimating
the covariance of $\pi^{X}$ accurately, since setting $F=[\cov\pi^{X}]^{-1/2}$
leads to $\cov F_{\#}\pi^{X}=I_{n}$. The prior best complexity of
isotropic rounding is $\Otilde(n^{4})$ by \cite{lovasz2006fast},
and in this section we improve this to $\Otilde(n^{3.5})$, as claimed
in Theorem~\ref{thm:rounding-intro}.
\begin{thm}
\label{thm:iso-rounding} In the setting of Problem~\ref{prob:isotropic-rounding},
there exists a randomized algorithm with query complexity of $\Otilde(n^{3.5}\polylog R)$
that finds an affine map $F^{X}:\Rn\to\Rn$ such that $F^{X}\pi^{X}$
is $1.01$-isotropic with probability at least $1-\O(n^{-1/2})$.
\end{thm}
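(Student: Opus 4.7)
The plan is to adapt the uniform-distribution rounding strategy of \cite{jia2021reducing,jia2024reducingisotropyvolumekls}, using the sharper sampling guarantees of Theorems~\ref{thm:lc-warmstart-intro}--\ref{thm:tgc-intro} in place of the Ball Walk. Concretely, fix the ground set $\msf L_{\pi^X,g}$ and set $\nu^X := \pi^X|_{\msf L_{\pi^X,g}}$, and consider the geometric sequence of truncated grounded distributions $\nu_r^X := \nu^X|_{B_r(x_0)}$ for $r = \delta^i$, $\delta = 1 + n^{-1/2}$, $i = 0,1,\dotsc,\lceil \log_\delta D\rceil$, with $D = \Theta(R)$, followed by the two final distributions $\nu^X$ and $\pi^X$. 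There are $\O(\sqrt n\, \log R)$ outer stages. The algorithm maintains an affine map $F$ such that at the beginning of each stage $F_\# \nu_r^X$ is $1.01$-isotropic; one stage consists of passing from $r$ to $\delta r$ (or to $\nu^X$, $\pi^X$ at the last two stages) and re-isotropizing.

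For the outer-loop transition I would prove: if $F_\# \nu_r^X$ is $1.01$-isotropic, then $(4F)_\# \nu_{\delta r}^X$ is well-rounded, i.e.\ it contains a unit ball and has $\tr(\cov) = \O(n)$ (the Lemma~\ref{lem:outer-loop} foreshadowed in the excerpt). For uniform distributions this follows from Paouris' exponential tail bound and the fact that an isotropic convex body has diameter $\O(n)$. Here I would lift to the exponential reduction $\pi(x,t) \propto e^{-nt}|_{\K}$, apply the tail and moment estimates from \S\ref{ssec:exp-reduction} (Lemmas~\ref{lem:exp-mean}--\ref{lem:exp-cov}) together with Lemma~\ref{lem:relaxed-regularity}, and invoke a general diameter bound for isotropic grounded logconcave distributions (the promised Lemma~\ref{lem:diameter-level-set}, an $\O(n)$ analogue of the $n+1$ bound for isotropic convex bodies). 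The transitions $\nu_D^X \to \nu^X \to \pi^X$ are handled by change of measure: the ratio of densities is $\O(1)$ on a $(1-n^{-\Omega(1)})$-measure set by the universal tail bound for logconcave functions (Lemma~\ref{lem:LC-tail}), and the reverse H\"older inequality for logconcave densities transfers covariance bounds across these two steps with only constant loss.

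For the inner loop I would run the procedure of \cite{jia2021reducing,kook2024covariance} essentially verbatim on each $\nu_r^X$: (i) use $\tgc$ (Theorem~\ref{thm:tgc-intro}) to generate an $\O(1)$-warm start for the current well-rounded distribution $(4F)_\# \nu_{\delta r}^X$; (ii) draw $\Theta(n)$ near-independent samples via $\psexp$ (Theorem~\ref{thm:exp-sampling}) and form the empirical covariance $\overline{\Sigma}$, which by matrix concentration satisfies $|\overline{\Sigma}-\Sigma| \preceq c\, n I_n$; (iii) double the scale along the subspace where $\overline{\Sigma}$ has eigenvalues $\lesssim n$. By the universal existence of a large inscribed ball in isotropic logconcave distributions (Lemma~\ref{lem:ball-in-isotropy}), one iteration nearly doubles the inner radius while $\norm{\Sigma}$ grows by at most $\O(n)$ additively and $\tr\Sigma$ is controlled as in the uniform case. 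Because Theorem~\ref{thm:exp-sampling} gives a per-query cost scaling with $\norm{\cov}$ rather than $\tr(\cov)$, the total cost of one inner loop is $\Otilde(n^2 \norm{\Sigma}) = \Otilde(n^3)$, and the number of inner iterations per outer stage is $\O(\log r)$, folded into the polylog.

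Summing gives $\O(\sqrt n\,\log R) \times \Otilde(n^3) = \Otilde(n^{3.5}\polylog R)$ queries. Setting the per-stage failure probability to $n^{-1}$ and union-bounding yields overall success probability $1-\O(n^{-1/2})$. The main obstacle is the outer-loop lemma: without logconcavity of the indicator of a body, one cannot directly apply Paouris, and the varying density on $F_\# \nu_{\delta r}^X$ complicates the usual argument; I expect this to be where the exponential reduction and the grounded-set diameter bound do the essential new work, and the rest is a careful re-accounting of the uniform proof against the logconcave sampling and warm-start guarantees already established in \S\S\ref{sec:lc-sampling}--\ref{sec:warm-start}.
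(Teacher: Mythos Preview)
Your plan follows the paper's approach in all essentials: the same sequence of truncated grounded distributions $\nu_r^X$, the outer-loop transition proved via the exponential lift together with Paouris and the diameter bound for isotropic grounded logconcave distributions (Lemma~\ref{lem:diameter-level-set}), the two terminal transitions $\nu_D^X\to\nu^X\to\pi^X$ handled by warmness plus reverse H\"older, and the inner $\msf{Isotropize}$ subroutine of \cite{jia2021reducing,kook2024covariance} driven by $\tgc$ and $\psexp$.

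The one place your text is off is the sample count in step~(ii): the routine you cite draws $\Theta(r_i^2)$ samples per inner iteration at inner radius $r_i$, not $\Theta(n)$. With $\Theta(n)$ samples each inner iteration would cost $\Otilde(n\cdot n^2\norm{\Sigma}/r_i^2)=\Otilde(n^4/r_i^2)$, which at $r_i=1$ is $\Otilde(n^4)$, giving $\Otilde(n^{4.5})$ overall rather than $\Otilde(n^{3.5})$. Your own arithmetic ``$\Otilde(n^2\norm{\Sigma})$ per inner iteration'' is correct precisely for $\Theta(r_i^2)$ samples: the crude estimate $|\overline{\Sigma}-\Sigma|\preceq nI_n$ needs only $k\asymp \tr\Sigma/n\asymp r_i^2$ samples (the paper's Lemma~\ref{lem:rough-cov-estimation}), and this is exactly what keeps each call to $\msf{Isotropize}$ at $\Otilde(n^3)$.
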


\paragraph{Setup.}

By translation, we assume throughout this section that $x_{0}=0$
and that $\K=\{(x,t)\in\R^{n+1}:V(x)-11n\leq nt\}$ contains $B_{1}^{n+1}(0)$.
We now introduce several notation for a concise description. In the
first part of the algorithm, we mainly work with $\D\nu^{X}\propto\exp(-V)|_{\msf L_{\pi^{X},g}}\,\D x$
truncated to the ground set $\msf L_{\pi^{X},g}$, called the \emph{grounded
distribution}. One can think of it as a core of $\pi^{X}$ in that
(1) $\nu^{X}$ is $1.01$-warm with respect to $\pi^{X}$ (i.e., $\eu R_{\infty}(\nu^{X}\mmid\pi^{X})\leq\log1.01$),
and (2) the values of $V$ over the support of $\nu^{X}$ is within
$10n$ from $\min V=0$ (i.e., $\max_{\supp\nu^{X}}V\leq10n$). The
first property immediately follows from 
\begin{equation}
\sup_{\msf L_{\pi^{X},g}}\frac{\nu^{X}}{\pi^{X}}\leq\frac{\int\exp(-V)}{\int_{\msf L_{\pi^{X},g}}\exp(-V)}=\bpar{\P_{\pi^{X}}(X\in\msf L_{\pi^{X},g})}^{-1}\leq1.01\,,\label{eq:nu-pi-warmness}
\end{equation}
where the last inequality is due to Lemma~\ref{lem:LC-tail} (with
$\beta=10$). We also denote by $\nu_{r}^{X}\propto\exp(-V)|_{\msf L_{\pi^{X},g}\cap B_{r}(0)}$
the grounded distribution truncated to the ball $B_{r}(0)$. Hereafter,
we use $\msf L_{g}$ to indicate $\msf L_{\pi^{X},g}$ if there is
no confusion. 

For each of $\pi^{X},\nu^{X},\nu_{r}^{X}$, its exponential counterpart
can be written as, for the cylinders $C_{g}:=\msf L_{g}\times\R$
and $C_{r}:=B_{r}(0)\times\R$,
\[
\pi(x,t)\propto\exp(-nt)|_{\K}\,,\qquad\nu(x,t)\propto\exp(-nt)|_{\K\cap C_{g}}\,,\qquad\nu_{r}(x,t)\propto\exp(-nt)|_{\K\cap C_{g}\cap C_{r}}\,.
\]
We use $F^{X}:\Rn\to\Rn$ and $F:\R^{n+1}\to\R^{n+1}$ to denote an
affine transformation and its embedding into $\R^{n+1}$ defined by
$F(x,t):=(F^{X}(x),t)$, respectively. Then, for $W:=V\circ(F^{X})^{-1}:\Rn\to\R$,
\begin{align*}
F_{\#}\pi(x,t) & \propto\exp(-nt)|_{F\K}\text{ where }F\K=\{(x,t)\in\Rn\times\R:W(x)\leq nt\}\,,\\
F_{\#}\nu(x,t) & \propto\exp(-nt)|_{F\K\cap FC_{g}}\text{ where }FC_{g}=F^{X}\msf L_{g}\times\R\,,\\
F_{\#}\nu_{r}(x,t) & \propto\exp(-nt)|_{F\K\cap FC_{g}\cap FC_{r}}\text{ where }FC_{r}=F^{X}B_{r}(0)\times\R\,.
\end{align*}
Note that $(F_{\#}\pi)^{X}=F_{\#}^{X}\pi^{X}\propto\exp(-W)$, $(F_{\#}\nu)^{X}=F_{\#}^{X}\nu^{X}\propto\exp(-W)|_{F^{X}\msf L_{g}}$,
and $(F_{\#}\nu_{r})^{X}=F_{\#}^{X}\nu_{r}^{X}\propto\exp(-W)|_{F^{X}(\msf L_{g}\cap B_{r})}$.
Lastly, it clearly holds that for $\mu\in\{\pi,\nu,\nu_{r}\}$,
\[
\cov F_{\#}\mu=\left[\begin{array}{cc}
\cov F_{\#}^{X}\mu^{X} & *\\*
* & \cov\mu^{T}
\end{array}\right]\,.
\]

\paragraph{Roadmap to isotropy.}

At the high level, we gradually make $\pi^{X}$ less skewed through
an iterative process of sampling, estimating an approximate covariance
matrix, and applying a suitable affine transformation. Assuming that
Algorithm~\ref{alg:iterative_rounding} ($\msf{Isotropize}$) can
solve Problem~\ref{prob:isotropic-rounding} when $R^{2}=\O(n)$
by using $\Otilde(n^{3})$ many queries, we provide a brief overview
of the entire rounding algorithm. Similar to our warm-start generation
algorithm, for $\delta:=1+n^{-1/2}>1$ and $D:=2R\log400$, this algorithm
gradually isotropizes a sequence $\{\mu_{i}\}$ of distributions,
$\nu_{1}^{X}\to\nu_{\delta}^{X}\to\nu_{\delta^{2}}^{X}\to\cdots\to\nu_{D}^{X}\to\nu^{X}\to\pi^{X}$,
passing along an affine map $F^{X}$ such that if $F_{\#}^{X}\mu_{i}$
is near-isotropic, then $H_{\#}^{X}\mu_{i+1}$ (where $H^{X}$ is
a slight modification of $F^{X}$) satisfies regularity conditions:
\begin{align*}
(\msf{LR})\text{ Lower regularity:} & \quad\text{the ground set of \ensuremath{H_{\#}^{X}\mu_{i+1}} contains }B_{1}^{n}(x_{0})\ \text{for some }x_{0}\,.\\
(\msf{UR})\text{ Upper regularity:} & \quad\E_{H_{\#}^{X}\mu_{i+1}}[\norm{X-x_{0}}^{2}]\leq R^{2}=\O(n)\,.
\end{align*}
This ensures that sampling from $H_{\#}^{X}\mu_{i+1}$ can be achieved
with moderate complexity. We now provide a summary of the rounding
algorithm:
\begin{enumerate}
\item \underline{Step 1}: Make $\nu_{D}^{X}$ near-isotropic.
\begin{enumerate}
\item Run $\msf{Isotropize}$ on $\nu_{1}^{X}$ (which clearly satisfies
$\msf{(LR)}$ and $\msf{(UR)}$ with $R^{2}=\O(1)$) to obtain an
affine map $F^{X}$ such that $F_{\#}^{X}\nu_{1}^{X}$ is near-isotropic.
\item Repeat while $r\leq D=\O(R)$: if $F_{\#}^{X}\nu_{r}^{X}$ is near-isotropic,
then run $\msf{Isotropize}$ on $(4F^{X})_{\#}\nu_{\delta r}^{X}$,
and set $F^{X}\gets4F^{X}$ and $r\gets\delta r$.
\begin{enumerate}
\item Justified by Lemma~\ref{lem:outer-loop}: If $F_{\#}^{X}\nu_{r}^{X}$
is near-isotropic, then $(4F^{X})_{\#}\nu_{\delta r}^{X}$ satisfies
$\msf{(LR)}$ and $\msf{(UR)}$ with $R^{2}=\O(n)$.
\end{enumerate}
\end{enumerate}
\item \underline{Step 2}: Given a near-isotropic $F_{\#}^{X}\nu_{D}^{X}$,
make $\nu^{X}$ near-isotropic.
\begin{enumerate}
\item Run $\msf{Isotropize}$ on $(4F^{X})_{\#}\nu^{X}$ to obtain a new
map $F^{X}$ that makes $F_{\#}^{X}\nu^{X}$ near-isotropic.
\begin{enumerate}
\item Justified by Lemma~\ref{lem:nuD-nu}: If $F_{\#}^{X}\nu_{D}^{X}$
is near-isotropic, then $(4F^{X})_{\#}\nu^{X}$ satisfies $\msf{(LR)}$
and $\msf{(UR})$ with $R^{2}=\O(n)$.
\end{enumerate}
\end{enumerate}
\item \underline{Step 3}: Given a near-isotropic $F_{\#}^{X}\nu^{X}$,
make $\pi^{X}$ near-isotropic.
\begin{enumerate}
\item Draw $\Otilde(n)$ samples from $(4F^{X})_{\#}\pi^{X}$ to estimate
its covariance accurately.
\begin{enumerate}
\item Justified by Lemma~\ref{lem:nu-pi}: If $F_{\#}^{X}\nu^{X}$ is near-isotropic,
then $(4F^{X})_{\#}\pi^{X}$ satisfies $\msf{(LR)}$ and $\norm{\cov F_{\#}^{X}\pi^{X}}=\O(1)$.
\end{enumerate}
\end{enumerate}
\end{enumerate}
 Step 1 and 2 would use $\Otilde(n^{3}\times\sqrt{n}\log D)=\Otilde(n^{3.5}\log D)$
queries, as $\msf{Isotropize}$ is called $\O(\sqrt{n}\log D)$ times
in Step 1 and 2. In Step 3, as the $\psexp$ with an $\O(1)$-warm
start requires $\Otilde(n^{2})$ queries per sample, Step 3 uses $\Otilde(n^{3})$
queries in total. Therefore, the final complexity of this algorithm
would be $\Otilde(n^{3.5}\polylog\nicefrac{R}{\eta})$ for a target
failure probability $\eta$.
\begin{rem}
[Failure of samplers] Recall that the mixing time of $\psexp$ has
a poly-logarithmic dependence on the target failure probability $\eta>0$.
Since the number $k$ of iterations (of $\psexp$) throughout the
algorithm is polynomial in the problem parameters (e.g., $n,R$),
we can, for instance, simply set $\eta=1/(kn^{100})$ without affecting
the final query complexity by more than polylogarithmic factors. 
\end{rem}

\subsection{Isotropic rounding of a well-rounded distribution}

In this section, we analyze the aforementioned algorithm $\msf{Isotropize}$,
which finds an affine map $F^{X}$ making a well-rounded distribution
near-isotropic. We state an exact setup for $\msf{Isotropize}$.
\begin{assumption}
\label{ass:iso-well-rdd} Given access to an evaluation oracle for
convex $U:\Rn\to\overline{\R}$ with $\min U\geq0$, and distribution
$\D\mu^{X}\propto\exp(-U)\,\D x$, assume that $B_{1}^{n}(x_{0})\subset\msf L_{\mu^{X},g}=\{x\in\Rn:U(x)-\min U\leq10n\}$
for known $x_{0}$ and $\E_{\mu^{X}}[\norm{X-x_{0}}^{2}]\leq R^{2}=C^{2}n$
with a known constant $C\geq1$, and that $U(x)\leq10n$ for any $x\in\supp\mu^{X}$
(i.e., $\supp\mu^{X}=\msf L_{\mu^{X},g}$).
\end{assumption}

We note that $\mu(x,t)\propto\exp(-nt)|_{\mc S}$ for $\mc S:=\{(x,t):U(x)-11n\leq nt\}$
has an important property that the slice of $\supp\mu$ at $t=-1$
recovers $\supp\mu^{X}$, which simply follows from $\min U\geq\min V=0$.
Also, this setup can represent $F_{\#}^{X}\nu_{r}^{X}$ and $F_{\#}^{X}\nu^{X}$.
For instance, $\mu^{X}=F_{\#}^{X}\nu_{r}^{X}$ can be obtained by
setting $U=V\circ(F^{X})^{-1}/\ind[x\in F^{X}(\msf L_{g}\cap B_{r}(0))]$.

\paragraph{Rounding algorithm, $\protect\msf{Isotropize}$.}

This algorithm generalizes the approach from \cite{jia2021reducing},
a rounding algorithm for well-rounded uniform distributions over a
convex body, following a modified version of their algorithm studied
in \cite{kook2024covariance} for streamlined analysis. 

Algorithm~\ref{alg:iterative_rounding} begins with $\tgc$ to generate
an $\O(1)$-warm start for $\mu$, using $n^{3}$ queries (Theorem~\ref{thm:warm-start}).
The while-loops then proceeds as follows: for $r:=\inrad\mu^{X}$,
the largest radius of a ball contained in $\supp\mu^{X}=\msf L_{\mu^{X},g}$,
we \textbf{can} run the sampler $\psexp$ (with the $\O(1)$-warm
start and target failure probability $n^{-\O(1)}$) to generate $r^{2}$
samples distributed according to $\mu$, since under Assumption~\ref{ass:iso-well-rdd},
$\mu^{X}$ meets all conditions required by Problem~\ref{prob:exp-sampling}.
We then estimate an approximate covariance matrix $\overline{\Sigma}$
of $\mu^{X}$ (using these $r^{2}$ samples) such that $\abs{\overline{\Sigma}-\Sigma}\precsim nI_{n}$,
where $\Sigma=\cov\mu^{X}$. This procedure requires $r^{2}\times n^{2}\norm{\Sigma}/r^{2}=n^{2}\norm{\Sigma}$
many queries in expectation. Next, we compute the eigenvalue/vectors
of $\overline{\Sigma}$ and double the subspace spanned by the eigenvectors
with eigenvalues less than $n$. One iteration of this inner loop
achieves two key properties: ($\msf{P1}$) the largest eigenvalue
of the covariance $\Sigma$ increases by at most $n$, and ($\msf{P2}$)
$r=\inrad\mu^{X}$ \emph{almost} doubles.

This algorithm repeats this inner loop (i.e., sampling--estimation--scaling)
until $r^{2}$ reaches $n$. As $r$ almost doubles every inner loop,
the inner loop iterates at most $\O(\log n)$ times. Since the operator
norm of the initial covariance matrix is $\O(n)$ due to Assumption~\ref{ass:iso-well-rdd},
the operator norm will remain less than $\Otilde(n)$ throughout the
algorithm, and thus the query complexity of the inner loop is bounded
by $\Otilde(n^{3})$.

\begin{algorithm}[t]
\hspace*{\algorithmicindent} \textbf{Input:} distribution $\mu\propto\exp(-nt)|_{\{(x,t):U(x)-11n\leq nt\}}$
over $\R^{n+1}$ such that $B_{1}(0)\subset\supp\mu^{X}=\msf L_{\mu^{X},g}$
and $\ensuremath{\E_{\mu^{X}}[\norm X^{2}]\leq R^{2}=\O(n)}$.

\hspace*{\algorithmicindent} \textbf{Output:} near-isotropic distribution
$F_{\#}\mu_{\text{last}}$.

\begin{algorithmic}[1]

\STATE Run $\tgc$ to obtain $Z_{0}=(X_{0},T_{0})\in\R^{n+1}$ with
$\eu R_{\infty}(\law Z_{0}\mmid\mu)\leq\log2$.\label{line:GC-warm}

\STATE Let $r_{1}=1$, $\mu_{1}:=\mu$, and $i=1$.

\WHILE{$r_{i}^{2}\lesssim n$}

\STATE For $k_{i}\asymp r_{i}^{2}\polylog n$, draw $\{Z_{j}=(X_{j},T_{j})\}_{j\in[k_{i}]}\gets\PS_{N_{i}}\bpar{\mu_{i},\delta_{Z_{0}},h_{i}}$
with $N_{i}\asymp r_{i}^{-2}n^{3}\log n$ and $h_{i}\asymp r_{i}^{2}(n^{2}\log(r_{i}n))^{-1}$.
\label{line:INO-setting}

\STATE Estimate $\overline{\msf m}_{i}=\frac{1}{k_{i}}\sum_{j=1}^{k_{i}}X_{j}$
and $\overline{\Sigma}_{i}=\frac{1}{k_{i}}\sum_{j=1}^{k_{i}}(X_{j}-\overline{\msf m}_{i})^{\otimes2}$.
\label{line:mean-cov-estimate}

\STATE Compute $M_{i}=I_{n}+P_{i}$, where $P_{i}$ is the orthogonal
projection to the subspace spanned by eigenvectors of $\overline{\Sigma}_{i}$
with eigenvalue at most $n$. Define $M_{i}^{*}(x,t):=(M_{i}x,t)$.

\STATE Set $\mu_{i+1}=(M_{i}^{*})_{\#}\mu_{i}$, $X_{0}\gets M_{i}X_{0}$,
$r_{i+1}=2r_{i}(1-\nicefrac{1}{\log n})$, and $i\gets i+1$. \label{line:r-doubling}

\ENDWHILE

\STATE Draw $\Theta(n\polylog n)$ samples via $\PS_{N}(\mu_{\text{last}},\delta_{Z_{0}},h)$
with $N\asymp n^{2}\polylog n$ and $h\asymp(n\log n)^{-1}$, and
use their $X$-part to estimate the mean $\overline{\msf m}$ and
covariance $\overline{\Sigma}$. For $F(x,t):=(\overline{\Sigma}^{-1/2}(x-\overline{\msf m}),t)$,
return $F_{\#}\mu_{\text{last}}$. \label{line:last-estimation}

\end{algorithmic}\caption{$\protect\msf{Isotropize}$ \label{alg:iterative_rounding}}
\end{algorithm}

\paragraph{Covariance estimation using dependent samples.}

We now specify how we can accurately estimate the target covariance
$\Sigma$ using $r^{2}$ many \emph{dependent} samples drawn by the
$\psexp$. Letting $P$ be the Markov kernel of the $\psexp$ with
target $\mu(x,t)\propto\exp(-nt)|_{\mc S}$ and step size $h\asymp r^{2}/n^{2}$,
we showed in Lemma~\ref{lem:exp-mixing} that $\PS$ has contraction
in $\chi^{2}$-divergence, i.e., for any distribution $\nu\ll\mu$,
\[
\chi^{2}(\nu P\mmid\mu)\le\frac{\chi^{2}(\nu\mmid\mu)}{\bigl(1+h/\cpi(\mu)\bigr)^{2}}\,,
\]
which implies that the spectral gap of this Markov chain is at least
$h\cpi^{-1}=\Omega(r^{2}n^{-2}\cpi^{-1})$. Hence, a new Markov chain
with kernel $P^{N}$ for $N\asymp r^{-2}n^{2}\cpi$ (i.e., composition
of the $\psexp$ $N$-times) has a spectral gap at least $0.99$ (arbitrarily
close to $1$). Hereafter, $\PS_{N}(\mu,\nu,h)$ denotes a Markov
chain with kernel $P^{N}$, initial distribution $\nu$, and target
distribution $\mu$.

When the $\psexp$ samples from $\mu$ in Line~\ref{line:INO-setting},
it takes a sample every $N$ iteration; in other words, it draws $\Theta(r_{i}^{2}\polylog n)$
consecutive samples from the Markov chain with kernel $P^{N}$. Then,
the following result allows us to estimate a covariance matrix of
$\mu$ with a provable guarantee in Line~\ref{line:mean-cov-estimate}
when using dependent samples under a spectral-gap condition:
\begin{lem}
[{\cite[Theorem 8]{kook2024covariance}}] \label{lem:rough-cov-estimation}
Let $P$ be a reversible Markov chain on $\Rn$ with stationary distribution
$\mu$ and spectral gap $\lda$. Let $(Z_{i})_{i\in[k]}$ be a sequence
of outputs given by $P$ with initial distribution $\mu$. If $\mu$
is logconcave and $\lda\geq0.99$, then the covariance estimator $\overline{\Sigma}=k^{-1}\sum_{i=1}^{k}(Z_{i}-\overline{\msf m})^{\otimes2}$
for $\overline{\msf m}=k^{-1}\sum_{i=1}^{k}Z_{i}$ satisfies that
for any $\veps>0$ and $\delta\in(0,n]$, with probability at least
$1-\O(n^{-1})$, 
\[
\abs{\overline{\Sigma}-\Sigma}\preceq\veps\Sigma+\delta I_{n}\quad\text{for }\Sigma=\cov\mu\,,
\]
so long as $k\asymp\frac{\tr\Sigma}{\veps\delta}\log^{2}\frac{n\max(n,\tr\Sigma)}{\delta}\log^{2}n$.
\end{lem}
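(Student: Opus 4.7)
The plan is to combine a matrix Bernstein inequality for reversible Markov chains with logconcave tail bounds to get concentration of the empirical second moment, and then handle the centering term separately. First I would reduce to bounding $k^{-1}\sum_{i}(X_i-\msf m)^{\otimes 2}-\Sigma$ where $\msf m=\E_\mu X$: writing $\overline{\Sigma}-\Sigma$ in terms of this quantity plus a rank-one correction $-(\overline{\msf m}-\msf m)^{\otimes 2}$, the correction is controlled by a vector concentration inequality for Markov chains (applied to each coordinate, or to a net over the sphere), showing $\|\overline{\msf m}-\msf m\|^2\lesssim\delta$ with high probability once $k\gtrsim\tr\Sigma/(\varepsilon\delta)\cdot\polylog$.

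Second, since the summands $(X_i-\msf m)^{\otimes 2}$ are unbounded, I would truncate. By Paouris-type bounds for logconcave distributions (Lemma~\ref{lem:version-Paouris}), $\|X-\msf m\|\lesssim\sqrt{\tr\Sigma}\,\log(n k/\eta)$ with probability $\geq 1-\eta/\poly(n k)$; a union bound over the $k$ samples confines all iterates to a ball of this radius on an event of probability $\geq 1-O(n^{-1})$. On that event, each summand satisfies $\|(X_i-\msf m)^{\otimes 2}\|_{\op}\leq L := O(\tr\Sigma\,\log^2(n k/\eta))$, and the truncation bias is negligible in the relevant regime.

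Third, I would invoke a matrix Bernstein inequality for reversible Markov chains, e.g.\ in the style of Paulin or Kyng--Song, where the absolute spectral gap $\lambda\geq 0.99$ lets one essentially replace the iid variance proxy by $(1-\lambda)^{-1}=O(1)$ times the stationary matrix variance $\sigma^2 := \|\E_\mu[(X-\msf m)^{\otimes 4}]\|_{\op}\lesssim\|\Sigma\|_{\op}\tr\Sigma$. This yields a tail bound of the form $\Pr[\|k^{-1}\sum(X_i-\msf m)^{\otimes 2}-\Sigma\|_{\op}\geq t]\leq n\exp(-c k t^2/(\sigma^2+Lt))$, which for the target $k$ is summable in $n$.

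Finally, to sharpen the operator-norm bound into the mixed form $|\overline{\Sigma}-\Sigma|\preceq\varepsilon\Sigma+\delta I_n$, I would split along the eigenspaces of $\Sigma$: on the span of eigenvalues $\geq\delta/\varepsilon$ the multiplicative term $\varepsilon\Sigma$ dominates and a whitened concentration $\|\Sigma^{-1/2}\overline{\Sigma}\Sigma^{-1/2}-I\|_{\op}\leq\varepsilon$ on that subspace suffices; on the complementary subspace the additive term $\delta I_n$ absorbs the operator-norm deviation. Combining both ranges (possibly via a dyadic decomposition in the eigenvalue scale) and tracking the $\log^2$ factors from the truncation step and the union bound gives the sample complexity $k\asymp(\tr\Sigma)/(\varepsilon\delta)\cdot\log^2\tfrac{n\max(n,\tr\Sigma)}{\delta}\cdot\log^2 n$. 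The main obstacle I anticipate is the last step: turning the standard operator-norm Bernstein conclusion into the dimension-adaptive PSD bound without picking up extra factors of $n$, which requires carefully chaining the whitened and absolute deviations across the eigenvalue scales of $\Sigma$.
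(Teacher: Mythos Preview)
The paper does not prove this lemma at all: it is stated with a citation to \cite[Theorem~8]{kook2024covariance} and used as a black box, so there is no in-paper argument to compare your proposal against. Your sketch is a plausible route to the result in the cited reference --- the reduction to the centered second moment plus a rank-one mean correction, truncation via logconcave tails, and a Markov-chain matrix Bernstein inequality are the standard ingredients --- and the difficulty you flag (upgrading an operator-norm conclusion to the two-sided mixed bound $\varepsilon\Sigma+\delta I_n$ without losing dimension factors) is exactly the nontrivial step. If you want to verify your approach, you would need to consult the proof in \cite{kook2024covariance} directly rather than this paper.
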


We also rely on another result to justify the covariance estimation
in Line~\ref{line:last-estimation}, when computing the mean $\overline{\msf m}$
and covariance $\overline{\Sigma}$ of $\mu_{\text{last}}$.
\begin{lem}
[{\cite[Corollary 10 and 33]{kook2024covariance}}] \label{lem:exact-cov-estimation}
In the setting of Lemma~\ref{lem:rough-cov-estimation}, $\overline{\Sigma}$
satisfies that for any $\veps\in(0,1)$, with probability at least
$1-\O(n^{-1})$, 
\[
\abs{\overline{\Sigma}-\Sigma}\preceq\veps\Sigma\quad\text{and}\quad\norm{\overline{\msf m}-\E_{\mu}Z}^{2}\leq\veps^{2}\norm{\Sigma}\,,
\]
so long as $k\asymp\frac{n}{\veps^{2}}\log^{2}\frac{n}{\veps}\log^{2}n$.
\end{lem}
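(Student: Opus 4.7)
My plan is to exploit the affine-invariance of both assertions to reduce to the case where $\mu$ is isotropic with mean zero, then invoke Lemma~\ref{lem:rough-cov-estimation} with suitably small parameters for the covariance part, and use a separate Markov-chain concentration argument for the mean. For the reduction, set $T := \Sigma^{-1/2}$ and $Y_i := T(Z_i - \E_\mu Z)$, with empirical quantities $\bar m_Y := k^{-1}\sum_i Y_i$ and $\bar\Sigma_Y := k^{-1}\sum_i (Y_i - \bar m_Y)^{\otimes 2}$. Since affine changes of coordinates preserve reversibility, spectral gaps, and logconcavity, $(Y_i)_i$ is a stationary reversible chain whose target is isotropic logconcave with spectral gap $\ge 0.99$. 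Moreover $\bar\Sigma_Y = T\bar\Sigma T^{\T}$ and $\bar m_Y = T(\bar m - \E_\mu Z)$, so conjugating by $T^{-1}$ and using $\|T^{-1}\|^2 = \|\Sigma\|$ gives the equivalences
\[
\abs{\bar\Sigma - \Sigma}\preceq\veps\Sigma \iff \abs{\bar\Sigma_Y - I}\preceq\veps I, \qquad \norm{\bar m - \E_\mu Z}^2 \leq \|\Sigma\|\,\norm{\bar m_Y}^2.
\]
Thus it suffices to prove $|\bar\Sigma_Y - I|\preceq\veps I$ and $\|\bar m_Y\|^2 \leq \veps^2$ for the transformed chain, and we henceforth assume $\Sigma = I$ and $\E_\mu Z = 0$.

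For the covariance bound, apply Lemma~\ref{lem:rough-cov-estimation} with parameters $\veps_0 := \veps/2$ and $\delta_0 := \veps/2$. Since $\tr\Sigma = n$ and $\max(n,\tr\Sigma) = n$ in the isotropic case, the hypothesis $k \asymp \frac{\tr\Sigma}{\veps_0 \delta_0}\log^2\frac{n\max(n,\tr\Sigma)}{\delta_0}\log^2 n$ becomes $k \asymp \frac{n}{\veps^2}\log^2\frac{n}{\veps}\log^2 n$, matching the claim, and the lemma's conclusion yields $|\bar\Sigma - I|\preceq \veps_0 I + \delta_0 I = \veps I$ with probability $1 - \O(n^{-1})$.

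For the mean, the spectral-gap hypothesis gives the variance bound coordinatewise: reversibility and stationarity imply $\var\bigl(k^{-1}\sum_i (Z_i)_j\bigr) \leq \frac{C}{k\lda}\var_\mu(Z_j) = \frac{C}{k\lda}$ for each $j$, so that $\E\|\bar m\|^2 \lesssim n/k$, which is $\O(\veps^2/\polylog)$ for the stated $k$. To upgrade this from expectation to high probability at scale $\veps^2$, I apply a vector-valued Bernstein/Freedman-type inequality for reversible Markov chains with spectral gap $1 - o(1)$ (e.g.\ Paulin's matrix Chernoff for reversible chains), combined with truncation at level $\O(\sqrt{\log n})$ using Paouris-type exponential concentration for isotropic logconcave distributions to handle the unbounded test function $f(z) = z$; this yields $\|\bar m\|^2 \leq \veps^2$ with probability $1 - \O(n^{-1})$. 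The main obstacle is precisely this last step — upgrading an $L^2$ variance bound for $\bar m$ to the correct high-probability tail under Markov-chain dependence — since a naive Markov/Chebyshev argument loses a polynomial-in-$n$ factor and would force $k$ to grow as $n^3/\veps^2$; the resolution is to use that $\lda \geq 0.99$ keeps the effective sample size proportional to $k$, so iid-style concentration bounds apply up to constants, and combining with the covariance bound (which implies $\tr\bar\Sigma \leq n(1+\veps)$, giving a quantitative control on how far individual $Z_i$ can be from zero on average) closes the argument within the claimed sample complexity.
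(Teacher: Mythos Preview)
The paper does not prove this lemma; as the bracketed citation in the lemma header indicates, it is quoted directly from \cite[Corollary~10 and~33]{kook2024covariance} and invoked as a black box. So there is no in-paper proof to compare against.

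Your approach for the covariance bound is correct and clean: the affine reduction to the isotropic case is valid (reversibility, stationarity, spectral gap, and logconcavity are all affine-invariant), and specializing Lemma~\ref{lem:rough-cov-estimation} with $\veps_0=\delta_0=\veps/2$ when $\tr\Sigma=n$ gives exactly the claimed sample complexity and conclusion $\abs{\bar\Sigma-I}\preceq\veps I$. This is in fact how one deduces a purely multiplicative guarantee from the mixed additive/multiplicative one.

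For the mean, your outline points at the right ingredients --- a Markov-chain Bernstein/Hoeffding bound under spectral gap $\geq 0.99$, combined with truncation justified by sub-exponential tails of isotropic logconcave measures --- but as written it remains a sketch rather than a proof: you do not specify which vector-valued concentration inequality you invoke or verify its hypotheses, and the truncation argument is only gestured at. The cited reference carries out precisely such an argument, so your plan is sound even if the execution is deferred. One small correction: naive Markov on $\E\|\bar m\|^2\lesssim n/k$ with target failure probability $n^{-1}$ would force $k\gtrsim n^2/\veps^2$, not $n^3/\veps^2$; the gap to close is one power of $n$, not two.
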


We note that even when the chain starts from a non-stationary distribution
$\nu$, these statistical results still hold, with the bad probability
increasing by a multiplicative factor of $\exp(\eu R_{2}(\nu\mmid\mu))=1+\chi^{2}(\nu\mmid\mu)$
(see \cite[Lemma 29]{kook2024covariance}). As $\psexp$ starts from
an $\O(1)$-warm start and $\eu R_{2}\leq\eu R_{\infty}$, the bad
probability only increases by a factor of $\O(1)$.

\paragraph{Analysis. }

We now examine the two key properties: ($\msf{P1}$) the largest eigenvalue
of the covariance of $\mu_{i}^{X}$ increases by at most $\Otilde(n)$,
and ($\msf{P2}$) $r_{i}=\inrad\mu_{i}^{X}$ nearly doubles every
inner loop. \cite{jia2021reducing} established these results regarding
a rounding algorithm for the \emph{uniform distribution} over a convex
body. Notably, the proofs therein are independent of target distributions,
allowing these results to hold for logconcave distributions as well.
Here, we present an outline of these proofs, along with pointers to
detailed arguments in previous works, following a streamlined analysis
in \cite[\S4.1]{kook2024covariance}.

We first state that $\psexp$ employed in Algorithm~\ref{alg:iterative_rounding}
has a spectral gap at least $0.99$. Its proof is obvious from the
discussion in the paragraph above.
\begin{prop}
[{\cite[Lemma 14]{kook2024covariance}}] \label{prop:spectralGap-PS}
In Line~\ref{line:INO-setting} and~\ref{line:last-estimation},
$\PS_{N_{i}}(\mu_{i},\cdot,h_{i})$ with $N_{i}\asymp r_{i}^{-2}n^{3}\log n$
and $h_{i}\asymp r_{i}^{2}(n^{2}\log(r_{i}n))^{-1}$ has a spectral
gap at least $0.99$.
\end{prop}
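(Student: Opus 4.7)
The plan is to derive the spectral-gap bound of $\PS_{N_i}(\mu_i,\cdot,h_i)$ from the one-step $\chi^{2}$-contraction of $\psexp$ established in Lemma~\ref{lem:exp-mixing}, iterate it $N_i$ times, and invoke the standard equivalence between $\chi^{2}$-contraction and the $L^{2}$-operator norm for reversible Markov chains. Concretely, one step of $\psexp$ with target $\mu_i$ is reversible with respect to $\mu_i$, and the Poincar\'e-based contraction in Lemma~\ref{lem:exp-mixing} gives
\[
\chi^{2}(\nu P_i \mmid \mu_i)\leq\frac{\chi^{2}(\nu \mmid \mu_i)}{(1+h_i/\cpi(\mu_i))^{2}}
\]
for every $\nu \ll \mu_i$. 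Iterating $N_i$ times yields the contraction factor $(1+h_i/\cpi(\mu_i))^{-2N_i}$ for the composed kernel $P_i^{N_i}$, and for a reversible kernel this factor equals the squared operator norm of $P_i^{N_i}$ on mean-zero functions of $L^{2}(\mu_i)$; hence the spectral gap of $P_i^{N_i}$ is at least $1-(1+h_i/\cpi(\mu_i))^{-N_i}$.

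Next I would bound $\cpi(\mu_i)$ by $\Otilde(n)$. Klartag's theorem gives $\cpi(\mu_i)\lesssim\|\cov\mu_i\|\log n$, and Lemma~\ref{lem:exp-cov} converts this into $\cpi(\mu_i)\lesssim (\|\cov\mu_i^X\|\vee 1)\log n$. Throughout the while-loop of $\msf{Isotropize}$, property~$(\msf{P1})$ --- established in the subsequent analysis of the rounding iterations --- maintains $\|\cov\mu_i^X\|=\Otilde(n)$, so $\cpi(\mu_i)=\Otilde(n)$.

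Combining the two, with $h_i\asymp r_i^{2}/(n^{2}\log(r_in))$, we obtain $h_i/\cpi(\mu_i)\gtrsim r_i^{2}/(n^{3}\polylog(nR))$, and choosing the implicit constant in $N_i\asymp r_i^{-2}n^{3}\log n$ large enough to absorb the polylog factors ensures $(1+h_i/\cpi(\mu_i))^{-N_i}\leq 0.01$, so the spectral gap of $P_i^{N_i}$ is at least $0.99$. The only non-mechanical ingredient is the invariant $\|\cov\mu_i^X\|=\Otilde(n)$, which is not intrinsic to $\psexp$ but must be maintained by the structural properties $(\msf{P1})$ of the rounding iterates; this is the piece that would be verified separately, independently of the one-step contraction argument used here.
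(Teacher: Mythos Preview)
Your proposal is correct and follows essentially the same approach as the paper: the paper's own proof is just the sentence ``Its proof is obvious from the discussion in the paragraph above,'' and that paragraph invokes precisely the $\chi^{2}$-contraction of Lemma~\ref{lem:exp-mixing}, reads off the one-step spectral gap $h/\cpi$, and iterates $N\asymp r^{-2}n^{2}\cpi$ times to push the gap to $0.99$. Your write-up is more explicit than the paper's---you spell out the equivalence between $\chi^{2}$-contraction and the $L^{2}$-operator norm, invoke Klartag and Lemma~\ref{lem:exp-cov} to bound $\cpi(\mu_i)$, and correctly flag that the bound $\norm{\cov\mu_i^X}=\Otilde(n)$ must be supplied by the inductive invariant $(\msf{P1})$ from Lemma~\ref{lem:control-cov}---but the logical skeleton is identical.
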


As for ($\msf{P1}$), we need the following quantitative bounds on
the operator norm and trace of $\Sigma_{i}^{X}=\cov\mu_{i}^{X}$ in
each while loop of Algorithm~\ref{alg:iterative_rounding}. The original
result was established by \cite[Lemma 3.2]{jia2021reducing}, with
a simpler proof shown in \cite[Lemma 16]{kook2024covariance}. 
\begin{lem}
\label{lem:control-cov} In Algorithm~\ref{alg:iterative_rounding},
the while-loop iterates at most $\O(\log n)$ times, during which
$\tr\Sigma_{i}^{X}\lesssim r_{i}^{2}n$ and $\norm{\Sigma_{i}^{X}}\lesssim n(1+i)$.
\end{lem}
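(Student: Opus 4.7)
The plan is to prove the three conclusions—iteration count, trace bound, and operator-norm bound—by a joint induction on $i$, where the trace bound at step $i$ certifies (via Lemma~\ref{lem:rough-cov-estimation}) the accuracy of the covariance estimate needed to advance the operator-norm bound to step $i+1$.

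I would handle the iteration count and trace bound together. Since Line~\ref{line:r-doubling} sets $r_{i+1}=2r_i\,(1-1/\log n)$ and the loop continues while $r_i^{2}\lesssim n$, the factor $\prod_{j\leq i}(1-1/\log n)$ stays $\Omega(1)$ over $i\lesssim\log n$, so $r_i\gtrsim 2^{i-1}$ and the exit condition triggers within $O(\log n)$ iterations. The fact that the prescribed $r_{i+1}$ is in fact realized as an inradius of $\supp\mu_{i+1}^{X}$ (the inscribed ball nearly doubling) is the geometric content of \cite{jia2021reducing, kook2024covariance} and would be invoked as a black box. For the trace, I induct with base $\tr\Sigma_1^{X}\leq R^{2}\lesssim n=r_1^{2}n$ from Assumption~\ref{ass:iso-well-rdd}. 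Since $M_i=I+P_i$ with $P_i$ a symmetric projection, $M_i^{\T}M_i=I+3P_i\preceq 4I$, and $\tr(P_i\Sigma_i^{X}P_i)\leq\tr\Sigma_i^{X}$, so $\tr\Sigma_{i+1}^{X}=\tr(M_i^{\T}M_i\Sigma_i^{X})\leq 4\tr\Sigma_i^{X}$. Iterating yields $\tr\Sigma_i^{X}\lesssim 4^{i-1}n$; combined with $r_i^{2}\gtrsim 4^{i-1}$, this gives $\tr\Sigma_i^{X}\lesssim r_i^{2}n$.

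For the operator-norm bound, the just-proved trace bound certifies the hypothesis of Lemma~\ref{lem:rough-cov-estimation}: with $k_i\asymp r_i^{2}\polylog n$ samples drawn from a chain of spectral gap $\geq 0.99$ (Proposition~\ref{prop:spectralGap-PS}) initialized at an $O(1)$-warm state, the lemma applies with a constant $\veps$ and $\delta\asymp n$, producing $E_i:=\overline{\Sigma}_i-\Sigma_i^{X}$ with $\abs{E_i}\preceq O(n)\,I$ with probability $\geq 1-O(n^{-1})$. Since $M_i$ and $\overline{\Sigma}_i$ share eigenvectors---$M_i$ is diagonal in the latter's basis with entry $2$ on the small-eigenvalue block ($\overline\lambda_j\leq n$) and $1$ otherwise---one gets $\norm{M_i\overline{\Sigma}_i M_i}\leq\max(\norm{\overline{\Sigma}_i},4n)$. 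Combining with $\norm{M_iE_iM_i}\leq 4\norm{E_i}=O(n)$ and $\norm{\overline{\Sigma}_i}\leq\norm{\Sigma_i^{X}}+O(n)$ (triangle/Weyl), the per-step increment $\norm{\Sigma_{i+1}^{X}}\leq\norm{\Sigma_i^{X}}+O(n)$ follows; telescoping from $\norm{\Sigma_1^{X}}\leq\tr\Sigma_1^{X}\lesssim n$ yields $\norm{\Sigma_i^{X}}\lesssim n(1+i)$.

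The main obstacle is the apparent circularity between the estimation step and the invariants that justify it: advancing the operator-norm bound requires accurate covariance estimation, whose accuracy depends on the trace bound. This is resolved by noting that the trace bound uses only $\norm{M_i}\leq 2$ and is independent of any estimation---so it can be established first---after which it feeds into Lemma~\ref{lem:rough-cov-estimation} with $\delta\asymp n$ and the operator-norm induction proceeds cleanly. The inradius doubling is treated as an orthogonal geometric input carried over from \cite{jia2021reducing, kook2024covariance}.
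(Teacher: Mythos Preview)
Your proposal is correct and matches the paper's approach, which itself defers the detailed inductive step to \cite[Lemma 16]{kook2024covariance} after noting that the trace hypothesis at step $i$ certifies the sample size for Proposition~\ref{prop:cov-estimation-app}. Your observation that the trace bound $\tr\Sigma_{i+1}^{X}\le 4\tr\Sigma_i^{X}$ follows deterministically from $M_i^{2}=I+3P_i\preceq 4I$ (hence is independent of any estimation) is a clean way to break the apparent circularity; the paper's sketch instead carries the trace bound as part of the joint induction hypothesis, but the underlying computation is the same, and your operator-norm step via $\norm{M_i\overline{\Sigma}_iM_i}\le\max(\norm{\overline{\Sigma}_i},4n)$ is exactly the mechanism behind \cite{jia2021reducing,kook2024covariance}.
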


The base case $i=1$ holds since $\norm{\Sigma_{1}^{X}}\leq\tr\Sigma_{1}^{X}\leq R^{2}=\O(n)$.
Assuming the results hold for $i$ as the induction hypothesis, one
must need a guarantee on how close the covariance estimation in Line~\ref{line:mean-cov-estimate}
is to the true covariance, in order to establish the claim for $i+1$.
The following lemma directly follows from Lemma~\ref{lem:rough-cov-estimation}
and Proposition~\ref{prop:spectralGap-PS}.
\begin{prop}
[{\cite[Lemma 15]{kook2024covariance}}] \label{prop:cov-estimation-app}
Each while-loop ensures that with probability at least $1-\O(n^{-1})$,
for $\Sigma_{i}^{X}=\cov\mu_{i}^{X}$ and $\Sigma_{i}=\cov\mu_{i}$,
\[
\frac{9}{10}\,\Sigma_{i}^{X}-\frac{n+1}{100}\,I_{n+1}\preceq\overline{\Sigma}_{i}\preceq\frac{11}{10}\,\Sigma_{i}^{X}+\frac{n+1}{100}\,I_{n+1}\quad\text{ if}\ k_{i}\gtrsim\frac{\tr\Sigma_{i}}{n}\polylog n\,.
\]
\end{prop}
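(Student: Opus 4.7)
The plan is to apply Lemma~\ref{lem:rough-cov-estimation} to the full $(n+1)$-dimensional chain $\PS_{N_i}(\mu_i,\cdot,h_i)$ on the lifted distribution $\mu_i$, and then restrict the resulting PSD sandwich to its top-left $n\times n$ block to recover a bound for $\Sigma_i^{X}$. This works because $\overline{\Sigma}_i$ in Line~\ref{line:mean-cov-estimate} is precisely the top-left $n\times n$ block of the full $(n+1)\times(n+1)$ sample covariance of the $Z_j$'s, and $\Sigma_i^X$ is the corresponding block of $\Sigma_i$.

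First I would verify the hypotheses of Lemma~\ref{lem:rough-cov-estimation}: the target $\mu_i$ is logconcave on $\R^{n+1}$, and by Proposition~\ref{prop:spectralGap-PS} the recorded kernel $P^{N_i}$ has spectral gap at least $0.99$ with respect to $\mu_i$. Although the chain is initialized from $\delta_{Z_0}$ rather than $\mu_i$ itself, $\tgc$ at Line~\ref{line:GC-warm} guarantees $\eu R_\infty(\law Z_0\mmid \mu)\leq \log 2$, and this $\O(1)$-warmness is preserved across outer iterations because the affine pushforward $M_i^*$ acts equally on $\law Z_0$ and on $\mu_i$ (hence on their Radon--Nikodym derivative). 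By the remark following Lemma~\ref{lem:exact-cov-estimation}, a non-stationary initial distribution only inflates the failure probability of Lemma~\ref{lem:rough-cov-estimation} by a multiplicative factor of $\exp(\eu R_2(\law Z_0 \mmid \mu_i))=\O(1)$, which keeps the bound at $1-\O(n^{-1})$.

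With these hypotheses in place, applying Lemma~\ref{lem:rough-cov-estimation} with accuracy $\veps=1/10$ and slack $\delta\asymp n$ yields, with probability $1-\O(n^{-1})$,
\[
\Bigl|\,\frac{1}{k_i}\sum_{j=1}^{k_i}(Z_j-\bar Z)^{\otimes 2}-\Sigma_i\Bigr|\preceq \frac{1}{10}\,\Sigma_i+\frac{n+1}{100}\,I_{n+1}\,,
\]
provided $k_i \gtrsim \frac{\tr\Sigma_i}{\veps\delta}\,\log^2\frac{n\max(n,\tr\Sigma_i)}{\delta}\,\log^2 n = \frac{\tr\Sigma_i}{n}\,\polylog n$, where the $\polylog n$ absorbs $\log^2\max(1,\tr\Sigma_i/n)\,\log^2 n$; the bound $\tr\Sigma_i = \O(r_i^2 n)=\O(n^2)$ from Lemma~\ref{lem:control-cov} (used inductively) justifies this absorption.

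Finally, restricting the PSD sandwich to vectors of the form $(v,0)\in\R^{n+1}$ transfers the inequality to the top-left $n\times n$ block, producing $\tfrac{9}{10}\Sigma_i^X-\tfrac{n+1}{100}\,I_n\preceq \overline{\Sigma}_i\preceq \tfrac{11}{10}\Sigma_i^X+\tfrac{n+1}{100}\,I_n$, which is the claimed bound (the $I_{n+1}$ in the statement being understood as $I_n$ after restriction). The argument has no real obstacle; the two routine verifications are that $\O(1)$-warmness of $Z_0$ propagates through the linear pushforwards $M_{i-1}^*$ used to update the distribution, and that PSD inequalities restrict to coordinate subspaces.
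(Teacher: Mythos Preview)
Your proposal is correct and follows exactly the approach the paper indicates: the paper simply says the proposition ``directly follows from Lemma~\ref{lem:rough-cov-estimation} and Proposition~\ref{prop:spectralGap-PS},'' and your write-up fills in precisely those details---applying the rough covariance estimation lemma in $\R^{n+1}$ with $\veps=1/10$, $\delta\asymp n$, invoking the spectral-gap proposition, handling the non-stationary start via the $\O(1)$-warmness remark, and restricting the PSD sandwich to the $X$-block. Your observation about the $I_{n+1}$/$I_n$ dimension mismatch in the statement is also apt.
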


Since $\tr\Sigma_{i}\leq\tr\Sigma_{i}^{X}+\O(1)$ by Lemma~\ref{lem:exp-cov}
and $\tr\Sigma_{i}^{X}\lesssim r_{i}^{2}n$ from the induction hypothesis,
the result above holds when $k_{i}\gtrsim r_{i}^{2}\polylog n$, thereby
justifying the choice of $k_{i}$ in Line~\ref{line:INO-setting}.
With this statistical estimation, the proof of Lemma~\ref{lem:control-cov}
for $i+1$ is exactly the same with \cite[Lemma 16]{kook2024covariance}.

Regarding ($\msf{P2}$), we need minor adjustments to some constants
in the proof of \cite[Lemma 17]{kook2024covariance}, whose original
version (for the uniform distribution) appeared in \cite[Lemma 3.2]{jia2021reducing}.
\begin{lem}
In Algorithm~\ref{alg:iterative_rounding}, under $r_{i+1}=2(1-\nicefrac{1}{\log n})r_{i}$,
each while-loop ensures $r_{i+1}\leq\inrad\mu_{i+1}^{X}$.
\end{lem}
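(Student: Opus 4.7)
The plan is to mirror the proof for the uniform case \cite[Lemma 17]{kook2024covariance} (originating in \cite[Lemma 3.2]{jia2021reducing}), noting that the argument is driven by the support geometry and the covariance, not by the specific form of the density. I will maintain the invariant that $\supp \mu_i^X$ contains a ball $B(c_i, r_i)$; the base case holds by Assumption~\ref{ass:iso-well-rdd}, and I set $c_{i+1} := M_i c_i$.

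With $V := \text{range}(P_i)$ (on which $M_i$ acts as $2I$) and $V^\perp$ (on which $M_i$ acts as $I$), I reduce the claim $B(M_i c_i, r_{i+1}) \subseteq M_i K_i$ (where $K_i := \supp \mu_i^X$) to showing that for every $v \in \R^n$ with $\|v\| \leq r_{i+1}$, the point $c_i + M_i^{-1} v = c_i + \tfrac{1}{2} v_V + v_{V^\perp}$ lies in $K_i$. The natural way to exhibit this is by writing it as a convex combination of a point in $B(c_i, r_i)$ (which handles the $V$-component after being shrunk by $\tfrac{1}{2}$) and a point in a ball $B_{V^\perp}(c_i, L)$ inside the affine slice $c_i + V^\perp$ for some large $L$ (which handles the untouched $V^\perp$-component). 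A straightforward calculation shows the convex combination is feasible provided
\[
\Bpar{\frac{r_{i+1}}{2 r_i}}^{\!2} + \Bpar{\frac{r_{i+1}}{L}}^{\!2} \leq 1,
\]
which, thanks to the choice $r_{i+1} = 2 r_i(1 - 1/\log n)$, reduces to $L \gtrsim r_{i+1} \sqrt{\log n}$.

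The main obstacle is producing such a large $L$, i.e., showing $K_i$ contains a $V^\perp$-ball of radius $L = \Omega(r_{i+1}\sqrt{\log n})$ centered at $c_i$. Here I combine two ingredients. First, Proposition~\ref{prop:cov-estimation-app} together with the definition of $P_i$ (projection onto eigenvectors of $\overline{\Sigma}_i$ with eigenvalue at most $n$) yields $\Sigma_i^X|_{V^\perp} \succeq \tfrac{10}{11}\bpar{n - \tfrac{n+1}{100}} I \gtrsim n I$, so every direction in $V^\perp$ carries variance of order $n$. Second, I invoke Lemma~\ref{lem:ball-in-isotropy} (existence of a large ball for a near-isotropic logconcave distribution, applied after rescaling $V^\perp$ to unit covariance) to conclude that the $V^\perp$-marginal around $c_i$ contains a ball of radius $c'\sqrt{n}$ for an explicit constant $c'>0$; the fact that $c_i$ is a ``central'' point in the required sense follows from $B(c_i,r_i)\subseteq K_i$, which forces the density near $c_i$ to be comparable to its maximum on the slice.

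Finally, since the while-loop condition $r_i^2 \lesssim n$ holds with an implicit constant chosen small enough that $c' \sqrt{n} \geq r_{i+1}\sqrt{\log n}$, the feasibility condition on $L$ is met, and the convex combination argument delivers $B(M_i c_i, r_{i+1}) \subseteq M_i K_i$, closing the induction. The only real deviation from the uniform case is the logconcave ball-in-support input of Step~3; everything else is a mechanical adjustment of constants.
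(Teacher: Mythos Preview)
Your overall plan (two convex subsets of $\supp\mu_i^X$, push forward by $M_i$, combine via convexity) is right, but there is a genuine gap in Step~3 and it stems from your choice $c_{i+1}:=M_i c_i$. To make that choice work you need $K_i:=\supp\mu_i^X$ to contain a $V^\perp$-ball of radius $L=\Omega(r_{i+1}\sqrt{\log n})$ \emph{centered at $c_i$}. Lemma~\ref{lem:ball-in-isotropy} does not give you this: it produces a ball centered at the \emph{mean} of the logconcave distribution (here $\msf m_i=\E_{\mu_i^X}X$), and nothing in the setup forces $c_i$ to be near $\msf m_i$ or even near the mean of any $V^\perp$-slice/marginal. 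Your justification that ``$B(c_i,r_i)\subseteq K_i$ forces the density near $c_i$ to be comparable to its maximum on the slice'' does not hold: on $K_i=\msf L_{\mu_i^X,g}$ we only know $0\le U\le 10n$, so the density at $c_i$ can be off from the slice maximum by a factor $e^{10n}$, and in any case Lemma~\ref{lem:ball-in-isotropy} concerns the mean, not high-density points.

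The fix is exactly what the paper (following \cite{jia2021reducing,kook2024covariance}) does: keep the induction ball $B_{r_i}(c_i)$ \emph{and} the covariance ellipsoid $\{\|x-\msf m_i\|_{(\Sigma_i^X)^{-1}}\le e^{-2}\}$, the latter centered at $\msf m_i$ via Lemma~\ref{lem:ball-in-isotropy}. After applying $M_i$, the first yields a $2r_i$-ball along $V$ at $M_i c_i$, the second (using $\Sigma_i^X|_{V^\perp}\succeq\Omega(n)I$ from Proposition~\ref{prop:cov-estimation-app}) yields an $\Omega(\sqrt n)$-ball along $V^\perp$ at $M_i\msf m_i$. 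Convexity of $M_iK_i$ then gives a full ball of radius $2(1-1/\log n)r_i$ at a suitable convex combination $c_{i+1}$ of $M_ic_i$ and $M_i\msf m_i$---you must let the new center float rather than pin it to $M_ic_i$.
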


To demonstrate the existence of a ball of radius $r_{i+1}$ centered
at some point $c_{i+1}\in\supp\mu_{i+1}^{X}$, the proof relies on
two ellipsoids within $\supp\mu_{i}^{X}$: a ball $B_{r_{i}}(c_{i})$
from the induction hypothesis, and $\{\norm{x-\msf m_{i}}_{(\Sigma_{i}^{X})^{-1}}\leq e^{-2}\}$
for $\msf m_{i}=\E_{\mu_{i}^{X}}X$, whose existence follows from
Lemma~\ref{lem:ball-in-isotropy}. After multiplying $x\mapsto(I_{n}+P_{i})x=M_{i}x$,
for $c_{i}':=M_{i}c_{i}$ and $\msf m_{i}':=M_{i}\msf m_{i}$, the
support of the pushforward $(M_{i})_{\#}\mu_{i}^{X}=\mu_{i+1}^{X}$
contains two ellipsoids:
\[
\mc A:\{(x-c_{i}')^{\T}M_{i}^{-2}(x-c_{i}')\leq r_{i}^{2}\}\,,\qquad\text{and}\qquad\mc B:\{(x-\msf m_{i}')^{\T}(M_{i}\Sigma_{i}^{X}M_{i})^{-1}(x-\msf m_{i}')\leq e^{-2}\}\,.
\]
Let $\mc S$ be the subspace spanned by the eigenvectors of $\overline{\Sigma}_{i}$
with corresponding eigenvalues less than $n$. The transformation
$M_{i}$ scales up $\mc S$ by a factor of two, ensuring that the
$\mc A$-ellipsoid contains a $2r_{i}$-ball centered at $c_{i}'$
along $\mc S$. Moreover, as eigenvalues of $\overline{\Sigma}_{i}$
along $S^{\perp}$ are at least $n$, we can show that the $\mc B$-ellipsoid
contains an $(n^{1/2}/10)$-ball at $\msf m_{i}'$ along $\mc S^{\perp}$.
Using the convexity of $\supp\mu_{i+1}^{X}$, one can show that the
convex combination of these two ellipsoids leads to the existence
of a ball of radius $2(1-(\log n)^{-1})r_{i}$ at some point. We refer
readers to \cite[Lemma 17]{kook2024covariance} for details.

It is worth noting that $\mu_{i+1}^{X}\propto\exp(-W)$ for a new
potential $W$ satisfies that $W(x)-\min W=W(x)-\min U\leq W(x)\leq10n$
for any $x\in\supp\mu_{i+1}^{X}$ (i.e., $\supp\mu_{i+1}^{X}=\msf L_{\mu_{i+1}^{X},g}$),
and this ground set contains a ball of radius $r_{i+1}$. Given that
this aligns with the setup of Problem~\ref{prob:exp-sampling}, we
can proceed with the $\psexp$ to sample from $\mu_{i+1}^{X}$, referring
to the query complexity result in Theorem~\ref{thm:exp-sampling}.

After the while-loop, in Line~\ref{line:last-estimation}, we use
$k\asymp n\polylog n$ samples to estimate $\overline{\Sigma}$ and
$\overline{\msf m}$, which by Lemma~\ref{lem:exact-cov-estimation}
satisfies that with probability at least $1-\O(n^{-1})$, 
\[
(1-10^{-3})\cov\mu_{\text{last}}^{X}\preceq\overline{\Sigma}\preceq(1+10^{-3})\cov\mu_{\text{last}}^{X}\,.
\]
Thus, for $F^{X}:x\mapsto\overline{\Sigma}^{-1/2}(x-\overline{\msf m})$,
it holds that $F_{\#}^{X}\mu_{\text{last}}^{X}$ is $1.01$-isotropic.
Moreover, Lemma~\ref{lem:exact-cov-estimation} implies that the
mean of $F_{\#}^{X}\mu_{\text{last}}^{X}$ lies within a ball of radius
$10^{-2}$, i.e.,
\[
\norm{\overline{\Sigma}^{-1/2}(\msf m_{\text{last}}-\overline{\msf m})}\leq0.01\,.
\]
Combining these results, we record the final guarantee of Algorithm~\ref{alg:iterative_rounding}.
A similar result for the rounding of uniform distributions can be
found in \cite[Lemma 18]{kook2024covariance}.
\begin{lem}
[Guarantee of $\msf{Isotropize}$] Assume that a distribution $\mu^{X}$
satisfies Assumption~\ref{ass:iso-well-rdd}. Then, with probability
at least $1-\O(n^{-1/2})$, Algorithm~\ref{alg:iterative_rounding}
on the input $\mu$ returns a distribution $F_{\#}\mu_{\textup{last}}$
using $\Otilde(n^{3})$ evaluation queries such that the $X$-marginal
$F_{\#}^{X}\mu_{\textup{last}}^{X}=(F_{\#}\mu_{\textup{last}})^{X}$
is $1.01$-isotropic, with its mean lying within a ball of radius
$0.01$.
\end{lem}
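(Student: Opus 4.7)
The plan is to set up an induction over the iterations of the while loop tracking two invariants simultaneously: $(\msf{P1})$ $\tr \Sigma_i^X \lesssim r_i^2 n$ and $\|\Sigma_i^X\| \lesssim n(1+i)$, and $(\msf{P2})$ $\inrad \mu_i^X \geq r_i$. The base case $i=1$ is immediate from Assumption~\ref{ass:iso-well-rdd}. For the inductive step, I would combine three ingredients supplied earlier: (i) Theorem~\ref{thm:warm-start} provides an $O(1)$-warm start for $\mu_1$ in $\Otilde(n^3)$ queries since $R^2=O(n)$; (ii) Proposition~\ref{prop:spectralGap-PS} guarantees the composed chain $\PS_{N_i}$ has spectral gap at least $0.99$; and (iii) Proposition~\ref{prop:cov-estimation-app} yields a sandwich estimate $\tfrac{9}{10}\Sigma_i^X-\tfrac{n+1}{100}I \preceq \overline{\Sigma}_i \preceq \tfrac{11}{10}\Sigma_i^X+\tfrac{n+1}{100}I$ with probability $1-O(n^{-1})$, provided $k_i \gtrsim (\tr\Sigma_i/n)\polylog n$. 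Lemma~\ref{lem:exp-cov} combined with $(\msf{P1})$ shows $\tr\Sigma_i \leq \tr\Sigma_i^X+160 \lesssim r_i^2 n$, which justifies the choice $k_i \asymp r_i^2\polylog n$ made in Line~\ref{line:INO-setting}.

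Given the sandwich estimate, the block-diagonal scaling $M_i^*(x,t)=(M_i x, t)$ with $M_i=I+P_i$ preserves $(\msf{P1})$ for $i+1$ by the same linear-algebraic argument used in the uniform case of~\cite{jia2021reducing,kook2024covariance}: the eigenvalue of $M_i\Sigma_i^X M_i$ along $\cols P_i$ increases by at most $O(n)$ additively while the complementary directions are unchanged. For $(\msf{P2})$, the induction-hypothesis ball $B_{r_i}(c_i)\subset \supp\mu_i^X$ and the Lemma~\ref{lem:ball-in-isotropy} ellipsoid $\{\|x-\msf m_i\|_{(\Sigma_i^X)^{-1}} \leq e^{-2}\} \subset \supp\mu_i^X$ are mapped by $M_i$ to an ellipsoid containing a $2r_i$-ball along $\cols P_i$ and to an ellipsoid containing an $\Omega(\sqrt n)$-ball along $\cols P_i^\perp$, respectively; a convex combination (permitted since $\supp\mu_{i+1}^X$ is convex) then contains a ball of radius $2(1-1/\log n)r_i$, giving $r_{i+1} \leq \inrad \mu_{i+1}^X$. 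Since $r_i$ effectively doubles each iteration until $r_i^2 \asymp n$, there are $O(\log n)$ inner iterations and a union bound yields overall inner-loop failure probability $O(\log n/n) = O(n^{-1/2})$.

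For the final estimation, I would apply Lemma~\ref{lem:exact-cov-estimation} with $\veps=10^{-3}$ to the $\Theta(n\polylog n)$ samples drawn by $\PS_N(\mu_{\textup{last}},\cdot,h)$, yielding $\overline{\Sigma},\overline{\msf m}$ with $|\overline{\Sigma}-\Sigma_{\textup{last}}^X| \preceq 10^{-3}\Sigma_{\textup{last}}^X$ and $\|\overline{\msf m}-\E X\|^2 \leq 10^{-6}\|\Sigma_{\textup{last}}^X\|$, each with probability $1-O(n^{-1})$; setting $F^X(x)=\overline{\Sigma}^{-1/2}(x-\overline{\msf m})$ then delivers both $1.01$-isotropy and the $0.01$-mean bound. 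For the query complexity, the warm start costs $\Otilde(n^3)$; each inner iteration runs $k_i N_i = \Otilde(n^3)$ steps of $\psexp$ each of cost $\Otilde(1)$ per Lemma~\ref{lem:exp-perstep} (the $O(1)$-warmness is preserved throughout via the induction), totalling $\Otilde(n^3)$ over the $O(\log n)$ inner iterations; the final estimation uses $\Theta(n\polylog n)$ samples each at cost $\Otilde(n^2)$ per sample, again $\Otilde(n^3)$.

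The main obstacle is verifying that the invariants $(\msf{P1})$ and $(\msf{P2})$, originally developed for uniform sampling on convex bodies, transfer intact to the logconcave setting. My observation would be that once the covariance estimation (Proposition~\ref{prop:cov-estimation-app}) is in place, both arguments become purely convex-geometric and density-independent: $(\msf{P1})$ uses only the sandwich bound and trace/norm identities under the block-diagonal $M_i$, while $(\msf{P2})$ uses only the convexity of $\supp\mu_{i+1}^X$ and the universal Lemma~\ref{lem:ball-in-isotropy}. The genuine logconcave content sits entirely inside the covariance-estimation ingredient via the exponential reduction, which is why the crude bound $\tr\Sigma_i \leq \tr\Sigma_i^X+160$ from Lemma~\ref{lem:exp-cov} is enough to inherit the $r_i^2\polylog n$ sample budget from the uniform case without loss.
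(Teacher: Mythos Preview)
Your proposal is correct and follows essentially the same approach as the paper: the paper does not give a self-contained proof of this lemma but rather records it as a consequence of the preceding Proposition~\ref{prop:spectralGap-PS}, Proposition~\ref{prop:cov-estimation-app}, Lemma~\ref{lem:control-cov}, the inrad-doubling lemma, and Lemma~\ref{lem:exact-cov-estimation}, all of which you have identified and combined in the right way. Your key observation---that once the covariance sandwich bound is in place the remaining arguments for $(\msf{P1})$ and $(\msf{P2})$ are density-independent convex geometry, with the only logconcave-specific ingredient being the passage from $\tr\Sigma_i^X$ to $\tr\Sigma_i$ via Lemma~\ref{lem:exp-cov}---is exactly the point the paper makes when it says the proofs are ``identical to those for uniform distributions'' and defers to \cite{jia2021reducing,kook2024covariance}.
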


\subsection{Maintaining well-roundedness}

Recall that Step 1-(b) repeats the following while $r=\O(R)$: if
$F_{\#}^{X}\nu_{r}^{X}$ is near-isotropic, then run $\msf{Isotropize}$
on $(4F^{X})_{\#}\nu_{\delta r}^{X}$ for $\delta=1+n^{-1/2}$, and
update $F^{X}\gets4F^{X}$ and $r\gets\delta r$. To ensure this approach
is valid, it must be the case that the input distribution $(4F^{X})_{\#}\nu_{\delta r}^{X}$
satisfies Assumption~\ref{ass:iso-well-rdd} as well. We show that
\cite[Lemma 3.4]{jia2024reducingisotropyvolumekls}, proven for the
uniform distribution over a convex body, remains valid for any logconcave
distribution.
\begin{lem}
\label{lem:outer-loop} Let $\delta=1+n^{-1/2}$ and $F^{X}:\Rn\to\Rn$
an invertible affine map. If $F_{\#}^{X}\nu_{r}^{X}$ is $1.01$-isotropic
with its mean lying within a ball of radius $0.01$, then $(4F^{X})_{\#}\nu_{\delta r}^{X}$
satisfies Assumption~\ref{ass:iso-well-rdd} (i.e., evaluation oracle,
inclusion of $B_{1}(0)$, well-roundedness, and ground set equal to
its support).
\end{lem}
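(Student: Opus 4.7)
The plan is to verify the four requirements of Assumption~\ref{ass:iso-well-rdd} for the pushforward $\mu^X := (4F^X)_{\#}\nu^X_{\delta r}$, whose potential is $W := V \circ (4F^X)^{-1} - \min V$ (extended to $+\infty$ off the image). The evaluation oracle for $W$ is realized by composing the oracle for $V$ with $(4F^X)^{-1}$ together with membership tests for $B_{\delta r}(0)$; since $\nu^X_{\delta r}$ is supported exactly on $\msf L_g \cap B_{\delta r}(0)$ where $V - \min V \leq 10n$ already holds, the image $\supp \mu^X$ coincides with $\msf L_{\mu^X,g}$ automatically, and $\min W = 0$. For the required inclusion of a known unit ball, I apply Lemma~\ref{lem:ball-in-isotropy} to the near-isotropic $F^X_{\#}\nu^X_r$: it yields a universal constant $c > 0$ and a center $y_0$ (the mean, within $0.01$ of the origin) such that $B_c(y_0) \subset F^X(\msf L_g \cap B_r(0)) \subset F^X(\msf L_g \cap B_{\delta r}(0))$. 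Scaling by $4$ gives $B_{4c}(4y_0) \subset (4F^X)(\msf L_g \cap B_{\delta r}(0))$; the factor $4$ is chosen so that $4c \geq 1$ for the explicit constant from Lemma~\ref{lem:ball-in-isotropy}, producing the required unit ball centered at the known point $4y_0$.

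The substantive step is the upper-regularity estimate
\[
\E_{\mu^X}\bigl[\|Y - 4y_0\|^2\bigr] = 16\,\E_{\nu^X_{\delta r}}\bigl[\|F^X(X) - y_0\|^2\bigr] \leq C^2 n.
\]
Here I pass to the lifted exponential distribution $\nu_{\delta r} \propto \exp(-nt)|_{\K \cap C_g \cap C_{\delta r}}$ via Proposition~\ref{prop:x-marginal}: conditionally on $T = t$, the $X$-conditional is the uniform measure on the convex set $\K_t := \{V \leq nt\} \cap B_{\delta r}(0)$. This reduces the problem to a $t$-parametrised family of uniform distributions on convex bodies, enabling a conditional application of the uniform-case argument of \cite[Lemma 3.4]{jia2024reducingisotropyvolumekls}. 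From near-isotropy of $F^X_{\#}\nu^X_r$ together with Lemma~\ref{lem:diameter-level-set} (the logconcave analogue of the isotropic convex-body diameter bound), one obtains $\operatorname{diam} F^X(\msf L_g \cap B_r(0)) = O(n)$; a convexity/cone argument on the thin annulus $B_{\delta r} \setminus B_r$ of thickness $r/\sqrt{n}$ extends this to $\operatorname{diam} F^X(\K_t \cap B_{\delta r}(0)) = O(n)$ uniformly in $t$. Integrating $\|F^X(X) - y_0\|^2$ against the $T$-marginal of $\nu_{\delta r}$, whose support over the ground set has $O(1)$ range by Lemma~\ref{lem:exp-cov}, then yields the required $O(n)$ bound.

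The main obstacle is precisely this final step. The varying density of general logconcave $\nu^X_{\delta r}$ breaks the clean convex-body argument of \cite{jia2024reducingisotropyvolumekls}, whose core ingredient is the $O(n)$ diameter bound for isotropic convex bodies. The exponential reduction restores a uniform conditional on each fibre $\{T = t\}$, converting the task into one of controlling diameters of the level sets $F^X\K_t$ uniformly in $t$; the new diameter bound for near-isotropic grounded logconcave distributions (Lemma~\ref{lem:diameter-level-set}) supplies exactly this input, while the concentration of $T$ on an $O(1)$ range (Lemma~\ref{lem:exp-cov}) ensures the integration in $t$ incurs only an $O(1)$ overhead and no dependence on $R$.
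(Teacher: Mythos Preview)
Your treatment of the evaluation oracle, the ground-set property, and the ball inclusion via Lemma~\ref{lem:ball-in-isotropy} is correct and matches the paper. The gap is in the upper-regularity step.

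Your argument ends with a uniform diameter bound $\operatorname{diam} F^X(\K_t \cap B_{\delta r}(0)) = O(n)$ and then ``integrates $\|F^X(X)-y_0\|^2$ against the $T$-marginal'' to conclude the $O(n)$ second-moment bound. But a diameter bound of order $n$ only yields $\|F^X(X)-y_0\|^2 = O(n^2)$ pointwise, hence $\E\|F^X(X)-y_0\|^2 = O(n^2)$, not $O(n)$; the $O(1)$ range of $T$ is irrelevant here since the integrand is a function of $X$ alone. The alternative you hint at --- a conditional application of the uniform-case argument of \cite{jia2024reducingisotropyvolumekls} to each slice $\mathrm{Unif}(\K_t \cap B_{\delta r})$ --- fails for a different reason: that argument requires near-isotropy of the \emph{conditional} distribution $F^X_\#\mathrm{Unif}(\K_t \cap B_r)$, which you do not have (only the \emph{marginal} $F^X_\#\nu^X_r$ is near-isotropic; for $t$ near $V_0/n$ the slices are tiny and may be arbitrarily skewed under $F^X$).

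What the paper does instead is apply Paouris' tail bound (Lemma~\ref{lem:version-Paouris}) directly to the near-isotropic marginal $(F_0^X)_\#\nu^X_r$, obtaining $\P(\|X\|\geq l\sqrt{n})\leq e^{-cl\sqrt{n}}$. The lifted-space containment $\K\cap C_g\cap C_{\delta r}\subset \delta(\K\cap C_g\cap C_r)$ is then used, not slice by slice, but to compare the integrals of the weight $e^{-nt}$ over the two regions: after the change of variables $z\mapsto \delta z$, the weight picks up a factor $e^{-t\sqrt{n}}$, which over the ground set costs only $e^{O(\sqrt{n})}$. This $e^{O(\sqrt{n})}$ loss is absorbed by the Paouris tail $e^{-cl\sqrt{n}}$ for large enough $l$, yielding a first-moment bound $\E\|X\|=O(\sqrt{n})$ for $(G_0^X)_\#\nu^X_{\delta r}$; reverse H\"older then gives the second moment $O(n)$. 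Lemma~\ref{lem:diameter-level-set} enters only at the very end, to control the shift $\|b\|$ between the centering map $G_0^X$ and $F^X$ --- it is not the engine of the second-moment bound.
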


\begin{proof}
By setting $U=V\circ(H^{X})^{-1}/\ind[x\in H^{X}(\msf L_{g}\cap B_{\delta r}(0))]$
with $H^{X}:=4F^{X}$, one can implement the evaluation oracle for
the convex $U$ (using the oracle for $V$) and recover $H_{\#}^{X}\nu_{\delta r}^{X}\propto\exp(-U)$.
It also follows from the definition of $U$ and $\msf L_{g}$ that
$U(x)\leq10n$ on $\supp H_{\#}^{X}\nu_{\delta r}^{X}$.

Let us now examine the inclusion of $B_{1}(0)$ in its ground set.
When $F^{X}:x\mapsto Ax-\mu$ between $\Rn$, let us define an affine
map $F_{0}^{X}:x\mapsto Ax-b$ from $\Rn$ to $\Rn$ which makes $\nu_{r}^{X}$
centered, noting that $\norm{b-\mu}\leq0.01$. As for the lower regularity,
since $(F_{0}^{X})_{\#}\nu_{r}^{X}$ is $1.01$-isotropic with zero
mean, by Lemma~\ref{lem:ball-in-isotropy}, its support contains
a ball of radius $(e\sqrt{1.01})^{-1}\geq0.35$ centered at the origin.
Hence, as $\norm{b-\mu}\leq0.01$, $\supp F_{\#}^{X}\nu_{r}^{X}$
contains a ball of radius $0.3$ centered at the origin, so does $\supp F_{\#}^{X}\nu_{\delta r}^{X}$.
Therefore, by scaling via $x\mapsto4x$, we can ensure that $\supp H_{\#}^{X}\nu_{\delta r}^{X}$
contains a unit ball $B_{1}(0)$. Due to $\min U\geq0$ and $U(x)\leq10n$
on $\supp H_{\#}^{X}\nu_{\delta r}^{X}$, we clearly have $B_{1}(0)\subset\msf L_{H_{\#}^{X}\nu_{\delta r}^{X},g}$.

We define an affine map $G_{0}^{X}:x\mapsto Ax-\delta b$ between
$\Rn$. Observe that for a constant $c'\geq2\delta$,
\begin{align}
\E_{(G_{0}^{X})_{\#}\nu_{\delta r}^{X}}\norm X & =\int_{0}^{\infty}\P_{(G_{0}^{X})_{\#}\nu_{\delta r}^{X}}(\norm X\geq u)\,\D u\leq c'\sqrt{n}+\int_{c'\sqrt{n}}^{\infty}\P(\norm X\geq u)\,\D u\nonumber \\
 & \leq c'\sqrt{n}+\delta\int_{c'\delta^{-1}\sqrt{n}}^{\infty}\P(\norm X\geq\delta u)\,\D u\,.\label{eq:exp-bound}
\end{align}
Let us bound the integrand in terms of $\P_{(F_{0}^{X})_{\#}\nu_{r}^{X}}(\norm X\geq u)$.
For the embedded $G:\R^{n+1}\to\R^{n+1}$ defined by $G(x,t)=(G_{0}^{X}(x),t)$,
we have
\[
\supp G_{\#}\nu_{\delta r}=G(\K\cap C_{g}\cap C_{\delta r})=\left[\begin{array}{cc}
A\\
 & 1
\end{array}\right](\K\cap C_{g}\cap C_{\delta r})-\left[\begin{array}{c}
\delta b\\
0
\end{array}\right]\,,
\]
where $C_{g}=\msf L_{g}\times\R$ and $C_{\delta r}=B_{\delta r}(0)\times\R$.
As $\K$ and $C_{g}$ contain the origin, and $\delta>1$, we clearly
have $\K\cap C_{g}\cap C_{\delta r}\subset\delta(\K\cap C_{g}\cap C_{r})$,
and thus
\begin{align*}
\supp G_{\#}\nu_{\delta r} & \subseteq\delta\left[\begin{array}{cc}
A\\
 & 1
\end{array}\right](\K\cap C_{g}\cap C_{r})-\left[\begin{array}{c}
\delta b\\
0
\end{array}\right]=\delta\Bpar{\left[\begin{array}{cc}
A\\
 & 1
\end{array}\right](\K\cap C_{g}\cap C_{r})-\left[\begin{array}{c}
b\\
0
\end{array}\right]}\\
 & =\delta\supp F_{\#}\nu_{r}\,,
\end{align*}
where $F:(x,t)\mapsto(F_{0}^{X}(x),t)$ is an affine map between $\R^{n+1}$.
Moreover, it follows from $\delta C_{l\sqrt{n}}=C_{l\delta\sqrt{n}}$
for any $l>0$ that 
\[
(\supp G_{\#}\nu_{\delta r})\backslash C_{l\delta\sqrt{n}}\subset[\delta\supp F_{\#}\nu_{r}]\backslash[\delta C_{l\sqrt{n}}]=\delta\,[(\supp F_{\#}\nu_{r})\backslash C_{l\sqrt{n}}]\,.
\]
Integrating $\exp(-nt)$ over these regions and using $\inf_{\K}t\geq-11$
in $(i)$ below,
\begin{align}
\int_{(\supp G_{\#}\nu_{\delta r})\backslash C_{l\delta\sqrt{n}}}\exp(-nt)\,\D x\D t & \leq\int_{\delta\,[(\supp F_{\#}\nu_{r})\backslash C_{l\sqrt{n}}]}\exp(-nt)\,\D x\D t\nonumber \\
 & =\delta^{n+1}\int_{(\supp F_{\#}\nu_{r})\backslash C_{l\sqrt{n}}}\exp(-nt-t\sqrt{n})\,\D x\D t\nonumber \\
 & \underset{(i)}{\leq}e^{11\sqrt{n}}\delta^{n+1}\int_{(\supp F_{\#}\nu_{r})\backslash C_{l\sqrt{n}}}\exp(-nt)\,\D x\D t\nonumber \\
 & =e^{11\sqrt{n}}\delta^{n+1}\cdot\P_{F_{\#}\nu_{r}}(\norm X\geq l\sqrt{n})\int_{\supp F_{\#}\nu_{r}}\exp(-nt)\,\D x\D t\,.\label{eq:int-ineq}
\end{align}
Since $(F_{\#}\nu_{r})^{X}=(F_{0}^{X})_{\#}\nu_{r}^{X}$ is an $1.01$-isotropic
centered logconcave distribution, a version of Paouris' theorem (Lemma~\ref{lem:version-Paouris})
implies that for some universal constant $c>0$ and any $l\geq2$,
\[
\P_{F_{\#}\nu_{r}}(\norm X\geq l\sqrt{n})=\P_{(F_{0}^{X})_{\#}\nu_{r}^{X}}(\norm X\geq l\sqrt{n})\leq\exp(-cl\sqrt{n})\,.
\]
Dividing both sides of \eqref{eq:int-ineq} by $\int_{\supp G_{\#}\nu_{\delta r}}\exp(-nt)\,\D x\D t$,
and substituting this Paouris bound,
\begin{align*}
\P_{(G_{0}^{X})_{\#}\nu_{\delta r}^{X}}(\norm X\geq l\delta\sqrt{n}) & =\P_{G_{\#}\nu_{\delta r}}(\norm X\geq l\delta\sqrt{n})\\
 & \leq e^{11\sqrt{n}}\delta^{n+1}\exp(-cl\sqrt{n})\,\frac{\int_{\supp F_{\#}\nu_{r}}\exp(-nt)\,\D x\D t}{\int_{\supp G_{\#}\nu_{\delta r}}\exp(-nt)\,\D x\D t}\\
 & \leq e^{1+(12-cl)\sqrt{n}}\,.
\end{align*}
Putting this bound back to \eqref{eq:exp-bound},
\begin{align*}
\E_{(G_{0}^{X})_{\#}\nu_{\delta r}^{X}}\norm X & \leq c'\sqrt{n}+\delta\int_{c'\delta^{-1}\sqrt{n}}^{\infty}\P_{(G_{0}^{X})_{\#}\nu_{\delta r}^{X}}(\norm X\geq\delta u)\,\D u\\
 & \leq c'\sqrt{n}+e^{2+12\sqrt{n}}\int_{c'\delta^{-1}\sqrt{n}}^{\infty}e^{-cu}\,\D u\\
 & =c'\sqrt{n}+c^{-1}e^{2+12\sqrt{n}}e^{-cc'\sqrt{n}/\delta}\,.
\end{align*}
By taking $c'$ large enough, the second term can be made exponentially
small. It then follows from the reverse H\"older (Lemma~\ref{lem:reverse-Holder})
that 
\[
\E_{(G_{0}^{X})_{\#}\nu_{\delta r}^{X}}[\norm X^{2}]\lesssim(\E_{(G_{0}^{X})_{\#}\nu_{\delta r}^{X}}\norm X)^{2}\lesssim n\,.
\]
Thus, for $F^{X}(x)=Ax-\mu=G_{0}^{X}+\delta b-\mu$, we have
\begin{align*}
\E_{F_{\#}^{X}\nu_{\delta r}^{X}}[\norm X^{2}] & \leq2\,\E_{F_{\#}^{X}\nu_{\delta r}^{X}}[\norm{X-\delta b+\mu}^{2}]+2\,\norm{\delta b-\mu}^{2}=2\,\E_{(G_{0}^{X})_{\#}\nu_{\delta r}^{X}}[\norm X^{2}]+2\,\norm{\delta b-\mu}^{2}\\
 & \lesssim n+\frac{\norm b^{2}}{n}\lesssim n\,,
\end{align*}
where the last inequality follows from Lemma~\ref{lem:diameter-level-set}
below. Therefore, $\E_{H_{\#}^{X}\nu_{\delta r}^{X}}[\norm X^{2}]=\O(n)$.
\end{proof}
We now present a key geometric property concerning the diameter of
the level set of logconcave distributions. We remark that this properly
generalizes \cite[Lemma 5.18]{lovasz2007geometry}, which says that
if $z$ is a maximum point of an isotropic logconcave density, then
$\norm z\leq n+1$.
\begin{lem}
[Diameter of level sets]\label{lem:diameter-level-set} Suppose an
isotropic logconcave density in $\R^{n}$ is proportional to $e^{-V}$.
Then, for $V_{0}=\min V$ and any $c>0$,
\[
\max_{x:V(x)-V_{0}\leq cn}\norm x\leq8e^{c+1}\,n\,.
\]
\end{lem}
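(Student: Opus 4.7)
The plan is to bound $r := \|z\|$ for a maximizing $z$ by comparing two estimates on the first-coordinate marginal of $\mu = e^{-V}/Z$: an isotropic 1D tail bound from above, and a logconcave lower bound obtained by interpolating $V$ along the segment from the origin (which is near the mode by Fradelizi) out to $z$. By rotation invariance of isotropy we may assume $z = r e_{1}$, and we aim to show $r \leq 8 e^{c+1} n$.

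First I would verify that the segment $[0,z]$ lies inside a controlled level set of $V$. Since $\mu$ has centroid $0$ by isotropy, Fradelizi's inequality yields $f(0) \geq e^{-n}\max f$, i.e.\ $V(0) - V_{0} \leq n$. Convexity of $V$ then gives $V(sz) \leq (1-s) V(0) + s V(z) \leq V_{0} + (c+1)n$ for all $s \in [0,1]$, so that $[0,z]$ is contained in the convex level set $L_{c+1} := \{V \leq V_{0} + (c+1)n\}$ and $f \geq e^{-(c+1)n}\max f$ along the segment.

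Next I would set up the two sides of the comparison. Let $g(t) = \int_{\R^{n-1}} f(t,w)\, \D w$ denote the first-coordinate marginal; by Pr\'ekopa--Leindler $g$ is a 1D logconcave density, and by isotropy it has mean $0$ and variance $1$. A standard exponential tail/density estimate for 1D isotropic logconcave densities (e.g.\ Lov\'asz--Vempala, Theorem 5.3) gives $g(t) \leq e^{1-|t|}$ for $|t|\geq 1$, so in particular $g(r) \leq e^{1-r}$. For the matching lower bound I would ``thicken'' the segment $[0,z]$ into a cone $C$ with apex $0$ and an $(n-1)$-ball base of radius $\rho$ lying in the slice $\{x_{1} = r\}$. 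Every point of $C$ is a convex combination $(1-s)\cdot 0 + s(z + \tilde w)$ with $\|\tilde w\| \leq \rho$, so using convexity of $V$ together with control of $V$ on a transverse ball around $z$ (obtained by propagating the bound $V \leq V_{0}+(c+1)n$ on $L_{c+1}$ through a small transverse displacement using convexity of $L_{c+1}$), one gets $V \leq V_{0} + O((c+1)n)$ throughout $C$. Integrating $f$ over the slice $C \cap \{x_{1} = r\}$ then yields $g(r) \geq c_{n}\rho^{n-1} e^{-O((c+1)n)}\max f$, with $c_{n}$ the unit-ball volume; inserting the universal bound $\max f \geq (2\pi e)^{-n/2}$ from Jensen's inequality and optimizing over $\rho$ produces $g(r) \geq e^{-A(c+1)n}$ for an absolute constant $A$.

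Combining the two bounds yields $r \leq A(c+1)n + 1$, and absorbing the prefactor into the exponential via $c+1 \leq e^{c+1}$ gives the claimed $r \leq 8 e^{c+1} n$ after careful tracking of absolute constants. The main obstacle is the transverse control of $V$ near $z$: absent smoothness of $V$, the level set $L_{c+1}$ could a priori be very thin across the chord $[0,z]$, so the key technical step is to argue that because $L_{c+1}$ is a convex body containing $[0,z]$ as a full chord \emph{and} a nontrivial neighborhood of $0$ (Fradelizi plus local convexity of $V$), it must also contain a genuinely thick cone-like neighborhood of the chord near $z$ --- this is precisely where the $e^{c+1}$ factor in the bound (rather than a sharp $(c+1)n$) arises.
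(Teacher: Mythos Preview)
Your approach has a genuine gap in the lower-bound step for $g(r)$. You want to lower-bound the marginal at the far point $r$ by thickening the chord $[0,z]$ into a cone with a base of radius $\rho$ at the slice $\{x_1=r\}$. But there is no reason the level set $L_{c+1}$ should be thick near $z$: even if $L_{c+1}$ contained a ball $B_\delta(0)$ and the point $z=re_1$, convexity only guarantees the cone with \emph{apex} $z$ and base $B_\delta(0)$, whose slice at $x_1=r$ degenerates to a single point. So your proposed transverse control of $V$ near $z$ is not available (and Fradelizi only gives $V(0)\le V_0+n$ at the single point $0$, not on a ball). The student-acknowledged ``main obstacle'' is therefore fatal rather than a constant-tracking nuisance.

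The paper's fix is to evaluate the marginal not at $D=r$ but at $D/n$, and to run the cone the other way: take the apex at the far point $x=De_1$ (where thinness is harmless, since you are only using the single value $V(x)\le V_0+cn$) and compare the slice $\mathcal S(D/n)$ to the full slice $\mathcal S(0)$ via the map $y\mapsto x+\frac{y-x}{1-1/n}$. Convexity gives $V(y)\le V(z')+c$ with a Jacobian loss of $(1-1/n)^{n-1}\ge e^{-1}$, hence $g(D/n)\ge e^{-c-1}g(0)$. Crucially, the lower bound on $g(0)$ comes not from pointwise control of $f$ but from the 1D isotropic fact $g(0)\ge 1/8$. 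Logconcavity of $g$ then gives $g\ge e^{-c-1}/8$ on $[0,D/n]$, and integrating yields $D\le 8e^{c+1}n$. The two ideas you are missing are: (i) shift the evaluation point from $r$ to $r/n$ so that the cone argument compares to a slice with guaranteed mass; and (ii) use the integral bound $g(0)\ge 1/8$ rather than a pointwise bound on $f$.
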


\begin{proof}
Pick any point $x$ with $V(x)\leq V_{0}+cn$. We may assume by rotation
that $x=De_{1}$ for $D=\norm x>0$. Let us consider the marginal
of the density along the direction $e_{1}$, which is also isotropic
with zero mean. We also consider the $(n-1)$-dimensional slices perpendicular
to $e_{1}$ at $0$ and at $D/n$, calling them $\mc S(0)$ and $\mc S(D/n)$
respectively. We can define a bijective mapping $f:\mc S(D/n)\to\mc S(0)$
by $y\mapsto z=x+\frac{y-x}{1-n^{-1}}$; in words, $f(y)$ is the
point $z\in\mc S(0)$ such that $y\in\overline{zx}$. Using the convexity
of $V$ with endpoints $z$ and $x$,
\[
V(y)=V\bpar{\frac{n-1}{n}\,z+\frac{1}{n}\,x}\leq\frac{n-1}{n}\,V(z)+\frac{1}{n}\,V(x)\leq V(z)+c\,.
\]
It then follows that
\begin{align*}
\int_{\mc S(D/n)}e^{-V(y)}\,\D y & \geq e^{-c}\int_{\mc S(D/n)}e^{-V(f(y))}\,\D y=\frac{e^{-c}}{\abs{Df}}\int_{\mc S(0)}e^{-V(z)}\,\D z\\
 & =\bpar{1-\frac{1}{n}}^{n-1}e^{-c}\int_{\mc S(0)}e^{-V(z)}\,\D z\geq e^{-c-1}\int_{\mc S(0)}e^{-V(z)}\,\D z\,.
\end{align*}
By Lemma~\ref{lem:iso-onedim}, the density of the marginal at $0$
is at least $1/8$, so the density at $D/n$ is at least $e^{-c-1}/8$.
Since the value of the marginal density function is at least $\min(1/8,e^{-c-1}/8)=e^{-c-1}/8$
due to the convexity of $V$, the measure of the marginal in the range
$[0,D/n]$ is at least
\[
\frac{D}{8n}e^{-c-1}.
\]
As this must be at most $1$ (actually at most $1-(1/e)$), we obtain
that $D\le8e^{c+1}\,n$ as claimed. 
\end{proof}

\subsection{Rounding the grounded distribution}

We have justified Step 1 of the overall approach thus far. We now
examine Step 2 and 3.

\paragraph{Step 2 ($\nu_{D}^{X}\to\nu^{X}$). }

Given an affine map $F^{X}$ such that $F_{\#}^{X}\nu_{D}^{X}$ is
$1.01$-isotropic, and we show that $(4F^{X})_{\#}\nu^{X}$ satisfies
Assumption~\ref{ass:iso-well-rdd}. The following result allows us
to bound the complexity of $\msf{Isotropize}$ on $(4F^{X})_{\#}\nu^{X}$
by $\Otilde(n^{3})$ as well.
\begin{lem}
\label{lem:nuD-nu} Let $F^{X}:\Rn\to\Rn$ be an invertible affine
map, and $D=2R\log400$. If $F_{\#}^{X}\nu_{D}^{X}$ is $1.01$-isotropic
with its mean located within $0.01$ from the origin, then $(4F^{X})_{\#}\nu_{\delta r}^{X}$
satisfies Assumption~\ref{ass:iso-well-rdd}.
\end{lem}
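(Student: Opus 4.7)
The approach mirrors Lemma~\ref{lem:outer-loop}, adapted to the single-step jump from $\nu_{D}^{X}$ to $\nu^{X}$. Writing $F^{X}(x)=Ax-\mu_{F}$, let $F_{0}^{X}(x)=Ax-b$ be the centering for $\nu_{D}^{X}$ (so $\|b-\mu_{F}\|\leq 0.01$) and set $H^{X}=4F^{X}$. Define $U:=V\circ(H^{X})^{-1}/\ind[x\in H^{X}\msf L_{g}]$; this is a convex potential, its evaluation oracle follows from that of $V$ together with knowledge of $\msf L_{g}$, we have $\min U=V_{0}=0$, and $\supp H_{\#}^{X}\nu^{X}=H^{X}\msf L_{g}=\{U\leq 10n\}$ coincides with the ground set. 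For the ball inclusion, since $(F_{0}^{X})_{\#}\nu_{D}^{X}$ is centered $1.01$-isotropic, Lemma~\ref{lem:ball-in-isotropy} gives $\supp F_{\#}^{X}\nu_{D}^{X}\supset B_{0.3}(0)$; combined with $\supp\nu_{D}^{X}\subset\supp\nu^{X}$ and the rescaling by $4$, this yields $B_{1}(0)\subset\supp H_{\#}^{X}\nu^{X}$.

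The main work is bounding $\E_{H_{\#}^{X}\nu^{X}}[\|X\|^{2}]=O(n)$, or equivalently $\E_{F_{0,\#}^{X}\nu^{X}}[\|X\|^{2}]=O(n)$. Following Lemma~\ref{lem:outer-loop}, I will bound the first moment $\E_{F_{0,\#}^{X}\nu^{X}}\|X\|\lesssim\sqrt{n}$ and then apply the reverse H\"older inequality (Lemma~\ref{lem:reverse-Holder}) to the logconcave measure $F_{0,\#}^{X}\nu^{X}$ to upgrade this to $\E\|X\|^{2}\lesssim n$. I split $\E_{F_{0,\#}^{X}\nu^{X}}\|X\|\leq c\sqrt{n}+\int_{c\sqrt{n}}^{\infty}\P_{F_{0,\#}^{X}\nu^{X}}(\|X\|\geq u)\,\D u$. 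Using Lemma~\ref{lem:LC-exponential-decay} together with $\E_{\pi^{X}}\|X-x_{0}\|^{2}\leq R^{2}$ and the warmness $\D\nu^{X}/\D\pi^{X}\leq 1.01$ from \eqref{eq:nu-pi-warmness}, the choice $D=2R\log 400$ yields $\nu^{X}(B_{D}^{c})\lesssim 1/400^{2}$. Decomposing $\nu^{X}=p\nu_{D}^{X}+(1-p)\nu_{\mathrm{tail}}^{X}$, the $\nu_{D}^{X}$-contribution to the tail probability is controlled by Paouris' lemma (Lemma~\ref{lem:version-Paouris}), giving $\P_{F_{0,\#}^{X}\nu_{D}^{X}}(\|X\|\geq l\sqrt{n})\leq e^{-cl\sqrt{n}}$ for $l\geq 2$.

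The main obstacle is bounding the tail contribution from $B_{D}^{c}$, since $\supp\nu^{X}=\msf L_{g}$ may extend well beyond $B_{D}$. I plan to follow the lifted-space strategy of Lemma~\ref{lem:outer-loop}: using that $\K\cap C_{g}\cap C_{D}$ contains $B_{1}^{n+1}(0)$ (after suitable translation of $t$) and convexity, $\supp F_{0,\#}\nu\subset\lambda\supp F_{0,\#}\nu_{D}$ for some $\lambda$ bounded by the diameter ratio, so that $\int_{\supp F_{0,\#}\nu\setminus C_{l\lambda\sqrt{n}}}e^{-nt}\,\D x\D t$ can be related to the corresponding integral over $\supp F_{0,\#}\nu_{D}$ via the change of variables $z\mapsto z/\lambda$. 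The key novelty compared to Lemma~\ref{lem:outer-loop}, where $\lambda=\delta=1+n^{-1/2}$ was close to $1$, is that here $\lambda$ may be significantly larger; however, the exponential smallness $\nu^{X}(B_{D}^{c})\lesssim 1/400^{2}$ provides the necessary slack to absorb the factors arising from this dilation. After establishing $\E_{F_{0,\#}^{X}\nu^{X}}\|X\|\lesssim\sqrt{n}$, reverse H\"older together with the factor $16$ coming from $H^{X}=4F^{X}$ yields $\E_{H_{\#}^{X}\nu^{X}}\|X\|^{2}\leq C^{2}n$ for a known constant $C$, verifying Assumption~\ref{ass:iso-well-rdd}.
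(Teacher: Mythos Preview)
Your plan has a genuine gap in the upper-regularity step. The lifted-space dilation argument of Lemma~\ref{lem:outer-loop} works only because the dilation factor there is $\delta=1+n^{-1/2}$: the Jacobian $\delta^{n+1}$ and the tilt $e^{11(\delta-1)n}$ arising from the change of variables are both $e^{O(\sqrt{n})}$ and can be absorbed by Paouris' bound $e^{-cl\sqrt{n}}$ once $l$ is moderately large. In your setting the required $\lambda$ must satisfy $\K\cap C_g\subset\lambda(\K\cap C_g\cap C_D)$, hence $\lambda\gtrsim\mathrm{diam}(\msf L_g)/D$. Since $\msf L_g$ can have diameter of order $nR$ while $D\asymp R$ (e.g.\ $V(x)=|x_1|/R+\sum_{i\geq2}x_i^2$), $\lambda$ need not be close to $1$; already when $\lambda-1$ is a fixed positive constant the blow-up from the change of variables is $e^{\Theta(n)}$. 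The compensation you invoke, $\nu^X(B_D^c)\lesssim 1/400^2$, is a constant independent of $n$ and cannot offset an exponential-in-$n$ factor---it is not ``exponentially small'' in the relevant sense. The mixture decomposition $\nu^X=p\,\nu_D^X+(1-p)\,\nu_{\mathrm{tail}}^X$ runs into the same obstruction: the tail mass $(1-p)$ is only a constant, while $\E_{F_{0,\#}^X\nu_{\mathrm{tail}}^X}\|X\|$ can be as large as $\mathrm{diam}(F_0^X\msf L_g)$, which you have no control over.

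The paper avoids the lifted space entirely here and uses a much softer first-moment comparison. The key observation is that $\mu_D:=F_{\#}^{X}\nu_D^{X}$ is $(1-\veps)^{-1}$-warm with respect to $\mu:=F_{\#}^{X}\nu^{X}$ (note the direction of the warmness), so $\chi^2(\mu_D\mmid\mu)\leq\veps/(1-\veps)$ is a small constant. For any unit vector $v$, Cauchy--Schwarz gives
\[
\bigl|\E_{\mu_D}|X\cdot v|-\E_{\mu}|X\cdot v|\bigr|\leq\sqrt{\E_{\mu}[|X\cdot v|^2]}\,\sqrt{\chi^2(\mu_D\mmid\mu)}\,,
\]
and reverse H\"older on the one-dimensional logconcave marginal bounds $\sqrt{\E_{\mu}[|X\cdot v|^2]}\leq4\,\E_{\mu}|X\cdot v|$. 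Rearranging yields $\E_{\mu}|X\cdot v|\lesssim\E_{\mu_D}|X\cdot v|\lesssim1$ (using near-isotropy of $\mu_D$), and applying this to each coordinate direction together with reverse H\"older again gives $\E_{\mu}[\|X\|^2]=O(n)$. No control on the support of $\mu$ or on any dilation factor is needed.
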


\begin{proof}
Since $\nu^{X}$ is $1.01$-warm with respect to $\pi^{X}$ due to
\eqref{eq:nu-pi-warmness}, $\E_{\nu^{X}}[\norm X^{2}]\leq1.01\,\E_{\pi^{X}}[\norm X^{2}]\leq(2R)^{2}$,
so Lemma~\ref{lem:LC-exponential-decay} leads to 
\[
\P_{\nu^{X}}(\norm X\geq2tR)\leq e^{-t+1}\quad\text{for any }t\geq1\,.
\]
Thus, $\P_{\nu^{X}}(\norm X\leq D)\geq1-\veps$ for $D=2R\log\frac{e}{\veps}$,
which implies
\[
\frac{\nu_{D}^{X}}{\nu^{X}}=\bpar{\P_{\nu^{X}}(\supp\nu_{D}^{X})}^{-1}=\bpar{\P_{\nu^{X}}(\norm X\leq D)}^{-1}\leq(1-\veps)^{-1}\,.
\]
It means that $\mu_{D}:=F_{\#}^{X}\nu_{D}^{X}$ is also $(1-\veps)^{-1}$-warm
with respect to $\mu:=F_{\#}^{X}\nu^{X}$.

We first address the upper regularity $(\msf{UR})$. For any unit
vector $v\in\Rn$,
\begin{align*}
\bigl|\E_{\mu_{D}}\abs{X\cdot v}-\E_{\mu}\abs{X\cdot v}\bigr| & =\Big|\int\abs{X\cdot v}\,(\D\mu_{D}-\D\mu)\Big|\\
 & \leq\sqrt{\E_{\mu}[\abs{X\cdot v}^{2}]}\sqrt{\chi^{2}(\mu_{D}\mmid\mu)}\\
 & \leq4\,\E_{\mu}\abs{X\cdot v}\,\bpar{\frac{\mu_{D}}{\mu}-1}\,,
\end{align*}
where the last line follows from the reverse H\"older (Lemma~\ref{lem:reverse-Holder}).
Rearranging terms, we obtain that
\begin{align*}
\E_{\mu}\abs{X\cdot v} & \leq\frac{1-\veps}{1-5\veps}\,\E_{\mu_{D}}\abs{X\cdot v}\leq\frac{1-\veps}{1-5\veps}\,\bpar{\E_{\mu_{D}}\abs{(X-\E_{\mu_{D}}X)\cdot v}+0.01}\\
 & \leq\frac{1-\veps}{1-5\veps}\times1.02\,.
\end{align*}
Hence, by taking $\veps$ small enough, we can ensure that $\E_{\mu}\abs{X\cdot v}\leq1.05$
for any $v\in\Rn$, and this immediately implies that 
\[
\E_{\mu}[\norm X^{2}]=\E_{F_{\#}^{X}\nu^{X}}[\norm X^{2}]=\O(n)\,.
\]

As for the lower regularity $(\msf{LR})$, we already showed in the
proof of Lemma~\ref{lem:outer-loop} that due to the isotropy, the
support of $F_{\#}^{X}\nu_{D}^{X}$ contains a ball of radius $0.3$
centered at the origin, so does that of $F_{\#}^{X}\nu^{X}$. Therefore,
scaling by $x\mapsto4x$, we can conclude that $B_{1}(0)\subseteq\supp(4F^{X})_{\#}\nu^{X}$
and that $\E_{(4F^{X})_{\#}\nu^{X}}[\norm X^{2}]=\O(n)$. The implementability
of the evaluation oracle for the potential, as well as $\supp(4F^{X})_{\#}\nu^{X}=\msf L_{(4F^{X})_{\#}\nu^{X},g}$,
can be proven similarly to Lemma~\ref{lem:outer-loop}.
\end{proof}

\paragraph{Step 3 ($\nu^{X}\to\pi^{X}$). }

We show that if $F_{\#}^{X}\nu^{X}$ is $1.01$-isotropic, then the
largest eigenvalue of the covariance of $(4F^{X})_{\#}\pi^{X}$ is
$\O(1)$, and its ground set contains $B_{1}(0)$.
\begin{lem}
\label{lem:nu-pi} Let $F^{X}:\Rn\to\Rn$ be an invertible affine
map such that $F_{\#}^{X}\nu^{X}$ is $1.01$-isotropic with its mean
located within $0.01$ from the origin. Then, for the affine map $H^{X}=4F^{X}$,
the ground set $H^{X}\msf L_{g}$ of $H_{\#}^{X}\pi^{X}$ contains
a unit ball $B_{1}(0)$, and $\norm{\cov H_{\#}^{X}\pi^{X}}=\O(1)$.
\end{lem}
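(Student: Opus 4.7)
Write $\tilde{\pi}:=H^{X}_{\#}\pi^{X}$ and $\tilde{\nu}:=H^{X}_{\#}\nu^{X}$, both logconcave on $\R^{n}$. By construction $\supp\tilde{\nu}=H^{X}\msf L_{g}$ is the ground set of $\tilde{\pi}$, and since affine pushforwards preserve Radon--Nikodym ratios, \eqref{eq:nu-pi-warmness} gives $\tilde{\nu}(x)/\tilde{\pi}(x)\le 1.01$ on $\supp\tilde\nu$, equivalently $\tilde{\pi}(\supp\tilde\nu)=\pi^{X}(\msf L_{g})\ge 1/1.01$. Also $\tilde\nu = 4_\#(F^X_\# \nu^X)$, so $F_{\#}^{X}\nu^{X}$ being $1.01$-isotropic with mean within $0.01$ of the origin translates into $\cov\tilde{\nu}\preceq 16\cdot 1.01\,I_{n}$ and $\norm{\E_{\tilde\nu}X}\le 0.04$; in particular $\E_{\tilde\nu}[(X\cdot v)^{2}]\le 17$ for every unit $v$.

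The ground-set containment is identical to the argument already used in Lemma~\ref{lem:outer-loop} and Lemma~\ref{lem:nuD-nu}: by Lemma~\ref{lem:ball-in-isotropy} applied to the isotropic distribution $F^X_\# \nu^X$, its support contains a ball of radius $(e\sqrt{1.01})^{-1}>0.35$ centered at the mean, hence a ball of radius $0.34$ at the origin. Scaling by $4$, $H^{X}\msf L_{g}=\supp\tilde{\nu}$ contains $B_{1.36}(0)\supseteq B_{1}(0)$, giving the first claim.

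For the covariance bound, fix a unit vector $v$ and decompose
\begin{align*}
\E_{\tilde\pi}\abs{X\cdot v}
&=\int_{\supp\tilde\nu}\abs{x\cdot v}\,\D\tilde\pi+\int_{(\supp\tilde\nu)^{c}}\abs{x\cdot v}\,\D\tilde\pi\\
&=\tilde\pi(\supp\tilde\nu)\,\E_{\tilde\nu}\abs{X\cdot v}+\int_{(\supp\tilde\nu)^{c}}\abs{x\cdot v}\,\D\tilde\pi\,.
\end{align*}
The first term is bounded by $\sqrt{\E_{\tilde\nu}[(X\cdot v)^{2}]}\le\sqrt{17}$. For the second, Cauchy--Schwarz yields
\[
\int_{(\supp\tilde\nu)^{c}}\abs{x\cdot v}\,\D\tilde\pi\le\sqrt{\tilde\pi\bpar{(\supp\tilde\nu)^{c}}}\,\sqrt{\E_{\tilde\pi}[(X\cdot v)^{2}]}\,,
\]
and Lemma~\ref{lem:LC-tail} applied to $\pi^{X}$ with $\beta=10$ gives $\tilde\pi((\supp\tilde\nu)^{c})=\pi^{X}(\msf L_{g}^{c})\le(10e^{-9})^{n-1}\le e^{-cn}$ for a universal constant $c>0$. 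Combining with the reverse Hölder inequality (Lemma~\ref{lem:reverse-Holder}) $\sqrt{\E_{\tilde\pi}[(X\cdot v)^{2}]}\le 4\,\E_{\tilde\pi}\abs{X\cdot v}$, we obtain
\[
\E_{\tilde\pi}\abs{X\cdot v}\le \sqrt{17}+4e^{-cn/2}\,\E_{\tilde\pi}\abs{X\cdot v}\,.
\]
For $n$ large enough that $4e^{-cn/2}\le 1/2$ this yields $\E_{\tilde\pi}\abs{X\cdot v}\le 2\sqrt{17}$, and then reverse Hölder again gives $\E_{\tilde\pi}[(X\cdot v)^{2}]=\O(1)$ uniformly in $v$. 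Since $\norm{\cov\tilde\pi}=\sup_{\norm v=1}\E_{\tilde\pi}[((X-\E_{\tilde\pi}X)\cdot v)^{2}]\le\sup_{\norm v=1}\E_{\tilde\pi}[(X\cdot v)^{2}]$, the claim follows (small $n$ is handled trivially, e.g.\ by absorbing into the constant).

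The main subtlety is the second step: the natural direction of warmness goes from $\nu^{X}$ to $\pi^{X}$, so one cannot just pass moments directly from $\tilde\nu$ to $\tilde\pi$. The exponential smallness of $\pi^{X}(\msf L_{g}^{c})$ from Lemma~\ref{lem:LC-tail}, together with the self-bounding loop closed by reverse Hölder, is what lets us absorb the tail contribution and conclude an $\O(1)$ bound.
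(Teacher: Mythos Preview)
Your proof is correct, and the ground-set argument matches the paper's. For the covariance bound, however, you take a different route. The paper follows the template of Lemma~\ref{lem:nuD-nu}: it bounds $\bigl|\E_{F_\#^X\nu^X}|X\cdot v|-\E_{F_\#^X\pi^X}|X\cdot v|\bigr|$ via Cauchy--Schwarz against $\sqrt{\chi^2(\nu^X\mmid\pi^X)}\le \sup(\nu^X/\pi^X)-1\le 0.01$, combined with reverse H\"older, yielding directly $\E_{F_\#^X\pi^X}|X\cdot v|\le 1.05\,\E_{F_\#^X\nu^X}|X\cdot v|$ with a dimension-free constant and no case analysis. You instead split $\E_{\tilde\pi}|X\cdot v|$ into the parts inside and outside $\supp\tilde\nu$, use the exact identity $\tilde\pi|_{\supp\tilde\nu}\propto\tilde\nu$ on the first, and close a self-bounding loop on the second via the small mass of $\msf L_g^c$.

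Your approach is perfectly valid but a touch heavier: the paper's $\chi^2$ trick avoids the self-bounding step and the small-$n$ caveat entirely. Incidentally, your hand-wave about small $n$ is unnecessary: you already cite $\pi^X(\msf L_g^c)\le 1-1/1.01<0.01$ in your first paragraph, and plugging this (rather than the $(10e^{-9})^{n-1}$ bound) into your Cauchy--Schwarz step gives $4\sqrt{0.01}=0.4<1/2$ uniformly in $n$, so the loop closes for all $n$ without any exponential-in-$n$ input.
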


\begin{proof}
As in the proof of Lemma~\ref{lem:nuD-nu}, for any unit vector $v\in\Rn$,
\[
\bigl|\E_{F_{\#}^{X}\nu^{X}}\abs{X\cdot v}-\E_{F_{\#}^{X}\pi^{X}}\abs{X\cdot v}\bigr|\leq4\,\E_{F_{\#}^{X}\pi^{X}}\abs{X\cdot v}\,\bpar{\frac{\nu^{X}}{\pi^{X}}-1}\leq0.04\,\E_{F_{\#}^{X}\pi^{X}}\abs{X\cdot v}\,,
\]
and thus $\E_{F_{\#}^{X}\pi^{X}}\abs{X\cdot v}\leq1.05\,\E_{F_{\#}^{X}\nu^{X}}\abs{X\cdot v}$.
Using the reverse H\"older inequality in $(i)$ below,
\begin{align*}
\E_{F_{\#}^{X}\pi^{X}}[\abs{(X-\E_{F_{\#}^{X}\pi^{X}}X)\cdot v}^{2}] & \leq\E_{F_{\#}^{X}\pi^{X}}[\abs{X\cdot v}^{2}]\underset{(i)}{\leq}16\,(\E_{F_{\#}^{X}\pi^{X}}\abs{X\cdot v})^{2}\leq20\,(\E_{F_{\#}^{X}\nu^{X}}\abs{X\cdot v})^{2}\\
 & \leq20\,\bpar{\E_{F_{\#}^{X}\nu^{X}}\abs{(X-\E_{F_{\#}^{X}\nu^{X}}X)\cdot v}+0.01}^{2}\leq50\,.
\end{align*}
Thus, the largest eigenvalue of $\cov H_{\#}^{X}\pi^{X}$ is $\O(1)$,
and the ground set of $H_{\#}^{X}\pi^{X}$ contains $B_{1}(0)$.
\end{proof}
We note that $\E_{H_{\#}^{X}\pi^{X}}[\norm X^{2}]=\O(n)$, so we can
first run $\tgc$ to generate an $\O(1)$-warm start $Z_{0}=(X_{0},T_{0})$
for $H_{\#}\pi$, using $\Otilde(n^{3})$ queries in expectation.
Then, as in $\msf{Isotropize}$, we run $\PS_{N}(H_{\#}\pi,\delta_{Z_{0}},h)$
with $N\asymp n^{2}\polylog n$ and $h\asymp(n^{2}\polylog n)^{-1}$,
obtaining $k\asymp n\polylog n$ many samples to compute the empirical
covariance $\overline{\Sigma}$ for $\Sigma=\cov H_{\#}^{X}\pi^{X}$.
It then follows from Lemma~\ref{lem:exact-cov-estimation} that with
probability at least $1-\O(n^{-1})$,
\[
0.99\,\overline{\Sigma}\preceq\Sigma\preceq1.01\,\overline{\Sigma}\,.
\]
Thus, $(\overline{\Sigma}^{-1/2}\circ H^{X})_{\#}\pi^{X}$ becomes
$1.01$-isotropic. Since one iteration of $\PS_{N}$ uses $\Otilde(N)$
queries, the total complexity of this procedure (in addition to $\tgc$)
would be $\Otilde(n^{3})$. Therefore, Step 3 also uses $\Otilde(n^{3})$
queries in expectation.

\section{Application 2: Integrating logconcave functions\label{sec:integration}}

In this section, we propose a faster algorithm for integrating logconcave
functions, leveraging our improved sampling guarantees and streamlining
previous analysis in \cite{kannan1997random,lovasz2006simulated,cousins2018gaussian}.
\begin{problem}
[Integration] \label{prob:integration} Assume access to a well-defined
function oracle $\eval_{x_{0},R}(V)$ for convex $V:\Rn\to\overline{\R}$.
Given $\veps>0$, what is the query complexity of computing a $(1+\varepsilon)$-multiplicative
approximation $F$ to the integral of $f=e^{-V}$ with probability
at least $3/4$, i.e., 
\[
(1-\veps)\int f\leq F\leq(1+\veps)\int f\,.
\]
\end{problem}

We can assume that $\min V=0$ and that $\pi^{X}$ is well-rounded
due to the rounding algorithm in the previous section. Hereafter,
we assume that $x_{0}=0$, $R^{2}=C^{2}n$ for a known constant $C>0$,
$\K=\{(x,t):\Rn\times\R:V(x)-11n\leq nt\}$, and that for $l=C^{2}\vee\log\frac{8e}{\veps}$,
\[
\bar{\K}=\K\cap\{B_{Rl}(0)\times[-11,4+13l]\}\,.
\]
We recall that $\bar{\K}$ takes up $(1-\frac{\veps}{4})$ measure
of $\pi$ and denote $\bar{\pi}:=\pi|_{\bar{\K}}$. In this section,
we establish an improved complexity of integrating a logconcave function.
\begin{thm}
[Restatement of  Theorem~\ref{thm:integration-intro}] For any integrable
logconcave $f:\R^{n}\rightarrow\R$ given by a well-rounded function
oracle and any $\varepsilon>0$, there exists an algorithm that with
probability at least $3/4$, returns a $(1+\varepsilon)$-multiplicative
approximation to the integral of $f$ using $\Otilde(n^{3}/\varepsilon^{2})$
queries. For an arbitrary logconcave $f$, the total query complexity
is bounded by $\Otilde(n^{3.5}\polylog R+n^{3}/\varepsilon^{2})$.
\end{thm}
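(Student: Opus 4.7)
The plan is to reduce, as throughout the paper, to integrating the lifted exponential distribution. By Proposition~\ref{prop:x-marginal}, $\int f = n \int_{\K} e^{-nt}\,\D x\D t$, so it suffices to approximate $Z := \int_{\K} e^{-nt}\,\D x\D t$. Since only the well-rounded case requires work (the general case reduces to it via Theorem~\ref{thm:iso-rounding} at cost $\Otilde(n^{3.5}\polylog R)$), I assume $R^{2}=\O(n)$ and work on the truncated domain $\bar{\K}$, which loses only an $\O(\veps)$-fraction of the mass of $\pi$. I will design a telescoping sequence of functions $f_{0},\dots,f_{m}$ on $\bar{\K}$ whose integrals are successively comparable, with $f_{0}$ easy to integrate in closed form and $f_{m}\propto e^{-nt}|_{\bar{\K}}$, and write
\[
Z \;\approx\; \int f_{0} \;\cdot\; \prod_{i=0}^{m-1} \frac{\int f_{i+1}}{\int f_{i}} \;=\; \int f_{0} \;\cdot\; \prod_{i=0}^{m-1} \E_{\mu_{i}}\!\Bbrack{E_{i}}, \qquad E_{i} := \frac{f_{i+1}}{f_{i}},\; \mu_{i}\propto f_{i}.
\]
I will choose the three phases exactly as in the sketch at the end of \S\ref{ssec:techniques}: Phase I updates $\sigma^{2} \gets \sigma^{2}(1+1/n)$ to move from $\exp(-\tfrac{n}{2}\norm x^{2})|_{\bar{\K}}$ to $\exp(-\tfrac{1}{2}\norm x^{2})|_{\bar{\K}}$; Phase II updates $\rho \gets \rho(1+1/n)$ to move from $\exp(-\tfrac{1}{2}\norm x^{2}-t)|_{\bar{\K}}$ to $\exp(-\tfrac{1}{2}\norm x^{2}-nt)|_{\bar{\K}}$; Phase III updates $\sigma^{2} \gets \sigma^{2}(1+\sigma^{2}/n)$ to reach $\exp(-nt)|_{\bar{\K}}$. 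Each phase has $m=\Otilde(n)$ steps, giving $m=\Otilde(n)$ total.

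Next I verify the two quantitative properties required of the schedule. The warmness $\eu R_{\infty}(\mu_{i}\mmid\mu_{i+1})=\O(1)$ follows from identical computations to Lemma~\ref{lem:tgc-phase1} and Lemma~\ref{lem:tgc-phase2} (the $\sigma^{2}$- and $\rho$-updates are slower by a factor of $l$ than in $\tgc$, which only helps). The variance bound $\var_{\mu_{i}}(E_{i}/\E_{\mu_{i}}E_{i})=\O(1/m)$ will be obtained phase by phase: for Phase I the key input is the logconcavity of $a\mapsto a^{n}\int h^{a}$ due to \cite{kalai2006simulated} applied to $h(x)=e^{-\norm x^{2}/2}|_{\bar{\K}}$; Phases II and III follow from the generalized lemma of \cite{cousins2018gaussian} relating consecutive ratios of Gaussian/exponential tilts. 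Together these give $\chi^{2}(\mu_{i+1}\mmid\mu_{i})=\O(1/m)$, hence $\var(\prod E_{i})/(\E\prod E_{i})^{2}=\prod(1+\O(1/m))-1=\O(1)$ in the idealized (exactly distributed, independent) setting, so Chebyshev with $k=\Otilde(1/\veps^{2})$ samples per phase yields a $(1\pm\veps)$-approximation with constant probability.

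The main obstacle, and the place where the paper's stronger sampler really pays off, is controlling the two sources of error that break the idealized argument: samples are drawn from $\bar{\mu}_{i}\neq\mu_{i}$, and within a phase the samples $X_{1},\dots,X_{k}$ used to estimate $E_{i}$ are dependent (they are consecutive outputs of $\psann$). For the first, I invoke the $\eu R_{\infty}$-guarantee of $\psann$ (Theorem~\ref{thm:annealing-query-comp}): by choosing target accuracy $\veps' = \veps/\poly(n,R)$, the joint law of all $\Otilde(nk)$ samples across all phases satisfies $\eu R_{\infty}(\otimes\bar{\mu}_{i}\mmid\otimes\mu_{i})\le\veps$, which only multiplies any bad-event probability by $1+\veps$ and costs at most $\polylog$ factors because the mixing time is polylogarithmic in $1/\veps'$. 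For the second, I use $\beta$-mixing: since the $\psann$ chain mixes in $\eu R_{\infty}$ from an arbitrary feasible start (Lemma~\ref{lem:ps-annealing-iter}), running it for $\Otilde(1)$ mixing times between recorded samples yields $\beta$-mixing coefficient $\O(\veps)$ per pair, so the joint law is within $\tv$-distance $\O(k\veps)$ of the product measure and Chebyshev applies to the product estimator at only an additive $\O(k\veps)$ probability loss. Per-phase cost is $k\cdot\Otilde(n^{2})=\Otilde(n^{2}/\veps^{2})$ queries (using $\psann$'s per-sample complexity for the well-rounded targets, where $\sigma^{2}\vee l^{2}=\O(1)$ up to polylog), and with $\Otilde(n)$ phases the total is $\Otilde(n^{3}/\veps^{2})$, yielding the well-rounded bound; prepending the rounding algorithm gives the general bound.
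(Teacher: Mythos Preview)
Your plan matches the paper's approach (lift, telescope through tilted Gaussians, handle bias via $\eu R_\infty$ and dependence via $\beta$-mixing), but two steps have genuine gaps.

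\textbf{Phase II variance control.} You update only $\rho\gets\rho(1+1/n)$ with $\sigma^2$ fixed, following the informal sketch at the end of \S\ref{ssec:techniques}, and cite ``the generalized lemma of \cite{cousins2018gaussian}.'' Neither cited tool applies: Lemma~\ref{lem:faster_rate} controls \emph{Gaussian} updates in $x$, not exponential tilts in $t$, while Lemma~\ref{lem:slower_rate} requires the \emph{entire} potential to scale (i.e., $f_{i+1}=f_i^\beta$), which fails when $\tfrac12\|x\|^2$ is held fixed. Bounding $\chi^2(\mu_{i+1}\mmid\mu_i)$ for a pure $\rho$-step reduces to bounding $\var_{\mu_\rho}(T)$, and the obvious estimate from the $t$-diameter gives only $O(l^2)$; at $\rho\approx n$ this yields $\chi^2=\Theta(1)$, so $\prod_i(1+\chi^2_i)$ is exponentially large and the product-variance argument collapses. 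The paper's actual Phase II (Algorithm~\ref{alg:integral}, Line~\ref{line:phase2} and Lemma~\ref{lem:int-phase2}) differs from the sketch: it uses a \emph{two-step} update per inner phase --- first $(\sigma^2,\rho)\gets(\sigma^2/\beta,\beta\rho)$ so that the whole potential scales and Lemma~\ref{lem:slower_rate} applies, then $\sigma^2\gets\beta\sigma^2$ so that Lemma~\ref{lem:faster_rate} applies --- precisely to make both lemmas usable.

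\textbf{Phase III per-phase cost.} Your claim that $\sigma^2\vee l^2=\O(1)$ is false in Phase III, where $\sigma^2$ ranges up to $C^2 n$ and the per-sample cost of $\psann$ is $\Otilde(n^2\sigma^2)$ (Theorem~\ref{thm:annealing-query-comp}). The total is still $\Otilde(n^3/\veps^2)$, but only because the accelerated schedule gives $\Otilde(C^2 n/\sigma^2)$ inner phases per doubling of $\sigma^2$, so each doubling costs $(C^2 n/\sigma^2)\cdot\Otilde(n^2\sigma^2)\cdot m=\Otilde(C^2 n^3/\veps^2)$ as in Lemma~\ref{lem:int-phase3}. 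The arithmetic you wrote --- $\Otilde(n)$ phases at $\Otilde(n^2/\veps^2)$ each --- does not establish this and is off for the late phases.

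A smaller structural point: the paper runs $m$ \emph{independent threads} with one sample per phase per thread, so within-phase samples are exactly independent and $\beta$-mixing is used only to decouple \emph{across} phases within a thread (Lemma~\ref{lem:estimation-accuracy}). Your single-chain scheme can work too, but then the $\tv$ approximation to the product measure must cover all $\Otilde(n/\veps^2)$ recorded samples simultaneously, not just the $k$ within one phase, and you should say so when choosing $\epsilon$.
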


\subsection{Algorithm}

The main idea for integration is also annealing, and will parallel
the warm-start generation but in a more sophisticated way. We consider
a sequence of distributions in $\R^{n+1}$ starting with a simple
logconcave function that is easy to integrate and ending with the
target logconcave function, where two consecutive distributions $\mu_{i}$
and $\mu_{i+1}$ satisfy two conditions: (i) $\eu R_{\infty}(\mu_{i}\mmid\mu_{i+1})=\O(1)$
for fast sampling from $\mu_{i+1}$ started at $\mu_{i}$, and (ii)
$\eu R_{2}(\mu_{i+1}\mmid\mu_{i})=\O(n^{-1})$ or $\O(\nicefrac{\sigma^{2}}{C^{2}n})$
for small variance of estimators used in the algorithm.

This annealing scheme is used in \cite{cousins2018gaussian} for the
special case of Gaussians restricted to convex bodies. It needs an
efficient sampler alongside with good dependence on the warm start
and output guarantees. These were available at the time for Gaussians
restricted to convex bodies. Our new sampler allows us to get similar
strong guarantees for arbitrary logconcave distributions, so we will
be able to extend the guarantees to logconcave integration by addressing
additional technical difficulties. 

\paragraph{Setup.}

Let $m$ be the number of samples used for estimation, and $p$ the
number of total inner phases. Then, the integration algorithm (Algorithm~\ref{alg:integral})
initializes $m$ independent starts, generating a \emph{thread} of
samples denoted by $(Z_{i}^{s}=(X_{i}^{s},T_{i}^{s}))_{0\leq i\leq p}$
for $s\in[m]$. Here, the initial start $Z_{0}^{s}$ is drawn from
$\exp(-\frac{n}{2}\,\norm x^{2})|_{\bar{\K}}$, and subsequent samples
$(Z_{i}^{s})_{i\in[p]}$ drawn through $\psann$ are appended to the
thread. Clearly, for each $i\in[p]$, the $i$-th samples $Z_{i}^{s}$
across the threads are independent and have the same law (i.e., $Z_{i}^{s_{1}}\perp Z_{i}^{s_{2}}$
and $\law Z_{i}^{s_{1}}=\law Z_{i}^{s_{2}}$ for any $s_{1}\neq s_{2}\in[m]$).
$\psann$ with suitable parameters can ensure that the law of a sample
$Z_{i}^{s}$, denoted by $\bar{\mu}_{i}=(\bar{\mu}_{i}^{X},\bar{\mu}_{i}^{T})$,
is close to
\begin{align*}
\mu_{0}(x,t) & \propto\exp\bpar{-\frac{n}{2}\,\norm x^{2}}\big|_{\bar{\K}}=:f_{0}(x,t)\,,\\
\mu_{i}(x,t) & \propto\exp\bpar{-\frac{1}{2\sigma_{i}^{2}}\,\norm x^{2}-\rho_{i}t}\big|_{\bar{\K}}=:f_{i}(x,t)\quad\text{for }i\in[p]\,,
\end{align*}
and we denote its integral (i.e., the normalization constant) by $F_{i}:=\int f_{i}$.

\paragraph{Estimation using drawn samples.}

Algorithm~\ref{alg:integral} calls Algorithm~\ref{alg:sample-estimate}
for estimating $E_{i}:=\frac{1}{m}\sum_{s=1}^{m}Y_{i}^{s}$, just
the empirical average of $Y_{i}^{s}$ defined as
\[
Y_{i}^{s}:=\frac{f_{i+1}(Z_{i}^{s})}{f_{i}(Z_{i}^{s})}\quad\text{for }i=0,\dotsc,p\,,\ \text{with convention }f_{p+1}(z):=\exp(-nt)|_{\bar{\K}}\,.
\]
\begin{algorithm}[t]
\hspace*{\algorithmicindent} \textbf{Input: }initial point $Z_{i-1}$,
target density $f_{i}$, upcoming density $f_{i+1}$, target accuracy
$\epsilon$.

\hspace*{\algorithmicindent} \textbf{Output:} estimation $E_{i}$
and samples $Z_{i}$ approximately close to $\mu_{i+1}\propto f_{i+1}$.

\begin{algorithmic}[1] \caption{$\protect\msf{Samp\text{-}Est}\,(Z_{i-1},f_{i},f_{i+1},\epsilon)$\label{alg:sample-estimate}}

\STATE For each $s\in[m]$, draw $Z_{i}^{s}$ through $\psann$ started
at $Z_{i-1}^{s}$ for target $\mu_{i}\propto f_{i}$ with accuracy
$\epsilon$.

\STATE Estimate
\[
E_{i}=\frac{1}{m}\sum_{s=1}^{m}\frac{f_{i+1}(Z_{i}^{s})}{f_{i}(Z_{i}^{s})}\,.
\]

\STATE Return $E_{i}$ and $\{Z_{i}^{s}\}_{s\in[m]}$.

\end{algorithmic}
\end{algorithm}
If each sample $Z_{i}$ were to follow the \emph{exact distribution}
$\mu_{i}$ rather than the approximate one $\bar{\mu}_{i}$, then
\[
\E E_{i}=\E Y_{i}=\int\frac{f_{i+1}(z)}{f_{i}(z)}\,\frac{f_{i}(z)}{\int f_{i}}\,\D z=\frac{F_{i+1}}{F_{i}}\,,
\]
and thus
\[
\underbrace{\int_{\bar{\K}}\exp\bpar{-\frac{n}{2}\,\norm x^{2}}\,\D x\D t}_{=:I_{0}}\times\E_{\mu_{0}}Y_{0}\cdots\E_{\mu_{p}}Y_{p}=F_{0}\,\frac{F_{1}}{F_{0}}\cdots\frac{F_{p+1}}{F_{p}}=\int_{\bar{\K}}\exp(-nt)\,\D x\D t\,.
\]
Since $\P_{\pi}(X\in\bar{\K})\geq1-\veps$ by construction, and $\int_{\K}\exp(-nt)\,\D x\D t=\frac{1}{n}\int e^{-V}$,
it implies that 
\begin{equation}
(1-\veps)\int e^{-V}\leq nI_{0}\times\E Y_{0}\cdots\E Y_{p}\leq\int e^{-V}\,.\label{eq:final-integral-apprx}
\end{equation}
While estimating $\E Y_{i}$ is straightforward since $\psann$ yields
as many samples from $\bar{\mu}_{i}\approx\mu_{i}$ as needed, we
should also be able to estimate the initial integral $I_{0}=\int_{\bar{\K}}\exp(-\frac{n}{2}\,\norm x^{2})\,\D x\D t$.

\paragraph{The initial integral $I_{0}$.}

For $I:=[-11,4+13l]$, we define $f_{-1}(x,t):=\exp(-\frac{n}{2}\,\norm x^{2})\cdot\ind_{I}(t)$,
and $\mu_{-1}(x,t)=\mc N(0,n^{-1}I_{n})\otimes\text{Unif}\,I\propto f_{-1}(x,t)$.
Using the estimate 
\[
Y_{-1}(x,t):=\frac{f_{0}(x,t)}{f_{-1}(x,t)}=\frac{\exp(-\frac{n}{2}\,\norm x^{2})\cdot\ind_{\bar{\K}}(x,t)}{\exp(-\frac{n}{2}\,\norm x^{2})\cdot\ind_{I}(t)}=\ind_{\bar{\K}}(x,t)\quad\text{for }(x,t)\in\supp\mu_{-1}\,,
\]
and its empirical average $E_{-1}:=\frac{1}{m}\sum_{s=1}^{m}Y_{-1}^{s}$,
we can approximate $I_{0}$ through $E_{-1}\int f_{-1}$ due to 
\[
E_{-1}\int f_{-1}\approx\E_{\mu_{-1}}Y_{-1}\times\int f_{-1}=\int\ind_{\bar{\K}}(x,t)\,f_{-1}(x,t)\,\D x\D t=\int_{\bar{\K}}\exp\bpar{-\frac{n}{2}\,\norm x^{2}}\,\D x\D t=I_{0}\,.
\]
Here, $\int f_{-1}$ can be computed by a formula 
\[
\int f_{-1}=\int\exp(-\frac{n}{2}\,\norm x^{2})\,\D x\times(15+13l)=(2\pi n^{-1})^{n/2}(15+13l)\,.
\]

\paragraph{Overview of algorithm.}

We propose a new annealing scheme (Algorithm~\ref{alg:integral})
to fulfill the two conditions on $\eu R_{2}$ and $\eu R_{\infty}$. 
\begin{itemize}
\item \textbf{{[}Phase I{]}} While $\sigma^{2}\in[n^{-1},1]$:
\begin{itemize}
\item End-to-end: $\mu_{\text{start}}\propto\exp(-\frac{n}{2}\,\norm x^{2})|_{\bar{\K}}$
to $\mu_{\text{end}}\propto\exp(-\half\,\norm x^{2})_{\bar{\K}}$.
\item Update: $\sigma^{2}\gets\sigma^{2}\,(1+\frac{1}{n})$ and $\rho=0$.
\item Guarantee: $\eu R_{2}(\mu_{i+1}\mmid\mu_{i})=\O(n^{-1})$ \& $\Otilde(n)$
phases (Lemma~\ref{lem:int-phase1}).
\end{itemize}
\item \textbf{{[}Phase II{]}} While $\rho\leq n$: 
\begin{itemize}
\item End-to-end: $\mu_{\text{start}}\propto\exp(-\frac{1}{2}\,\norm x^{2}-\frac{1}{28l}\,t)|_{\bar{\K}}$
to $\mu_{\text{end}}\propto\exp(-\half\,\norm x^{2}-nt)_{\bar{\K}}$.
\item Update: {[}$\sigma^{2}\gets\sigma^{2}/(1+\frac{1}{28nl})$ and $\rho\gets\rho\,(1+\frac{1}{28nl})${]}
and {[}$\sigma^{2}\gets\sigma^{2}\,(1+\frac{1}{28nl})$ and $\rho\gets\rho${]}.\footnote{In fact, we can combine two updates, but we separate them for ease
of analysis.}
\item Guarantee: $\eu R_{2}(\mu_{i+1}\mmid\mu_{i})=\O(n^{-1})$ \& $\Otilde(nl)$
phases (Lemma~\ref{lem:int-phase2}).
\end{itemize}
\item \textbf{{[}Phase III{]}} While $\sigma^{2}\in[1,C^{2}n]$:
\begin{itemize}
\item End-to-end: $\mu_{\text{start}}\propto\exp(-\frac{1}{2}\,\norm x^{2}-nt)|_{\bar{\K}}$
to $\mu_{\text{end}}\propto\exp(-\frac{1}{2C^{2}n}\,\norm x^{2}-nt)_{\bar{\K}}$.
\item Update $\sigma^{2}\gets\sigma^{2}\,(1+\frac{\sigma^{2}}{C^{2}n})$
and $\rho=n$.
\item Guarantee: $\eu R_{2}(\mu_{i+1}\mmid\mu_{i})=\O(\frac{\sigma^{2}}{C^{2}n})$
\& $\Otilde(\frac{C^{2}n}{\sigma^{2}})$ phases for doubling of $\sigma^{2}$
(see Lemma~\ref{lem:int-phase3}).
\end{itemize}
\end{itemize}
In summary, $\psann$ started at $\bar{\mu}_{i-1}$ generates samples
$Z\sim\bar{\mu}_{i}$, and estimate $f_{i+1}(Z)/f_{i}(Z)$. Meanwhile,
we would like to ensure that (1) $\eu R_{\infty}(\mu_{i-1}\mmid\mu_{i})=\O(1)$
when moving from $\bar{\mu}_{i-1}$ to $\bar{\mu}_{i}$, and (2) $\eu R_{2}(\mu_{i+1}\mmid\mu_{i})=\O(n^{-1})$
in Phase I and II, and $\O(\frac{\sigma^{2}}{C^{2}n})$ in Phase III
for variance control of $f_{i+1}/f_{i}$. 

\begin{algorithm}[t]
\hspace*{\algorithmicindent} \textbf{Input: }a logconcave function
$f$ in Problem~\ref{prob:integration}, and target accuracy $\veps>0$.

\hspace*{\algorithmicindent} \textbf{Output:} an estimation $E$
for $\int f$.

\begin{algorithmic}[1] \caption{\textbf{$\protect\msf{Integral}\,$}($f,\varepsilon$)\label{alg:integral}}
\STATE $\diamond$ $i=0,l=C^{2}\vee\log\frac{8e}{\veps},(\sigma_{0}^{2},\rho_{0})=(\frac{1}{n},0),m=\frac{10^{11}l\log nl\log C^{2}n}{\veps^{2}},\epsilon=\frac{\veps^{2}}{10^{15}C^{2}nl^{2}\log^{3}(C^{2}nl)}$.
Denote
\[
f_{j}(x,t):=\exp\bpar{-\frac{1}{2\sigma_{j}^{2}}\,\norm x^{2}-\rho_{j}t}\big|_{\bar{\K}}\,.
\]

\STATE Get $\{Z_{-1}^{s}\}_{s\in[m]}\sim\mu_{-1}=\mc N(0,\frac{1}{n}I_{n})\otimes\text{Unif}\,([-11,4+13l])$,
and compute $E_{-1}=\frac{1}{m}\sum_{s=1}^{m}\ind_{\bar{\K}}(Z_{-1}^{s})$.\label{line:-1to0}

\WHILE{$\sigma_{i}^{2}\leq1$}\label{line:phase1}

\STATE If $\sigma_{i}^{2}=1$, then go to next \textbf{while} after
this iteration. Set
\[
\begin{cases}
\sigma_{i+1}^{2}=1\wedge\sigma_{i}^{2}(1+\frac{1}{n})\quad\&\quad\rho_{i+1}=0 & \text{if }\sigma_{i}^{2}<1\,,\\
\sigma_{i+1}^{2}=\sigma_{i}^{2}\quad\&\quad\rho_{i+1}=\frac{1}{28l} & \text{if }\sigma_{i}^{2}=1\,.
\end{cases}
\]

\STATE Run \textbf{$(E_{i},\{Z_{i}^{s}\}_{s\in[m]})\gets\msf{Samp\text{-}Est}\,(\{Z_{i-1}^{s}\}_{s\in[m]},f_{i},f_{i+1},\epsilon)$}
and increment $i$.

\ENDWHILE

\WHILE{$\rho_{i}<n$}\label{line:phase2}

\STATE Set $\beta=1+\frac{1}{28nl}$ if $\beta\rho_{i}<n$, and $\beta=n/\rho_{i}$
otherwise.

\STATE \textbf{$(E_{i},\{Z_{i}^{s}\}_{s\in[m]})\gets\msf{Samp\text{-}Est}\,(\{Z_{i-1}^{s}\}_{s\in[m]},f_{i},f_{i+1},\epsilon)$}
for $\sigma_{i+1}^{2}=\frac{\sigma_{i}^{2}}{\beta},\rho_{i+1}=\beta\rho_{i}$.
Increment $i$.

\STATE \textbf{$(E_{i},\{Z_{i}^{s}\}_{s\in[m]})\gets\msf{Samp\text{-}Est}\,(\{Z_{i-1}^{s}\}_{s\in[m]},f_{i},f_{i+1},\epsilon)$}
for $\sigma_{i+1}^{2}=\beta\sigma_{i}^{2},\rho_{i+1}=\rho_{i}$. Increment
$i$.

\ENDWHILE

\WHILE{$\sigma_{i}^{2}\le C^{2}n$}\label{line:phase3}

\STATE If $\sigma_{i}^{2}=C^{2}n$, then \textbf{leave} this loop
after this iteration. Set
\[
\begin{cases}
\sigma_{i+1}^{2}=C^{2}n\wedge\sigma_{i}^{2}(1+\frac{\sigma_{i}^{2}}{C^{2}n})\quad\&\quad\rho_{i+1}=n & \text{if }\sigma_{i}^{2}<C^{2}n\,,\\
\sigma_{i+1}^{2}=\infty\quad\&\quad\rho_{i+1}=n & \text{if }\sigma_{i}^{2}=C^{2}n\,.
\end{cases}
\]

\STATE\textbf{$(E_{i},\{Z_{i}^{s}\}_{s\in[m]})\gets\msf{Samp\text{-}Est}\,(\{Z_{i-1}^{s}\}_{s\in[m]},f_{i},f_{i+1},\epsilon)$}
and increment $i$. 

\ENDWHILE

\STATE Return $E=n\,(2\pi n^{-1})^{n/2}(15+13l)\times E_{-1}E_{0}\cdots E_{i}$
as the estimate of the integral of $f$.

\end{algorithmic}
\end{algorithm}

\subsection{Streamlined analysis}

We show that our estimate $E_{-1}\cdots E_{p}$ concentrates around
$\E Y_{-1}\cdots\E Y_{p}$ with probability at least $0.9$ (say).

\subsubsection{Technical tools}

\paragraph{Two sources of complications.}

Unfortunately, the samples $Z_{i}$ in the algorithm follow an \textbf{approximate
distribution} $\bar{\mu}_{i}$ (not exactly the target $\mu_{i}$),
so this entails a coupling characterization of $\tv$-distance in
prior work. Once this issue is properly addressed, readers may want
to use a concentration of each $E_{i}$ around $\E E_{i}$ based on
$\var E_{i}$, but this does not suit our purpose. For example, in
Phase I, the (relative) variance $\var(E_{i}/\E E_{i})$ is bounded
by $1/n$, so this leads to $E_{i}=(1\pm\O(n^{-1/2}))\,\E E_{i}$
with high probability. Since there are $p=\Otilde(n)$ many phases,
the relative error of the product of $E_{i}$'s would blow up. Thus,
one should rely on the concentration of the \emph{product} $E_{-1}\cdots E_{p}$,
not just combining that of each estimate $E_{i}$. To this end, we
should be able to establish 
\[
\var(E_{-1}\cdots E_{p})\lesssim(\E[E_{-1}\cdots E_{p}])^{2}\quad\bigl(\text{equivalently, }\E[E_{-1}^{2}\cdots E_{p}^{2}]=\bpar{1+\O(1)}\,(\E[E_{-1}\cdots E_{p}])^{2}\bigr)\,.
\]
Unfortunately, another complication comes into play due to \textbf{dependence
of samples} \emph{within} each thread, so one also needs $\E[E_{-1}\cdots E_{p}]=(1\pm\O(1))\,\E E_{-1}\cdots\E E_{p}$.
Prior analysis of volume algorithms used the notion of \emph{$\alpha$-mixing}
(called $\mu$-independence therein) to address this technicality.
\begin{defn}
[$\alpha$-mixing, \cite{rosenblatt1956central}] Let $(\Omega,\mc F,\P)$
be a probability triplet. Given two sub $\sigma$-algebras $\mc A$
and $\mc B$, their $\alpha$-mixing coefficient is defined by
\[
\alpha(\mc A,\mc B):=\sup_{(A,B)\in\mc A\times\mc B}\abs{\P(A\cap B)-\P(A)\,\P(B)}\,.
\]
For two random variables $X,Y:\Omega\to\R$, its $\alpha$-mixing
coefficient is defined by 
\[
\alpha(X,Y)=\sup_{(A,B)\in\sigma(X)\times\sigma(Y)}\abs{\P(X\in A,Y\in B)-\P(X\in A)\,\P(Y\in B)}\,.
\]

This notion naturally arises when bounding the covariance between
$X$ and $Y$ through the following inequality due to \cite{davydov1968convergence}:
for $p,q,r\in\R_{>0}$ with $p^{-1}+q^{-1}+r^{-1}=1$, and $\alpha=\alpha(X,Y)$,
\[
\abs{\cov(X,Y)}\leq2\alpha^{1/p}\norm X_{q}\norm Y_{r}\,.
\]
Previous volume-computation work used this inequality with $p=1$
and $q=r=\infty$. Since the estimators $Y_{i}$ are unbounded, they
had to introduce several technical tools to handle this issue.
\end{defn}

\paragraph{New tools.}

We provide new approaches that can handle these technicalities in
a much cleaner way. As for the inexact distribution of samples (i.e.,
$\bar{\mu}_{i}$ instead of $\mu_{i}$), $\psann$ provides a provable
guarantee on $\norm{\frac{\D\bar{\mu}_{i}}{\D\mu_{i}}-1}_{L^{\infty}}$,
not just $\norm{\bar{\mu}_{i}-\mu_{i}}_{\tv}$, and this allows us
to bypass a coupling argument. As for the dependence of samples, we
use the notion of \emph{$\beta$-mixing} (or the \emph{coefficient
of absolute regularity}) instead of $\alpha$-mixing, which is viable
also due to stronger guarantees of our samplers.
\begin{defn}
[$\beta$-mixing, \cite{kolmogorov1960strong}] Let $(\Omega,\mc F,\P)$
be a probability triplet. Given two sub $\sigma$-algebras $\mc A$
and $\mc B$, their $\beta$-mixing coefficient is defined by
\[
\beta(\mc A,\mc B):=\norm{\P_{\sigma(\mc A\otimes\mc B)}-\P_{\mc A}\otimes\P_{\mc B}}_{\tv}\,.
\]
For two random variables $X,Y:\Omega\to\R$, it has an equivalent
definition of 
\[
\beta(X,Y)=\norm{\law(X,Y)-\law X\otimes\law Y}_{\tv}\,.
\]
\end{defn}

It holds that $2\alpha\leq\beta$ in general (see \cite[Proposition 1]{doukhan2012mixing}),
so $\beta$-mixing is stronger than $\alpha$-mixing. When $Y$ is
generated by taking $k$ steps of a Markov chin with kernel $P$ started
at $X$, it holds that (see \cite[Proposition 1]{davydov1974mixing})
\[
\beta(X,Y)=\int_{\Omega}\norm{\delta_{x}P^{k}-\law Y}_{\tv}\,(\law X)(\D x)\,.
\]
Recall that when $\psann$ generates $Y\sim\bar{\mu}_{i+1}$ from
$X\sim\bar{\mu}_{i}$, it follows from Lemma~\ref{lem:ps-annealing-iter}
that for $k=\Otilde(n^{2}(\sigma^{2}\vee l^{2})\polylog\frac{R}{\veps})$,
\[
\sup_{x\in\mc{\bar{K}}}\,\norm{\delta_{x}P^{k}-\mu_{i+1}}_{\tv}\leq\veps\,,
\]
and thus
\begin{equation}
\beta(X,Y)=\int_{\bar{\K}}\norm{\delta_{x}P^{k}-\bar{\mu}_{i+1}}_{\tv}\,\D\bar{\mu}_{i}(x)\leq\int_{\bar{\K}}\norm{\delta_{x}P^{k}-\mu_{i+1}}_{\tv}\,\D\bar{\mu}_{i}(x)+\norm{\mu_{i+1}-\bar{\mu}_{i+1}}_{\tv}\leq2\veps\,.\label{eq:beta-bound}
\end{equation}
Therefore, $\norm{\law(X,Y)-\law X\otimes\law Y}\leq2\veps$, and
it implies that we can ``decouple'' $X$ and $Y$ (i.e., use $\law X\otimes\law Y=\bar{\mu}_{i}\otimes\bar{\mu}_{i+1}$
instead of $\law(X,Y)$) at the cost of small probability $2\veps$. 

\subsubsection{Sample and query complexities}

We now analyze the variance of estimators and query complexities
at initialization, Phase I, Phase II, and Phase III. We begin with
a useful observation.
\begin{lem}
\label{lem:variance-estimator} For any $i=-1,0,\dots,p$, 
\[
\var_{\mu_{i}}\bpar{\frac{E_{i}}{\E_{\mu_{i}}E_{i}}}=\frac{1}{m}\var_{\mu_{i}}\bpar{\frac{Y_{i}}{\E_{\mu_{i}}Y_{i}}}=\frac{1}{m}\var_{\mu_{i}}\bpar{\frac{\D\mu_{i+1}}{\D\mu_{i}}}=\frac{1}{m}\,\chi^{2}(\mu_{i+1}\mmid\mu_{i})\,.
\]
\end{lem}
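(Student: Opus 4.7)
The plan is to check the three equalities one by one, treating each $Z_i^s$ as if drawn exactly from $\mu_i$ and independently across the threads $s \in [m]$ (this is the idealized setting the lemma is stated in; the discrepancy with $\bar\mu_i$ is precisely what motivates the $\eu R_\infty$- and $\beta$-mixing machinery discussed above the lemma). First, since $E_i = m^{-1}\sum_{s=1}^{m} Y_i^s$ is an average of i.i.d.\ copies of $Y_i$ under $\mu_i$, one has $\var_{\mu_i} E_i = m^{-1}\var_{\mu_i} Y_i$ and $\E_{\mu_i} E_i = \E_{\mu_i} Y_i$. Dividing the variance by $(\E_{\mu_i} E_i)^2 = (\E_{\mu_i} Y_i)^2$ yields the first equality.

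For the second equality, I would use the identity $\E_{\mu_i} Y_i = F_{i+1}/F_i$ already noted just above the statement. Writing $\mu_j = f_j/F_j$ for $j \in \{i,i+1\}$, a pointwise computation gives
\[
\frac{Y_i(z)}{\E_{\mu_i} Y_i} = \frac{f_{i+1}(z)/f_i(z)}{F_{i+1}/F_i} = \frac{f_{i+1}(z)/F_{i+1}}{f_i(z)/F_i} = \frac{\D\mu_{i+1}}{\D\mu_i}(z),
\]
so the two random variables $Y_i(Z_i)/\E_{\mu_i} Y_i$ and $(\D\mu_{i+1}/\D\mu_i)(Z_i)$ are $\mu_i$-a.s.\ equal, hence have the same variance.

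For the third equality, I would invoke Definition~\ref{def:p-dist} with $f(x) = x^2 - 1$, which gives $\chi^2(\mu_{i+1}\mmid\mu_i) = \E_{\mu_i}[(\D\mu_{i+1}/\D\mu_i)^2] - 1$. Since $\E_{\mu_i}[\D\mu_{i+1}/\D\mu_i] = 1$, the right-hand side equals $\var_{\mu_i}(\D\mu_{i+1}/\D\mu_i)$, closing the chain.

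There is no real obstacle here; the statement is a clean reformulation of the standard variance identity for importance-sampling estimators, and none of the three equalities requires more than unwinding definitions. The mild subtlety worth flagging is that the identity presumes independence and exact stationarity of the $Z_i^s$, which is what $\psann$ only approximates — but addressing that gap is the job of the surrounding analysis, not of this lemma.
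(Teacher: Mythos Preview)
The proposal is correct and follows essentially the same approach as the paper: the paper's proof also dispatches the three equalities in order, citing independence across threads for the first, the pointwise identity $Y_i/\E_{\mu_i}Y_i = (f_{i+1}/f_i)(F_i/F_{i+1}) = \D\mu_{i+1}/\D\mu_i$ for the second, and the definition of $\chi^2$ for the third. Your write-up is slightly more explicit (and your remark about the idealized i.i.d.\ assumption is apt), but there is no substantive difference.
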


\begin{proof}
The first equality is obvious from independence of samples $Z_{i}^{s}$
across threads $s\in[m]$. For the second, due to 
\[
\frac{Y_{i}}{\E_{\mu_{i}}Y_{i}}=\frac{f_{i+1}}{f_{i}}\frac{F_{i}}{F_{i+1}}\,,
\]
and $\mu_{i+1}\ll\mu_{i}$, we clearly have 
\[
\var_{\mu_{i}}\bpar{\frac{Y_{i}}{\E_{\mu_{i}}Y_{i}}}=\var_{\mu_{i}}\bpar{\frac{\D\mu_{i+1}}{\D\mu_{i}}}\,,
\]
and the third one follows from the definition of $\chi^{2}$.
\end{proof}
We examine query and sample complexities for bounding the variance
of estimator $E_{i}$ by $v$ (i.e., $\E_{\mu_{i}}[E_{i}^{2}]\leq(1+v)\,(\E_{\mu_{i}}E_{i})^{2}$).

\paragraph{Initialization.}

At initialization, $\msf{Integral}$ draws samples from $\mu_{-1}$
and $\mu_{0}$ in Line~\ref{line:-1to0} through rejection sampling.
Below, we will set the target variance $v_{0}=\frac{\veps^{2}}{10^{9}}$,
which requires $m_{0}\geq10^{11}l/\veps^{2}$.
\begin{lem}
[Initialization] \label{lem:int-phase0} Given $v>0$, in Line~\ref{line:-1to0},
$\msf{Integral}$ needs $m\geq\frac{100l}{v}$ samples to bound the
relative variance $\var(E/\E E)$ by $v$, using $\O(m)$ queries
in expectation.
\end{lem}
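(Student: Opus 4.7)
The plan is to use Lemma~\ref{lem:variance-estimator} to express the relative variance of $E_{-1}$ as $\chi^{2}(\mu_{0}\mmid\mu_{-1})/m$, and then to compute this $\chi^{2}$ exactly by exploiting that the estimator $Y_{-1}$ is just a Bernoulli indicator.

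First I would note that on $\supp\mu_{-1}$, $Y_{-1}(x,t) = f_{0}/f_{-1} = \ind_{\bar{\K}}(x,t)$, because the common Gaussian factor cancels and $\ind_{I}(t)=1$. Consequently $\D\mu_{0}/\D\mu_{-1} \propto \ind_{\bar{\K}}$ on $\supp\mu_{-1}$, and one line of normalization gives
\[
\chi^{2}(\mu_{0}\mmid\mu_{-1}) + 1 \;=\; \frac{1}{\mu_{-1}(\bar{\K})}\,.
\]
So by Lemma~\ref{lem:variance-estimator}, $\var(E_{-1}/\E E_{-1}) \le 1/\bigl(m\,\mu_{-1}(\bar{\K})\bigr)$, and the problem reduces to producing a lower bound on $\mu_{-1}(\bar{\K})$.

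Second, I would derive the geometric lower bound directly from the standing hypothesis $B_{1}(0)\subset\msf L_{\pi^{X},g}$. Combined with $\min V=0$, this gives $V(x)\le 10n$ for $\|x\|\le 1$, so $(x,t)\in\K$ as soon as $t\ge -1$; hence the product set $B_{1}(0)\times[-1,4+13l]$ lies inside $\bar{\K}$. Since $\mu_{-1}=\mc N(0,n^{-1}I_{n})\otimes\Unif([-11,4+13l])$,
\[
\mu_{-1}(\bar{\K}) \;\ge\; \P_{\mc N(0,n^{-1}I_{n})}\bigl(\|X\|\le 1\bigr)\cdot\frac{5+13l}{15+13l}\,,
\]
and Paouris (Lemma~\ref{lem:version-Paouris}) bounds the Gaussian factor below by a universal constant, so $\mu_{-1}(\bar{\K})=\Omega(1)$. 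The stated threshold $m\ge 100l/v$ is then more than enough to force the relative variance below $v$; in fact an $\O(1/v)$ bound would already suffice, and I would simply keep the factor $l$ as a convenient slack.

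For the query cost, sampling from $\mu_{-1}$ is a closed-form draw of a Gaussian and a uniform, requiring no oracle calls, while evaluating $\ind_{\bar{\K}}(X,T)$ needs a single call to the evaluation oracle for $V$ (to check $V(X)\le n(T+11)$) together with trivial norm and interval tests. Hence $m$ samples cost $\O(m)$ queries in expectation. There is no real obstacle here: the argument is a one-step Bernoulli variance calculation together with the inclusion coming from the ground-set assumption.
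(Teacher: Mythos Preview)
Your argument is correct and in fact tighter than the paper's. Both proofs invoke Lemma~\ref{lem:variance-estimator} to reduce to bounding $\chi^{2}(\mu_{0}\mmid\mu_{-1})$, but the paper takes the indirect route $\chi^{2}\le\exp(\eu R_{\infty})-1$ and bounds $\eu R_{\infty}(\mu_{0}\mmid\mu_{-1})$ by showing $\D\mu_{0}/\D\mu_{-1}\le(15+13l)/\P_{\mc N(0,n^{-1}I_{n})}(\|X\|\le 1)=\O(l)$; this is where the factor $l$ in the sample bound $m\ge 100l/v$ actually comes from. You instead observe that $Y_{-1}$ is a Bernoulli indicator, so $\chi^{2}(\mu_{0}\mmid\mu_{-1})=1/\mu_{-1}(\bar{\K})-1$ exactly, and then the same geometric ingredients (the ground-set inclusion $B_{1}(0)\subset\msf L_{\pi^{X},g}$ and the Gaussian mass bound via Lemma~\ref{lem:version-Paouris}) yield $\mu_{-1}(\bar{\K})=\Omega(1)$, hence $\chi^{2}=\O(1)$. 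Your approach is more elementary and shows that $m=\O(1/v)$ would already suffice; the paper's $\eu R_{\infty}$ detour loses a factor $l$ but is consistent with how warmness is tracked elsewhere in \S\ref{sec:warm-start}. The query-cost paragraph is fine and matches the paper.
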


\begin{proof}
For Line~\ref{line:-1to0}, we can show $\eu R_{\infty}(\mu_{0}\mmid\mu_{-1})=\log\O(l)$
as in the proof of Lemma~\ref{lem:init}: 
\begin{align*}
\frac{\D\mu_{0}}{\D\mu_{-1}} & =\frac{\exp(-\frac{n}{2}\,\norm x^{2})\cdot\ind_{\bar{\K}}(x,t)}{\exp(-\frac{n}{2}\,\norm x^{2})\cdot\ind_{I}(t)}\frac{\int_{\Rn}\exp(-\frac{n}{2}\,\norm x^{2})\,\D x\times\abs I}{\int_{\bar{\K}}\exp(-\frac{n}{2}\,\norm x^{2})\,\D x\D t}\\
 & \leq(15+13l)\,\frac{\int_{\Rn}\exp(-\frac{n}{2}\,\norm x^{2})\,\D x}{\int_{B_{1}^{n}(0)\times[0,1]}\exp(-\frac{n}{2}\,\norm x^{2})\,\D x}\leq(15+13l)\,\bpar{\P_{\mc N(0,n^{-1}I_{n})}(\norm X\leq1)}^{-1}\\
 & \lesssim l\,.
\end{align*}
Thus, by Lemma~\ref{lem:variance-estimator} and $\eu R_{2}=\log(1+\chi^{2})\leq\eu R_{\infty}$,
\[
\var_{\mu_{-1}}\bpar{\frac{E_{-1}}{\E_{\mu_{-1}}E_{-1}}}=\frac{1}{m}\,\chi^{2}(\mu_{0}\mmid\mu_{1})\leq\frac{100l}{m}\leq v\,.
\]
Also, computing $E_{-1}$ uses $m$ many evaluation queries. 
\end{proof}

\paragraph{Phase I.}

We now analyze the variance of $E_{i}$ under the update rule of $\sigma^{2}\gets\sigma^{2}(1+\frac{1}{n})$.
To this end, we recall a useful lemma from \cite[Lemma 3.2]{kalai2006simulated}:
\begin{lem}
\label{lem:slower_rate} Let $f:\R^{n+1}\to\R$ be an integrable logconcave
function. For $a>0$ and $Z(a)=\int f(z)^{a}\,\D z$, the function
$a\mapsto a^{n}Z(a)$ is logconcave in $a$.
\end{lem}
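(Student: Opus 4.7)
The plan is to prove Lemma~\ref{lem:slower_rate} via a classical perspective-function trick followed by Pr\'ekopa--Leindler, which is the same route used in \cite{kalai2006simulated}. The one-line idea is: rewrite $a^{n}Z(a)$ as an integral of a jointly logconcave function on $\R^{n}\times(0,\infty)$ and then marginalize out the spatial variable.

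Concretely, I would first introduce the function
\[
G(z,a)\;:=\;f(z/a)^{a}\qquad\text{for }a>0,\ z\in\R^{n}.
\]
The change of variables $z=aw$ has Jacobian $a^{n}$, and hence
\[
\int_{\R^{n}} G(z,a)\,\D z \;=\; a^{n}\int_{\R^{n}} f(w)^{a}\,\D w \;=\; a^{n}Z(a).
\]
So it suffices to show that $G$ is jointly logconcave in $(z,a)$ on $\R^{n}\times(0,\infty)$, since then Pr\'ekopa--Leindler (marginal of a logconcave function is logconcave) gives logconcavity of $a\mapsto a^{n}Z(a)$.

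The joint logconcavity of $G$ is exactly the statement that $(z,a)\mapsto a\log f(z/a)$ is jointly concave, i.e.\ that the \emph{perspective transform} of the concave function $\log f$ is concave. This is the main technical step, but it is a one-line computation: for $\lambda\in[0,1]$ and $(z_{i},a_{i})\in\R^{n}\times(0,\infty)$, setting $a:=\lambda a_{1}+(1-\lambda)a_{2}$ and $z:=\lambda z_{1}+(1-\lambda)z_{2}$, one observes that
\[
\frac{z}{a}\;=\;\frac{\lambda a_{1}}{a}\cdot\frac{z_{1}}{a_{1}}\;+\;\frac{(1-\lambda)a_{2}}{a}\cdot\frac{z_{2}}{a_{2}}
\]
is a convex combination (weights $\lambda a_{1}/a$ and $(1-\lambda)a_{2}/a$), so concavity of $\log f$ gives $\log f(z/a)\geq\tfrac{\lambda a_{1}}{a}\log f(z_{1}/a_{1})+\tfrac{(1-\lambda)a_{2}}{a}\log f(z_{2}/a_{2})$, and multiplying through by $a$ yields the desired joint concavity.

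The only real subtlety is the boundary behavior as $a\downarrow 0$ and integrability required to invoke Pr\'ekopa--Leindler on $\R^{n}\times(0,\infty)$, but these are standard since $f$ is assumed integrable and we can work on $(0,\infty)$ directly, where logconcavity is preserved under restriction to a convex subdomain. I expect this boundary/integrability bookkeeping to be the only nontrivial routine step; the conceptual core is the perspective trick above, after which the result is immediate from Pr\'ekopa--Leindler.
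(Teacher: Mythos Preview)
The paper does not prove this lemma itself but recalls it from \cite[Lemma~3.2]{kalai2006simulated}; your perspective-function argument $G(z,a)=f(z/a)^{a}$ followed by Pr\'ekopa--Leindler marginalization is precisely the proof given in that reference and is correct. One minor remark: the paper's statement has $f$ on $\R^{n+1}$ but exponent $a^{n}$, which is an indexing slip---as your Jacobian computation shows, the correct exponent equals the ambient dimension, and indeed every application of the lemma in the paper uses $a^{n+1}$ for functions on $\R^{n+1}$.
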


Below, we will set the target variance $v_{1,1}=\frac{\veps^{2}}{10^{9}n\log n}$
during $\sigma^{2}<1$ and $v_{1,2}=\frac{\veps^{2}}{10^{9}}$ when
$\sigma^{2}=1$, each of which requires $m_{1,1}\geq\frac{3\cdot10^{9}\log n}{\veps^{2}}$
and $m_{1,2}\geq\frac{2\cdot10^{9}}{\veps^{2}}$, respectively.
\begin{lem}
[Phase I: Line~\ref{line:phase1}] \label{lem:int-phase1} Given
$v>0$ and $\epsilon>0$, if $\sigma^{2}<1$, then $\msf{Integral}$
needs $m\geq\frac{3}{nv}$ samples to ensure $\var_{\mu}(E/\E_{\mu}E)\leq v$.
Also, when $\psann$ draws $m$ samples with law $\bar{\mu}$ satisfying
$\eu R_{\infty}(\bar{\mu}\mmid\mu)\leq\epsilon$ in each inner phase,
it requires $\Otilde(mn^{3}l^{2}\log^{4}\frac{R}{\epsilon})$ queries
throughout Phase I. When $\sigma^{2}=1$, $\msf{Integral}$ needs
$m\geq\frac{2}{v}$ samples and uses $\Otilde(mn^{2}l^{2}\log^{4}\frac{R}{\epsilon})$
queries.
\end{lem}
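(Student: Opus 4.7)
}
The plan is to reduce the sample-complexity part to a $\chi^{2}$-divergence bound between consecutive annealing measures via Lemma~\ref{lem:variance-estimator}, and the query-complexity part to Theorem~\ref{thm:annealing-query-comp} applied to each inner phase. Concretely, since Lemma~\ref{lem:variance-estimator} gives $\var_{\mu_i}(E_i/\E_{\mu_i}E_i) = m^{-1}\chi^{2}(\mu_{i+1}\mmid\mu_i)$, it suffices to show
$\chi^{2}(\mu_{i+1}\mmid\mu_i)\le 3/n$ for every inner phase with $\sigma_i^{2}<1$, and $\chi^{2}(\mu_{i+1}\mmid\mu_i)\le 2$ for the single transition at $\sigma_i^{2}=1$.

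For the bulk of Phase I (the regime $\sigma_i^{2}<1$, $\rho_i=0$), I would apply Lemma~\ref{lem:slower_rate} to the logconcave function $f(x,t)=\exp(-\half\norm{x}^{2})\ind_{\bar{\mc K}}(x,t)$ on $\R^{n+1}$. Writing $a_i=1/\sigma_i^{2}$ and $Z(a)=\int f^{a}$, we have $\mu_i\propto f^{a_i}/Z(a_i)$, so a direct computation yields
\[
\chi^{2}(\mu_{i+1}\mmid\mu_i)+1 \;=\; \frac{Z(a_i)\,Z(2a_{i+1}-a_i)}{Z(a_{i+1})^{2}}.
\]
With $a_i=a_{i+1}(1+\tfrac{1}{n})$ this simplifies to $Z(b(1+n^{-1}))\,Z(b(1-n^{-1}))/Z(b)^{2}$ for $b=a_{i+1}$. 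Applying logconcavity of $a\mapsto a^{n}Z(a)$ at the points $b(1\pm n^{-1})$ with midpoint $b$ gives
$\log\frac{Z(b(1+n^{-1}))\,Z(b(1-n^{-1}))}{Z(b)^{2}}\le -n\log(1-n^{-2})\le 2/n$,
so $\chi^{2}\le e^{2/n}-1\le 3/n$ for $n\ge 2$. Hence $m\ge 3/(nv)$ enforces the variance bound in this regime. For the single transition at $\sigma_i^{2}=1$ where only $\rho$ changes from $0$ to $1/(28l)$, I would just note that $\D\mu_{i+1}/\D\mu_i$ is proportional to $\exp(-t/(28l))$ on $\bar{\mc K}$, where $t$ ranges over an interval of length $15+13l\le 28l$, so $\eu R_{\infty}(\mu_{i+1}\mmid\mu_i)\le 1$, which gives $\chi^{2}\le e-1\le 2$ via $\chi^{2}\le e^{\eu R_{\infty}}-1$; therefore $m\ge 2/v$ is enough.

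For the query cost, I would first verify the warm-start hypothesis of Theorem~\ref{thm:annealing-query-comp}: by the triangle inequality $\eu R_{\infty}(\bar{\mu}_i\mmid\mu_{i+1})\le\epsilon+\eu R_{\infty}(\mu_i\mmid\mu_{i+1})$, and the same density computation used in Lemma~\ref{lem:tgc-phase1} shows $\eu R_{\infty}(\mu_i\mmid\mu_{i+1})\le\tfrac12\log(1+1/n)+\O(1)=\O(1)$, so the warmness constant is $M=\O(1)$ at every inner phase. Since $\sigma^{2}\le 1\le l^{2}$ throughout Phase I, the per-sample query cost from Theorem~\ref{thm:annealing-query-comp} is $\Otilde(n^{2}l^{2}\log^{4}\tfrac{R}{\epsilon})$. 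Finally I would count inner phases: the multiplicative update $\sigma^{2}\mapsto\sigma^{2}(1+n^{-1})$ takes $\O(n)$ steps to double $\sigma^{2}$ and $\O(n\log n)$ steps to traverse $[n^{-1},1]$, and each inner phase draws $m$ samples, yielding total cost $\Otilde(mn^{3}l^{2}\log^{4}\tfrac{R}{\epsilon})$ in the regime $\sigma^{2}<1$, while the single transition at $\sigma^{2}=1$ contributes only $\Otilde(mn^{2}l^{2}\log^{4}\tfrac{R}{\epsilon})$.

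The main obstacle is the clean application of Lemma~\ref{lem:slower_rate} in the lifted space; once one observes that $\exp(-\half\norm{x}^{2})\ind_{\bar{\mc K}}(x,t)$ is logconcave on $\R^{n+1}$ (convexity of $\bar{\mc K}$ is essential), the rest is book-keeping. A small subtlety is that $\bar{\mc K}$ must be full-dimensional for $Z(a)$ to be finite and strictly positive on the relevant range of $a$, which is guaranteed by the well-roundedness assumption $B_{1}^{n+1}(0)\subset\bar{\mc K}$.
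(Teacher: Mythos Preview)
Your proposal is correct and follows essentially the same route as the paper: apply Lemma~\ref{lem:slower_rate} to $\exp(-\tfrac12\norm{x}^{2})\ind_{\bar{\K}}$ on $\R^{n+1}$ to bound $\chi^{2}(\mu_{i+1}\mmid\mu_i)$ via the ratio $Z(b(1+n^{-1}))Z(b(1-n^{-1}))/Z(b)^{2}$, handle the single $\sigma^{2}=1$ transition by a direct $\eu R_{\infty}$ bound, and combine the $\O(1)$-warmness (as in Lemma~\ref{lem:tgc-phase1}) with Theorem~\ref{thm:annealing-query-comp} for the query count. The only slip is the intermediate inequality $e^{2/n}-1\le 3/n$, which fails at $n=2$; just bound $(1-n^{-2})^{-n}\le 1+3/n$ directly as the paper does (your exponent $n$ in place of $n+1$ only helps).
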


\begin{proof}
For $\alpha=1/n$, let us denote 
\[
\mu_{i}(x,t)\propto\exp\bpar{-\frac{1}{2\sigma^{2}/(1+\alpha)}\,\norm x^{2}}\big|_{\bar{\K}}\quad\&\quad\mu_{i+1}(x,t)\propto\exp\bpar{-\frac{1}{2\sigma^{2}}\,\norm x^{2}}\big|_{\bar{\K}}\,.
\]
It then follows from the definition of $L^{2}$-norm that 
\[
\Bnorm{\frac{\D\mu_{i+1}}{\D\mu_{i}}}_{L^{2}(\mu_{i})}^{2}=\frac{\int\exp(-\frac{1-\alpha}{2\sigma^{2}}\,\norm x^{2})\cdot\ind_{\bar{\K}}\times\int_{\bar{\K}}\exp(-\frac{1+\alpha}{2\sigma^{2}}\,\norm x^{2})\cdot\ind_{\bar{\K}}}{\bpar{\int_{\bar{\K}}\exp(-\frac{1}{2\sigma^{2}}\,\norm x^{2})\cdot\ind_{\bar{\K}}}^{2}}=\frac{F(1-\alpha)\,F(1+\alpha)}{F(1)^{2}}\,,
\]
where $F(s):=\int\exp(-\frac{s}{2\sigma^{2}}\,\norm x^{2})\cdot\ind_{\bar{\K}}$.
Using logconcavity of $s^{n+1}F(s)$ in $s$ (due to Lemma~\ref{lem:slower_rate}),
\[
\frac{F(1-\alpha)\,F(1+\alpha)}{F(1)^{2}}\leq\frac{1}{(1-\alpha^{2})^{n+1}}\leq1+\frac{3}{n}\,.
\]
Since $1+\chi^{2}(\mu_{i+1}\mmid\mu_{i})=\norm{\mu_{i+1}/\mu_{i}}_{L^{2}(\mu_{i})}^{2}$,
we can conclude that 
\[
\var_{\mu_{i}}\bpar{\frac{E_{i}}{\E_{\mu_{i}}E_{i}}}=\frac{1}{m}\,\chi^{2}(\mu_{i+1}\mmid\mu_{i})\leq\frac{3}{mn}\,.
\]
Hence, it suffices to take $m\geq\frac{3}{nv}$ samples for estimation.
Regarding the query complexity, there are $\O(n\log n)$ inner phases.
Also, $\mu_{i-1}$ is always $\O(1)$-warm with respect to $\mu_{i}$,
as in Lemma~\ref{lem:tgc-phase1}, thus $\psann$ uses $\Otilde(n^{2}l^{2}\log^{4}\frac{R}{\epsilon})$
queries in expectation (to obtain one sample which is $\epsilon$-close
to $\mu_{i}$ in $\eu R_{\infty}$). Therefore, Phase I requires $\Otilde(mn^{3}l^{2}\log^{4}\frac{R}{\epsilon})$
queries.

When $\sigma^{2}=1$, we let $\mu_{i}\propto\exp(-\frac{1}{2}\,\norm x^{2})|_{\bar{\K}}$
and $\mu_{i+1}\propto\exp(-\frac{1}{2}\,\norm x^{2}-\rho t)|_{\bar{\K}}$
for $\rho=\frac{1}{28l}$. Due to 
\[
\frac{\D\mu_{i+1}}{\D\mu_{i}}=\sup_{\bar{\K}}\frac{\exp(-\rho t)\int_{\bar{\K}}\exp(-\frac{1}{2}\,\norm x^{2})}{\int_{\bar{\K}}\exp(-\frac{1}{2}\,\norm x^{2}-\rho t)}\leq\sup_{\bar{\K}}\exp\bpar{\rho\,(t-\min_{\bar{\K}}t)}\leq e\,,
\]
it follows that for $m\geq2/v$,
\[
\var_{\mu_{i}}\bpar{\frac{E_{i}}{\E E_{i}}}=\frac{1}{m}\,\chi^{2}(\mu_{i+1}\mmid\mu_{i})\leq\frac{2}{m}\leq v\,.
\]
Since $\mu_{i-1}$ is $\O(1)$-warm with respect to $\mu_{i}$, $\psann$
uses $\Otilde(mn^{2}l^{2}\log^{4}\frac{R}{\epsilon})$ queries in
expectation.
\end{proof}

\paragraph{Phase II.}

Variance control in Phase II requires an additional lemma in \cite[Lemma 7.8]{cousins2018gaussian}.
We present it with a streamlined proof.
\begin{lem}
\label{lem:faster_rate} Suppose a logconcave function $f:\R^{n}\rightarrow\R$
has support contained in $B_{R}(0)$. Let $X$ be drawn from $\mu_{\sigma^{2}}\propto g(x,\sigma^{2}):=f(x)e^{-\|x\|^{2}/(2\sigma^{2})}$
and $Y=g(X,\sigma^{2}(1+\alpha))/g(X,\sigma^{2})$. For $\alpha\le1/2$,
\[
\var\bpar{\frac{Y}{\E Y}}=\chi^{2}(\mu_{\sigma^{2}(1+\alpha)}\mmid\mu_{\sigma^{2}})\le\exp\bpar{\frac{2R^{2}\alpha^{2}}{\sigma^{2}}}-1\,.
\]
\end{lem}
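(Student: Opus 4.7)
The plan is to rewrite the $\chi^{2}$-divergence as a ratio of partition functions $Z(s) := \int f(x)\,e^{-\norm x^{2}/(2s)}\,\D x$, pass to the variable $u = 1/s$ so that the expression becomes a symmetric second difference of a convex function, and control that second difference via Brascamp–Lieb. A direct calculation of $Y/\E Y$ and of $\E[(Y/\E Y)^{2}]$ as a Gaussian integral with shifted variance yields
\[
\chi^{2}\bpar{\mu_{\sigma^{2}(1+\alpha)} \mmid \mu_{\sigma^{2}}} + 1 = \frac{Z(\sigma^{2})\, Z\bpar{\sigma^{2}(1+\alpha)/(1-\alpha)}}{Z\bpar{\sigma^{2}(1+\alpha)}^{2}}\,.
\]
The two arguments $\sigma^{2}$ and $\sigma^{2}(1+\alpha)/(1-\alpha)$ have harmonic mean $\sigma^{2}(1+\alpha)$, so writing $\psi(u) := \log Z(1/u) = \log\int f(x)\,e^{-u\norm x^{2}/2}\,\D x$ and $u_{1} := 1/\sigma^{2}$, $u_{2} := (1-\alpha)/(\sigma^{2}(1+\alpha))$, $\bar u := (u_{1}+u_{2})/2 = 1/(\sigma^{2}(1+\alpha))$ gives the representation $\chi^{2}+1 = \exp\bpar{\psi(u_{1})+\psi(u_{2})-2\psi(\bar u)}$.

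The main analytic step is to bound $\psi''$. Differentiating under the integral shows $\psi''(u) = \tfrac14\var_{\nu_{u}}(\norm X^{2})$, where $\nu_{u} \propto f(x)\,e^{-u\norm x^{2}/2}$. Since $-\log\nu_{u} = -\log f + u\norm x^{2}/2$ has Hessian at least $uI_{n}$, the measure $\nu_{u}$ is $u$-strongly logconcave. The Brascamp–Lieb variance inequality, applied with $g(x) = \norm x^{2}$ so that $\norm{\nabla g(x)}^{2} = 4\norm x^{2} \le 4R^{2}$ on the support of $f$, then yields $\var_{\nu_{u}}(\norm X^{2}) \le 4\E_{\nu_{u}}\norm X^{2}/u \le 4R^{2}/u$. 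Hence $\psi''(u) \le R^{2}/u$.

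To finish, I would write the second difference as
\[
\psi(u_{1})+\psi(u_{2})-2\psi(\bar u) = \int_{0}^{h}\int_{-s}^{s}\psi''(\bar u + r)\,\D r\,\D s \quad\text{with}\quad h = \frac{u_{1}-u_{2}}{2} = \frac{\alpha}{\sigma^{2}(1+\alpha)}\,,
\]
bound the integrand uniformly by $R^{2}/u_{2} = R^{2}\sigma^{2}(1+\alpha)/(1-\alpha)$, and substitute $h^{2} = \alpha^{2}/(\sigma^{4}(1+\alpha)^{2})$ to obtain $\psi(u_{1})+\psi(u_{2})-2\psi(\bar u) \le R^{2}\alpha^{2}/(\sigma^{2}(1-\alpha^{2}))$. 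The constraint $\alpha \le 1/2$ gives $1-\alpha^{2} \ge 3/4$, so the right-hand side is at most $2R^{2}\alpha^{2}/\sigma^{2}$, completing the proof.

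The hard part, and the reason one cannot simply imitate the proof of Lemma~\ref{lem:int-phase1}, is that a naive argument relying only on boundedness of $\norm X^{2} \in [0,R^{2}]$ (e.g.\ a Hoeffding bound on the cumulant generating function of $\norm X^{2}$, or $\var_{\nu_{u}}(\norm X^{2}) \le R^{4}/4$) yields an exponent of order $R^{4}\alpha^{2}/\sigma^{4}$, which is too weak for Phase III of Algorithm~\ref{alg:integral} where $\alpha \asymp \sigma^{2}/(C^{2}n)$ and $R^{2}\asymp C^{2}n$ make the two bounds differ by a factor of $R^{2}/\sigma^{2}$. Recognizing that the $u$-strong logconcavity of $\nu_{u}$, rather than the boundedness of $\norm X^{2}$, is what drives the $R^{2}\alpha^{2}/\sigma^{2}$ scaling is the crux.
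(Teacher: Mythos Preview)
Your proof is correct and follows essentially the same route as the paper's: both express $\chi^{2}+1$ as $\exp$ of a symmetric second difference of the log-partition function (your $\psi(u)$ is the paper's $g(s)$ up to the linear reparametrization $s=u\,\sigma^{2}(1+\alpha)$), bound the second derivative via the Poincar\'e/Brascamp--Lieb inequality for the $u$-strongly logconcave measure $\nu_{u}$ to get $\psi''(u)\le R^{2}/u$, and integrate. The only cosmetic difference is that the paper integrates $1/q$ exactly and bounds the resulting logarithm, whereas you bound the integrand uniformly by $R^{2}/u_{2}$; both yield the same final constant.
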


\begin{proof}
We can directly show that
\[
\Bnorm{\frac{\D\mu_{\sigma^{2}(1+\alpha)}}{\D\mu_{\sigma^{2}}}}_{L^{2}}^{2}=\frac{\int f(x)\exp(-\frac{1}{2\sigma^{2}}\,\norm x^{2})\times\int f(x)\exp(-\frac{1}{2\sigma^{2}}\frac{1-\alpha}{1+\alpha}\,\norm x^{2})}{\bpar{\int f(x)\exp(-\frac{1}{2\sigma^{2}\,(1+\alpha)}\,\norm x^{2})}^{2}}\,.
\]
Defining $g(s):=\log\bigl(\int f(x)\exp(-\frac{s}{2\sigma^{2}(1+\alpha)}\,\norm x^{2})\bigr)$,
we have
\begin{equation}
\eu R_{2}(\mu_{\sigma^{2}(1+\alpha)}\mmid\mu_{\sigma^{2}})=g(1+\alpha)+g(1-\alpha)-2g(1)=\int_{0}^{\alpha}\bpar{g'(1+t)-g'(1-t)}\,\D t=\int_{0}^{\alpha}\int_{1-t}^{1+t}g''(q)\,\D q\D t\,.\label{eq:renyi2-bound}
\end{equation}
For a logconcave distribution $\nu_{q}(x)\propto f(x)\exp(-\frac{q}{2\sigma^{2}(1+\alpha)}\,\norm x^{2})$,
we have
\begin{align*}
g''(q) & =-\frac{1}{2\sigma^{2}(1+\alpha)}\,\frac{\D}{\D q}\Bpar{\frac{\int\norm x^{2}f(x)\exp(-\frac{q}{2\sigma^{2}(1+\alpha)}\,\norm x^{2})}{\int f(x)\exp(-\frac{q}{2\sigma^{2}(1+\alpha)}\,\norm x^{2})}}\\
 & =\frac{1}{4\sigma^{4}(1+\alpha)^{2}}\,\Bbrack{\frac{\int\norm x^{4}f(x)\exp(-\frac{q}{2\sigma^{2}(1+\alpha)}\,\norm x^{2})}{\int f(x)\exp(-\frac{q}{2\sigma^{2}(1+\alpha)}\,\norm x^{2})}-\Bpar{\frac{\int\norm x^{2}f(x)\exp(-\frac{q}{2\sigma^{2}(1+\alpha)}\,\norm x^{2})}{\int f(x)\exp(-\frac{q}{2\sigma^{2}(1+\alpha)}\,\norm x^{2})}}^{2}}\\
 & =\frac{1}{4\sigma^{4}(1+\alpha)^{2}}\,\var_{\nu_{q}}(\norm X^{2})\underset{\eqref{eq:pi}}{\leq}\frac{1}{q\sigma^{2}(1+\alpha)}\,\E_{\nu_{q}}[\norm X^{2}]\,,
\end{align*}
where we used $\cpi(\nu_{q})\leq\frac{\sigma^{2}(1+\alpha)}{q}$ (Bakry-\'Emery)
in the last line.  Due to $\sup_{\supp\nu_{q}}\norm X\le R$,

\[
g''(q)\leq\frac{R^{2}}{q\sigma^{2}}\,.
\]
Putting this bound back to \eqref{eq:renyi2-bound} and integrating,
\[
\eu R_{2}(\mu_{\sigma^{2}(1+\alpha)}\mmid\mu_{\sigma^{2}})\leq\frac{R^{2}}{\sigma^{2}}\int_{0}^{\alpha}\int_{1-t}^{1+t}\frac{1}{q}\,\D q\D t=\frac{R^{2}}{\sigma^{2}}\int_{0}^{\alpha}\bigl(\log(1+t)-\log(1-t)\bigr)\,\D t\leq\frac{2R^{2}\alpha^{2}}{\sigma^{2}}\,,
\]
which completes the proof.
\end{proof}
We now establish sample and query complexities under the two update
rules: for $1\leq\beta\leq1+\frac{1}{28nl}$, (i) $\sigma^{2}\gets\beta^{-1}\sigma^{2}$
and $\rho\gets\beta\rho$, and (ii) $\sigma^{2}\gets\beta\sigma^{2}$.
Below, we will set the target variance $v_{2}=\frac{\veps^{2}}{2\cdot10^{9}n\log n}$,
which requires $m_{2}\geq\frac{2\cdot10^{9}\log nl}{\veps^{2}}$.
\begin{lem}
[Phase II: Line~\ref{line:phase2}] \label{lem:int-phase2} Given
$v>0$ and $\epsilon>0$, $\msf{Integral}$ needs $m\geq\frac{1}{nlv}$
samples to ensure $\var_{\mu}(E/\E_{\mu}E)\leq v$. Also, when $\psann$
draws $m$ samples with law $\bar{\mu}$ satisfying $\eu R_{\infty}(\bar{\mu}\mmid\mu)\leq\epsilon$
in each inner phase, it requires $\Otilde(mn^{3}l^{3}\log^{4}\frac{R}{\epsilon})$
queries throughout Phase II.
\end{lem}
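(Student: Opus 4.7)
The plan is to follow the template of Lemma~\ref{lem:int-phase1}: for each of the two updates used in one outer iteration of Phase II, I bound (i) the $\chi^2$-divergence between consecutive annealing distributions (which, combined with Lemma~\ref{lem:variance-estimator}, yields the sample complexity $m\geq 1/(nlv)$), and (ii) the warmness $\eu R_\infty(\mu_i\mmid\mu_{i+1})$ (which feeds into the per-call cost of $\psann$ via Theorem~\ref{thm:annealing-query-comp}). Then I count the number of inner phases and multiply by the per-phase query cost.

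For update (i), we have $\sigma_{i+1}^2=\sigma_i^2/\beta$ and $\rho_{i+1}=\beta\rho_i$, so $\mu_{i+1}\propto\mu_i^{\beta}$ on $\bar{\K}$. Setting $Z(a):=\int\exp(-a\phi_i)\ind_{\bar{\K}}$ with $\phi_i=\frac{1}{2\sigma_i^2}\norm x^2+\rho_i t$, direct calculation gives $\chi^2(\mu_{i+1}\mmid\mu_i)+1=Z(1)Z(2\beta-1)/Z(\beta)^2$, and Lemma~\ref{lem:slower_rate} applied in dimension $n+1$ (logconcavity of $a^{n+1}Z(a)$) bounds this by $(\beta^2/(2\beta-1))^{n+1}\leq\exp((n+1)(\beta-1)^2)$. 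With $\beta-1=1/(28nl)$ this is $1+O(1/(nl^2))$, so $\chi^2\lesssim 1/(nl^2)$. For update (ii), only $\sigma^2$ changes (to $\beta\sigma^2$) while $\rho$ is fixed, which mirrors the setup of Lemma~\ref{lem:faster_rate} applied to the logconcave function $f(x,t)=\exp(-\rho t)\ind_{\bar{\K}}(x,t)$ with Gaussian factor only in the $x$-coordinate. Following the proof template of that lemma, $g''(q)$ is expressed as $\var_{\nu_q}(\norm x^2)/(4\sigma^4(1+\alpha)^2)$; using the Bakry--\'Emery bound on $\cpi(\nu_q^{X|T=t})\leq\sigma^2(1+\alpha)/q$ conditionally in $x$, and $\E_{\nu_q}[\norm x^2]=O(n)$ from concentration of the truncated Gaussian together with $R^2=O(n)$, one obtains $\chi^2\lesssim n\alpha^2/\sigma^2\lesssim 1/(nl^2)$. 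Either way, $\chi^2\leq 1/(nl)$, so $m\geq 1/(nlv)$ suffices.

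For warmness, I compute $\sup_{\bar{\K}}(\phi_{i+1}-\phi_i)$ and combine it with the ratio $Z_{i+1}/Z_i$ as in Lemma~\ref{lem:tgc-phase1}. The $t$-contribution is controlled by $(\beta-1)\rho_i\cdot O(l)=O(\rho_i/(nl)\cdot l)=O(1)$ since $\rho_i\leq n$ and $\beta-1=1/(28nl)$. The $\norm x^2$-contribution is handled by pairing with the normalization ratio and using the concentration of $\mu_i$ around $\norm x^2=O(n)$ (not $O(R^2l^2)$), giving an $O(1)$ (or at worst polylog) warmness that is absorbed into the $\tilde{\O}$. For phase counts, the multiplicative schedule $\rho\gets\beta\rho$ with $\beta=1+1/(28nl)$ takes $\rho$ from $1/(28l)$ to $n$ in $\log(28nl)/\log\beta=\tilde{\O}(nl)$ outer iterations, each consisting of two Samp-Est calls. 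Throughout Phase II, $\sigma^2\in[1/\beta,1]$, so $\sigma^2\vee l^2=l^2$ in Theorem~\ref{thm:annealing-query-comp}, yielding $\tilde{\O}(mn^2l^2\log^4\tfrac{R}{\epsilon})$ queries per call. Multiplying gives the claimed $\tilde{\O}(mn^3l^3\log^4\tfrac{R}{\epsilon})$.

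The main obstacle is Step 2's generalization of Lemma~\ref{lem:faster_rate} to the mixed case $f(x,t)$: the original lemma's proof uses a global Bakry--\'Emery bound on $\cpi(\nu_q)$ via strong logconcavity of the whole potential, which fails here because the $t$-direction has only a linear term (no quadratic). The remedy is to use a conditional Poincar\'e inequality on $\nu_q^{X|T=t}$ together with a second-moment bound $\E_{\nu_q}[\norm x^2]=O(n)$ — crucially sharper than the trivial sup-support bound $O(R^2l^2)$ — so that integrating $g''(q)\lesssim n/(q\sigma^2)$ twice recovers the $R^2\alpha^2/\sigma^2$-type $\chi^2$-bound of Lemma~\ref{lem:faster_rate} with $R^2=O(n)$ rather than $O(nl^2)$.
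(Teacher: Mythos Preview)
Your handling of the first update (via Lemma~\ref{lem:slower_rate} in dimension $n+1$), the phase count $\Otilde(nl)$, and the per-call cost of $\psann$ via Theorem~\ref{thm:annealing-query-comp} all match the paper. The gap is in the second update.

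Since only $\sigma^2$ changes while $\rho$ is held fixed, the ratio $f_{i+1}/f_i$ is a function of $x$ alone, and hence
\[
\var_{\mu_i}\Bigl(\frac{Y_i}{\E_{\mu_i}Y_i}\Bigr)
=\var_{\mu_i^X}\Bigl(\frac{\D\mu_{i+1}^X}{\D\mu_i^X}\Bigr)
=\chi^2(\mu_{i+1}^X\mmid\mu_i^X)\,.
\]
The $X$-marginal has the form $\mu_i^X\propto g(x)\exp(-\tfrac{\beta}{2}\norm x^2)$ with $g(x)=\int e^{-\rho t}\,\ind_{\bar\K}(x,t)\,\D t$ logconcave (as a marginal of a logconcave function). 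This is \emph{exactly} the setting of Lemma~\ref{lem:faster_rate} in $\R^n$: the marginal is strongly logconcave in every coordinate, so Bakry--\'Emery applies out of the box. The paper does precisely this reduction (equation~\eqref{eq:ex-faster-rate}) and invokes Lemma~\ref{lem:faster_rate} verbatim, using only the support-radius bound; the ``main obstacle'' you identify disappears once you marginalize out $t$, and no sharper moment bound $\E[\norm x^2]=O(n)$ is needed.

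Your proposed remedy---conditional Poincar\'e on $\nu_q^{X|T=t}$---does not close the argument. Averaging the conditional bound over $t$ controls only $\E_t\bigl[\var_{\nu_q^{X|T=t}}(\norm X^2)\bigr]$, the within-slice term in the law of total variance; the between-slice term $\var_{\nu_q^T}\bigl[\E_{\nu_q^{X|T=t}}(\norm X^2)\bigr]$ is left unbounded, and there is no strong convexity (or usable Poincar\'e bound) in the $t$-direction to handle it. The one-line fix is to note $\var_{\nu_q}(\norm X^2)=\var_{\nu_q^X}(\norm X^2)$ and apply Bakry--\'Emery to the $X$-marginal $\nu_q^X$---which is exactly the marginalization above. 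Separately, your warmness sketch invokes ``concentration of $\mu_i$ around $\norm x^2=O(n)$'' for the $\norm x^2$-contribution, but $\eu R_\infty$-warmness is a supremum over $\bar\K$, not an average; the paper instead shifts $t\mapsto t-t_0$ so the potential is nonnegative on $\bar\K$, absorbs the normalization ratio (now $\le 1$), and bounds the pointwise sup directly.
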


\begin{proof}
In the first update of $\sigma_{i+1}^{2}=\sigma_{i}^{2}/\beta$ and
$\rho_{i+1}=\beta\rho_{i}$, we can write $\mu_{i}(x,t)\propto\exp(-U(x,t))$
and $\mu_{i+1}(x,t)\propto\exp(-\beta U(x,t))$ for $U(x,t)=(\frac{1}{2}\,\norm x^{2}+\rho t)/\ind_{\bar{\K}}(x,t)$,
where $\frac{1}{28l}\leq\rho<n$. Then, 
\[
\Bnorm{\frac{\D\mu_{i+1}}{\D\mu_{i}}}_{L^{2}(\mu_{i})}^{2}=\frac{\int_{\R^{n+1}}\exp(-(2\beta-1)\,U)\times\int_{\R^{n+1}}\exp(-U)}{\bpar{\int_{\R^{n+1}}\exp(-\beta U)}^{2}}\,.
\]
Using logconcavity of $s^{n+1}\int\exp(-sU)$ in $s$ with endpoints
$2\beta-1$ and $1$ (due to Lemma~\ref{lem:slower_rate}), 
\[
\frac{\int_{\R^{n+1}}\exp(-(2\beta-1)\,U)\times\int_{\R^{n+1}}\exp(-U)}{\bpar{\int_{\R^{n+1}}\exp(-\beta U)}^{2}}\leq\frac{\beta^{2n}}{(2\beta-1)^{n}}\leq1+\frac{1}{nl}\,.
\]
Since $1+\chi^{2}(\mu_{i+1}\mmid\mu_{i})=\norm{\mu_{i+1}/\mu_{i}}_{L^{2}(\mu_{i})}^{2}$,
we can conclude that 
\[
\var_{\mu_{i}}\bpar{\frac{E_{i}}{\E_{\mu_{i}}E_{i}}}=\frac{1}{m}\,\chi^{2}(\mu_{i+1}\mmid\mu_{i})\leq\frac{1}{mnl}\,,
\]
so it suffices to take $m\geq\frac{1}{nlv}$ samples for estimation. 

In the second update of $\sigma_{i+1}^{2}=\beta\sigma_{i}^{2}=1$
and $\rho_{i+1}=\rho_{i}$, we can write $\mu_{i}^{X}\propto g(x)\exp(-\frac{\beta}{2}\,\norm x^{2})$
and $\mu_{i+1}^{X}\propto g(x)\exp(-\frac{1}{2}\,\norm x^{2})$ for
some logconcave function $g$. Then, for $Y=\frac{\exp(-\norm x^{2}/2)}{\exp(-\beta\,\norm x^{2}/2)}$,
\begin{align}
\var_{\mu_{i}}\bpar{\frac{E_{i}}{\E_{\mu_{i}}E_{i}}} & =\frac{1}{m}\var_{\mu_{i}}\bpar{\frac{Y}{\E_{\mu_{i}}Y}}=\frac{1}{m}\var_{\mu_{i}}\bpar{\frac{\D\mu_{i+1}^{X}}{\D\mu_{i}^{X}}}=\frac{1}{m}\var_{\mu_{i}^{X}}\bpar{\frac{\D\mu_{i+1}^{X}}{\D\mu_{i}^{X}}}\label{eq:ex-faster-rate}\\
 & \underset{(i)}{\leq}\frac{1}{m}\,\Bpar{\exp\bpar{\frac{2C^{2}n\,(\beta-1)^{2}}{1/\beta}}-1}\leq\frac{1}{mnl}\,,\nonumber 
\end{align}
where we used Lemma~\ref{lem:faster_rate} in $(i)$.

Regarding the query complexity, we show that $\mu_{i-1}$ is $\O(1)$-warm
with respect to $\mu_{i}$. At the beginning of Phase II, for $t_{0}:=\inf_{\bar{\K}}t$,
\[
\frac{\D\mu_{i-1}}{\D\mu_{i}}=\sup_{\bar{\K}}\frac{\exp(-\frac{1}{2}\,\norm x^{2})}{\exp(-\frac{1}{2}\,\norm x^{2}-\frac{1}{28l}\,t)}\frac{\int_{\bar{\K}}\exp(-\frac{1}{2}\,\norm x^{2}-\frac{1}{28l}\,t)}{\int_{\bar{\K}}\exp(-\frac{1}{2}\,\norm x^{2})}\leq\sup_{\bar{\K}}\exp\bpar{\frac{1}{28l}\,(t-t_{0})}\leq e\,.
\]
In the first update,
\begin{align*}
\frac{\D\mu_{i}}{\D\mu_{i+1}} & =\sup_{\bar{\K}}\frac{\exp(-\frac{1}{2}\,\norm x^{2}-\rho t)}{\exp(-\frac{\beta}{2}\,\norm x^{2}-\beta\rho t)}\frac{\int_{\bar{\K}}\exp(-\frac{\beta}{2}\,\norm x^{2}-\beta\rho t)}{\int_{\bar{\K}}\exp(-\frac{1}{2}\,\norm x^{2}-\rho t)}\leq\sup_{\bar{\K}}\frac{\exp\bigl(-\frac{1}{2}\,\norm x^{2}-\rho\,(t-t_{0})\bigr)}{\exp\bigl(-\frac{\beta}{2}\,\norm x^{2}-\beta\rho\,(t-t_{0})\bigr)}\\
 & \leq\sup\exp\bpar{\frac{1}{56nl}\,\norm x^{2}+\frac{\rho}{28nl}\,(t-t_{0})}\leq e^{2}\,.
\end{align*}
In the second update, the warmness can be shown as in Lemma~\ref{lem:tgc-phase1}.
Putting these together, $\psann$ uses $\Otilde(n^{2}l^{2}\log^{4}\frac{R}{\epsilon})$
queries per inner phase, and there are $\O(nl\log nl)$ inner phases
in Phase II, so we need $\Otilde(mn^{3}l^{3}\log^{4}\frac{R}{\epsilon})$
queries throughout Phase II.
\end{proof}

\paragraph{Phase III. }

We establish sample and query complexity under the update rule of
$\sigma^{2}\gets\sigma^{2}\,(1+\frac{\sigma^{2}}{C^{2}n})$. Since
$\sigma^{2}$ increases from $1$ to $C^{2}n$, we can partition the
whole phase by $\O(\log C^{2}n)$ many doubling phases, where an initial
$\sigma^{2}$ doubles in each doubling phase. Below, we will set the
target variance $v_{3,1}=\frac{\veps^{2}\sigma^{2}}{10^{9}C^{2}n\log C^{2}n}$
during each doubling phase, where $\sigma^{2}$ is an initial value
in the doubling phase. This requires $m_{3,1}\geq\frac{4\cdot10^{9}\log C^{2}n}{\veps^{2}}$.
When $\sigma^{2}=C^{2}n$, we will set $v_{3,2}=\frac{\veps^{2}}{10^{9}}$,
which require $m_{3,2}\geq\frac{2\cdot10^{9}}{\veps^{2}}$.
\begin{lem}
[Phase III: Line~\ref{line:phase3}] \label{lem:int-phase3} Given
$v>0$ and $\epsilon>0$, $\msf{Integral}$ needs $m\geq\frac{4\sigma^{2}}{C^{2}nv}$
samples to ensure $\var_{\mu}(E/\E_{\mu}E)\leq v$. Also, when $\psann$
draws $m$ samples with law $\bar{\mu}$ satisfying $\eu R_{\infty}(\bar{\mu}\mmid\mu)\leq\epsilon$
in each inner phase, it requires $\Otilde(mC^{2}n^{3}l^{2}\log^{4}\frac{R}{\epsilon})$
queries during doubling of $\sigma^{2}$. When $\sigma^{2}=C^{2}n$,
$\msf{Integral}$ needs $m\geq\frac{2}{v}$ samples and uses $\Otilde(mn^{2}l^{2}\log^{4}\frac{R}{\epsilon})$
queries.
\end{lem}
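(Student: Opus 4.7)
The plan is to parallel Lemmas~\ref{lem:int-phase1} and~\ref{lem:int-phase2}, treating the doubling bulk (where $\sigma_i^2 < C^2 n$) and the terminal step (where $\sigma_i^2 = C^2 n$ and $\sigma_{i+1}^2 = \infty$) separately. For the variance during doubling, I first reduce to the $X$-marginal: since $\rho_i = n$ is held fixed throughout Phase III, the conditional law $\mu_i^{T\mid X=x}$ depends only on $x$, hence $\chi^2(\mu_{i+1}\mmid\mu_i) = \chi^2(\mu_{i+1}^X\mmid\mu_i^X)$ as in Lemma~\ref{lem:int-phase2}. Writing $\mu_i^X(x) \propto g(x)\exp(-\norm{x}^2/(2\sigma_i^2))$ with $g(x) := \int e^{-nt}\ind_{\bar{\K}}(x,t)\,\D t$ (logconcave by Pr\'ekopa--Leindler applied to the jointly logconcave $e^{-nt}\ind_{\bar{\K}}(x,t)$), I apply Lemma~\ref{lem:faster_rate} with $\alpha_i := \sigma_i^2/(C^2 n) \leq 1/2$ and the well-rounded scale $R^2 = C^2 n$ to get $\chi^2(\mu_{i+1}^X\mmid\mu_i^X) \leq \exp(2\sigma_i^2/(C^2 n))-1 \leq 4\sigma_i^2/(C^2 n)$. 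Lemma~\ref{lem:variance-estimator} then yields $\var_{\mu_i}(E_i/\E E_i) \leq 4\sigma_i^2/(mC^2 n)$, so $m \geq 4\sigma_i^2/(C^2 n v)$ suffices.

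For the query complexity during doubling, the warmness $\eu R_{\infty}(\mu_{i-1}\mmid\mu_i) = \O(1)$ between consecutive annealing distributions follows from the same density-ratio computation as in Lemma~\ref{lem:tgc-phase2} (using $1+\alpha_i\leq 2$ and $\rho$ fixed). Combined with Theorem~\ref{thm:annealing-query-comp} and the LSI bound $\clsi(\mu_i)\lesssim \sigma_i^2\vee l^2$ (Lemma~\ref{lem:lsi-annealing}), each sample from $\psann$ costs $\Otilde(n^2(\sigma_i^2\vee l^2)\log^4(R/\epsilon))$ queries. Since doubling $\sigma_i^2$ requires $\Otilde(C^2 n/\sigma_i^2)$ inner phases, the total cost per doubling period sums to $\Otilde(mC^2 n^3 l^2 \log^4(R/\epsilon))$, establishing the claim.

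For the terminal step $\sigma^2 = C^2 n \to \infty$, the pointwise density ratio $f_{i+1}/f_i = \exp(\norm{x}^2/(2C^2 n))$ is unbounded on $\bar{\K}$, so Lemma~\ref{lem:faster_rate} cannot be invoked. Instead, I use Paouris' lemma (Lemma~\ref{lem:version-Paouris}) on the well-rounded $\mu_i^X$: splitting $\E_{\mu_i}[e^{\norm{X}^2/(C^2 n)}]$ into the bulk $\{\norm{X}\leq k\sqrt{n}\}$, where the integrand is $\O(1)$, and the Paouris tail $\{\norm{X}>k\sqrt{n}\}$, whose $e^{-ck\sqrt{n}}$-decay dominates the $e^{\O(l^2)}$-integrand on $\bar{\K}$, yields $\E_{\mu_i}[e^{\norm{X}^2/(C^2 n)}] = \O(1)$. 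A direct $\chi^2$-expansion using $\chi^2(\mu_{i+1}\mmid\mu_i)+1 = \E_{\mu_i}[e^{\norm{X}^2/(C^2n)}]/\bpar{\E_{\mu_i}[e^{\norm{X}^2/(2C^2n)}]}^2 \leq \E_{\mu_i}[e^{\norm{X}^2/(C^2n)}]$ (the denominator being $\geq 1$ by Jensen) then delivers $\chi^2 \leq 2$, so $m \geq 2/v$ suffices. The $\psann$ query cost follows from Theorem~\ref{thm:annealing-query-comp} applied to $\mu_{i+1}\propto e^{-nt}|_{\bar{\K}}$ with $\O(1)$-warmness between $\mu_i$ and $\mu_{i+1}$ (again via Paouris concentration). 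The main technical obstacle is this terminal step: bounding the $\chi^2$ despite the unboundedness of the pointwise ratio requires combining well-roundedness of $\pi^X$ with Paouris' concentration, together with the truncation bound $\norm{X}\leq Rl$ to control the extreme tail integrand.
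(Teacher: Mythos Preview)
For the doubling portion of Phase~III your plan coincides with the paper's: fix $\rho=n$, pass to the $X$-marginal (the paper does this implicitly by pointing back to \eqref{eq:ex-faster-rate}), apply Lemma~\ref{lem:faster_rate} with $\alpha=\sigma^2/(C^2n)$ to get $\chi^2\le 4\sigma^2/(C^2n)$, and read off the query count from the warmness computation of Lemma~\ref{lem:tgc-phase2} together with Theorem~\ref{thm:annealing-query-comp}.

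The terminal step $\sigma^2=C^2n$ is where you and the paper part ways. The paper does not touch Paouris at all; it simply asserts the pointwise bound
\[
\frac{\D\mu_{i+1}}{\D\mu_i}\;=\;\exp\!\Bigl(\frac{\norm{x}^2}{2C^2n}\Bigr)\cdot\frac{\int_{\bar{\K}}e^{-\norm{x}^2/(2C^2n)-nt}}{\int_{\bar{\K}}e^{-nt}}\;\le\;\sqrt{e}
\]
on $\bar{\K}$ (the integral ratio being trivially $\le 1$) and concludes $\chi^2\le\sqrt{e}-1<2$ in one line. So from the paper's standpoint the ``main technical obstacle'' you flag does not arise, and your Paouris machinery is a detour. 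Two remarks on your version: (i) the ratio is not ``unbounded on $\bar{\K}$''---since $\bar{\K}\subset B_{Rl}(0)\times\R$ it is at most $e^{l^2/2}$; and (ii) your appeal to Lemma~\ref{lem:version-Paouris} has a gap, as that lemma is stated for \emph{centered $s$-isotropic} measures with $s=\O(1)$, whereas ``well-rounded'' only gives you $\E\norm{X}^2=\O(n)$; you would still need $\norm{\cov\mu_i^X}=\O(1)$ and to handle the non-zero mean before the tail splitting is rigorous. That said, your instinct is not baseless: the paper's $\sqrt{e}$ tacitly treats the $x$-extent of $\bar{\K}$ as $R$ rather than $Rl$, so a concentration argument along your lines is one way to make the constant bound robust.
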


\begin{proof}
Regarding the variance control, similar to \eqref{eq:ex-faster-rate},
we can use Lemma~\ref{lem:faster_rate} (with $R=C\sqrt{n}$ and
$\alpha=\frac{\sigma^{2}}{C^{2}n}$) to show that for $\sigma^{2}\in[1,C^{2}n]$,
\[
\var_{\mu_{i}}\bpar{\frac{E_{i}}{\E E_{i}}}=\frac{1}{m}\var\bpar{\frac{Y}{\E Y}}\leq\frac{1}{m}\,\Bpar{\exp\bpar{\frac{2C^{2}n\alpha^{2}}{\sigma^{2}}}-1}\leq\frac{4\sigma^{2}}{C^{2}mn}\,.
\]
Hence, it suffices to take $m\geq\frac{4\sigma^{2}}{C^{2}nv}$ samples
for estimation. Next, we can show that $\mu_{i-1}$ is $\O(1)$-warm
with respect to $\mu_{i}$ during Phase III, as in the proof of Lemma~\ref{lem:tgc-phase2}.
Since there are $\O(\frac{C^{2}n}{\sigma^{2}})$ inner phases (for
doubling of $\sigma^{2}$), and $\psann$ requires $\Otilde(n^{2}\sigma^{2}l^{2}\log^{4}\frac{R}{\epsilon})$
queries, Phase III uses $\Otilde(mC^{2}n^{3}l^{2}\log^{4}\frac{R}{\epsilon})$
queries in expectation during each doubling phase.

When $\sigma^{2}=C^{2}n$, we let $\mu_{i}\propto\exp(-\frac{1}{2C^{2}n}\,\norm x^{2}-nt)|_{\bar{\K}}$
and $\mu_{i+1}=\bar{\pi}\propto\exp(-nt)|_{\bar{\K}}$. Due to 
\[
\frac{\D\mu_{i+1}}{\D\mu_{i}}=\sup_{\bar{\K}}\frac{\exp(\frac{1}{2C^{2}n}\,\norm x^{2})\int_{\bar{\K}}\exp(-\frac{1}{2C^{2}n}\,\norm x^{2}-nt)}{\int_{\bar{\K}}\exp(-nt)}\leq\sqrt{e},
\]
it follows that for $m\geq2/v$,
\[
\var_{\mu_{i}}\bpar{\frac{E_{i}}{\E E_{i}}}=\frac{1}{m}\,\chi^{2}(\mu_{i+1}\mmid\mu_{i})\leq\frac{2}{m}\leq v\,,
\]
and to sample from $\mu_{i}$, $\psann$ uses $\Otilde(mn^{2}l^{2}\log^{4}\frac{R}{\epsilon})$
queries in expectation.
\end{proof}

\subsubsection{Final guarantee}

With these bounds on the variance of estimators in $\msf{Integral}$,
we can bound the overall variance of the product of estimates using
$m$ samples.  We then show that our final estimate in $\msf{Integral}$
is close to the integral $\int e^{-V}$ with high probability, streamlining
previous analysis in \cite{kannan1997random,lovasz2006simulated,cousins2018gaussian}.
\begin{lem}
\label{lem:estimation-accuracy} If $\msf{Integral}$ runs with sample
size $m=\veps^{-2}(10^{11}l\log nl\log C^{2}n)$ and sample accuracy
$\epsilon\leq\veps^{2}(10^{15}C^{2}nl^{2}\log^{3}(C^{2}nl))^{-1}$,
then it can compute a $(1+\veps)$-multiplicative approximation to
$\int e^{-V}$ with probability at least $0.98$.
\end{lem}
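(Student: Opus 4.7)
The plan is to show that the (biased, correlated) estimator $E = n\int f_{-1} \cdot E_{-1}E_0\cdots E_p$ concentrates around $\int e^{-V}$ within a $(1\pm\varepsilon)$ multiplicative factor, by reducing the analysis to the case of fully independent samples drawn from the \emph{exact} target distributions, and then applying Chebyshev's inequality to the product.

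\textbf{Step 1 (variance under independence).} Imagine temporarily that, for each thread $s\in[m]$, the samples $(Z_{-1}^s,Z_0^s,\dots,Z_p^s)$ were independent across $i$, with $Z_i^s\sim \mu_i$ exactly. Then $\E_{\mu_i}[E_i] = \E_{\mu_i}[Y_i]$, the $E_i$'s are independent across $i$, and $\E[P^2]/(\E P)^2 = \prod_{i=-1}^p\bpar{1+\var_{\mu_i}(E_i/\E_{\mu_i}E_i)}$ for $P=E_{-1}\cdots E_p$. I would substitute the per-phase variance bounds from Lemmas~\ref{lem:int-phase0}--\ref{lem:int-phase3} with the chosen $m=\veps^{-2}(10^{11}l\log nl\log C^2 n)$: a direct count gives $\sum_i v_i = \O(\veps^2/l)$ (Phase~0 contributes $\O(l/m)$, Phase~I and Phase~II each contribute $\O(\log(nl)/m)$ after multiplying per-phase variances by the number of phases, and Phase~III contributes $\O(\log(C^2n)/m)$ after summing the geometric series over the $\O(\log C^2 n)$ doublings). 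Hence $\prod(1+v_i) \le \exp(\sum v_i) \le 1 + 10^{-4}\veps^2$, which yields $\var P \le 10^{-4}\veps^2 (\E P)^2$.

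\textbf{Step 2 (decoupling via $\beta$-mixing and $\eu R_\infty$).} The actual samples are neither independent across $i$ nor drawn from $\mu_i$ exactly. I would handle both issues at once by replacing the true joint law $\law(Z_{-1}^s,\dots,Z_p^s)$, iteratively along the Markov chain within each thread, by the product $\bigotimes_i \bar\mu_i$. Using \eqref{eq:beta-bound} and a telescoping argument for Markov chains, this costs $\sum_i \beta(Z_i^s,Z_{i+1}^s)\le 2p\,\epsilon$ per thread in total variation, so $\norm{\law(\text{all samples})-\bigotimes_{s,i}\bar\mu_i}_{\tv}\le 2mp\,\epsilon$. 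Then I pass from $\bar\mu_i$ to $\mu_i$ using the $\eu R_\infty$ guarantee of $\psann$: since $\eu R_\infty$ tensorizes, $\eu R_\infty(\bigotimes_{s,i}\bar\mu_i\mmid\bigotimes_{s,i}\mu_i)\le mp\,\epsilon$, so any event has probability changed by at most a factor $e^{mp\epsilon}\le 1+2mp\epsilon$. With $p=\Otilde(nl)$ and the prescribed $\epsilon\le \veps^2/(10^{15}C^2nl^2\log^3(C^2nl))$, both $mp\,\epsilon$ and $2mp\,\epsilon$ are bounded by $10^{-3}$, absorbable into constants.

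\textbf{Step 3 (Chebyshev on the product and alignment with $\int e^{-V}$).} Under the fictitious product-of-exact-targets law of Step~1, Chebyshev's inequality combined with the variance bound of Step~1 gives $\P\bpar{|P-\E P|\ge (\veps/4)\,\E P}\le 16\cdot 10^{-4} \le 0.002$. Transferring this tail bound back to the actual distribution via Step~2 adds at most $\O(mp\,\epsilon)\le 0.01$ to the failure probability, so with probability at least $0.98$ we have $P = (1\pm\veps/4)\,\prod_{i=-1}^p\E_{\mu_i}Y_i$. A direct telescoping computation (as in the algorithm derivation around \eqref{eq:final-integral-apprx}) gives $n\int f_{-1}\cdot\prod_{i=-1}^p \E_{\mu_i}Y_i = n\int_{\bar\K}e^{-nt}\,\D x\D t = n\,\P_\pi(\bar\K)\int_\K e^{-nt}\,\D x\D t = \P_\pi(\bar\K)\int e^{-V}$, and $\P_\pi(\bar\K)\ge 1-\veps/4$ by construction of $\bar\K$. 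Multiplying, $E = n\int f_{-1}\cdot P$ satisfies $E=(1\pm\veps)\int e^{-V}$ with probability $\ge 0.98$.

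The main technical difficulty I expect is Step~2: one must choose $\epsilon$ small enough so that the combined cost from the $\beta$-mixing telescope (handling intra-thread dependence) and the $\eu R_\infty$ tensorization (handling the inexactness of $\bar\mu_i$) is both much smaller than the target $\veps$-window \emph{and} fits inside the polylogarithmic overhead afforded by the sampler, while simultaneously keeping $p$ logarithmic in $C,n,l$ through the updates in Phases~I--III. The choice $\epsilon = \veps^2/(10^{15}C^2 n l^2\log^3(C^2 nl))$ is exactly what is needed to balance these two sources of error against the Chebyshev bound from Step~1.
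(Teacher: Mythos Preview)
Your proposal is correct and follows essentially the same route as the paper's proof: first bound the variance of the product estimator under the fiction of independent, exactly-distributed samples via $\prod_i(1+v_i)\le 1+O(\veps^2)$ and apply Chebyshev; then transfer the tail bound to the actual sampling law using the $\beta$-mixing telescope (for intra-thread dependence) together with the $\eu R_\infty$ ratio bound (for the $\bar\mu_i$ vs.\ $\mu_i$ discrepancy), each costing $O(mp\,\epsilon)$; finally align $\prod_i\E_{\mu_i}Y_i$ with $\int e^{-V}$ through \eqref{eq:final-integral-apprx}. The only cosmetic difference is the order in which you perform the two replacements in Step~2 (the paper first passes from $\otimes\mu_i$ to $\otimes\bar\mu_i$ and then to the joint law, whereas you go the other direction), which is immaterial.
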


\begin{proof}
We note that the given sample size $m$ satisfies all the required
conditions (i.e., $m\geq m_{0},m_{1,*},m_{2},m_{3,*}$) to attain
a desired target variance in each phase.

We first \emph{assume} that all samples $\{Z_{i}^{s}\}_{0\leq i\leq p,s\in[m]}$
for estimations are independent (so $\{E_{i}\}_{i}$ are independent)
and exactly distributed according to $\mu_{i}$ (not $\bar{\mu}_{i}$).
Then, for $E=E_{-1}E_{0}\cdots E_{p}$,
\begin{align*}
\var E & =\E[E_{-1}^{2}]\cdots\E[E_{p}^{2}]-(\E E_{-1})^{2}\cdots(\E E_{p})^{2}=\prod_{i}\bigl(\var E_{i}+(\E E_{i})^{2}\bigr)-\prod_{i}(\E E_{i})^{2}\\
 & \leq\prod_{i}(1+v_{i})\,(\E E_{i})^{2}-\prod_{i}(\E E_{i})^{2}=\Bpar{\prod_{i}(1+v_{i})-1}\prod_{i}(\E E_{i})^{2}\leq\frac{\veps^{2}}{10^{8}}\prod_{i}(\E E_{i})^{2}\,,
\end{align*}
where the last line follows from
\begin{align*}
\prod_{i}(1+v_{i}) & \leq\exp\Bpar{v_{0}+v_{1,1}\times n\log n+v_{1,2}+v_{2}\times2nl\log nl+\bpar{v_{3,1}\times\frac{C^{2}n}{\sigma^{2}}}\log C^{2}n+v_{3,2}}\\
 & \leq\exp\bpar{\frac{6\veps^{2}}{10^{9}}}\leq1+\frac{\veps^{2}}{10^{8}}\,.
\end{align*}
By Chebyshev's inequality, 
\[
\P_{(\otimes_{i}\mu_{i})^{\otimes m}}\bigl(\abs{E-\E E}\geq\frac{\veps}{10^{3}}\,\E E\bigr)\leq0.01\,.
\]
Hence, \eqref{eq:final-integral-apprx} implies that with probability
at least $0.99$,
\begin{equation}
(1-\veps)\int f\leq E=n\,(2\pi n^{-1})^{n/2}(15+13l)\times E_{-1}E_{0}\cdots E_{i}\leq\bpar{1+\frac{\veps}{10^{3}}}\int f\,.\label{eq:final-desired}
\end{equation}

We now take into two sources of complications: (1) inexact distributions
of samples, and (2) dependence of samples within each thread. As for
(1), since $\psann$ ensures $\frac{\bar{\mu}_{i}}{\mu_{i}}\leq1+\epsilon$,
we have
\[
\frac{\law Z_{1}\otimes\cdots\otimes\law Z_{p}}{\mu_{1}\otimes\cdots\otimes\mu_{p}}=\bigotimes_{i}\frac{\bar{\mu}_{i}}{\mu_{i}}\leq e^{2p\epsilon}\,,
\]
and thus
\[
\P_{(\otimes\bar{\mu}_{i})^{\otimes m}}\bpar{\abs{E-\E E}\geq\frac{\veps}{10^{3}}\,\E E}\leq e^{2mp\epsilon}\,\P_{\otimes\mu_{i}}\bpar{\abs{E-\E E}\geq\frac{\veps}{10^{3}}\,\E E}\leq10^{-2}e^{2mp\epsilon}\,.
\]
Regarding (2), the triangle inequality results in
\begin{align*}
 & \Bnorm{\law(Z_{1},\dots,Z_{p})-\bigotimes_{i=1}^{p}\law Z_{i}}_{\tv}\leq\\
 & \norm{\law(Z_{1},\dots,Z_{p})-\law(Z_{1},\dots,Z_{p-1})\otimes\law Z_{p}}_{\tv}+\Bnorm{\law(Z_{1},\dots,Z_{p-1})\otimes\law Z_{p}-\bigotimes_{i=1}^{p}\law Z_{i}}_{\tv}\,.
\end{align*}
The first term equals $\norm{\law(Z_{p-1},Z_{p})-\law Z_{p-1}\otimes\law Z_{p}}_{\tv}$,
as $\{Z_{i}\}_{i}$ are generated by a Markov chain (see Lemma~\ref{lem:basic-TV}
for details). Applying the DPI to the second term, we obtain that
\begin{align*}
 & \Bnorm{\law(Z_{1},\dots,Z_{p})-\bigotimes_{i=1}^{p}\law Z_{i}}_{\tv}\\
 & =\norm{\law(Z_{p-1},Z_{p})-\law Z_{p-1}\otimes\law Z_{p}}_{\tv}+\Bnorm{\law(Z_{1},\dots,Z_{p-1})-\bigotimes_{i=1}^{p-1}\law Z_{i}}_{\tv}\,.
\end{align*}
As elaborated in \eqref{eq:beta-bound}, the first term is bounded
by $\epsilon$. Repeating the same argument on the second term, we
have
\[
\Bnorm{\law(Z_{1},\dots,Z_{p})-\bigotimes_{i=1}^{p}\law Z_{i}}_{\tv}\leq2p\epsilon\,.
\]
Using a similar argument based on the triangle inequality (see Lemma~\ref{lem:basic-TV}
for details), 
\[
\Bnorm{\law(\{Z_{i}^{s}\}_{i,s})-\Bigl(\bigotimes_{i=1}^{p}\law Z_{i}\Bigr)^{\otimes m}}_{\tv}=\Bnorm{\bigl(\law(Z_{1},\dots,Z_{p})\bigr)^{\otimes m}-\Bigl(\bigotimes_{i=1}^{p}\law Z_{i}\Bigr)^{\otimes m}}_{\tv}\leq2mp\epsilon\,.
\]
Therefore, it holds from the definition of $\tv$-distance that 
\begin{align*}
\P_{(\law(Z_{1},\dots,Z_{p}))^{\otimes m}}\bpar{\abs{E-\E E}\geq\frac{\veps}{10^{3}}\,\E E} & \le\P_{(\otimes_{i}\bar{\mu}_{i})^{\otimes m}}\bpar{\abs{E-\E E}\geq\frac{\veps}{10^{3}}\,\E E}+2mp\epsilon\\
 & \leq10^{-2}e^{2mp\epsilon}+2mp\epsilon\,,
\end{align*}
and this is bounded by $0.02$ when $\epsilon=(300mp)^{-1}\leq\veps^{2}(10^{15}C^{2}nl^{2}\log^{3}(C^{2}nl))^{-1}$.
Therefore, \eqref{eq:final-desired} holds with probability at least
$0.98$.
\end{proof}
We are now ready to prove the main result of this section, as claimed
in Theorem~\ref{thm:integration-intro}.
\begin{thm}
\label{thm:integration} In the setting of Problem~\ref{prob:integration},
for any integrable logconcave $f:\R^{n}\rightarrow\R$ given by a
well-rounded function oracle and any $\varepsilon>0$, with probability
at least $3/4$, $\msf{Integral}$ (Algorithm~\ref{alg:integral})
returns a $(1+\varepsilon)$-multiplicative approximation to the integral
of $f$ using $\Otilde(n^{3}/\varepsilon^{2})$ queries. For an arbitrary
logconcave $f$, the total query complexity is bounded by $\Otilde(n^{3.5}\polylog R+n^{3}/\varepsilon^{2})$. 
\end{thm}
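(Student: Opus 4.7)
The plan is to combine Lemma~\ref{lem:estimation-accuracy} (which gives correctness of the estimate assuming the per-phase variance and sample-accuracy budgets are met) with the per-phase query bounds of Lemmas~\ref{lem:int-phase0}--\ref{lem:int-phase3}, and then handle the non-well-rounded case by a rounding pre-processing step via Theorem~\ref{thm:iso-rounding}. I would first verify the well-rounded case (where $R^2 = O(n)$, equivalently $C = O(1)$, and $l = C^2 \vee \log(8e/\veps) = O(\log(1/\veps))$). With the algorithm's choices $m = \Theta(\veps^{-2}\, l\log nl\log C^2 n)$ and $\epsilon = \Theta\bpar{\veps^{2}/(C^2 n l^2 \polylog(C^2nl))}$, Lemma~\ref{lem:estimation-accuracy} already gives the $(1+\veps)$-approximation with probability $\geq 0.98$, so the only task left is to add up the query costs.

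Next, I would add up the query complexity phase by phase. Initialization (Lemma~\ref{lem:int-phase0}) costs $\O(m)$. Phase I (Lemma~\ref{lem:int-phase1}) costs $\Otilde(m n^3 l^2 \log^4(R/\epsilon))$. Phase II (Lemma~\ref{lem:int-phase2}) costs $\Otilde(m n^3 l^3 \log^4(R/\epsilon))$. Phase III (Lemma~\ref{lem:int-phase3}) costs $\Otilde(m C^2 n^3 l^2 \log^4(R/\epsilon))$ per doubling stage of $\sigma^2$, and there are $\O(\log C^2 n)$ such stages, so the total for Phase III is $\Otilde(m C^2 n^3 l^2 \log^4(R/\epsilon)\log C^2 n)$. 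Substituting $m = \Otilde(\veps^{-2})$, $l = \Otilde(1)$, $C^2 = \O(1)$, and $\log(R/\epsilon) = \Otilde(1)$ in the well-rounded regime, every phase collapses to $\Otilde(n^3/\veps^2)$ queries, which is dominated by Phase II; the union bound over the $\Otilde(nl)$ inner phases for sampler failure events (each of which we set to be polynomially small) contributes only polylog factors and preserves the overall success probability above $3/4$.

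For the general case, I would use the rounding algorithm from Theorem~\ref{thm:iso-rounding} as a pre-processing step: with probability $1 - \O(n^{-1/2})$ and $\Otilde(n^{3.5}\polylog R)$ queries, we obtain an affine map $F^X$ that makes $F_{\#}^X \pi^X$ be $1.01$-isotropic. After composing the evaluation oracle for $V$ with $(F^X)^{-1}$, the integrand $\tilde f(y) = f((F^X)^{-1}y)\,\abs{\det (F^X)^{-1}}$ is well-rounded with $R^2 = \O(n)$ and has integral equal to $\int f$ (up to the known Jacobian factor, which we multiply back at the end). Applying the well-rounded case of the theorem to $\tilde f$ uses an additional $\Otilde(n^3/\veps^2)$ queries and returns a $(1+\veps)$-approximation to $\int \tilde f$, hence to $\int f$. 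Taking a union bound over rounding success and estimation success yields overall success probability at least $3/4$, and the total query cost is $\Otilde(n^{3.5}\polylog R + n^3/\veps^2)$.

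The main obstacle, and essentially the only nontrivial check, is that summing the per-phase bounds actually gives the advertised $\Otilde(n^3/\veps^2)$ in the well-rounded case without any extra factor blowing up. The potential culprits are Phase II (which has an extra $l$ due to the joint updates of $\sigma^2$ and $\rho$) and the doubling loop in Phase III; both depend on $l$ and $C^2 n$ only through polylog factors once we set $R^2 = \O(n)$ and $l = \O(\log(1/\veps))$. Besides that, one should check that the accumulated sampler-failure probability, taken over all $\Otilde(n)$ inner phases and across all $m$ threads, remains below a small constant under the $\polylog$ dependence of $\psann$ on its failure probability; this is immediate because we can afford to set each per-call failure probability to $n^{-\O(1)}$ at only polylog overhead in queries.
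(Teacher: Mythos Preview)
Your proposal is correct and follows essentially the same approach as the paper: invoke Lemma~\ref{lem:estimation-accuracy} for correctness with the stated $m$ and $\epsilon$, sum the per-phase query bounds from Lemmas~\ref{lem:int-phase0}--\ref{lem:int-phase3} (which collapse to $\Otilde(n^{3}/\veps^{2})$ when $C=\O(1)$ and $l=\O(\log\frac{1}{\veps})$), and prepend the rounding of Theorem~\ref{thm:iso-rounding} for the general case. Your explicit handling of the Jacobian factor and of the accumulated sampler-failure probability are details the paper leaves implicit, but the argument is otherwise the same.
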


\begin{proof}
By Lemma~\ref{lem:estimation-accuracy}, we must have the sample
size $m=\Otilde(\frac{l\log n}{\veps^{2}})$ and sample accuracy $\epsilon=\O(\frac{\veps^{2}}{C^{2}nl^{2}\log^{3}(C^{2}nl)})$.
These conditions will decide query and sample complexities in each
phase. At initialization (Lemma~\ref{lem:int-phase0}), under the
choices of $m$ and $\epsilon$, we need 
\[
\text{Query: }\O(m)=\Otilde\bpar{\frac{l\log n}{\veps^{2}}}\,.
\]
In Phase I (Lemma~\ref{lem:int-phase1}), we need
\begin{align*}
\sigma^{2}<1: & \text{Query: }\Otilde(mn^{3}l^{2}\log^{4}\frac{R}{\epsilon})=\Otilde\bpar{\frac{n^{3}l^{4}}{\veps^{2}}}\,.\\
\sigma^{2}=1: & \text{Query: }\Otilde(mn^{2}l^{2}\log^{4}\frac{R}{\epsilon})=\Otilde\bpar{\frac{n^{2}l^{3}}{\veps^{2}}}\,.
\end{align*}
In Phase II (Lemma~\ref{lem:int-phase2}), we need
\[
\text{Query: }\Otilde(mn^{3}l^{3}\log^{4}\frac{R}{\epsilon})=\Otilde\bpar{\frac{n^{3}l^{4}}{\veps^{2}}}\,.
\]
In Phase III (Lemma~\ref{lem:int-phase3}), we need
\[
\text{Query: }\Otilde(mC^{2}n^{3}l^{2}\log^{4}\frac{R}{\epsilon}\times\log C^{2}n)=\Otilde\bpar{\frac{C^{2}n^{3}l^{3}}{\veps^{2}}}\,.
\]
Putting these together with $l=C^{2}\vee\log\frac{8e}{\veps}$, the
query complexity of the algorithm is 
\[
\Otilde(\frac{C^{5}n^{3}}{\veps^{2}})\,.
\]

For general logconcave $f$, we first use the rounding algorithm to
bring $f$ to well-rounded position, which takes $\Otilde(n^{3.5})$
queries and then apply the above for well-rounded $f$. 
\end{proof}

\begin{acknowledgement*}
We are grateful to Andre Wibisono for helpful discussions and to Sinho
Chewi for pointing us to the Holley-Stroock principle and the reference
\cite{wang2013analysis} used in the proof of Lemma~\ref{lem:lsi-annealing}.
This work was supported in part by NSR Award CCF-2106444 and a Simons
Investigator grant. 
\end{acknowledgement*}
\bibliographystyle{alpha}
\bibliography{main}

\appendix

\section{Functional inequalities and logconcave geometry}

\paragraph{Functional inequalities.}
\begin{lem}
[Bakry-\'Emery] \label{lem:bakry-emery} If the potential $V$
is $\alpha$-strongly convex, then a probability measure $\pi\propto\exp(-V)$
satisfies
\[
\cpi(\pi)\leq\clsi(\pi)\leq\frac{1}{\alpha}\,.
\]
\end{lem}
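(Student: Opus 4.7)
The plan is to establish the LSI bound $\clsi(\pi)\le 1/\alpha$ directly via the Bakry--\'Emery $\Gamma_2$-calculus, and then invoke the standard linearization argument to deduce $\cpi(\pi)\le\clsi(\pi)$. The second implication is routine: apply \eqref{eq:lsi} to $1+\veps f$ for a smooth $f$ with $\E_\pi f=0$, expand both sides in $\veps$, and match the $\veps^2$-terms to recover \eqref{eq:pi} with the same constant. So the substance is the LSI estimate.

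To prove the LSI bound, I would work with the overdamped Langevin semigroup $(P_t)_{t\ge 0}$ whose generator is $\mc L = \Delta - \langle \nabla V,\nabla\cdot\rangle$, with invariant measure $\pi$. The carr\'e du champ is $\Gamma(f) = |\nabla f|^2$, and Bochner's formula gives
\[
\Gamma_2(f) = \|\hess f\|_{\textup{HS}}^2 + \langle \nabla f,\hess V\,\nabla f\rangle\,,
\]
so $\alpha$-strong convexity of $V$ immediately yields the curvature--dimension condition $\Gamma_2(f)\ge \alpha\,\Gamma(f)$, i.e.\ $\textup{CD}(\alpha,\infty)$. The standard consequence of this condition is the pointwise gradient bound
\[
\Gamma(P_t f)\le e^{-2\alpha t}\,P_t\bpar{\Gamma(f)}\,,
\]
which is proved by differentiating $s\mapsto e^{-2\alpha s}P_s(\Gamma(P_{t-s}f))$ and observing that the derivative equals $2e^{-2\alpha s}P_s(\Gamma_2(P_{t-s}f)-\alpha\Gamma(P_{t-s}f))\ge 0$.

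With these tools in place, the LSI follows from semigroup interpolation. Fix $f\ge 0$ with $\E_\pi f = 1$ and set $\psi(s) := \E_\pi[(P_s f)\log(P_s f)]$, so that $\psi(0) = \ent_\pi f$ and $\psi(\infty) = 0$. A direct computation gives
\[
-\psi'(s) = \E_\pi\Bbrack{\frac{|\nabla P_s f|^2}{P_s f}}\,,
\]
so that $\ent_\pi f = \int_0^\infty \E_\pi[|\nabla P_s f|^2/P_s f]\,\D s$. The key step is the decay of the Fisher information: writing $P_s f = P_s(\sqrt{f}\cdot\sqrt{f})$ and combining Cauchy--Schwarz with the gradient commutation bound yields
\[
\frac{|\nabla P_s f|^2}{P_s f}\le e^{-2\alpha s}\,P_s\Bpar{\frac{|\nabla f|^2}{f}}\,,
\]
and integrating against $\pi$ (which is invariant under $P_s$) and then over $s\in[0,\infty)$ gives $\ent_\pi f \le (2\alpha)^{-1}\E_\pi[|\nabla f|^2/f]$. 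Substituting $f\gets g^2$ and using $|\nabla g^2|^2/g^2 = 4|\nabla g|^2$ produces \eqref{eq:lsi} with $\clsi(\pi) = 1/\alpha$.

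The main technical obstacle is the Fisher-information decay inequality $\E_\pi[|\nabla P_sf|^2/P_sf]\le e^{-2\alpha s}\E_\pi[|\nabla f|^2/f]$, which requires combining the pointwise $\Gamma$-commutation with the Cauchy--Schwarz inequality $(P_s(uv))^2\le P_s(u^2/w)\cdot P_s(v^2 w)$ applied to $u=v=\sqrt{\Gamma(f)}$ and an appropriate $w$ involving $f$; equivalently, one can use the reverse Jensen-type inequality for $P_s$ coming from the CD condition. Once this decay is established the rest is mechanical.
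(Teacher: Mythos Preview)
The paper states this lemma in the appendix as a standard reference result without proof, so there is no proof to compare against. Your outline via the Bakry--\'Emery $\Gamma_2$-calculus and semigroup interpolation is the canonical argument and is correct; the linearization of LSI to PI is likewise standard. One small sharpening: the pointwise Fisher-information decay $|\nabla P_s f|^2/P_s f \le e^{-2\alpha s} P_s(|\nabla f|^2/f)$ follows most cleanly from the \emph{strong} gradient estimate $|\nabla P_s f|\le e^{-\alpha s} P_s|\nabla f|$ (which is equivalent to $\textup{CD}(\alpha,\infty)$) together with the Cauchy--Schwarz inequality $(P_s(\sqrt{f}\cdot|\nabla f|/\sqrt{f}))^2 \le P_s f\cdot P_s(|\nabla f|^2/f)$, so the step you flag as the main obstacle is in fact routine once you use this form of the commutation.
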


\begin{lem}
[Bounded perturbation, {\cite{holley1987logarithmic}}] \label{lem:lsi-bdd-perturbation}
Suppose that $\pi\in\mc P(\Rn)$ satisfies \eqref{eq:lsi} with constants
$\clsi(\pi)$. If $\mu\in\mc P(\Rn)$ satisfies $c\leq\frac{\D\mu}{\D\pi}\leq C$
for $c,C\in\R_{>0}$, then 
\[
\clsi(\mu)\leq\frac{C}{c}\,\clsi(\pi)\,.
\]
\end{lem}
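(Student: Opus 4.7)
The plan is to use the classical Holley--Stroock perturbation argument: exploit the two-sided bound $c \le d\mu/d\pi \le C$ by sandwiching the entropy from above (via $C$) and the Dirichlet energy from below (via $c$), so that the \eqref{eq:lsi} for $\pi$ transfers to $\mu$ with constant $\clsi(\mu) \le (C/c)\,\clsi(\pi)$.

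The key tool I would invoke is the variational representation of relative entropy,
\[
\ent_\nu(g) = \inf_{a > 0}\, \E_\nu\bigl[g\log g - g\log a - g + a\bigr],
\]
whose integrand $\Psi_a(g) := g\log g - g\log a - g + a$ is \emph{pointwise nonnegative} for all $g, a > 0$ (by convexity of $x\mapsto x\log x$, since $\Psi_a$ is minimized at $g=a$ with value $0$). For a smooth test function $f$, I would take $a = \E_\pi[f^2]$ in the variational formula applied to $\mu$, which gives only an upper bound on $\ent_\mu(f^2)$, not equality, but that is exactly what we need. Using $d\mu/d\pi \le C$ together with $\Psi_a(f^2) \ge 0$ yields
\[
\ent_\mu(f^2) \le \E_\mu[\Psi_a(f^2)] = \int \Psi_a(f^2)\,\frac{d\mu}{d\pi}\,d\pi \le C \int \Psi_a(f^2)\,d\pi = C\,\ent_\pi(f^2).
\]

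Then I would apply \eqref{eq:lsi} for $\pi$ to bound the right-hand side by $2C\,\clsi(\pi)\,\E_\pi[\norm{\nabla f}^2]$, and finally use the reverse bound $d\pi/d\mu \le 1/c$ (which holds $\mu$-a.e., noting $c>0$ forces $\mu$ and $\pi$ to have the same support) to pass the Dirichlet energy back to $\mu$:
\[
\E_\pi[\norm{\nabla f}^2] \le \frac{1}{c}\,\E_\mu[\norm{\nabla f}^2].
\]
Chaining these gives $\ent_\mu(f^2) \le 2(C/c)\,\clsi(\pi)\,\E_\mu[\norm{\nabla f}^2]$, which is exactly the LSI for $\mu$ with constant $(C/c)\,\clsi(\pi)$.

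The only subtle point to get right is the pointwise nonnegativity of $\Psi_a(f^2)$; this is what legitimizes replacing $d\mu$ by $C\,d\pi$ in the entropy step without reversing the inequality, and it is the whole reason one uses the variational formula rather than the identity $\ent_\nu(g)=\E_\nu[g\log(g/\E_\nu g)]$. Beyond that there is no real obstacle: the argument is a two-line chain of elementary bounds once the variational representation is in hand.
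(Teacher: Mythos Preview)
Your argument is correct and is exactly the classical Holley--Stroock perturbation proof. The paper does not actually prove this lemma---it is stated in the appendix as a cited result from \cite{holley1987logarithmic}---so there is nothing to compare against beyond noting that your write-up is the standard one.
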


\begin{lem}
[Data-processing inequality] \label{lem:DPI} For $\mu,\nu\in\mc P(\Rn)$,
Markov kernel $P$, $f$-divergence $D_{f}$, and $q\in[1,\infty]$,
it holds that 
\[
D_{f}(\mu P\mmid\nu P)\leq D_{f}(\mu\mmid\nu)\,,\quad\text{and}\quad\eu R_{q}(\mu P\mmid\nu P)\leq\eu R_{q}(\mu\mmid\nu)\,.
\]
\end{lem}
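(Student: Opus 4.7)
The plan is to establish both inequalities via a single Jensen-type argument, handling the $f$-divergence case first and then deducing the R\'enyi case by specialization or a limiting argument. Throughout, I would assume $\mu \ll \nu$ (otherwise both sides are $+\infty$ and there is nothing to prove) and let $\lambda$ be a common dominating measure (e.g.\ $\lambda = \nu$), with densities $p = \D\mu/\D\lambda$ and $q = \D\nu/\D\lambda$.

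First, for the $f$-divergence, I would exploit the joint convexity of the perspective function $\Phi(a,b) := b f(a/b)$ on $\R_{>0}^{2}$, which is the standard lemma underlying the DPI. The pushforward densities are given by $\frac{\D(\mu P)}{\D(\lambda P)}(y) = \E_{\lambda P}[p(X) \mid Y = y]$ and similarly for $\nu P$, where $(X,Y)$ has joint law $\lambda(\D x) P(x,\D y)$. Applying Jensen's inequality to $\Phi$ along the conditional distribution $\lambda(\D x \mid Y=y)$ gives pointwise in $y$
\[
\Phi\bigl(\E[p(X) \mid Y=y],\, \E[q(X) \mid Y=y]\bigr) \leq \E\bigl[\Phi(p(X),q(X)) \mid Y=y\bigr],
\]
and integrating against $\lambda P$ yields $D_f(\mu P \mmid \nu P) \leq D_f(\mu \mmid \nu)$.

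Second, for the R\'enyi divergence with $q \in (1,\infty)$, I would use $\eu R_q(\mu \mmid \nu) = (q-1)^{-1} \log\bigl(\chi^q(\mu \mmid \nu) + 1\bigr)$, noting that $\chi^q$ is an $f$-divergence with $f(x) = x^q - 1$; monotonicity of $\log$ and the previous step deliver the bound. The case $q = 1$ is KL, which corresponds to $f(x) = x \log x$ and is already handled. For $q = \infty$, I would argue directly: writing $\frac{\D(\mu P)}{\D(\nu P)}(y)$ as the ratio of two integrals against $P(\cdot, \D y)$, one sees it is a weighted average of $p/q$ and is therefore bounded above by $\esssup_{\nu}(p/q) = \exp \eu R_\infty(\mu \mmid \nu)$; taking $\esssup$ over $y$ and the log finishes the claim. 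Alternatively, one can invoke the continuity/monotonicity of $\eu R_q$ in $q$ (cf.\ \cite{van2014renyi}) to pass to the limit from the finite-$q$ case.

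The only mild obstacle is the measure-theoretic bookkeeping at $q = \infty$ and the joint-convexity statement for $\Phi$, both of which are classical; I would simply cite a standard reference (e.g.\ \cite{van2014renyi}) for completeness rather than spell out the verification.
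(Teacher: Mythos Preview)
Your argument is correct and follows the classical route: joint convexity of the perspective function $\Phi(a,b)=bf(a/b)$ plus conditional Jensen for the $f$-divergence case, then specialization to $\chi^q$ for finite $q$, the KL case at $q=1$, and a direct essential-supremum bound (or a limit in $q$) for $q=\infty$. There is nothing to fault here.

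By way of comparison, the paper does not actually supply a proof of this lemma at all: it is stated in the appendix as a standard fact without argument, with the reference \cite{van2014renyi} mentioned elsewhere for basic R\'enyi properties. So there is no ``paper's approach'' to contrast against; you have simply filled in the well-known proof that the paper takes for granted.
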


\begin{lem}
[Basic facts on TV] \label{lem:basic-TV} When $\{Z_{i}\}_{i\in[p]}$
are generated by a Markov chain,
\[
\norm{\law(Z_{1},\dots,Z_{p})-\law(Z_{1},\dots,Z_{p-1})\otimes\law Z_{p}}_{\tv}=\norm{\law(Z_{p-1},Z_{p})-\law Z_{p-1}\otimes\law Z_{p}}_{\tv}\,.
\]
Also, it holds that $\norm{\mu^{\otimes m}-\nu^{\otimes m}}_{\tv}\leq m\,\norm{\mu-\nu}_{\tv}$.
\end{lem}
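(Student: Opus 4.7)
The plan is to handle the two claims separately, with both reducing to elementary manipulations of the total-variation integral.

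For the first identity, I would start by invoking the Markov property to factor the joint density: if $P$ denotes the transition kernel, then
\[
\law(Z_1,\dots,Z_p)(z_1,\dots,z_p) = \law(Z_1,\dots,Z_{p-1})(z_1,\dots,z_{p-1})\,P(z_{p-1},z_p),
\]
while the product measure decomposes as $\law(Z_1,\dots,Z_{p-1})(z_1,\dots,z_{p-1})\,\law Z_p(z_p)$. Subtracting and pulling out the common factor $\law(Z_1,\dots,Z_{p-1})$ from the density difference, the TV-norm becomes $\frac{1}{2}\int \law(Z_1,\dots,Z_{p-1})(z_1,\dots,z_{p-1})\,\abs{P(z_{p-1},z_p)-\law Z_p(z_p)}\,\D z_1\cdots\D z_p$. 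Applying Fubini to integrate out $z_1,\dots,z_{p-2}$ collapses the first factor to the marginal $\law Z_{p-1}(z_{p-1})$, and the remaining double integral is exactly $\norm{\law(Z_{p-1},Z_p)-\law Z_{p-1}\otimes\law Z_p}_{\tv}$.

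For the tensorization bound, I plan to run a standard telescoping argument. Introduce the hybrids $\sigma_k := \mu^{\otimes k}\otimes\nu^{\otimes(m-k)}$ for $k=0,\dots,m$ (so $\sigma_0=\nu^{\otimes m}$ and $\sigma_m=\mu^{\otimes m}$), apply the triangle inequality $\norm{\mu^{\otimes m}-\nu^{\otimes m}}_{\tv}\leq\sum_{k=0}^{m-1}\norm{\sigma_{k+1}-\sigma_k}_{\tv}$, and observe that each summand equals $\norm{\mu^{\otimes k}\otimes(\mu-\nu)\otimes\nu^{\otimes(m-k-1)}}_{\tv}$. Using the factorization of the total variation of a product signed measure -- specifically, that for probability measures $\alpha,\beta$ and any signed measure $\gamma$ one has $\abs{\alpha\otimes\gamma\otimes\beta}(X^3)=\abs{\gamma}(X)$ -- each summand equals $\norm{\mu-\nu}_{\tv}$, and the telescoping sum gives the claimed bound $m\,\norm{\mu-\nu}_{\tv}$.

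Neither claim presents a substantive obstacle; both are routine measure-theoretic identities. The only bookkeeping issue is keeping the factor of $\tfrac{1}{2}$ in the definition $\norm{\cdot}_{\tv}=\frac{1}{2}\int\abs{\cdot}$ consistent on both sides of the first identity, where it cancels directly, and in the telescoping estimate for the second, where it rides along through each hybrid step.
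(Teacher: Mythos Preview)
Your proposal is correct and follows essentially the same approach as the paper. For the first identity, the paper also factors out the common density $f_{1,\dots,p-1}$, uses the Markov property $f_{p\mid 1,\dots,p-1}=f_{p\mid p-1}$, and integrates out the first $p-2$ coordinates; for the tensorization bound, the paper runs the same telescoping/hybrid argument (phrased inductively rather than with all $m$ hybrids at once) and invokes the DPI where you invoke the product-signed-measure identity.
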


\begin{proof}
For the first claim, let us denote by $f_{1,\dots,p}$ the probability
density of $(Z_{1},\dots,Z_{p})$. Then, 
\begin{align*}
 & \norm{\law(Z_{1},\dots,Z_{p})-\law(Z_{1},\dots,Z_{p-1})\otimes\law Z_{p}}_{\tv}=\frac{1}{2}\int\abs{f_{1,\dots,p}(x,y)-f_{1,\dots,p-1}(x)f_{p}(y)}\,\D x\D y\\
 & =\half\int\abs{f_{p|1,\dots,p-1}(y|x)-f_{p}(y)}f_{1,\dots,p-1}(x)\,\D x\D y=\half\int\abs{f_{p|p-1}(y|x')-f_{p}(y)}f_{p-1}(x')\,\D x'\D y\\
 & =\half\int\abs{f_{p-1,p}(x',y)-f_{p-1}(x')f_{p}(y)}\,\D x'\D y=\norm{\law(Z_{p-1},Z_{p})-\law Z_{p-1}\otimes\law Z_{p}}_{\tv}\,.
\end{align*}

For the second claim, using the DPI for both terms in the RHS (or
just the definition of $\tv$),
\begin{align*}
\norm{\mu^{\otimes m}-\nu^{\otimes m}}_{\tv} & \leq\norm{\mu^{\otimes m-1}\otimes\mu-\nu^{\otimes m-1}\otimes\mu}_{\tv}+\norm{\nu^{\otimes m-1}\otimes\mu-\nu^{\otimes m-1}\otimes\nu}_{\tv}\\
 & \leq\norm{\mu^{\otimes m-1}-\nu^{\otimes m-1}}_{\tv}+\norm{\mu-\nu}_{\tv}\,.
\end{align*}
Applying the same argument to the first term, we finish the proof.
\end{proof}

\paragraph{Geometry of logconcave distributions.}

We recall the most basic property of logconcave functions and distributions:
\begin{lem}
[Dinghas, Leindler, Pr\'ekopa] The product, minimum, and convolution
of logconcave functions is logconcave. Any marginals of logconcave
distributions are logconcave.
\end{lem}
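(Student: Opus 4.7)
The plan is to dispatch each of the four claims separately, with the first two being immediate from the algebra of convex functions and the last two both following from the Pr\'ekopa--Leindler inequality, which I would simply invoke as a classical tool.

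For the product, write $f = e^{-V}$ and $g = e^{-W}$ with $V, W$ convex; then $fg = e^{-(V+W)}$ and sums of convex functions are convex. For the pointwise minimum, note that $\min(e^{-V}, e^{-W}) = e^{-\max(V,W)}$ and the pointwise maximum of convex functions is convex. Both cases are one line and need no integration.

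For convolution and marginals, the key step is to invoke the Pr\'ekopa--Leindler inequality: if $f, g, h : \R^m \to \R_{\geq 0}$ are measurable and $\lambda \in (0,1)$ satisfy $h(\lambda u + (1-\lambda) v) \geq f(u)^{\lambda}\, g(v)^{1-\lambda}$ for all $u, v \in \R^m$, then $\int h \geq (\int f)^{\lambda}\, (\int g)^{1-\lambda}$. To handle marginals, let $F : \R^n \times \R^m \to \R_{\geq 0}$ be logconcave and set $\varphi(x) := \int F(x, y) \, \D y$. Given $x_0, x_1 \in \R^n$ and $\lambda \in (0,1)$, apply Pr\'ekopa--Leindler on $\R^m$ to the slices $f(y) := F(x_0, y)$, $g(y) := F(x_1, y)$, and $h(y) := F(\lambda x_0 + (1-\lambda) x_1, y)$; the hypothesis of the inequality is exactly the joint logconcavity of $F$ on $\R^{n+m}$, and the conclusion is $\varphi(\lambda x_0 + (1-\lambda) x_1) \geq \varphi(x_0)^{\lambda}\, \varphi(x_1)^{1-\lambda}$, which is the desired logconcavity of $\varphi$. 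Since any lower-dimensional marginal can be obtained by iterating this one-variable marginalization, the marginal claim follows by induction on the number of integrated coordinates.

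For convolution, I would reduce to the marginal case. Given logconcave $f, g : \R^n \to \R_{\geq 0}$, define $F(x, y) := f(y)\, g(x - y)$ on $\R^n \times \R^n$. Writing $f = e^{-V}$, $g = e^{-W}$, we have $F(x, y) = \exp\bpar{-V(y) - W(x - y)}$, and $(x,y) \mapsto V(y) + W(x-y)$ is convex as the sum of convex functions composed with affine maps. Hence $F$ is logconcave, and then $(f * g)(x) = \int F(x, y) \, \D y$ is a marginal of a logconcave function, so logconcave by the previous paragraph.

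The only real obstacle is Pr\'ekopa--Leindler itself, whose proof is nontrivial (one-dimensional case by a rearrangement/Brunn--Minkowski argument, then induction on dimension), but since the paper is clearly treating this as a background lemma I would just cite it rather than reprove it; everything else reduces to one-line manipulations with convex functions and careful bookkeeping of which variables are being integrated out.
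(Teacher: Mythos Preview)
Your argument is correct and is exactly the standard derivation of these facts. The paper does not prove this lemma at all; it is stated as a classical background result with attribution to Dinghas, Leindler, and Pr\'ekopa, so there is nothing in the paper's own treatment to compare against beyond noting that your reduction of convolution and marginals to Pr\'ekopa--Leindler is the canonical route.
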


The following explains why we can focus on the ground set $\msf L_{\pi,g}$
of a logconcave distribution $\pi$.
\begin{lem}
[{\cite[Lemma 5.16]{lovasz2007geometry}}] \label{lem:LC-tail}
Let $\pi$ be a logconcave probability density on $\Rn$. For $\beta\geq2$,
\[
\P_{\pi}\bpar{\pi(X)\leq e^{-\beta(n-1)}\max\pi}\leq(e^{1-\beta}\beta)^{n-1}\,.
\]
\end{lem}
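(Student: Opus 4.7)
The plan is to reduce the tail bound on $T := -\log(\pi(X)/M)$ to an exponential Markov estimate, where $M := \max \pi$ and (by translation) $\pi(0) = M$. Writing $\phi := -\log(\pi/M)$, $\phi$ is convex with $\phi(0)=0$, and the claim is equivalent to $\P(\phi(X)\geq \beta(n-1))\leq (\beta e^{1-\beta})^{n-1}$. The key tool is a dimension-dependent uniform bound on the moment generating function of $\phi(X)$, obtained from the Brunn--Minkowski-type monotonicity of the sublevel-set volumes.

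For the MGF bound, convexity of $\phi$ together with $\phi(0)=0$ forces $(s/t)K_t \subseteq K_s$ for the sublevel sets $K_t := \{\phi \leq t\}$ and $0 < s \leq t$: if $y\in K_t$ then $\phi((s/t)y) \leq (s/t)\phi(y) \leq s$. Hence the volumes $m(t) := \vol K_t$ satisfy that $m(t)/t^n$ is non-increasing on $(0,\infty)$ (equivalently, $m^{1/n}$ is concave on $[0,\infty)$ with $m(0)=0$). By the layer-cake identity and the substitution $u=at$,
\[
\int_{\R^n} e^{-a\phi(x)}\,\D x \;=\; a\int_0^\infty e^{-at}m(t)\,\D t \;=\; a^{-n}\int_0^\infty u^n e^{-u}\,\frac{m(u/a)}{(u/a)^n}\,\D u.
\]
For $a\in (0,1]$, combining $m(u/a)/(u/a)^n \leq m(u)/u^n$ with the normalization $M^{-1}=\int e^{-\phi}\,\D x = \int_0^\infty u^n e^{-u}(m(u)/u^n)\,\D u$ yields the uniform estimate $\int_{\R^n} e^{-a\phi(x)}\,\D x \leq 1/(Ma^n)$ for all $a\in(0,1]$.

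For any $c\in(0,1)$ the exponential Markov inequality then gives
\[
\P(\phi(X)\geq \tau) \;\leq\; e^{-c\tau}\,\E e^{c\phi(X)} \;=\; M\,e^{-c\tau}\int e^{-(1-c)\phi}\,\D x \;\leq\; \frac{e^{-c\tau}}{(1-c)^n}.
\]
At $\tau=\beta(n-1)$, choosing $1-c = 1/\beta$ (valid since $\beta\geq 2$) simplifies the right side to $\beta\cdot(\beta e^{1-\beta})^{n-1}$, which is already the right shape but carries an extra multiplicative $\beta$.

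The main obstacle is to shave this factor of $\beta$ and recover exponent $n-1$ (rather than $n$) in the MGF bound, matching $(\beta e^{1-\beta})^{n-1}$ exactly. The cleanest route is a needle/localization reduction in the style of Kannan--Lov\'asz--Simonovits: the $n$-dimensional tail bound reduces to a one-dimensional estimate on a weighted convex segment whose density is proportional to $\gamma^{n-1}e^{-\psi(\gamma)}$ for a convex 1D $\psi$ with $\psi(0)=0$, so that the $\gamma^{n-1}$ Jacobian weight makes the effective dimension $n-1$. A direct one-dimensional Markov/MGF computation on such a needle then produces precisely $(\beta e^{1-\beta})^{n-1}$, which is the argument behind the cited \cite[Lemma 5.16]{lovasz2007geometry}.
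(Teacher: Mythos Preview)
The paper does not give its own proof of this lemma; it is quoted from \cite[Lemma~5.16]{lovasz2007geometry} without argument, so there is nothing in-paper to compare against and the question is whether your sketch stands on its own.

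Your MGF route is correct up to the point where you reach $\beta\cdot(\beta e^{1-\beta})^{n-1}$: the inclusion $(s/t)K_t\subseteq K_s$, the layer-cake rewriting, the bound $\int e^{-a\phi}\le (Ma^n)^{-1}$ for $a\in(0,1]$, and the Chernoff step are all valid. But the proof is not finished: you leave the removal of the surplus factor $\beta$ to a localization argument that you do not carry out, so as a proof of the stated inequality this is a genuine gap. Moreover, the localization reduction here is subtler than your last paragraph suggests, since the event $\{\pi(X)\le e^{-\beta(n-1)}M\}$ depends on the value of the ambient density $\pi$ itself, which does not restrict cleanly to a needle (the needle carries its own Jacobian-weighted density, not $\pi$).

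If you want to close the gap without invoking localization, you can sharpen your own approach instead of discarding it. With $g(t):=m(t)/t^n$ non-increasing and the identity $\int e^{-\phi}=\int_0^\infty e^{-t}m(t)\,dt$, one computes
\[
\P\bigl(\phi(X)>T\bigr)\;=\;\frac{\int_T^\infty e^{-t}m(t)\,dt-m(T)e^{-T}}{\int_0^\infty e^{-t}m(t)\,dt}\,.
\]
This is a linear-fractional functional of $g$ over the cone of non-increasing functions, so its maximum is attained on an extreme ray; a short derivative check shows the maximizer is $g\equiv\text{const}$ (the conical potential $\phi\propto\|x\|$), where the ratio equals the Poisson tail $e^{-T}\sum_{k=0}^{n-1}T^k/k!$. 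For $T=\beta(n-1)$ with $\beta\ge 2$, the terms grow by at least a factor $\beta$, so the sum is at most $\tfrac{\beta}{\beta-1}\,T^{n-1}/(n-1)!\le (e\beta)^{n-1}$ by Stirling, and multiplying by $e^{-T}$ gives exactly $(\beta e^{1-\beta})^{n-1}$.
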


We recall two useful tail bounds of logconcave distributions.
\begin{lem}
[Exponential decay, {\cite[Lemma 5.17]{lovasz2007geometry}}] \label{lem:LC-exponential-decay}
For logconcave $\pi\in\mc P(\Rn)$ with $\E_{\pi}[\norm X^{2}]=R^{2}$,
it holds that for any $t\ge1$,
\[
\P_{\pi}(\abs X\geq Rt)\leq e^{-t+1}\,.
\]
\end{lem}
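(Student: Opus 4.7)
The plan is to derive the bound from Borell's classical tail inequality for logconcave measures in $\R^n$, which encapsulates exactly the phenomenon at play here: a logconcave measure that puts mass more than half on some convex body decays exponentially outside its dilates.

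First, I would normalize by $R$: writing $Y := \|X\|/R$, the hypothesis becomes $\E_\pi Y^2 = 1$ and the target inequality is $\P(Y \geq t) \leq e^{-t+1}$ for $t \geq 1$. Next, apply Markov's inequality to $Y^2$ to anchor the distribution: $\P(Y \geq 2) \leq 1/4$, equivalently $\pi(2RB_2^n) \geq 3/4$. The convex body $K = 2RB_2^n$ is therefore a ``majority'' set, which is exactly what Borell's inequality needs.

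Borell's lemma (for logconcave $\pi$ on $\R^n$, any convex $K$ with $\pi(K) = \theta > 1/2$, and any $s \geq 1$) yields
\[
\pi(\R^n \setminus sK) \leq \theta\,\Bpar{\frac{1-\theta}{\theta}}^{(s+1)/2}.
\]
Plugging in $\theta = 3/4$ and $K = 2RB_2^n$ gives $\P(Y \geq 2s) \leq \tfrac{3}{4}\cdot 3^{-(s+1)/2}$ for $s \geq 1$. Rewriting in terms of $t = 2s$, this is $\P(Y \geq t) \leq \tfrac{3}{4}\cdot 3^{-(t/2+1)/2}$ for $t \geq 2$, and one checks by elementary calculus that $\tfrac{3}{4}\cdot 3^{-(t/2+1)/2} \leq e^{-t+1}$ for all $t \geq 2$ (the left-hand side decays like $e^{-(\log 3)t/4}\approx e^{-0.27 t}$; this is easily dominated by $e^{-t+1}$ after one checks a fixed-point constant — in fact one may need to apply Borell's lemma with a different scaling of $K$ to match the decay rate $e^{-t}$, which is the real calibration).

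The main technical obstacle is precisely this constant-matching step: Borell's lemma gives some exponential rate depending on the initial mass $\theta$ of the anchor set, and to recover the sharp $e^{-t+1}$ rate one must either (i) choose the anchor radius more carefully — e.g.\ pick the smallest $c$ such that $\pi(cRB_2^n) \geq \theta_\star$ for an optimized $\theta_\star$ close to $1$ — and then optimize in $\theta_\star$; or (ii) iterate Borell's bound, using each application to upgrade the mass $\theta$ inside the next dilate. For the regime $t \in [1,2]$ not covered by the first dilation step, the bound $\P(Y \geq t) \leq e^{-t+1}$ holds trivially at $t=1$ and follows by monotonicity together with $\P(Y \geq 2) \leq 1/4 \leq e^{-1}$ at $t=2$; to fill the interval one invokes the logconcavity of the survival function $t \mapsto \pi(\{X : \|X\| \geq tR\})^{1/\text{const}}$ which comes from Brunn--Minkowski applied to the convex sets $\{X : \|X\| \geq tR\}^c = tR B_2^n$, combined with the two anchor values to interpolate an exponential upper envelope.
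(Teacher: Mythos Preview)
Your Borell-based argument cannot deliver the rate $e^{-t+1}$; at best it gives $e^{-ct}$ with $c<1$. With anchor $K=2RB_2^n$ and $\theta\geq 3/4$, Borell yields decay $e^{-(\log 3)t/4}\approx e^{-0.27t}$, and the inequality you propose to ``check by elementary calculus'' is false for large $t$ (as your own parenthetical half-concedes). Neither suggested fix works. Optimizing the anchor radius $c$ with the Markov bound $1-\theta\leq 1/c^2$ gives rate at most $\tfrac{1}{2c}\log(c^2-1)$, which is $\lesssim 0.35$ uniformly in $c>1$. Iterating Borell \emph{degrades} the rate: anchoring at $MR$ with $1-\theta_M\lesssim e^{-\alpha M}$ produces new rate $\tfrac{1}{2M}\log\tfrac{1}{1-\theta_M}\approx\alpha/2$, since the dilation factor $s=t/M$ shrinks as fast as $\log\tfrac{1}{1-\theta_M}$ grows. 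More structurally, extracting rate $1$ from Borell at anchor $cR$ would require $\P(\|X\|>cR)\leq e^{-2c}$ as input, which is stronger than the lemma you are proving --- so this route is intrinsically circular. The interpolation claim for $t\in[1,2]$ is also off: Pr\'ekopa (via the convex cone $\{(x,r):\|x\|\leq r\}$) gives logconcavity of the \emph{CDF} $t\mapsto\pi(tRB_2^n)$, not of the survival function, and logconcavity would in any case furnish lower bounds between anchors, not the upper bounds you need. (That range is not the hard part anyway: Markov alone, $g(t)\leq 1/t^2\leq e^{-t+1}$, already covers roughly $t\in[1,3.5]$.)

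The paper itself does not prove this lemma --- it is quoted from Lov\'asz--Vempala, whose argument exploits one-dimensional logconcave structure directly rather than Borell-type dilation. Your approach does recover the qualitatively correct exponential tail $Ce^{-ct}$ for some universal $c\in(0,1)$, and that weaker statement would suffice for most uses in the present paper (the constant $13$ appearing in the truncation radii would simply become a larger absolute constant); but it does not establish the lemma as stated.
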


\begin{lem}
[Paouris] \label{lem:version-Paouris} There exists a universal constant
$c>0$ such that for any $s$-isotropic centered logconcave $\pi\in\mc P(\Rn)$
and $t\geq1$, 
\[
\P_{\pi}(\norm X\geq t\sqrt{sn})\leq\exp(-ct\sqrt{n})\,.
\]
\end{lem}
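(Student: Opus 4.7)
The plan is to reduce to the classical Paouris inequality via a single linear change of variables. Recall that by the convention used in Problem~\ref{prob:isotropic-rounding}, a distribution $\pi$ being \emph{$s$-isotropic} means $s^{-1} I_n \preceq \Sigma \preceq s I_n$ for $\Sigma := \cov \pi$. Set $Y := \Sigma^{-1/2} X$ and let $\nu := (\Sigma^{-1/2})_{\#} \pi$ be the corresponding pushforward. Log-concavity is preserved by invertible affine maps, the mean stays at $0$, and by construction $\cov \nu = I_n$, so $\nu$ is an isotropic centered log-concave probability measure on $\R^n$.

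The key step will be to invoke the classical Paouris theorem in its standard isotropic form: there exists a universal constant $c>0$ such that for every isotropic, centered, log-concave probability measure $\nu$ on $\R^n$ and every $t\geq 1$,
\[
\P_{\nu}\bpar{\norm Y \geq t\sqrt{n}} \leq \exp\bpar{-ct\sqrt{n}}\,.
\]
This is the only nontrivial input; I would cite it as a black box rather than reprove it, since its proof (based on $L^q$-centroid bodies and the $\psi_2$-behavior of linear functionals of isotropic log-concave measures) is substantial and independent of the present paper's machinery.

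To transfer this bound back to $\pi$, observe that
\[
\norm X^2 = X^{\T} X = Y^{\T} \Sigma Y \leq \snorm{\Sigma}\,\norm Y^2 \leq s\,\norm Y^2\,,
\]
so $\norm X \leq \sqrt{s}\,\norm Y$ pointwise. Consequently,
\[
\P_{\pi}\bpar{\norm X \geq t\sqrt{sn}} \leq \P_{\nu}\bpar{\sqrt{s}\,\norm Y \geq t\sqrt{sn}} = \P_{\nu}\bpar{\norm Y \geq t\sqrt{n}} \leq \exp\bpar{-ct\sqrt{n}}\,,
\]
with the same constant $c$. This gives exactly the claimed inequality, and shows that the only real obstacle is the Paouris theorem itself; once that is quoted, the $s$-isotropic version follows from one line of operator-norm bookkeeping.
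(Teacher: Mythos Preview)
Your proof is correct and follows essentially the same route as the paper's own argument: both whiten via $Y=\Sigma^{-1/2}X$, apply the classical Paouris theorem to the resulting isotropic law, and then use $\Sigma\preceq sI_n$ (equivalently $s^{-1}I_n\preceq\Sigma^{-1}$) to pass from the event $\{\norm Y\geq t\sqrt n\}$ to $\{\norm X\geq t\sqrt{sn}\}$. The only cosmetic difference is that the paper phrases the inclusion via the Mahalanobis norm $\norm X_{\Sigma^{-1}}$, whereas you write it as $\norm X\leq\sqrt s\,\norm Y$.
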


\begin{proof}
For $\Sigma$ the covariance matrix of $\pi$, we note that $\nu:=(\Sigma^{-1/2})_{\#}\pi$
is isotropic logconcave. Then, for $E=\{X\in\Rn:\norm X_{\Sigma^{-1}}\geq t\sqrt{n}\}$
and some universal constant $c>0$,
\[
\pi(E)=\int\ind_{E}(x)\,\D\pi(x)=\int\ind_{E}(\Sigma^{1/2}y)\,\D\nu(y)=\nu\bpar{\Rn\backslash B_{t\sqrt{n}}(0)}\leq\exp(-ct\sqrt{n})\,,
\]
where the last line follows from Paouris' theorem \cite{paouris2006concentration}.
Due to $s^{-1}I_{n}\preceq\Sigma^{-1}$, we have $\P_{\pi}(\norm X\geq t\sqrt{sn})\leq\exp(-ct\sqrt{n})$,
from which the claim easily follows.
\end{proof}
The reverse direction of the H\"older inequality holds for logconcave
distributions.
\begin{lem}
[Reverse H\"{o}lder, {\cite[Theorem 5.22]{lovasz2007geometry}}]
\label{lem:reverse-Holder} For logconcave $\pi\in\mc P(\Rn)$, it
holds that $(\E_{\pi}[\norm X^{k}])^{1/k}\leq2k\,\E_{\pi}\norm X$
for $k\geq1$.
\end{lem}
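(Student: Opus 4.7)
The plan is to reduce the $k$-th moment bound to a sharp one-dimensional exponential tail estimate for $Y := \norm X$ and then integrate the tail. Write $\mu := \E_\pi Y$; the target inequality is $(\E_\pi Y^k)^{1/k} \le 2k\mu$.

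First I would observe that by the convexity of the Euclidean unit ball $B$ and Pr\'ekopa--Leindler applied to the logconcave density of $\pi$, the CDF $F(t) := \pi(tB) = \P(Y\le t)$ of $Y$ is a logconcave function of $t\in(0,\infty)$: for $\lambda \in [0,1]$,
\[
F\bpar{\lambda s + (1-\lambda)t} = \pi\bpar{\lambda sB + (1-\lambda)tB} \ge \pi(sB)^\lambda\,\pi(tB)^{1-\lambda} = F(s)^\lambda F(t)^{1-\lambda}.
\]
Combining this with Markov's inequality $\P(Y > \alpha\mu) \le 1/\alpha$ for any $\alpha > 1$ and applying Borell's lemma to the symmetric convex set $K := \alpha\mu\cdot B$ (which satisfies $\pi(K) \ge 1-1/\alpha =: \theta > 1/2$), I obtain the exponential tail bound
\[
\P(Y > \alpha\mu\,t) = \pi\bpar{(tK)^c} \le \theta\,\Bigl(\tfrac{1-\theta}{\theta}\Bigr)^{(t+1)/2}, \qquad t \ge 1.
\]

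With this tail estimate in hand, I would use the layer-cake identity and split the integral at $s=\alpha\mu$:
\[
\E Y^k = k\int_0^\infty s^{k-1}\,\P(Y>s)\,\D s \le (\alpha\mu)^k + k(\alpha\mu)^k\int_1^\infty t^{k-1}\,\theta\,\Bigl(\tfrac{1-\theta}{\theta}\Bigr)^{(t+1)/2}\D t.
\]
Since $(1-\theta)/\theta<1$, the change of variables $t=\alpha\mu\,u$ reduces the tail integral to $\int_1^\infty t^{k-1}e^{-ct}\,\D t \le c^{-k}\Gamma(k)$ for an explicit $c>0$ depending only on $\alpha$, yielding $\E Y^k \le (\alpha\mu)^k\bpar{1 + c'\cdot k!\,c^{-k}}$. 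Taking $k$-th roots and using $(k!)^{1/k} \le k$ produces a bound of the form $(\E Y^k)^{1/k} \le C\cdot k\mu$.

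The main obstacle will be pinning the constant $C$ down to exactly $2$: the rough split-and-integrate argument typically yields only an absolute constant $C$ that is noticeably larger than $2$. Recovering the stated factor requires either a careful optimization of $\alpha$ in Borell's lemma (to balance the pre-factor $(\alpha\mu)^k$ against the sharpness of the exponential decay rate), or, more cleanly, replacing the crude integration step with a Berwald-type comparison that directly exploits logconcavity of $F$ to dominate $\E Y^k$ by the moments of a suitably chosen reference distribution (e.g., exponential) with the same mean; such a comparison yields the factor $2k$ essentially for free, since $(\E Z^k)^{1/k} = (k!)^{1/k}\E Z \le k\E Z$ for an exponential $Z$, and the doubling reflects the slack when passing from the logconcave CDF of $Y$ to a logconcave density on $[0,\infty)$.
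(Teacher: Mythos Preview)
The paper does not prove this lemma; it is stated in the appendix purely as a citation of \cite[Theorem~5.22]{lovasz2007geometry} and used as a black box throughout. So there is no in-paper proof to compare against.

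Your outline is a correct route to a reverse H\"older inequality $(\E_\pi\norm{X}^k)^{1/k}\le Ck\,\E_\pi\norm{X}$ with \emph{some} absolute constant $C$: the logconcavity of $t\mapsto\pi(tB)$ via Pr\'ekopa--Leindler is valid, Borell's lemma applies since $B$ is origin-symmetric, and the layer-cake integration goes through. You are also right that this crude split-and-integrate does not deliver $C=2$, and that a Berwald-type comparison is what is needed for the sharp constant. The original Lov\'asz--Vempala argument does essentially that: it reduces to a one-dimensional statement about logconcave functions on $[0,\infty)$ and compares moments directly against the exponential reference, which is where the factor $2k$ (as opposed to the asymptotically smaller $(k!)^{1/k}\sim k/e$) arises --- the extra slack is absorbed in normalizing the one-dimensional density to have its maximum at the origin. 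Your final paragraph identifies this correctly, but the sentence ``the doubling reflects the slack when passing from the logconcave CDF of $Y$ to a logconcave density on $[0,\infty)$'' is a mnemonic rather than an argument; if you actually want the constant $2$, you need to carry out the 1D moment comparison explicitly rather than gesture at it.
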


The following folklore reveals a lower regularity of logconcave distributions:
\begin{lem}
[{\cite[Lemma 5.13]{lovasz2007geometry}}] \label{lem:ball-in-isotropy}
The support of any $s$-isotropic logconcave distribution on $\Rn$
contains a ball of radius $\nicefrac{1}{\sqrt{s}e}$ centered at its
mean.
\end{lem}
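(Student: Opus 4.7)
I would reduce the claim to a one-dimensional statement by a separating-hyperplane argument. After translating, assume the mean of $\pi$ is the origin. Suppose for contradiction that some point $p$ with $\|p\| < 1/(\sqrt{s}e)$ lies outside $\supp \pi$. Since $\pi$ is logconcave, its support is convex, so by the Hahn--Banach theorem there is a unit vector $v \in \R^n$ and a scalar $c \leq \langle v, p\rangle \leq \|p\| < 1/(\sqrt{s}e)$ such that $\langle v, x\rangle \leq c$ for every $x \in \supp \pi$. Let $g(t)$ denote the marginal density of $\pi$ in direction $v$. By the Prékopa--Leindler inequality $g$ is a 1D logconcave density; it has mean $0$ (since $\pi$ is centered), variance $\sigma^2 := v^\top (\cov \pi)\, v \geq 1/s$ (from the lower bound $\Sigma \succeq s^{-1} I_n$ in the definition of $s$-isotropy), and support in $(-\infty, c]$.

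It therefore suffices to establish the following one-dimensional fact: if $g$ is a logconcave probability density on $\R$ with mean $0$, variance $\sigma^2$, and support contained in $(-\infty, c]$, then $c \geq \sigma/e$. Indeed, applied with $\sigma \geq 1/\sqrt{s}$, this gives $c \geq 1/(\sqrt{s}e)$, contradicting our choice of $p$.

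For the 1D bound I would combine two classical facts about 1D logconcave densities. First, Grünbaum's inequality for a centered logconcave density yields $\int_0^\infty g(t)\,dt \geq 1/e$, and since $\supp g \subseteq (-\infty, c]$ this becomes $\int_0^c g(t)\,dt \geq 1/e$. Second, the standard Lovász--Vempala density bound (\cite[Lemma~5.5(a)]{lovasz2007geometry}) says that any 1D logconcave density with variance $\sigma^2$ satisfies $\|g\|_\infty \leq 1/\sigma$, with equality for the exponential. Putting these together,
\[
\frac{1}{e} \;\leq\; \int_0^c g(t)\,dt \;\leq\; c\,\|g\|_\infty \;\leq\; \frac{c}{\sigma},
\]
which gives $c \geq \sigma/e$ as required.

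\textbf{Main obstacle.} The reduction step and Grünbaum's inequality are routine; the only ingredient requiring care is the density bound $\|g\|_\infty \leq 1/\sigma$. The cleanest route is a rearrangement argument: translate so that the mode of $g$ sits at $0$, then replace each tail of $g$ by an exponential tail matching the mode value $g(0)$; this preserves logconcavity and $\|g\|_\infty$, does not increase the variance, and reduces the problem to verifying the inequality for a (possibly two-sided) exponential, where it holds by direct computation with equality exactly for the one-sided exponential.
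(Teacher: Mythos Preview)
The paper does not prove this lemma; it is quoted from \cite{lovasz2007geometry} without proof. Your argument is correct and is exactly the standard proof there: project to a 1D marginal via a separating hyperplane, apply Gr\"unbaum's inequality to get $\int_0^c g \geq 1/e$, and combine with the density bound $\|g\|_\infty \leq 1/\sigma$ to obtain $c \geq \sigma/e$.

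One small slip in your ``main obstacle'' paragraph: the rearrangement you describe (replacing each tail by an exponential with the same mode height and same mass) should \emph{not decrease} the variance, not ``not increase'' it. You want $\sigma(g) \leq 1/M$; if the exponential replacement $\tilde g$ has $\sigma(\tilde g) \geq \sigma(g)$, then proving $\sigma(\tilde g) \leq 1/M$ for the (asymmetric) Laplace finishes it. With the direction you wrote, the reduction goes the wrong way. The monotonicity itself is not entirely obvious either, since the replacement also shifts the mean; a cleaner route is to observe that each conditional tail $X\mid X\geq 0$ is stochastically dominated by the matching exponential (their survival functions satisfy $\bar F_g \leq \bar F_{\exp}$ by a single-crossing argument using log-concavity and equal mass), which bounds $\E[(X-m)^2]$ for $m$ the mode, and then refine to get the sharp constant. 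In any case this is precisely what \cite[Lemma~5.5(a)]{lovasz2007geometry} provides, so you can cite it directly and the main line of your proof stands.
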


The following on one-dimensional isotropic logconcave density is crucial
when proving Lemma~\ref{lem:diameter-level-set}.
\begin{lem}
[{\cite[Lemma 5.5 (b)]{lovasz2007geometry}}] \label{lem:iso-onedim}
Let $f:\R\to\R_{\geq0}$ be an isotropic logconcave density function.
Then, $f(0)\geq1/8$.
\end{lem}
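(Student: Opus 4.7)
The plan is to exploit unimodality from logconcavity together with the isotropy constraints on the first two moments. By applying $X \mapsto -X$ if necessary, I may assume the mode $m$ of $f$ satisfies $m \ge 0$, so that $f$ is non-decreasing on $(-\infty, 0]$. Equivalently, $h(t) := f(-t)$ is non-increasing on $[0, \infty)$. Writing $p := \mathbb{P}(X \le 0) = \int_0^\infty h$ and $\mu := \mathbb{E}[|X|\,\mathbb{1}_{X \le 0}] = \int_0^\infty t\,h(t)\,dt$, the target bound $f(0) = h(0) \ge 1/8$ will be obtained by combining a density--moment inequality with constraints derived from $\mathbb{E} X = 0$ and $\mathbb{E} X^2 = 1$.

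The first step is a purely geometric inequality valid for any non-increasing density on $[0,\infty)$: $h(0) \ge p^2/(2\mu)$. To see this, note that $H(t) := \int_0^t h(s)\,ds$ is concave (since $H' = h$ is non-increasing), so $H(t) \le \min(h(0)\,t,\, p)$; then Fubini gives
\[
\mu = \int_0^\infty \bigl(p - H(s)\bigr)\,ds \;\ge\; \int_0^\infty \max\bigl(0,\,p - h(0)\,s\bigr)\,ds \;=\; \frac{p^2}{2\,h(0)},
\]
which rearranges to the claim.

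The second step combines isotropy with Grünbaum's inequality for one-dimensional logconcave densities (tight for the shifted exponential), which gives $p \ge 1/e$. For the upper bound on $\mu$, since $\mathbb{E} X = 0$ we have $\mu = \mathbb{E}[X_+] = \mathbb{E}[X_-]$, and Cauchy--Schwarz on each side gives $\mu^2 \le p\,q_-$ and $\mu^2 \le (1-p)\,q_+$, where $q_\pm := \mathbb{E}[X^2\,\mathbb{1}_{\pm X > 0}]$ satisfy $q_- + q_+ = 1$. Optimizing the common value $\mu^2$ over the allocation of the unit total second moment between $q_-$ and $q_+$ (the maximum of $\min(p\,q_-,\,(1-p)\,q_+)$ is attained when the two are equal, forcing $q_- = 1-p$) yields $\mu \le \sqrt{p(1-p)}$.

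Putting the two steps together, $f(0) \ge p^2/(2\mu) \ge p^{3/2}/(2\sqrt{1-p})$. This function is strictly increasing in $p$, so its minimum on $p \in [1/e,\,1)$ is at $p = 1/e$, giving $f(0) \ge 1/(2e\sqrt{e-1})$. The final arithmetic check that this exceeds $1/8$ is equivalent to $e^3 - e^2 < 16$; since $e^3 \approx 20.09$ and $e^2 \approx 7.39$, the difference is about $12.7 < 16$, so indeed $f(0) > 1/8$. The main obstacle is that the numerical margin is tight (the bound is $\approx 0.14$ against a target of $0.125$), so both Grünbaum and the Cauchy--Schwarz step must be applied sharply: replacing either by cruder estimates such as $\mu \le 1$ or $p \ge 1/2$ (which fails in general) would not suffice.
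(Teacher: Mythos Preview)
The paper does not give its own proof of this lemma; it is quoted from \cite{lovasz2007geometry} and used as a black box. So there is nothing to compare against, and your proposal stands on its own.

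Your argument is correct. A couple of minor points worth tightening: (i) the ``optimization'' phrasing in Step~2 is a detour---the clean derivation is simply to add $\mu^{2}/p\le q_{-}$ and $\mu^{2}/(1-p)\le q_{+}$ to get $\mu^{2}\bigl(\tfrac{1}{p}+\tfrac{1}{1-p}\bigr)\le q_{-}+q_{+}=1$, hence $\mu\le\sqrt{p(1-p)}$ directly; (ii) you might note explicitly that $f(0)>0$ (otherwise, with mode $m\ge0$ and $f$ non-decreasing on $(-\infty,0]$, one would have $X>0$ a.s., contradicting $\E X=0$), so the division by $h(0)$ in Step~1 is legitimate. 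Neither of these is a gap; the proof as written goes through and in fact delivers the sharper constant $1/(2e\sqrt{e-1})\approx0.140>1/8$.
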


\end{document}